\documentclass[10pt]{article}
\usepackage[margin=.8in,left=.8in]{geometry}

\usepackage{amsthm}
\usepackage{amssymb}
\usepackage{amsmath}
\usepackage{mathtools}
\usepackage{adjustbox}

\usepackage[table]{xcolor}

\usepackage{stmaryrd}    % for \sslash
\usepackage{xfrac}

\usepackage{tensor}

\usepackage{enumitem}\setlist{nosep}

\usepackage{slashed}

\usepackage[safe]{tipa} % for \esh

\usepackage{hhline}

\usepackage{tikz}
\usetikzlibrary{
  backgrounds, 
  decorations.pathmorphing, 
  decorations.markings,
  decorations.text
}
\usetikzlibrary{cd}

\usepackage{mathrsfs} %mathscr
\DeclareMathAlphabet{\mathpzc}{OT1}{pzc}{m}{it} %for math script

\usepackage{hyperref}
\hypersetup{
  colorlinks = true,
  linkcolor= darkblue, citecolor=darkblue            
}

\hypersetup{
        colorlinks=true,
        linkcolor=darkblue,
        filecolor=darkblue,      
        urlcolor=black,
        citecolor=darkblue,
        linktoc=page
        %pdfpagemode=FullScreen,
    }

\usepackage{cleveref}

%For chi at the same vertical level

 % inner command, used by \rchi

\definecolor{darkblue}{rgb}{0.05,0.25,0.65}
\definecolor{darkgreen}{RGB}{20,140,10}
\definecolor{lightgray}{rgb}{0.9,0.9,0.9}
\definecolor{darkorange}{RGB}{200,100,5}
\definecolor{darkyellow}{rgb}{.91,.91,0}
\definecolor{orangeii}{RGB}{200,100,5}
\definecolor{lightblue}{RGB}{243, 250, 255}

%Environments
\newtheorem{theorem}{Theorem}[section]

\newtheorem{lemma}[theorem]{Lemma}

\newtheorem{proposition}[theorem]{Proposition}

\theoremstyle{definition}
\newtheorem{definition}[theorem]{Definition}
\newtheorem{notation}[theorem]{Notation}
\newtheorem{example}[theorem]{Example}

\newtheorem{remark}[theorem]{Remark}

\setcounter{tocdepth}{2}

\usepackage{accents}
\newlength{\dhatheight}
\newcommand{\doublehat}[1]{%
    \hspace{1.3pt}
    \settoheight{\dhatheight}{\ensuremath{\widehat{#1}}}%
    \addtolength{\dhatheight}{-0.23ex}%
    \widehat{\vphantom{\rule{1pt}{\dhatheight}}%
    \hspace{-1.3pt}
    \smash{\widehat{#1}}}}

\newcommand{\SecondFundamentalForm}{\mbox{\rm I\hspace{-1.25pt}I}}

\newcommand{\GaugePotential}[1]{\vec{#1}}

\newcommand{\CoverOf}[1]{\widetilde{#1}}

\newcommand{\boldpi}{\pi\hspace{-5.7pt}\pi}

\newcommand{\HilbertSpace}[1]{\mathcal{#1}}

%%%%%%%%%%%%%%%%%%%%%%%%%%%%%%%%%%%%%%%%%
% adjust spacing between bibitems       %
%%%%%%%%%%%%%%%%%%%%%%%%%%%%%%%%%%%%%%%%%
\let\PLAINthebibliography\thebibliography
\renewcommand\thebibliography[1]{
  \PLAINthebibliography{#1}
  \setlength{\parskip}{0.5pt}
  \setlength{\itemsep}{0.5pt plus .3ex}
}

%%%%%%%%%%%%%%%%%%%%%%%%%%%%%%%%%%%%%%%
%%% hack for vertical space in tables

%%%%%%%%%%%%%%%%%%%%%%%%%%%%%%%%%%%%%%%

\newcommand{\proofstep}[1]{\scalebox{.85}{#1}}

\newcommand{\K}{\tilde H_3^2}

\newcommand{\shape}{
  \raisebox{1pt}{\rm\normalfont\textesh}
}

\newcommand{\ZTwo}{\mathbb{Z}_2}

\newcommand{\defneq}{\equiv}

\newcommand\bos[1]{\mathstrut\mkern2.5mu#1\mkern-14mu\raise1.7ex%
  \hbox{$\scriptstyle\rightsquigarrow$}}

\newcommand\bosonic[1]{\mathstrut\mkern2.5mu#1\mkern-14mu\raise1.7ex%
  \hbox{$\scriptstyle\rightsquigarrow$}}

\usepackage[cmtip,all]{xy}
\newcommand{\longsquiggly}{\xymatrix{{}\ar@{~>}[r]&{}}}

\usepackage{bbm} %GG: Needed for macros from SmoothSets article below 

\usepackage[new]{old-arrows}   %For \longhookrightarrow

\begin{document}

%%%%%%%%%%%%%%%%%%%%%%%%%%%%%%%%%%%%%%%%%%%%%%
%vertical spacing around displayed equations %
\setlength{\abovedisplayskip}{3pt}
\setlength{\belowdisplayskip}{3pt}
\setlength{\abovedisplayshortskip}{-3pt}
\setlength{\belowdisplayshortskip}{3pt}
%%%%%%%%%%%%%%%%%%%%%%%%%%%%%%%%%%%%%%%%%%%%%%

%%%%%%%%%%%%%%%%%%%%%%%%%%%%%%%%%%%%%%%%%%%%%%%%%%%%%%%%%%%
\title{Flux Quantization on M5-branes}
%%%%%%%%%%%%%%%%%%%%%%%%%%%%%%%%%%%%%%%%%%%%%%%%%%%%%%%%%%%

\author{
  Grigorios Giotopoulos${}^{\ast}$,
  \;\;
  Hisham Sati${}^{\ast \dagger}$,
  \;\;
  Urs Schreiber${}^{\ast}$
}

\maketitle

\begin{abstract}
  We highlight the need for global completion of the field content in the M5-brane sigma-model analogous to Dirac's charge/flux quantization, and we point out that the superspace Bianchi identities on the worldvolume and on its ambient supergravity background constrain the M5's flux-quantization law to be in a non-abelian cohomology theory rationally equivalent to a twisted form of co-Homotopy. In order to clearly bring out this subtle point we give a streamlined re-derivation of the worldvolume 3-flux via M5 ``super-embeddings''. Finally, assuming the flux-quantization law to actually be in co-Homotopy (``Hypothesis H'') we show how this implies Skyrmion-like solitons on general M5-worldvolumes and (abelian) anyonic solitons on the boundaries of ``open M5-branes'' in heterotic M-theory.    
\end{abstract}

\vspace{1cm}

\begin{center}
\begin{minipage}{10cm}
  \tableofcontents
\end{minipage}
\end{center}

\medskip

\vfill

\hrule
\vspace{5pt}

{
\footnotesize
\noindent
\def\arraystretch{1}
\tabcolsep=0pt
\begin{tabular}{ll}
${}^*$\,
&
Mathematics, Division of Science; and
\\
&
Center for Quantum and Topological Systems,
\\
&
NYUAD Research Institute,
\\
&
New York University Abu Dhabi, UAE.  
\end{tabular}
\hfill
\adjustbox{raise=-15pt}{
\href{https://ncatlab.org/nlab/show/Center+for+Quantum+and+Topological+Systems}{
\includegraphics[width=3cm]{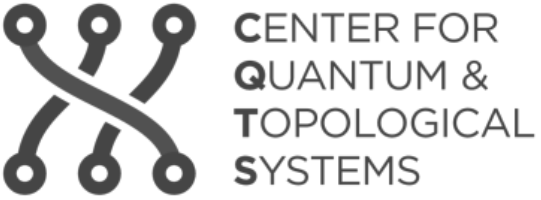}}
}

\vspace{1mm} 
\noindent ${}^\dagger$The Courant Institute for Mathematical Sciences, NYU, NY.

\vspace{.2cm}

\noindent
The authors acknowledge the support by {\it Tamkeen} under the 
{\it NYU Abu Dhabi Research Institute grant} {\tt CG008}.
}

\newpage

%%%%%%%%%%%%%%%%%%%%%%%%%%%%%%%%%%%%%%%%
\section{Introduction and Overview}
\label{IntroductionAndOverview}
%%%%%%%%%%%%%%%%%%%%%%%%%%%%%%%%%%%%%%%%

\noindent
{\bf The quest for non-perturbative strongly-coupled quantum theory.}
The key contemporary open question in the foundations of theoretical physics --- which traditionally relied on perturbation theory and mean field theory for weakly-coupled systems --- remains (cf. \cite{BakulevShirkov10}\cite[\S 4.1]{HollandsWald15}) the general analytic understanding of strongly-{\it coupled} quantum systems (such as confined chromodynamics, c.f. \cite{BrambillaEtAl14}). With it (cf. \cite{AFFK15}\cite{FGSS20}) comes the closely related issue of strongly-{\it interacting} and -{\it correlated} (long-range entangled) quantum systems (such as topologically ordered quantum materials \cite[\S 6.3]{ZCZW19}\cite{SS23-ToplOrder}, envisioned to provide future hardware for robust topological quantum computers, e.g. \cite{FKLW03}\cite{MySS24}).

\medskip

\noindent
{\bf The expected M5-brane model for strongly-coupled quantum systems.}
Meanwhile, the refinement of quantum field theory by string theory (e.g. \cite{BLT13}) --- for whatever else its motivations and aims have been at any time --- has had the remarkable effect of leading to a glimpse of a general non-perturbative formulation of fundamental quantum physics, famously going by the working title ``M-theory'' (cf. \cite{Duff96}\cite{Duff99}). Notably for holographic arguments about strongly-coupled quantum systems 
\cite{ZLSS15}\cite[\S IV]{RhoZahed16}\cite{HartnollLucasSachdev18} 
to become realistic by {\it not} passing to their large-$N$ limit requires including M-theoretic corrections in the dual gravitational theory \cite[p. 60]{AGMOY00}\cite[Fig. 4]{SS23-ToplOrder}.

\smallskip

Within M-theory, the object most directly expressing (strongly coupled) quantum field theory in realistic numbers of spacetime dimensions is the ``Fivebrane'' or ``M5-brane'' (cf. \cite[\S 3]{Duff99} and below \eqref{TwistedBianchiIdentityInIntroduction}). Here, apart from the widely appreciated implications on understanding weak/strong coupling S-dualities \cite{Witten04}\cite[\S 3-4]{Witten09} (see also 
\cite{MathaiSati}), it is worth highlighting that:

\vspace{1mm} 
\begin{itemize}[
  leftmargin=.8cm,
  topsep=1pt,
  itemsep=2pt
]
\item[\bf (i)]
the quantum fields on suitable M5-brane configurations engineer at least an approximation to confined quantum chromodynamics (the Witten-Sakai-Sugimoto model \cite[\S 4]{Witten98} for ``holographic QCD'' \cite{Rebhan15}, including the realization of hadron bound states \cite{Sugimoto16}\cite{ILP18}) via skyrmions -- cf. \S\ref{ResultingSolitons} below;
\item[\bf (ii)]
 co-dimension=2 defects inside M5-branes have been argued \cite{CGK20}\cite{SS23-DefectBranes}\cite{SS24-AbAnyons} to carry anyonic quantum observables (cf. \eqref{GelfandRaikovTheorem} below) 
 as expected in topologically ordered quantum materials \cite{SS23-ToplOrder}.
\end{itemize}

\smallskip

This means that a deep understanding of M5-branes may plausibly go a long way towards resolving outstanding problems in contemporary strongly-coupled/correlated quantum systems.

\smallskip

At the same time, along with the ambient M-theory also a complete theory of the M-branes has been missing (after all, the term ``M-theory'' is the ``non-committal'' abbreviation \cite[p. 2]{HoravaWitten96} of ``Membrane Theory'' due to doubts about the nature of the M2-brane theory):
The above results are all deduced just from expected subsectors of a would-be M5-brane theory. This lack of a complete ``M5-brane model'' is commonplace for the quantum theory of coincident M5-worldvolumes in its decoupling limit expected to be given by a largely elusive $D=6$ $\mathcal{N}=(2,0)$ superconformal field theory (e.g.  \cite{HeckmanRudelius18}\cite[\S 3]{Lambert19}\cite{SaemannSchmidt18}). However, the problem reaches deeper:

\medskip

\noindent
{\bf The open problem of flux quantization on the M5-brane.}
Our starting point here is to highlight that an open question had already remained both at a more fundamental level, namely
concerning the full definition of the classical (or pre-quantum\footnote{Flux quantization applies to the classical fields of a higher gauge theory, but it thereby determines the ``pre-quantum line bundle'' for sigma-model branes which are charged under (i.e., sense the Lorentz force exerted by) these gauge fields.}) on-shell field content on the single M5-brane worldvolume, as well as on a more universal level, namely concerning the classification of ``fractional'' (i.e., torsion-charged, see \S\ref{ResultingSolitons}) solitons on the M5-worldvolume. This is the open problem of {\it flux quantization} (pointers and survey in \cite{SS24-Flux}) on the M5-brane worldvolume.

\smallskip

The root cause is that alongside the ambient 11d supergravity field (for pointers and review see \cite{MiemiecSchnakenburg06}\cite{GSS24-SuGra}) with its 4-form and dual 7-form flux densities $G_4$, $G_7$ on 11-dimensional target spacetime $X$, also the field theory on the worldvolume $\Sigma$ of an M5-brane
$$
  \begin{tikzcd}[
    column sep=76pt
  ]
    \scalebox{.7}{
      \color{orangeii}
      \bf
      \def\arraystretch{.9}
      \begin{tabular}{c}
        M5-brane
        \\
        worldvolume
      \end{tabular}
    }
    \Sigma
    \ar[
      r, 
      "{ \phi }",
      "{
        \scalebox{.6}{
          \color{darkgreen}
          \bf
          ``embedding'' field
        }
      }"{swap}
    ]
    &
    X
    \scalebox{.7}{
      \color{darkblue}
      \bf
      \def\arraystretch{.9}
      \begin{tabular}{c}
        11-dimensional
        \\
        spacetime
      \end{tabular}
    }
  \end{tikzcd}
$$
famously carries a 3-form flux density $H_3$, subject to the following {\it Bianchi identity} (\cite[(36)]{HoweSezgin97b}\cite[(5.75)]{Sorokin00}, we give a streamlined account below in \S\ref{SuperEmbeddingConstruction})
\begin{equation}
  \label{TwistedBianchiIdentityInIntroduction}
  \mathrm{d}
  \,
  H_3
  \;=\;
  \phi^\ast G_4
  \,.
\end{equation}
So far this is very well known. However, most if not all authors now consider the corresponding higher gauge field configuration to be exhibited by a {\it globally defined} gauge potential 2-form $B_2$ on $\Sigma$ (e.g. \cite[(27)]{APPS97}\cite[(37)]{HoweSezgin97b}\cite[(4)]{BLNPST97}\cite[(5.3)]{CKvP98}\cite[(27)]{SezginSundell98}\cite[(3.21)]{CKKTvP98}\cite[(5.74)]{Sorokin00}\cite[(4.44)]{Bandos11}\cite[(5.78)]{BandosSorokin23}). 
Furthermore,  together with gauge potential 3- and 6-forms $C_3$, $C_6$ on $X$  this is taken to define the 3-form flux density $H_3$ by the following formulas, which we (re-)derive these below in \S\ref{GaugePotentialsOnM5Branes} from systematic flux quantization, cf. Rem. \ref{ReproducingTraditionalLocalGaugePotentials}:
\begin{equation}
  \label{TraditionalFormulaForBFieldGaugePotential}
  H_3
  \;=\;
  \mathrm{d}\, B_2
  \;+\;
  \phi^\ast 
  C_3
  \,,
  \hspace{.7cm}
  G_4
  \,=\, 
  \mathrm{d}\, C_3 
  \,,
  \hspace{.7cm}
  G_7
  \,=\, 
  \mathrm{d}\, C_6 
  +
  \tfrac{1}{2}
  C_3 \, G_4
  \,.
\end{equation}

\smallskip 
\noindent The problem here is that these equations \eqref{TraditionalFormulaForBFieldGaugePotential} in general only make sense locally, namely on (open covers here to be incarnated as) surjective  submersions $\widehat X$ 
\eqref{AnOpenCover}
onto $X$ with compatible surjective submersion $\CoverOf{\Sigma}$ \eqref{MapLiftedToOpenCovers} onto $\Sigma$:
$$
  \begin{tikzcd}[row sep=small, 
    column sep=86pt
  ]
  \mathllap{
    \scalebox{.7}{
      \color{orangeii}
      \bf
      \def\arraystretch{.9}
      \begin{tabular}{c}
        submersion
        \\
        onto...
      \end{tabular}
    }
  }
    \CoverOf{\Sigma}
    \ar[
      r,
      "{
        \CoverOf{\phi}
      }"
    ]
    \ar[
      d,
      ->>,
      shorten=-3pt,
      "{\;
        p_{{}_\Sigma}
      }"
    ]
    &
    \CoverOf{X}
    \ar[
      d,
      ->>,
      shorten=-3pt,
      "{\;
        p_{{}_X}
      }"
    ]
  \mathrlap{
    \scalebox{.7}{
      \color{darkblue}
      \bf
      \def\arraystretch{.9}
      \begin{tabular}{c}
        submersion
        \\
        onto...
      \end{tabular}
    }
  }
    \\
    \mathllap{
    \scalebox{.7}{
      \color{orangeii}
      \bf
      \def\arraystretch{.9}
      \begin{tabular}{c}
        ...M5-brane
        \\
        worldvolume
      \end{tabular}
    }
    }
    \Sigma
    \ar[
      r, 
      "{ \phi }",
      "{
        \scalebox{.6}{
          \color{darkgreen}
          \bf
          ``embedding'' field
        }
      }"{swap}
    ]
    &
    X
    \mathrlap{
    \scalebox{.7}{
      \color{darkblue}
      \bf
      \def\arraystretch{.9}
      \begin{tabular}{c}
        ...11-dimensional
        \\
        spacetime
      \end{tabular}
    }
    }
  \end{tikzcd}
$$
But this means that the local gauge-potential form fields
$$
  \left(
  \def\arraycolsep{2pt}
  \def\arraystretch{1.4}
  \begin{array}{lcl}
    C_3 
    &\in&
    \Omega^3_{\mathrm{dR}}
    \big(
      \CoverOf{X}
    \big)
    \\
    C_6 
    &\in&
    \Omega^6_{\mathrm{dR}}
    \big(
      \CoverOf{X}
    \big)
    \\
    B_2 
    &\in&
    \Omega^2_{\mathrm{dR}}
    \big(
      \CoverOf{\Sigma}
    \big)
  \end{array}
  \middle\vert
  \def\arraystretch{1.4}
  \begin{array}{l}
    \mathrm{d}\, C_3 \,=\, G_4
    \\
    \mathrm{d}\, C_6 \,=\, G_7 - \tfrac{1}{2} C_3\, G_4
    \\
    \mathrm{d}\, B_2 \,=\, 
    H_3 - \phi^\ast C_3
  \end{array}
 \!\! \right)
$$
by themselves do {\it not} represent a {\it complete} higher gauge field configuration: Further field content
is needed to {\it glue} ($C_3$ and) $B_2$ on all higher intersections \eqref{FiberProductOfCoveringSubmersion}:
\vspace{-2mm} 
$$
  \begin{tikzcd}[
    column sep=40pt
  ]
  \CoverOf{\Sigma} \times_{{}_X} \cdots \times_{{}_X} \CoverOf{\Sigma}
  \ar[
    rr,
    "{
      \CoverOf{\phi}
      \,
      \times_{{}_X}
      \cdots
      \times_{{}_X}
      \CoverOf{\phi}
    }"
  ]
  &&
  \CoverOf{X} \times_{{}_X} \cdots \times_{{}_X} \CoverOf{X}
  \end{tikzcd}
$$
in a coherent manner that we recall in a moment. 

\smallskip 
The admissible choices of this remaining global field content are the {\it flux quantization laws} \cite{SS24-Flux} --- these determine which non-rational 
torsion brane charges (fractional branes, \S\ref{ResultingSolitons}) may appear as sources of the field fluxes; and it is only with this extra choice that
even the classical on-shell field content of the higher gauge theory is actually complete (cf. \cite{SS24-Phase}).

\medskip

\noindent
{\bf The analog problem of Dirac charge quantization.} To appreciate the issue, we quickly compare it to the analogous traditional situation 
in vacuum electromagnetism (cf. \cite[\S 2.1]{SS24-Flux}\cite[\S 3.1]{SS24-Phase}): Here the flux density is a 2-form $F_2$ on spacetime $X$ 
(the Faraday tensor), and the local gauge potential is a 1-form $A_1$ on a submersion $\CoverOf{X}$ onto spacetime. There are in fact infinitely 
many admissible flux quantization laws even in this basic situation (cf. \cite[\S 2]{SS23-Obs}), but the standard one going back to \cite{Dirac31}
says (as maybe first made explicit by \cite{Alvarez85a}\cite{Alvarez85b}) that alongside the local gauge potential
\vspace{-1mm} 
$$
  \big(
  A_1 \,\in\, \Omega^1_{\mathrm{dR}}\big(\CoverOf{X}\big)
  \;\big\vert\;
  \mathrm{d}
  \,A_1
  \,=\,
  F_2
  \big)
$$

\vspace{1mm} 
\noindent an electromagnetic field configuration consists, in addition, of a {\it transition function}
\begin{equation}
  \label{AbelianTransitionFunction}
  \left(
    A_0
    \;\in\;
  \Omega^0_{\mathrm{dR}}\big(
    \CoverOf{X}
    \!\times_X\!
    \CoverOf{X}
  \big)
  \;\middle\vert\;
  \def\arraystretch{1.4}
  \begin{array}{l}
  \mathrm{d} 
  \,
  A_0
  \;=\;
  \mathrm{pr}_2^\ast
  A_1
  -
  \mathrm{pr}_1^\ast A_1
  \,,
  \\
  \mathrm{pr}_{12}^\ast A_0
  +
  \mathrm{pr}_{23}^\ast A_0
  \;=\;
  \mathrm{pr}_{13}^\ast A_0
  \;\;
  \mathrm{mod}
  \;\;
  C^0\big(
    \CoverOf{X}
    \!\times_{{}_X}\!
    \CoverOf{X}
    \!\times_{{}_X}\!
    \CoverOf{X}
    ;\,
    \mathbb{Z}
  \big)
  \end{array}
 \!\!\! \right).
\end{equation}
This data is of course (the {\v C}ech-Deligne cocycle presentation, for exposition see \cite[\S 2]{FSS13CupCS}\cite{FSS15-Stacky}) equivalent to a connection on a $\mathrm{U}(1)$-principal bundle over $X$.
The point here is that this extra field content \eqref{AbelianTransitionFunction} has experimentally measurable consequences, as it implies that
\begin{itemize}[
  leftmargin=.8cm,
  topsep=1pt,
  itemsep=2pt
]
  \item[\bf (i)]
    there is a smallest unit of electric charge\footnote{
      Electric charge quantization is traditionally said to be conditioned on the assumption that magnetic monopoles exist (which remains hypothetical) -- but we highlight that the actual logic is slightly different: Magnetic monopoles are (just as the experimentally observed Abrikosov vortices) compatible with and hence indicative of the assumption of EM-flux quantization in ordinary differential ({\v C}ech-Deligne) cohomology, and {\it this assumption} generally implies that the exponentiated Lorentz-(inter)action is the holonomy $\mbox{``}\exp\big( \mathrm{e} \int_{S^1} A
      \big)\mbox{''} \,=\,\mathrm{hol} : \mathrm{Map}(S^1,X) \xrightarrow{\;} \mathrm{U}(1)$
      (for electrically unit charged particles)
      of the flux-quantized background EM-field $\GaugePotential{A}$ regarded as a  $\mathrm{U}(1)$-principal connection, or at most an integer power 
      $\mbox{``}\exp\big( n\mathrm{e} \int_{S^1} \widehat A\big)\mbox{''} \,=\,\mathrm{hol}^n$
      (for particles carrying $n$ units of fundamental electric charge).
    } 
    (as observed)
    and 
  \item[\bf (ii)] Dirac-monopoles (hypothetical) but also Abrikosov-vortices (experimentally observed) carry integer multiples of a unit magnetic charge
  \item[\bf (iii)] should the cosmos have the non-trivial topology such as that of a lens space with its torsion cohomology group in degree 2
  (for which, incidentally, there is mild observational support \cite{AL12}), then it may carry torsion magnetic charge 
  (cf. \cite[p. 28]{FMS07})
  of ``fractional monopoles'' (cf. \S\ref{ResultingSolitons}).
\end{itemize}

 \smallskip 
These simple examples already show that (determination and understanding of) flux quantization is (or should be) a crucial step in understanding the full phenomenological scope of a higher gauge theory, in particular with respect to non-perturbative phenomena. 

\newpage 

\smallskip 
Clearly then, it is important to understand flux quantization also for the M5-brane worldvolume theory. We may approach this problem as follows.

\smallskip

\noindent
{\bf The general rule for flux quantization on phase space.} \label{GeneralRuleOfFluxQuantization} There is a 
beautiful general definition and rule for admissible flux quantization laws (\cite{FSS23Char}\cite{SS24-Phase}, review in \cite{SS24-Flux}) which applies naturally to all higher gauge theories of higher Maxwell-type and recovers existing proposals for flux quantization, showing how to systematically extend these to previously neglected theories:

\smallskip

If spacetime $X^{d+1}$ is globally hyperbolic with spatial Cauchy surface $X^d$
\begin{equation}
  \label{GloballyHyperbolic Spacetimes}
  X^{1+d}
  \;\simeq\;
  \mathbb{R}^1
    \times 
  X^d 
  \,,
\end{equation}
so that the flux densities restricted to $X^d$ constitute Cauchy data for the solution space of on-shell flux densities on $X^{d+1}$,
then the simple idea is that flux quantization means to 
\begin{itemize}[
  leftmargin=.8cm,
  topsep=1pt,
  itemsep=2pt
]
\item[\bf (i)]
accompany 
\begin{equation}
  \label{FluxDensitiesSatisfyingBianchiIdentities}
  \scalebox{1}{
    \bf
    flux densities 
  }
  \vec F
  \;\defneq\;
  \big(
    F^{(i)}
    \,\in\,
    \Omega^{\mathrm{deg}_i}_{\mathrm{dR}}
    (X^d)
  \big)_{i \in I}
  \;\;
  \scalebox{1}{
      satisfying 
      Bianchi ids.
  }
  \begin{array}{l}
    \mathrm{d}
    \,F^{(i)}
    \;=\;
    P^{(i)}\big(
      \vec F
    \big)
  \end{array}
\end{equation}
\item[\bf (ii)]
with
\begin{equation}
  \label{ChargesInCohomology}
  \scalebox{1}{
    {\bf
    charges
    }
    $\vec c\;\;$
    defining classes
    $
    \;\;
    \big[
      \vec c
      \,
    \big]
    \;\in\;
    H^1(X^d;\Omega \mathcal{A})
    \;\;
    $
   in a generalized cohomology theory $\mathcal{A}$
  }
\end{equation}
\item[\bf (iii)]
subject to an 
$$
  \scalebox{1}{
    {\bf
    identification 
    }
    $
    \;
    \GaugePotential{A} 
    :
    \vec{F} 
      \Rightarrow 
    \vec{c}
    \;
    $
    of 
    the flux densities
    with the charges 
    in de Rham cohomology,
  }
$$
\end{itemize}
where the key technical issue is to understand what it means for both the flux densities and the charges to be regarded as cocycles in a suitable form of de Rham cohomology where they can be compared.
This is accomplished by observing that (cf. \cite[\S 3]{SS24-Flux}\cite[\S 2.1]{GSS24-SuGra}):

\begin{itemize}[
  topsep=1pt,
  itemsep=2pt,
]
  \item[\bf (i)]
    any system of Bianchi identities \eqref{FluxDensitiesSatisfyingBianchiIdentities} is equivalently the closure condition on differential forms with coefficients in a {\it characteristic $L_\infty$-algebra} $\mathfrak{a}$
    \begin{equation}
      \label{LInfinityValuedFluxForms}
      \vec F
      \;\in\;
      \Omega^1_{\mathrm{dR}}
      (
        X^d
        ;\,
        \mathfrak{a}
      )_{\mathrm{clsd}}
    \end{equation}
    whose concordance classes
    constitute the corresponding $\mathfrak{a}$-valued de Rham cohomology
    \begin{equation}
      \label{NonabelianDeRhamCohomology}
      \big[\vec F\big]
      \;\in\;
      H^1_{\mathrm{dR}}(
        X^d
        ;\,
        \mathfrak{a}
      )
      \;\;
      =
      \;\;
      \Omega^1_{\mathrm{dR}}
      (
        X^d
        ;\,
        \mathfrak{a}
      )_{\mathrm{clsd}}
      \big/\mathrm{cncrd}
      \,,
    \end{equation}
  \item[\bf (ii)]
  any generalized cohomology theory \eqref{ChargesInCohomology}
  is characterized by its (pointed) classifying space $\mathcal{A}$ as
  $$
    H^1(
      X^d;
      \Omega\mathcal{A}
    )
    \;:=\;
    \pi_0
    \,
    \mathrm{Maps}\big(
      X^d
      ;\,
      \mathcal{A}
    \big)
  $$
  whose {\it character map} landing in de Rham cohomology with coefficients \eqref{NonabelianDeRhamCohomology}  in the {\it Whitehead $L_\infty$-algebra} $\mathfrak{l}\mathcal{A}$ is classified by the $\mathbb{R}$-rationalization map on $\mathcal{A}$:
  $$
    \begin{tikzcd}[
      row sep=3pt, column sep=large
    ]
      \mathcal{A}
      \ar[
        rr,
        "{
          \eta^{\scalebox{.8}{$\mathbb{R}$}}
        }"
      ]
      &&
      L^{\mathbb{R}}
      \mathcal{A}
      \\
      H^1\big(
        X^d
        ;\,
        \mathcal{A}
      \big)
      \ar[
        rr,
        "{
          H^1(
            X^d
            ;\,\eta^{\mathbb{R}}
          )
        }"
      ]
      &&
      H^1\big(
        X^d
        ;\,
        L^{\mathbb{R}}
        \mathcal{A}
      \big)      
      \ar[
        r,
        phantom,
        "{ \simeq }"
      ]
      &
      H^1_{\mathrm{dR}}\big(
        X^d
        ;\,
        \mathfrak{l}
        \mathcal{A}
      \big)\,,            
    \end{tikzcd}
  $$
  \item[\bf (iii)]
  the required identification is a homotopy in the deformation $\infty$-stack of closed $\mathfrak{a} \simeq \mathfrak{l}\mathcal{A}$-valued differential forms

  \vspace{-.6cm}
  \begin{equation}
    \label{GaugePotentialAsHomotopy}
    \begin{tikzcd}[row sep=25pt, column sep=huge]
      &[+20pt]
      &[-15pt]
      \mathcal{A}
      \ar[
        d,
        "{
          \eta^{\mathbb{R}}
        }"
      ]
      \ar[
        dd,
        rounded corners,
         to path={
             ([xshift=+00pt]\tikztostart.east)  
          -- ([xshift=+6pt]\tikztostart.east)  
          -- node{
            \scalebox{.7}{
              \colorbox{white}{
                $\mathbf{ch}$
              }
            }
          } ([xshift=+14pt]\tikztotarget.north) 
        }
      ]
      \\
      &&
      L^{\mathbb{R}} \mathcal{A}
      \ar[
        d,
        shorten >=-2pt,
        "{
          \sim
        }"{sloped}
      ]
      \\[-10pt]
      X^d
      \ar[
        r,
        dashed,
        "{ 
          \vec F
        }"{name=t, pos=.7},
        "{
          \scalebox{.7}{
            \color{darkgreen}
            \bf
            flux densities
          }
        }"{swap}
      ]
      \ar[
        uurr,
        dashed,
        bend left=20,
        "{
          \scalebox{.7}{
            \color{darkgreen}
            \bf
            charges
          }
        }"{sloped, pos=.35},
        "{
          \vec c
        }"{swap, name=s, pos=.3}
      ]
      &
      \Omega^1_{\mathrm{dR}}(-; \mathfrak{a})_{\mathrm{clsd}}
      \ar[
        r,
        shorten <=-3pt,
        "{
          \eta^{\scalebox{.6}{$\shape$}}
        }"
      ]
      &
      \shape
      \,
      \Omega^1_{\mathrm{dR}}(-; \mathfrak{a})_{\mathrm{clsd}}
      \ar[
        from=s,
        to=t,
        Rightarrow,
        dashed,
        "{
          \GaugePotential{A}
          \mathrlap{
            \scalebox{.7}{
              \color{darkorange}
              \bf
              gauge potentials
            }
          }
        }"{pos=-.1}
      ]
    \end{tikzcd}
  \end{equation}
\end{itemize}

\vspace{2pt}

\noindent
This general prescription notably subsumes (as reviewed in \cite{SS24-Flux} with extensive pointers to the literature):
\begin{itemize}[
  topsep=1pt,
  itemsep=2pt,
  leftmargin=.6cm
]
  \item traditional Dirac charge quantization of electromagnetism in (differential) ordinary integral 2-cohomology \cite[\S 3.1]{SS24-Phase}\cite[\S 2]{SS23-Obs},
  
  \item B-field flux quantization in (differential) ordinary integral 3-cohomology 
  \cite[Ex. 3.10]{SS24-Flux},
  
  \item 
    flux quantization of self-dual higher gauge fields in (differential) ordinary cohomology \cite[\S 3.2]{SS24-Phase},
    
  \item quantization of RR-fluxes in topological K-theory \cite[\S 3.3]{SS24-Phase}\cite[\S 4.1]{SS24-Flux} (cf. \cite{GS22}), 
  
  \item 
  a couple of proposed flux quantization laws for the C-field in 11d supergravity
  \cite[\S 3.4]{SS24-Phase}\cite[\S 4.2]{SS24-Flux},

  including the C-field model of \cite{DFM07} following \cite{Witten97a}\cite{Witten97b},

\item 
flux quantization for additional Green-Schwarz terms with (abelian) 2-flux densities \cite{Sati19}\cite{FSS22-Twistorial}\cite{SS20EquChar}.
\end{itemize}

\newpage 

\noindent
{\bf The issue of (self-)duality in flux quantization.}
What makes the above examples work is -- besides the assumption of globally hyperbolic spacetime \eqref{GloballyHyperbolic Spacetimes} --  that in all these cases the full equations of motion of the higher gauge theories are
\begin{itemize}
\item[\bf (i)]
the Bianchi identities \eqref{FluxDensitiesSatisfyingBianchiIdentities} \eqref{LInfinityValuedFluxForms}
$$
  \mathrm{d}
  \,
  F^{(i)}
  \;=\;
  P^{(i)}
  \big(
    \vec F
  \big)
$$
\item[\bf (ii)] and one more {\it linear} system of (self-)duality constraints
\begin{equation}
  \label{LinearSelfDualityConstraint}
  \star \, F^{(i)}
  \;=\;
  \mu^{(i)}\big(
    \vec F
  \big)
  \,,
\end{equation}
\end{itemize}
because it turns out \cite[Thm. 2.2]{SS24-Phase} that for flux densities on a Cauchy surface the linear duality constraint \eqref{LinearSelfDualityConstraint} is entirely absorbed into the isomorphism between the space of such Cauchy data and the solution space of flux densities over all of spacetime.

\smallskip 
However, each of the assumptions, that of globally hyperbolic spacetimes  \eqref{GloballyHyperbolic Spacetimes} and that of the linear self-duality constraint \eqref{LinearSelfDualityConstraint} is somewhat restrictive; in particular, the latter is actually violated for the 3-flux density on M5-branes (\cite{HoweSezginWest97}, cf. Rem. \ref{NonSelfDualityOf3FluxDensity} below).

\medskip

\noindent
{\bf Resolution by flux-quantization on super-spaces.}
Our key move now for solving the problem of flux quantization also on M5-branes is the observation that both of the above problems go away when considering the M5-brane as  immersed {\it in super-spacetime} (following \cite{HoweSezgin97b}\cite[\S 5.2]{Sorokin00}). The reason is that here the self-duality condition on the flux forms turns out to be all absorbed into the Bianchi identities on their super-field versions (re-derived as Prop. \ref{BianchiIdentityOnM5BraneInComponents})!
Since this result --- which in light of the problem of flux quantization is now revealed to be quite profound ---  has perhaps remained under-appreciated outside the original specialist literature, our main contribution in \S\ref{TheM5EquationsOfMotion} below is to give a streamlined and rigorous re-derivation, based on some more mathematically informed commentary on the proper definition of the underlying concept of ``super-embeddings'' \cite{HSW98}\cite{HoweRaetzelSezgin98} \cite{Sorokin00}\cite{BandosSorokin23} in \S\ref{SuperEmbeddings}.
 
\smallskip

This resolution of the problem of flux quantization on the M5 builds on  and extends the super-flux quantization of the background 11d supergravity fields which we established in \cite{GSS24-SuGra}:

\medskip

\noindent
{\bf Super-flux of 11d Supergravity.} Namely, the analogous miracle of on-shell 11d supergravity (going back to \cite{CF80}\cite{BrinkHowe80}\cite[\S III.8.5]{CDF91}), is that on super-spacetimes $X$ (``curved superspace'', namely super-manifolds of super-dimension $(1,10) \,\vert\, \mathbf{32}$ equipped with super-coframe fields $(E,\Psi)$ and super-torsion-free spin connection $\Omega$) the Bianchi identities 
\begin{equation}
  \label{SuperCFieldBianchiIdentityInIntroduction}
  \mathrm{d} \, G_4^s \;=\; 0
  \,,
  \hspace{.4cm}
  \mathrm{d}
  \,
  G_7^s \;=\;
  \tfrac{1}{2}G_4^s \, G_4^s
\end{equation}
on the duality-symmetric {\it super-}flux densities 
\begin{equation}
  \label{SuperFluxDensitiesOf11dSugra}
  \def\arraystretch{1.1}
  \begin{array}{l}
  G_4^s
  \;\defneq\;
  \underbrace{
  (G_4)_{a_1 \cdots a_4}
  E^{a_1} \cdots E^{a_4}
  }_{G_4}
  \,+\,
  \underbrace{
  \tfrac{1}{2}
  \big(
    \overline{\Psi}
    \,
    \Gamma_{a_1 a_2}
    \,
    \Psi
  \big)
  E^{a_1}
  \,
  E^{a_2}
  }_{ G_4^0 }
  \\
  G_7^s
  \;\defneq\;
  \underbrace{
  (G_7)_{a_1 \cdots a_7}
  E^{a_1} \cdots E^{a_7}
  }_{G_7}
  \,+\,
  \underbrace{
  \tfrac{1}{5!}
  \big(
    \overline{\Psi}
    \,
    \Gamma_{a_1 \cdots a_5}
    \,
    \Psi
  \big)
  E^{a_1}
  \cdots
  E^{a_5}
  }_{ G_7^0 }
  \end{array}
\end{equation}
are already equivalent (as brought out in this form in \cite[Thm. 3.1]{GSS24-SuGra}) to the equations of motion of 11d SuGra; in particular they {\it imply} the Hodge duality relation
\begin{equation}
 \label{CFieldSelfDualityInIntroduction}
  G_7 \;=\; \star \, G_4
  \;\;\;
  \in
  \;
  \Omega^7_{\mathrm{dR}}
  \big(
    \bosonic{X}
  \big)
\end{equation}
on the underlying bosonic spacetime manifold $\bosonic{X} \xhookrightarrow{\;} X$.

\smallskip 
This is remarkable, because it means (\cite[Claim 1.1]{GSS24-SuGra}) that the flux quantization \eqref{GaugePotentialAsHomotopy} applied to the super-flux densities \eqref{SuperFluxDensitiesOf11dSugra} on $(1,10\vert \mathbf{32})$-dimensional super-spacetime may be regarded the full globally completed field content of on-shell 11d supergravity, reflecting also all torsion/fractional charges of M-branes (according to that choice of flux quantization).

\smallskip

Here we are concerned with further refining this result to include the M5-brane worldvolume theory in such a background.

\medskip

\noindent
{\bf Super-flux on the M5-brane.}
Namely, a similar miracle occurs (this goes back to \cite{HoweSezgin97b}\cite[\S 5.2]{Sorokin00}, re-derived in \S\ref{TheM5EquationsOfMotion} below) for M5-brane worldvolumes $\Sigma \xhookrightarrow{\phi} X$ understood as ``super-embeddings'' (\cite{HSW98}\cite{HoweRaetzelSezgin98} \footnote{
  Early discussion of the idea of brane sigma-models where both the worldvolume as well as the target spacetime are treated as supermanifolds is due to  \cite{GatesHishino86}\cite{BMG86}, under the name ``supersymmetry squared''.
}, or more precisely {\it $\sfrac{1}{2}$BPS immersions}, cf. Defs. \ref{BPSImmersion}, \ref{M5SuperImmersion} below) of  $(1,5 \,\vert\, 2\cdot \mathbf{8})$-dimensional super-worldvolumes (with induced super-coframe fields $(e,\psi)$ and spin-connection $\omega$) into $(1,10\vert \mathbf{32})$-dimensional super-spacetimes as above.
Here the worldvolume Bianchi identity \eqref{TwistedBianchiIdentityInIntroduction}
\vspace{1mm} 
\begin{equation}
  \label{SuperBFieldBianchiIdentityInIntroduction}
  \mathrm{d}
  \,
  H_3^s
  \;\;
  =
  \;\;
  \phi^\ast
  G^s_4
\end{equation}
is now imposed on the super-flux densities \eqref{SuperFluxDensitiesOf11dSugra}. Furthermore, the expression 
\begin{equation}
  \label{WorldvolumeSuperFluxDensity}
  H^s_3 
  \;\;
  \defneq
  \;\;
  \underbrace{
    (H_3)_{a_1 a_2 a_3}
    \,
    e^{a_1}
    e^{a_2}
    e^{a_3}
  }_{ H_3 }
  \;+\;
  \underbrace{
    0
  }_{
    H_3^0
  }
\end{equation}
already implies (re-derived as Prop. \ref{BianchiIdentityOnM5BraneInComponents} below)
the subtle non-linear Hodge self-duality property of $H_3$ (cf. Rem. \ref{NonSelfDualityOf3FluxDensity}), namely that it is expressed as a rational function of a super-3-form 
which is actually self-dual:
\begin{equation}
  \label{The3FluxSolutionInIntroduction}
  \Rightarrow
  \;\;
  (H_3)_{a b c}
  \;=\;
  \frac
    {
      -4
    }
    {
    \mathclap{\phantom{
      \vert^{\vert}
    }}
    1 
    -
    \sfrac{2}{3}
    \,
    \mathrm{tr}(\K \cdot \K)
  }
  \big(
    \delta
      ^{ a  }
      _{ a' }
    +
    2
    \,
    (\K)
      ^{ a' }
      _{ a  }
  \big)
  (\tilde H_3)_{a' b c}
  \,,
  \;\;\;
  \mbox{for}
 \;  \left\{\!\!\!
  \def\arraystretch{1.4}
  \begin{array}{l}
    (\tilde H_3)
    \;\;
    \defneq
    \;\;
    \tfrac{1}{3!}
    (\tilde H_3)_{a_1 a_2 a_3}
    e^{a_1}
    e^{a_2}
    e^{a_3}
    \\
    (\tilde H_3)_{a_1 a_2 a_3}
    \;=\;
    \tfrac{1}{3!}
    \epsilon_{
      a_1 a_2 a_3
      \,
      b_1 b_2 b_3
    }
    (\tilde H_3)^{
      b_1 b_2 b_3
    }.
  \end{array}
  \right.
\end{equation}

\smallskip

With this result in hand, the admissible flux-quantization laws for completed field content on M5-branes now follows from general considerations:

\medskip

\noindent
{\bf Characteristic flux $L_\infty$-algebra on the M5.}
The immediate consequence is that the on-shell flux densities on the M5 super-worldvolume are entirely characterized by the Bianchi identities \eqref{SuperCFieldBianchiIdentityInIntroduction} on the super-fluxes
and \eqref{SuperBFieldBianchiIdentityInIntroduction}; and the first step towards flux-quantizing them is hence to identify their characteristic $L_\infty$-algebra \eqref{LInfinityValuedFluxForms}.

\vspace{1mm} 
\begin{itemize}
\item For the 11d Sugra C-field Bianchi identity \eqref{SuperCFieldBianchiIdentityInIntroduction}, this is (\cite[\S 4]{Sati10}\cite[(24)]{SS24-Flux}\cite[Ex. 2.29]{GSS24-SuGra}) the ``M-theory gauge algebra'' \cite[(2.6)]{CJLP98}
\begin{equation}
  \label{WhiteheaOf4Sphere}
  \mathfrak{l}S^4
  \;\simeq\;
    \mathbb{R}
    \left\langle
      \def\arraystretch{.9}
      \def\arraycolsep{0pt}
      \begin{array}{c}
        v_3
        \\
        v_6
      \end{array}
    \right\rangle
    \Big/
    \big(
      \def\arraystretch{.9}
      \def\arraycolsep{0pt}
      \begin{array}{c}
        [v_3,v_3] 
        \,=\, 
        v_6
      \end{array}
    \big)
    \hspace{.7cm}
    \Leftrightarrow
    \hspace{.7cm}
    \mathrm{CE}\big(
      \mathfrak{l}S^4
    \big)
    \;\;
    \simeq
    \;\;
    \mathbb{R}
    \left[
      \def\arraystretch{.9}
      \def\arraycolsep{0pt}
      \begin{array}{c}
        g_4
        \\
        g_7
      \end{array}
    \right]
    \Big/
    \left(
      \def\arraystretch{.9}
      \def\arraycolsep{0pt}
      \begin{array}{l}
        \mathrm{d}\, g_4 \;=\; 0
        \\
        \mathrm{d}\, g_7 \;=\; 
        \tfrac{1}{2} g_4 g_4
      \end{array}
    \right)
    ,
\end{equation}
\item while with the M5's B-field Bianchi identity
\eqref{SuperBFieldBianchiIdentityInIntroduction} {\it adjoined}, in addition a non-trivial unary bracket $[-]$ appears:
\begin{equation}
  \label{WhiteheadOfHHopfFib}
  \mathfrak{l}_{_{S^4}}S^7
  \;\simeq\;
    \mathbb{R}
    \left\langle
      \def\arraystretch{.9}
      \def\arraycolsep{0pt}
      \begin{array}{c}
        v_3
        \\
        v_6
        \\
        \color{purple}
        v_2
      \end{array}
    \right\rangle
    \Big/
    \left(
      \def\arraystretch{1.1}
      \def\arraycolsep{1pt}
      \begin{array}{ccc}
        [v_3,v_3] 
        &=&
        v_6
        \\
        \color{purple}
        {[v_3]} 
        &\color{purple}=&
        \color{purple}
        v_2
      \end{array}
    \right)
    \hspace{.7cm}
    \Leftrightarrow
    \hspace{.7cm}
    \mathrm{CE}\big(
      \mathfrak{l}_{{}_{S^4}}S^7
    \big)
    \;\;
    \simeq
    \;\;
    \mathbb{R}
    \left[
      \def\arraystretch{1}
      \def\arraycolsep{1pt}
      \begin{array}{c}
        g_4
        \\
        g_7
        \\
        \color{purple}
        h_3
      \end{array}
    \right]
    \Big/
    \left(
      \def\arraystretch{1}
      \def\arraycolsep{0pt}
      \begin{array}{l}
        \mathrm{d}\, g_4 \;=\; 0
        \\
        \mathrm{d}\, g_7 \;=\; 
        \tfrac{1}{2} g_4 g_4
        \\
        \color{purple}
        \mathrm{d}\, h_3 \;=\; g_4
      \end{array}
    \right)
    .
\end{equation}
\end{itemize}
Here the notation on the left indicates that these $L_\infty$-algebras happen to coincide (by \cite[Prop. 3.20]{FSS20-H}, cf. \cite[(38)]{FSSHopf}) with the (relative) $\mathbb{R}$-Whitehead $L_\infty$-algebras (cf. \cite[Prop. 5.16]{FSS23Char}\cite{SV2}) of (the homotopy type of) the 4-sphere and of the quaternionic Hopf fibration $S^7 \xrightarrow{ h_{\mathbb{H}} } S^4$, respectively, corresponding to the $L_\infty$-fibration
$$
  \begin{tikzcd}
    \mathfrak{l}_{{}_{S^4}}
    S^7
    \ar[
      d,
      "{
        \mathfrak{l}
        h_{\mathbb{H}}
      }"
    ]
    &
    v_3
    \ar[
      d,
      shorten=2pt,
      |->
    ]
    &[-15pt]
    v_6
    \ar[
      d,
      shorten=2pt,
      |->
    ]
    &[-15pt]
    v_2
    \ar[
      d,
      shorten=2pt,
      |->
    ]
    \\
    \mathfrak{l}S^4
    &
    v_3
    &
    v_6
    &
    0
    \,.
  \end{tikzcd}
$$

This means that the on-shell flux density content
on the M5-brane is concisely embodied by commutative diagrams of smooth super-spaces (as explained in  \cite[\S 2.1]{GSS24-SuGra}, for background see \cite{GS23}\cite{GSS24-SuperSet}) of the following form:
\begin{equation}
  \label{DiagrammaticEOMs}
  \begin{tikzcd}[
    column sep=50pt,
    row sep=35pt
  ]
    \mathllap{
      \scalebox{.7}{
        \color{darkblue}
        \bf
        Super-worldvolume
      }
      \;\;
    }
    \Sigma^{1,5\vert 2 \cdot \mathbf{8}}
    \ar[
      d,
      "{
        \phi
      }",
      "{
        \scalebox{.7}{
          \color{darkgreen}
          \bf
          \def\arraystretch{.9}
          \begin{tabular}{c}
            M5 super-immersion
            \\
            (Def. \ref{M5SuperImmersion})
          \end{tabular}
        }
      }"{swap, xshift=-4pt}
    ]
    \ar[
      rr,
      dashed,
      "{
        \big(
          \phi^\ast
          G_4^s
          ,\;
          \phi^\ast
          G_7^s
          ,\;
          {
            \color{purple}
            H_3^s
          }
        \big)
      }",
      "{
        \scalebox{.7}{
          \color{purple}
          \bf
          \def\arraystretch{.9}
          \begin{tabular}{c}
            worldvolume
            \\
            B-field flux
          \end{tabular}
        }
      }"{swap}
    ]
    &&
    \Omega^1_{\mathrm{dR}}
    \big(
      -
      ;\,
      \mathfrak{l}_{S^4}
      S^7
    \big)
    _{\mathrm{clsd}}
    \ar[
      d,
      ->>,
      "{
        (
        \mathfrak{l}
        h_{\mathbb{H}}
        )_\ast
      }"
    ]
    \\
    \mathllap{
      \scalebox{.7}{
        \color{darkblue}
        \bf
        11d super-spacetime
      }
      \;\;
    }
    X^{1,10\vert \mathbf{32}}
    \ar[
      rr,
      "{
        (
          G_4^s
          ,\;
          G_7^s
        )
      }",
      "{
        \color{darkgreen}
        \bf
        \scalebox{.7}{
          \color{darkgreen}
          \bf
          \def\arraystretch{.9}
          \begin{tabular}{c}
            background 
            \\
            C-field flux
          \end{tabular}
        }
      }"{swap}
    ]
    &&
    \Omega^1_{\mathrm{dR}}
    \big(
      -
      ;\,
      \mathfrak{l}S^4
    \big)_{\mathrm{clsd}}
  \end{tikzcd}
  \hspace{.6cm}
  \Leftrightarrow
  \hspace{.6cm}
  \left\{\!\!\!
  \def\arraystretch{1.3}
  \begin{array}{l}
  \mathrm{d}\, H_3^s
  \;=\;
  \phi^\ast G_4^s
  \\
  \\
  \mathrm{d}\, G_4^s \,=\, 0
  \\
  \mathrm{d}\, G_7^s \,=\, 
  \tfrac{1}{2} G_4^s \, G_4^s.
  \end{array}
  \right.
\end{equation}
Here the bottom map reflects a solution to 11d supergravity (by \cite[Ex. 2.30, Thm. 3.1]{GSS24-SuGra}, following \cite{CF80}\cite{BrinkHowe80}), the left map reflects a solution to the M5-brane's equations of motion (discussed in \S\ref{SuperEmbeddingConstruction}, following \cite{HoweSezgin97b}\cite[\S 5.2]{Sorokin00}) in this background, and the unique top map making the diagram commute extracts the corresponding worldvolume flux density (re-derived in Prop. \ref{BianchiIdentityOnM5BraneInComponents} below).

\medskip

\noindent
{\bf Super-flux quantization on M5-branes.}
The upshot of casting the equations of motion
of the M5-brane in the diagrammatic form \eqref{DiagrammaticEOMs} is that it reveals the admissible flux quantization laws 
(cf. \cite[\S 3.2]{SS24-Flux})
to be classified by fibrations 
$p : \mathcal{A} \xrightarrow{\;} \mathcal{B}$ (of connected nilpotent spaces of finite rational homotopy type $\mathfrak{l}p$, cf. \cite[Def. 5.1]{FSS23Char}) whose rational homotopy type
(the target of the twisted character map, cf. \cite[\S V]{FSS23Char})
is that of the quaternionic Hopf fibration $h_{\mathbb{H}}$, 
\begin{equation}
  \label{WorldvolumeCharges}
  \begin{tikzcd}[
    column sep=45pt,
    row sep=10pt
  ]
    \Sigma^{1,5\vert 2 \cdot \mathbf{8}}
    \ar[
      dd,
      "{
        \phi
      }"
    ]
    \ar[
      rr,
      dashed,
      "{
        (
          \phi^\ast
          c_3
          ,\,
          \phi^\ast c_6
          ,\,
          {
            \color{purple}
            b_2
          } 
        )
      }",
      "{
        \scalebox{.7}{
          \color{purple}
          \bf 
          \def\arraystretch{.9}
          \begin{tabular}{c}
            worldvolume
            \\
            charges
          \end{tabular}
        }
      }"{swap}
    ]
    &&
    \mathcal{A}
    \ar[
      dd,
      "{
        p
      }"
    ]
    &
    \mathfrak{l}_{{}_\mathcal{B}}
    \mathcal{A}
    \ar[
      r,
      "{
        \overset{!}{\sim}
      }"
    ]
    \ar[
      dd,
      "{
        \mathfrak{l}p
      }"
    ]
    &[-10pt]
    \mathfrak{l}_{{}_{S^4}}
    S^7
    \ar[
      dd,
      "{
        \mathfrak{l}
        h_{\mathbb{H}}
      }"
    ]
    \\
    &&
    \ar[
      r,
      phantom,
      shift right=3pt,
      "{
        \longmapsto
      }"
    ]
    &
    {}
    \\
    X^{1,10\vert \mathbf{32}}
    \ar[
      rr,
      "{
        (c_3
        ,\, 
        c_6)
      }",
      "{
        \scalebox{.7}{
          \color{darkgreen}
          \bf
          \def\arraystretch{.9}
          \begin{tabular}{c}
            background
            \\
            M-brane charges
          \end{tabular}
        }
      }"{swap}
    ]
    &&
    \mathcal{B}
    &
    \mathfrak{l}
    \mathcal{B}
    \ar[
      r,
      "{
        \overset{!}{\sim}
      }"
    ]
    &
    \mathfrak{l}S^4
  \end{tikzcd}
\end{equation}
On the bottom of the diagram this is the situation discussed in \cite{GSS24-SuGra}:
Given such a choice $\mathcal{B}$ for the base classifying space of the cohomology theory in which  M-brane charge takes values, the flux-quantized gauge fields of the 11d supergravity background according to \eqref{GaugePotentialAsHomotopy} are given by dashed maps as on the bottom of the following diagram of supergeometric $\infty$-groupoids, constituting a cocycle in differential $\mathcal{B}$-cohomology \cite[Def. 9.3]{FSS23Char}:
\begin{equation}
  \label{M5BraneFluxQuantizationInIntroduction}
  \begin{tikzcd}[
    row sep=5pt, column sep=huge
  ]
    &[+10pt]
      \mathcal{A}
      \ar[
        dddd
      ]
    \ar[
      drr,
      "{
        \mathbf{ch}_{
          \mathcal{A}
        }^{\mathcal{B}}
      }"{sloped}
    ]
    &[-20pt]
    &[-10pt]
    \\
    \Sigma^{1,5\vert 2\cdot \mathbf{8}}
    \ar[
      dddd,
      "{
        \phi
      }"
    ]
    \ar[
      ur,
      dashed,
      "{
       (
         \phi^\ast
         c_3
         ,\,
         \phi^\ast
         c_6
         ,\,
         {
         \color{purple}
         b_2
         }
        )
      }"{
        sloped,
        pos=.45
      },
      "{\ }"{pos=.4, swap, name=s1}      
    ]
    \ar[
      drr,
      dashed,
      crossing over,
      "{
        \scalebox{.95}{
        \colorbox{white}{$
        (
          \phi^\ast
          G_4^s
          ,\,
          \phi^\ast
          G_7^s
          ,\,
          {
          \color{purple}
          H_3^s
          }
        )
        $}
        }
      }"{sloped, swap, pos=.4},
      "{\ }"{pos=.3, name=t1}
    ]
    \ar[
      from=s1,
      to=t1,
      Rightarrow,
      dashed, color=purple, 
      "{
        \widehat{B}_2 
      }"{pos=.8}
    ]
    &&&
    \shape
    \big(
    \Omega^1_{\mathrm{dR}}(
      -
      ;\,
      \mathfrak{l}_{
        \mathfrak{l}S^4
      }
      S^7
    )_{\mathrm{clsd}}
    \big)
    \ar[dddd]
    \\
    &&
    \Omega^1_{\mathrm{dR}}\big(
      -;
      \,
      \mathfrak{l}_{S^4}
      S^7
    \big)_{\mathrlap{\mathrm{clsd}}}
    \ar[
      ur,
      shorten <=-5pt,
      shorten >=-5pt,
      "{
        \eta^{\,\scalebox{.6}{$\shape$}}
      }"{description}
    ]
    \\
    \\
    &
    \mathcal{B}
    \ar[
      drr,
      "{
        \mathbf{ch}_{\mathcal{B}}
      }"{sloped, pos=.3}
    ]
    \\
    X^{1,10\vert\mathbf{32}}
    \ar[
      ur,
      "{
        (c_3, c_6)
      }"{sloped},
      "{\ }"{swap, name=s2, pos=.4}
    ]
    \ar[
      drr,
      "{
        (G_4^s, G_7^s)
      }"{sloped, swap, pos=.4},
      "{\ }"{name=t2, pos=.3}
    ]
    \ar[
      from=s2,
      to=t2,
      Rightarrow,
      color=orangeii, 
      "{
        (
          \widehat{C}_3
          ,\
          \widehat{C}_6
        )
      }"{pos=.8}
    ]
    &&&
    \shape
    \big(
    \Omega^1_{\mathrm{dR}}(
      -;
      \,
      \mathfrak{l}S^4
    )_{\mathrm{clsd}}
    \big)
    \\
    &&
    \Omega^1_{\mathrm{dR}}\big(
      -;
      \,
      \mathfrak{l}S^4
    \big)_{\mathrlap{\mathrm{clsd}}}
    \ar[
      ur,
      shorten <=-5pt,
      shorten >=-5pt,
      "{
        \eta^{\,\scalebox{.6}{$\shape$}}
      }"{description}
    ]
    \ar[
      from=uuuu,
      crossing over
    ]
  \end{tikzcd}
\end{equation}
Now, the completion of the bottom part to the full diagram via the top dashed maps constitutes the construction of complete flux-quantized on-shell fields on the M5-brane worldvolume, being cocycles in the {\it twisted} differential $\mathcal{A}$-cohomology \cite[Def. 11.2]{FSS23Char}, with the twist here being the pullback of the bulk differential cocycle to the worldvolume.

\smallskip

It is instructive to re-write this a little (as in \cite[(11.4)]{FSS23Char}):
Noticing that in \eqref{M5BraneFluxQuantizationInIntroduction} we have homotopy ``cones'' over a ``cospan'' (of supergeometric $\infty$-groupoids), the diagram factors equivalently through the homotopy fiber products (of the character maps $\mathbf{ch}$ with the flux images $\eta^{\scalebox{.7}{$\shape$}}$):
$$
  \begin{tikzcd}[
    column sep=15pt
  ]
  \Sigma^{1,5\vert 2\cdot \mathbf{8}}
  \ar[
    dd,
    "{
      \phi
    }"
  ]
  \ar[
    rrrr,
    dashed,
    "{
      (
        \phi^\ast c_3
        ,\, 
        \phi^\ast c_6
        ,\, 
      {\color{purple} b_2})
      \,
      \tensor
        [_{\mathbf{ch}}]
        {\times}
        {_{\eta^{\scalebox{.6}{$\shape$}}}}
       \,
       (
         \phi^\ast
         G_4^s
         ,\,
         \phi^\ast
         G_7^s
         ,\,
         {
           \color{purple}
           H_3^s
         }
       )
    }",
    "{
      \scalebox{.7}{
        \color{purple}
        \bf
        \def\arraystretch{.9}
        \begin{tabular}{c}
          complete flux-quantized
          \\
          B-field on M5-brane
        \end{tabular}
      }
    }"{swap}
  ]
  &&
  &[+60pt]
  &[+60pt]
  \mathcal{A}_{\mathrm{diff}/\mathcal{B}}
  \ar[
    dd,
    "{
      p_{\mathrm{diff}}
    }"
  ]
  &[-5pt]
  :=
  &[-5pt]
  \mathcal{A}
  \;
  \tensor
    [_{\mathbf{ch}}]
    {\times}
    {_{\eta^{\scalebox{.6}{$\shape$}}}}
    \;
      \Omega^1_{\mathrm{dR}}
      (
        -
        ;\,
        \mathfrak{l}_{S^4} S^7
      )_{\mathrm{clsd}}    
  \ar[
    dd,
    "{
      p
      \,
    \tensor
      [_{\mathbf{ch}}]
      {\times}
      {_{\eta^{\scalebox{.6}{$\shape$}}}}
      \;
      \mathfrak{l}p_\ast
    }"
    %{description}
  ]
    \\
    \\
  X^{1,10\vert \mathbf{32}}
  \ar[
    rrrr,
    "{
      (c_3, c_6)
      \,
      \tensor
        [_{\mathbf{ch}\!}]
        {\times}
        {_{\!\eta^{\scalebox{.6}{$\shape$}}}}
      \,
      (G_4^s, G_7^s)
    }",
    "{
      \scalebox{.7}{
        \color{darkgreen}
        \bf
        \def\arraystretch{.9}
        \begin{tabular}{c}
          complete flux-quantized
          \\
          C-field of 11 SuGra background
        \end{tabular}
      }
    }"{swap}
  ]
  &&&&
  \mathcal{B}_{\mathrm{diff}}
  &:=&
  \mathcal{B}
  \;
  \tensor
    [_{\mathbf{ch}}]
    {\times}
    {_{\eta^{\scalebox{.6}{$\shape$}}}}
    \;
      \Omega^1_{\mathrm{dR}}
      (
        -
        ;\,
        \mathfrak{l}S^4
      )_{\mathrm{clsd}}    
  \end{tikzcd}
$$
in refinement of the front square diagram in \eqref{M5BraneFluxQuantizationInIntroduction}, which expresses in diagrammatic form
\eqref{DiagrammaticEOMs}
the super-flux Bianchi identity \eqref{TwistedBianchiIdentityInIntroduction}
that we started with. 

\smallskip

This solves the problem of flux-quantization on the M5-brane in generality. In order to say more,  we next turn attention to a particular choice of admissible flux-quantization law.

\medskip

\noindent
{\bf Hypothesis ``H'' about Flux quantization on the M5-brane.}
The admissibility condition \eqref{WorldvolumeCharges} on flux-quantization laws on the M5-brane is a strong constraint, but it still leaves infinitely many in-equivalent choices. For instance, with every admissible flux quantization law $p : \mathcal{A} \xrightarrow{\;} \mathcal{B}$ also its Cartesian product with the classifying space $B K$ of any {\it finite} group $K$ is again admissible (because the homotopy groups of such $B K$ are purely torsion, so that the corresponding charges have vanishing reflection in the flux densities).
At the same time, the form of \eqref{WorldvolumeCharges} suggests an evident choice among this infinitude of choices: The condition that $p : \mathcal{A} \xrightarrow{\;} \mathcal{B}$ be of the same rational homotopy type as the quaternionic Hopf fibration is of course solved by taking $p := h_{\mathbb{H}}$ to be the quaternionic Hopf fibration itself!

\smallskip 
Now the (twisted) generalized cohomology theory classified by the (quaternionic Hopf fibration between) spheres is called (twisted, unstable) co-{\it Homotopy} theory (review and pointers in \cite{FSS23Char}, more on this in \cref{ResultingSolitons} below), denoted as follows:
\begin{equation}
  \label{TwistedCohomotopyInIntroduction}
  \hspace{-.7cm}
  \def\arraystretch{1}
  \begin{array}{crclcl}
  {
    \scalebox{.7}{
      \color{darkblue}
      \bf
      \def\arraystretch{.9}
      \begin{tabular}{c}
        M-brane charge in
        \\
        4-Cohomotopy
      \end{tabular}
    }
  }
  &
  [c_3,c_6]
  \in
  \pi^4(X)
  &:=&
  \pi_0
  \, 
  \mathrm{Maps}
  \big(
    X
    ,\,
    S^4
  \big)
  &=&
  \Big\{
  \!
  \begin{tikzcd}[
    column sep=25pt
  ]
    X
    \ar[
      rr,
      dashed,
      "{
        (c_3,c_6)
      }"
    ]
    &&
    S^4
  \end{tikzcd}
  \! \Big\}_{\!\big/ \mathrm{hmtp.}}
  \\[15pt]
  {
    \scalebox{.7}{
      \color{darkblue}
      \bf
      \def\arraystretch{.9}
      \begin{tabular}{c}
        {\color{purple}
        Charges on M5-branes} in
        \\
        twisted 3-Cohomotopy
      \end{tabular}
    }
  }
  &
  \pi^{3+\phi^\ast(c_3,c_6)}(\Sigma)
  &:=&
  \pi_0
  \mathrm{Maps}\big(
    \Sigma
    ,\,
    S^7
  \big)_{/S^4}
  &=&
  \left\{
  \adjustbox{raise=3pt}{
  \begin{tikzcd}
    \Sigma
    \ar[
      d,
      "{ \phi }"{swap},
      "{\  }"{name=t, pos=.8}
    ]
    \ar[
      rr,
      dashed,
      "{
        {
          \color{purple}
          b_2
        }
      }",
      "{\ }"{swap,name=s}
    ]
    &&
    S^7
    \ar[
      d,
      "{ h_{\mathbb{H}} }"
    ]
    \\
    X
    \ar[
      rr,
      "{
        (
        c_3
        ,\,
        c_6
        )
      }"
    ]
    &&
    S^4
    \ar[
      from=s,
      to=t,
      Rightarrow,
      dashed
    ]
  \end{tikzcd}
  }
  \right\}_{
    \!\!\!\big/\mathrm{rel.hmtp.}}
  \end{array}
\end{equation}
Therefore, the hypothesis that this ``evident'' choice of flux-quantization is the ``correct'' one for completing the theory of the M5-brane in M-theory has been called ``Hypothesis H'' in \cite{FSS20-H}\cite{FSSHopf}\cite{FSS20TwistedString}\cite{GS21}\cite{SS23-Mf}, following \cite[\S 2.5]{Sati13} (the corresponding differential co-Homotopy for the M5-brane fields was first considered in \cite{FSS15-M5WZW}), which may be thought of as an M-theoretic version of the traditional ``Hypothesis K'' that D-brane charges are in (twisted) K-theory (cf. \cite[Rem. 4.1]{SS23-DefectBranes}). 

\smallskip

Notice that maps to the 4-sphere 
also appear on $N$ M5-worldvolumes $\Sigma$ in the guise of the scalar fields $\Phi$ taking values in the surrounding $\mathbb{R}^5 \setminus \{0\} \,\simeq\, \mathbb{R}_{> 0} \times S^4$, from which the $G_4$-flux density $\simeq N \mathrm{dvol}_{S^4}$ on spacetime thus gets pulled back to the worldvolume as $\Phi^\ast G_4 \,=\, N \cdot \Phi^\ast \mathrm{dvol}_{S^4}$
(first highlighted in \cite[\S 3]{GanorMotl98}\cite{Intriligator00}, as such reviewed in \cite[p. 20]{Berman08} and further developed in \cite{FSSHopf}). 
Now by {\it Hypothesis H} already $G_4$ itself is the ``Cohomotopical character'' (\cite[\S 12]{FSS23Char}) on spacetime of --- hence (by \cite[(47)]{FSS20-H}) the pullback of the unit volume form along  --
a map $c_3$ from the ambient spacetime to the homotopy type of $S^4$ (playing the role of the Cohomotopy classifying space), which near the horizon of a black M5-brane is (cf. \cite[Rem. 1]{SS21M5Anomaly}\cite{GSS24-AdS7})
just a multiple $N \in \mathbb{Z}$ of the projection map $p_{S^4}$ identifying the geometric $S^4$ around the M5-brane with the ``classifying'' $S^4$ for Cohomotopy theory:
\begin{equation}
  \label{CFieldFluxPulledBackToNM5Breanes}
  \hspace{-2cm} 
  \begin{tikzcd}[row sep=25pt,
    column sep=75pt
  ]
    \Sigma
    \ar[
      dd,
      "{
        \mathrm{id}_\Sigma
        \times
        \Phi
      }"{description},
      "{
        \scalebox{.7}{
          \color{darkgreen}
          \bf
          \begin{tabular}{c}
            M5-embedding map 
            \\
            for scalar fields \scalebox{1.2}{$\Phi$}
          \end{tabular}
        }
      }"{
        swap,
        xshift=-6pt
      }
    ]
    \ar[
      ddrr,
      end anchor={[yshift=3pt]},
      "{
        N \cdot \Phi^\ast \mathrm{dvol}_{S^4}
        \;\;\;\mapsfrom\;\;\;
        \mathrm{dvol}_{S^4}
      }"{sloped}
    ]
    \\
    \\
    \mathllap{
      \scalebox{.7}{
        \color{darkblue}
        \bf
        \def\arraystretch{.9}
        \begin{tabular}{c}
          Spacetime near
          \\
          black M5 horizon 
        \end{tabular}
      }
    }
    \underbrace{
    \Sigma 
      \times 
    \mathbb{R}_{> 0}
     \times
    S^4
    }_{X}
    \ar[
      rr,
      "{ 
        c_3 
        \,=\,
        N
        \cdot
        p_{S^4}
      }"{pos=.4},
      "{
        \scalebox{.7}{
          \color{darkgreen}
          \bf
          \def\arraystretch{.9}
          \begin{tabular}{c}
            classifying map
            of C-field charge
            \\
            by Hypothesis H
          \end{tabular}
        }
      }"{swap, pos=.4}
    ]
    &&
    S^4
    \mathrlap{
        \scalebox{.7}{
          \color{darkblue}
          \bf
          \def\arraystretch{.9}
          \begin{tabular}{c}
            Classifying space
            \\
            for 4-Cohomotopy
          \end{tabular}
        }    
    }
  \end{tikzcd}
\end{equation}
Here Hypothesis H entails that the global C-field charge on $\Sigma$ is not just the pulled back flux shown on the diagonal above, but the homotopy class of that diagonal map itself (of which the pullback flux is just the ``character'' in de Rham cohomology). E.g.: if $\Sigma \simeq \mathbb{R} \times S^5$ then there is a torsion charge $[S^5 \to S^4] \in \pi^4(\Sigma) \,\simeq\, \ZTwo$ (a homotopy group of spheres, cf. \cite{SS23-Mf}) which is non-trivial in Cohomotopy but not detected by the de Rham class of $\Phi^\ast G_4$.

\medskip 
Hypothesis H is supported by the fact that it {\it implies} a series of subtle topological (anomaly cancellation-) conditions (survey in \cite[Tbl 1]{FSS20-H}\cite[\S 12]{FSS23Char}) that are expected to hold in M-theory (including M5-brane anomaly cancellation \cite{SS21M5Anomaly}), notably the following two
\footnote{
\label{OnTangentialTwists}
The gray terms in \eqref{ShiftedG4FluxQuantization} and \eqref{G7FluxQuantization} do arise on curved spacetimes by flux-quantization in {\it tangentially twisted} co-Homotopy \cite{FSS20-H}.

Here we disregard discussion of the tangential twist for the sake of brevity. Under this simplifying assumption the gray term in \eqref{ShiftedG4FluxQuantization} is not implied by the cohomotopical quantization condition,
and thereby the implied integrality condition on $G_4$ is the expected one (only) on spacetimes for which $\tfrac{1}{4}p_1$ is integral. This however is the case for the main examples of interest here, such as the Freund-Rubin spacetime $\mathrm{AdS}_{7} \times S^{4}$ (cf. Ex. \ref{HolographicM5BraneSuperImmersions}) where the Pontrjagin classes actually vanish \cite[Prop. 22]{SS21M5Anomaly}.

% from the point of view of super-gravity these are higher-curvature corrections \cite{Tsimpis04b} whose super-space discussion is beyond the scope of the present article (cf. \cite[p. 35]{GSS24-SuGra})

%  \cite{FSSHopf}\cite{FSS20TwistedString}.

% Note that on spacetimes with vanishing $p_1$-form, the flux quantization \eqref{ShiftedG4FluxQuantization} reduces to saying that $G_4$ has integral (instead of half-integral) periods (cf. again \cite[p 2-3]{Witten97a}). This is indeed enforced by flux-quantization in plain (untwisted) 4-Cohomotopy due to the Hopf-degree theorem (cf. \cite[(35)]{FSS20-H}), which says that for closed oriented 4-manfiolds $Z^4$ (here: 4-cycles in spacetime) postcomposition with a generator 
% $
%   S^4 
%     \xrightarrow{1} 
%   K(\mathbb{Z},4)
% $ 
% of $H^4(S^4;\, \mathbb{Z})$ 
% induces a bijection
% $
%    \pi^4(Z^4)
%    \,\equiv\,
%    \pi_0 \mathrm{Map}\big(
%      X,\ S^4
%    \big)
%    \xrightarrow[\iota^\ast]{\sim}
%    H^4(Z^4; \mathbb{Z})
% $,
% and since the cohomotopical character map identifies $[G_4]$ with the pullback of the unit volume form on $G_4$, hence of the real image of $1 \in H^4(S^4; \mathbb{Z})$ (cf. \cite[(47)]{FSS20-H}\cite[Ex. 2.5]{SS23-Mf}).
} 
which ensure that the exponentiation of the usual gauge-coupling action functionals for the M-branes 
(\cite[\S 3]{PST97})
are globally well-defined:

\vspace{1mm} 
\begin{itemize}[
  leftmargin=.6cm,
  topsep=1pt,
  itemsep=2pt
]
  \item 
  \cite[Prop. 3.13]{FSS20-H}
  the shifted 
   flux quantization 
   \cite[p 2-3]{Witten97a}
   of the flux sourced by M5-branes:
   \vspace{1mm} 
   \begin{equation}
     \label{ShiftedG4FluxQuantization}
     \big[G_4 {\color{gray}+ \tfrac{1}{4}p_1}\big] 
     \,\in\, 
     H^4(X;\, \mathbb{Z}) \xrightarrow{\quad} H^4_{\mathrm{dR}}(X)
   \,,
   \end{equation}
   which serves as the Wess-Zumono-term on M2-branes,
  \item 
  \cite[Thm. 4.8]{FSSHopf}:
  the integral flux quantization of the
  {\it Page charge} 
  \vspace{1mm} 
  \begin{equation}
    \label{G7FluxQuantization}
    \big[\,
    \widetilde 2 G_7 + H_3 \, \big(G_4 {\color{gray}+ \tfrac{1}{4}p_1}\big)
    \big]
    \;\in\;
    H^7\big(
      \Sigma^7;\,
      \mathbb{Z}
    \big)
    \xrightarrow{\quad}
    H^7_{\mathrm{dR}}(\Sigma^7)
  \end{equation}
  sourced by M2-branes, which serves as the completed {\it Hopf-Wess-Zumino term} on M5-branes.
\end{itemize}

\smallskip

Note here that the homotopy fiber of the quaternionic Hopf fibration is the 3-sphere
$\!\!
  \begin{tikzcd}
    S^3
    \ar[
      r,
      "{
        \mathrm{fib}(h_{\mathbb{H}})
      }"
    ]
    &
    S^7
    \ar[
      r,
      "{
        h_{\mathbb{H}}
      }"
    ]
    &
    S^4
    \,,
  \end{tikzcd}
\!\!$
(which remains the case after $\widehat{\mathrm{Sp}(2)}$-twisting)
and which implies that the charges on the M5 \eqref{TwistedCohomotopyInIntroduction} are {\it locally} cocycles in 3-Cohomotopy, globally twisted by the 4-Cohomotopy of the supergravity background
(\cite[Rem. 3.17]{FSS20-H}\cite[pp. 21]{FSSHopf}\cite[p. 6]{FSS20TwistedString}; a twist of this kind was anticipated early on in \cite{Sati06}\cite{Sati10}). In \cite{FSS20TwistedString} it was explained how, under Hypothesis H with gravitational twists included, as above, this reveals the $H_3$-flux as associated with a kind of ``non-abelian gerbe field'' on the M5-brane not unlike the proposals previously made in \cite[p. 6, 15]{Witten04}\cite[\S 3]{Witten09}\cite{SaemannSchmidt18}\cite{SaemannSchmidt20}.

\medskip

\noindent
{\bf The need for super-flux quantization.}
However, in  previous discussions of Hypothesis H it had been left open which effect, if any, the duality-relations on the flux densities have on the flux quantization process (away from a Cauchy surface, where the issue was solved in \cite{SS24-Phase}). This is the remaining point which we solve here (in tandem with \cite{GSS24-SuGra}), by passing to super-spacetimes and observing that here the duality relations on the bosonic flux densities are absorbed within the Bianchi identities of their super-flux enhancements, so that flux quantization on super-space is revealed to already deal with the exact on-shell field content, not requiring any further constraints.
In order to properly bring out this remarkable but subtle point we proceed as follows:

\smallskip 
\begin{itemize}[
  leftmargin=.6cm,
  topsep=1pt,
  itemsep=2pt
]
\item[in]  \S\ref{SuperEmbeddings} {\bf Revisiting  super-embeddings} we review the idea of ``super-embeddings'' of super $p$-branes with attention to what we feel have remained loose ends: the global nature of co-frame fields, the relation to the classical theory of Darboux co-frame fields, and an observation on how to naturally unify/streamline what actually is a list of traditional ``super-embedding'' conditions.

\item[in] \S\ref{SuperEmbeddingConstruction} {\bf M5-brane super-immersions} we systematically re-derive the equations of motion of the 3-flux on M5-branes using a transparent algebraic description of the all-important reduction and branching of spin representations on the M5-brane.
\end{itemize}

\smallskip

\noindent
{\bf Analyzing quantized flux on M5-branes.}
With these results in hand, we close

\begin{itemize}[
  leftmargin=.6cm,
  topsep=1pt,
  itemsep=2pt
]
\item[in]  
\S\ref{FluxQuantizationOnM5Branes} {\bf Flux quantization on M5 branes} by showing how flux quantization on the M5 reproduces the traditional local formulas for higher gauge potentials while completing these to global fields that may exhibit skyrmionic and anyonic topological properties.
\end{itemize}

\newpage

%%%%%%%%%%%%%%%%%%%%%%%%%%%%%%%%%
\section{Revisiting ``super-embeddings''}
\label{SuperEmbeddings}
%%%%%%%%%%%%%%%%%%%%%%%%%%%%%%%%%

Here we give a streamlined and rigorous account of the ``super-embedding approach'' to super $p$-brane sigma models (due to \cite{BPSTV95}\cite{HoweSezgin97a}\cite{HoweRaetzelSezgin98}\cite{Sorokin00}, previously reviewed in \cite{Bandos11}\cite{BandosSorokin23}), providing a precise super-geometric formulation (which seems previously to have been missed by super-geometers, cf. \cite[\S13.3]{Rogers07}). Besides setting the scene for the analysis following in \S\ref{SuperEmbeddingConstruction} and clearing up some fine-print not usually considered in the literature, the main observation here is (Prop. \ref{ComponentsOfBPImmersion}) a slick way to unify (Def. \ref{BPSImmersion} of ``BPS super-immersions'') the traditional data of ``super-embeddings'' (Rem. \ref{SuperEmbeddingConditionInTheLiterature}) by generalizing the classical notion of {\it Darboux coframe fields} for Riemannian immersions (discussed in \S\ref{DarbouxCoFrameFields}); see also Rem. \ref{ImmersionsAndEmbeddings} for the distinction between immersions and embeddings.

\medskip

\noindent
{\bf Conventions.} Our conventions are standard, but since the computations in \S\ref{SuperEmbeddingConstruction} crucially depend on the corresponding prefactors, here to briefly make them explicit:
\begin{notation}[\bf Tensor conventions]
$\,$
\begin{itemize}[leftmargin=.4cm]
\item
  The Einstein summation convention applies throughout: Given a product of terms indexed by some $i \in I$, with the index of one factor in superscript and the other in subscript, then a sum over $I$ is implied:
  $
    x_i \, y^i
    :=
    \sum_{i \in I} 
    x_i \, y^i
  $.

\item
Our Minkowski metric is the matrix
\begin{equation}
  \label{MinkowskiMetric}
  \big(\eta_{ab}\big)
    _{a,b = 0}
    ^{ d }
  \;\;
    =
  \;\;
  \big(\eta^{ab}\big)
    _{a,b = 0}
    ^{ d }
  \;\;
    :=
  \;\;
  \Big(
    \mathrm{diag}
      (-1, +1, +1, \cdots, +1)
  \Big)_{a,b = 0}^{d}
\end{equation}
\item
  Shifting position of frame indices always refers to contraction with the  Minkowski metric \eqref{MinkowskiMetric}:
  $$
    V^a 
      \;:=\;
    V_b \, \eta^{a b}
    \,,
    \;\;\;\;
    V_a \;=\; V^b \eta_{a b}
    \,.
  $$
\item Skew-symmetrization of indices is denoted by square brackets ($(-1)^{\vert\sigma\vert}$ is sign of the permutation $\sigma$):
$$
  V_{[a_1 \cdots a_p]}
  \;:=\;
  \tfrac{1}{p!}
  \sum_{
    \sigma \in \mathrm{Sym}(n)
  }
  (-1)^{\vert \sigma \vert}
  V_{ a_{\sigma(1)} \cdots a_{\sigma(p)} }\,.
$$
\item
We normalize the Levi-Civita symbol to \begin{equation}
  \label{transversalizationOfLeviCivitaSymbol}
  \epsilon_{0 1 2 \cdots} 
    \,:=\, 
  +1
  \;\;\;\;\mbox{hence}\;\;\;\;
  \epsilon^{0 1 2 \cdots} 
    \,:=\, 
  -1
  \,.
\end{equation}
\item
We normalize the Kronecker symbol to
$$
  \delta
    ^{a_1 \cdots a_p}
    _{b_1 \cdots b_p}
  \;:=\;
  \delta^{[a_1}_{[b_1}
  \cdots
  \delta^{a_p]}_{b_p]}
  \;=\;
  \delta^{a_1}_{[b_1}
  \cdots
  \delta^{a_p}_{b_p]}
  \;=\;
  \delta^{[a_1}_{b_1}
  \cdots
  \delta^{a_p]}_{b_p}
$$
so that
\begin{equation}
  \label{ContractingKroneckerWithSkewSymmetricTensor}
  V_{
    \color{darkblue}
    a_1 \cdots a_p
  }
  \tensor*
    {\delta}
    {
    ^{ 
       \color{darkblue}
       a_1 \cdots a_p 
    }
    _{b_1 \cdots b_p}
    }
  \;\;
  =
  \;\;
  V_{[b_1 \cdots b_p]}  
  \;\;\;\;
  \mbox{and}
  \;\;\;\;
  \epsilon^{
    {\color{darkblue}  
      c_1 \cdots c_p
    }
    a_1 \cdots a_q
  }
  \,
  \epsilon_{
    {\color{darkblue}
    c_1 \cdots c_p 
    }
    b_1 \cdots b_q
  }
  \;\;
  =
  \;\;
  -
  \,
  p! \cdot q!
  \,
  \delta
    ^{a_1 \cdots a_q}
    _{b_1 \cdots b_q}
  \,.
\end{equation}
\end{itemize}
\end{notation}

%%%%%%%%%%%%%%%%%%%%%%%%%%%%%%%%%%
\subsection{Darboux co-frame fields}
\label{DarbouxCoFrameFields}
%%%%%%%%%%%%%%%%%%%%%%%%%%%%%%%%%%

We recall the classical notion of {\it Darboux co-frame fields} for (pseudo-)Riemannian immersions (Def. \ref{DarbouxCoFrame} below, which may not to have found due attention in the super-embedding literature before) and re-cast it into an equivalent form (Prop. \ref{ReformulatingTheDarbouxCondition}) whose super-geometric generalization turns out to be just that of $\sfrac{1}{2}$BPS super-immersions (``super-embeddings'') of super $p$-branes (\S\ref{BPSSuperImmersions} below).

\smallskip

In order to be precise, we begin now by being a little pedantic about some maybe underappreciated global aspects of local coframe fields (which can be and typically are ignored in local analysis but can no longer be ignored for global discussions such as of flux quantization) --- the impatient reader may want to skip ahead to \S\ref{BPSSuperImmersions} and come back here only as need be.

\medskip

\noindent
{\bf Relativistic local co-frame fields.}
In much of the physics literature, coframe fields $E$ on a spacetime $X$ are shown on a single tacitly-assumed chart $U \xhookrightarrow{\;} X$ only, instead of on all of spacetime $X$, leaving their global definition to the imagination of the reader. But since global issues cannot be neglected for our purpose of flux quantization, we introduce a tad of extra notation that allows to elegantly deal with this issue properly.

\medskip

 {\bf Open covers.}
Namely given an open cover of spacetime
$
  \big\{
    U_j
      \xhookrightarrow{\;\iota_i\;}
    X
  \big\}_{j \in J}
$
such that the coframe field $E$ is naively defined on each of the charts $U_j$, then we denote smooth manifold which is the disjoint union of all these charts as follows

\begin{equation}
  \label{AnOpenCover}
  \overset{
   \mathclap{
     \raisebox{3pt}{
     \scalebox{.7}{
       \color{darkblue}
       \bf
       open cover
     }
     }
   }
  }{
  \big\{
    U_j
      \xhookrightarrow{\;\iota_i\;}
    X
  \big\}_{j \in J}
  }
  \hspace{1cm}
  \CoverOf{X}
  \;:=\;
  \overset{
   \mathclap{
     \raisebox{7pt}{
     \scalebox{.7}{
       \color{darkblue}
       \bf
       corresponding open submersion
     }
     }
   }
  }{
  \textstyle{
    \underset{
      j \in J
    }{\coprod}
  }
  \,
  U_j
  }
  \,,
  \hspace{.7cm}
  \mbox{with}
  \hspace{.7cm}
  \begin{tikzcd}[
    sep=0pt
  ]
    \CoverOf{X}
    \ar[
      rr,
      ->>,
      "{ p }"
    ]
    &&
    X
    \\
    \underbrace{
      (x,j)
    }_{
      \in \, U_j
    }
    &\longmapsto&
    x\,.
  \end{tikzcd}
\end{equation}
In terms of this, the co-frame field is a map of the form
$
E
  :
    T\CoverOf{X}
      \longrightarrow 
    \mathbb{R}^{1+d}
$,
namely a $J$-tuple of map
$\big( E_j : T U_j \to \mathbb{R}^{1+d} \big)_{j \in J}$,
satisfy some conditions which we summarize in Def. \ref{CoFrameFields} below.

To that end, notice that
given a pair of open covers $\big\{ U_j \xhookrightarrow{\iota_j} X \big\}_{j \in J}$ and $\big\{ U'_{j'} \xhookrightarrow{\iota'_j} X \big\}_{j' \in J'}$ (which might be the same) of the same space spacetime $X$, the disjoint union of all the intersections $U_j \cap U'_{j'}$ of their charts is their {\it fiber product} with respect to the maps \eqref{AnOpenCover}:
\begin{equation}
  \label{FiberProductOfCoveringSubmersion}
  \hspace{-5mm} 
  \CoverOf{X} 
  \!\times_{\!{}_X}\!
  \CoverOf{X}'
  \;=\;
  \textstyle{
    \underset{
      {j \in J}
      \atop
      {j' \in J'}
    }{\coprod}
  }
  U_j \cap U'_{j'}
  \hspace{.9cm}  
  \begin{tikzcd}[
    row sep=5pt
  ]
    &
    \CoverOf{X} 
    \!\times_{{}_X}\!
    \CoverOf{X}'
    \ar[
      dl,
      ->>,
      "{
        \mathrm{pr}
      }"{swap}
    ]
    \ar[
      dr,
      ->>,
      "{
        \mathrm{pr}'
      }"
    ]
    \\
    \CoverOf{X}
    \ar[
      dr,
      ->>,
      "{ p }"{swap}
    ]
    &&
    \CoverOf{X}'
    \ar[
      dl,
      ->>,
      "{ p' }"
    ]
    \\
    &
    X
  \end{tikzcd}
  \hspace{.7cm}
  \mbox{with}
  \hspace{.7cm}
  \begin{tikzcd}[row sep=-2pt, 
    column sep=0pt
  ]
    \CoverOf{X}
    \!
    \times_{\!{}_X}
    \!
     \CoverOf{X'}
    \ar[
      rr, 
      ->>,
      "{ \mathrm{pr} }"
    ]
    &&
    \CoverOf{X}
    \\
    \underbrace{
      (x,j,j')
    }_{
      \in
      \,
      U_{j}
      \cap 
      U'_{j'}
    }
    &\longmapsto&
    (x,j)
    \mathrlap{\,.}
  \end{tikzcd}
\end{equation}
Therefore the pullback of the co-frame along $\mathrm{pr}$ is the original one restricted from the charts $U_j$ to all
their intersections with the charts $U'_{j'}$:
\vspace{-1mm} 
$$
  \begin{tikzcd}[row sep=-2pt, column sep=small]
    \mathrm{pr}^\ast
    \,
    E
    \ar[r, phantom, "{ : }"]
    &[2pt]
    T
    \CoverOf{X}
    \!
    \times_{\!{}_X}
    \!
    \CoverOf{X}'
    \ar[
      rr,
      ->>,
      "{ \mathrm{pr} }"
    ]
    &&
    T
    \CoverOf{X}
    \ar[
      rr,
      ->>,
      "{ p }"
    ]
    &&
    X
    \\
    &
    \underbrace{
    (v,j,j')
    }_{
      \in
      \,
      T(
        U_j \cap U'_{j'}
      )
    }
    &\longmapsto&
    \underbrace{
      (v,j)
    }_{
      \in \,
      T U_j
    }
    &\longmapsto&
    E_j(v)\,.
  \end{tikzcd}
$$

\begin{definition}[\bf Relativistic local co-frame field]
\label{CoFrameFields}
Given a smooth manifold $X$ of dimension
$1+d$, a {\it Relativistic local co-frame field} $E$ on $X$ is any one of the following six equivalent structures (given in increasing level of concreteness):
\begin{itemize}[
  leftmargin=.8cm,
  topsep=1pt,
  itemsep=2pt
]
\item[\bf (i)] a smooth $\mathrm{O}(1,d)$-structure on $X$;

\item[\bf (ii)] a smooth reduction of the structure group of the frame bundle of $X$ through $\mathrm{O}(1,d) \hookrightarrow \mathrm{GL}(1+d)$;
\item[\bf (iii)] a vector bundle isomorphism from the tangent bundle to a Minkowski-space fiber bundle;
\item[\bf (iv)] a local trivialization of the tangent bundle whose transition functions take values in $\mathrm{O}(1,d) \hookrightarrow \mathrm{GL}(1+d)$;
\item[\bf (v)]
\begin{itemize}[leftmargin=.6cm]
  \item[\bf (a)] an open cover 
  $\begin{tikzcd}
      \CoverOf{X}
    \ar[r, ->>, "{ \mathrm{opn} }"{swap}] 
    & 
    X
    \,,
   \end{tikzcd}$  

   \vspace{-2mm} 
   \item[\bf (b)]
   a fiberwise linear isomorphism
   $
     \begin{tikzcd}[
       column sep=10pt,
       row sep=-1pt
     ]
       T   \CoverOf{X}
       \ar[
         rr, 
         shorten=-1pt,
         "{ \sim }"{swap},
         "{ t }"
       ]
       \ar[
         dr,
         shorten=-2pt
       ]
       &&
       \mathbb{R}^{1,d}
       \!\times\!
         \CoverOf{X}
       \ar[
         dl,
         shorten=-2pt
       ]
       \\
       &
         \CoverOf{X}
     \end{tikzcd}
   $
   \item[\bf (c)]
   whose transition function
   is Lorentz-valued
   \begin{equation}
     \label{TransitionFunction}
     \hspace{-5mm} 
     g 
     \,: 
     \begin{tikzcd}[column sep=32]
       \mathbb{R}^{1,d}
       \!\times\! 
       \CoverOf{X} 
       \!\times_{{}_{X}}\! 
       \CoverOf{X}
       \ar[
         r,
         "{
           \mathrm{pr}_1^\ast 
           \, 
           t^{-1}
         }"
       ]
       &
       T \CoverOf{X} 
       \!\times_{{}_{X}}\! 
       \CoverOf{X}       
       \ar[
         r,
         "{
           \mathrm{pr}_2^\ast 
           \, 
           t          
         }"
       ]
       &
       \mathbb{R}^{1,d}
       \!\times\! 
       \CoverOf{X} 
       \!\times_{X}\! 
       \CoverOf{X}
     \end{tikzcd}
     \;\;\;\mbox{with}\;\;\; 
     g 
     \;\in\;
     C^\infty\big(
       \CoverOf{X} 
         \!\times_{{}_X}\!
       \CoverOf{X}
       ;\,
       \mathrm{O}(1,d)
     \big)\,,
   \end{equation}
  \item[\bf (d)]
where a pair of such local trivializations $(\CoverOf{X}_1, t_1)$ $(\CoverOf{X}_2, t_2)$ is regarded as equivalent if there is a Lorentz-group valued function $k \in C^\infty\big(\CoverOf{X}_1 \!\times_{{}_X}\! \CoverOf{X}_2 ;\, \mathrm{O}(1,d)\big)$ such that
\begin{equation}
  \label{TransformationOfLocaltrivializations}
  \begin{tikzcd}[column sep=huge]
    T \CoverOf{X}_1
    \!\times_{{}_X}\!
    \CoverOf{X}_2
    \ar[
      r,
      "{ \sim }"
    ]
    \ar[
      d,
      "{
        \mathrm{pr}_1^\ast t_1
      }"
    ]
    &
    \CoverOf{X}_1
    \!\times_{{}_X}\!
    T \CoverOf{X}_2
    \ar[
      d,
      "{
        \mathrm{pr}_2^\ast t_2
      }"
    ]
    \\
    \mathbb{R}^{1,d}
    \times
    \CoverOf{X}_1 
      \!\times_{{}_X}\!
    \CoverOf{X}_2
    \ar[
      r,
      "{ k }"
    ]
    &
    \mathbb{R}^{1,d}
    \times
    \CoverOf{X}_1
      \!\times_{{}_X}\!
    \CoverOf{X}_2
  \end{tikzcd}
\end{equation}
\end{itemize}
\item[\bf (vii)]
 \begin{itemize}
 \item[\bf (a)]
   a differential 1-form
   $
     E
     \;\in\;
     \Omega^1_{\mathrm{dR}}
     \big(
         \CoverOf{X}
       ;\,
       \mathbb{R}^{1,d}
     \big)$   
   \vspace{1mm} 
 \item[\bf (b)]
   for which there is $t$ as above with
   \begin{equation}
     \label{CoFrame1Form}
     \begin{tikzcd}
       T\CoverOf{X}
       \ar[
         r,
         "{ t }",
         "{ \sim }"{swap}
       ]
       \ar[
         rr,
         rounded corners,
         to path={
              ([yshift=+00pt]\tikztostart.south)  
           -- ([yshift=-09pt]\tikztostart.south)
           -- node[yshift=5pt]{
               \scalebox{.9}{
                 $E$
               }
           }
              ([yshift=-08pt]\tikztotarget.south)
           -- ([yshift=+00pt]\tikztotarget.south)
         }
       ]
       &
       \mathbb{R}^{1,d}
       \times
        \CoverOf{X}
       \ar[
         r,
         ->>
       ]
       &
       \mathbb{R}^{1,d}
     \end{tikzcd}
   \end{equation}
which means by \eqref{TransitionFunction} that on double overlaps these 1-forms are related by Lorentz transformations given by the transition functions:
\begin{equation}
  \label{TransformationOfCoframe}
  \begin{tikzcd}[column sep=50]
    &
    T \CoverOf{X}
    \times_{{}_X}
      \CoverOf{X}
    \ar[
      d,
      "{
        \mathrm{pr}_1^\ast t
      }"{swap}
    ]
    \ar[
      rr,
      equals
    ]
    &&
    T \CoverOf{X}
    \times_{{}_X}
      \CoverOf{X}
    \ar[
      d,
      "{
        \mathrm{pr}_2^\ast t
      }"
    ]
    \\
    &
    \mathbb{R}^{1,d}
    \!\times\!
    \CoverOf{X}
    \!\times_{{}_X}\!
    \CoverOf{X}
    \ar[
      r,
      "{
        \mathrm{pr}_1^\ast
        t^{-1}
      }"
    ]
    \ar[
      rr,
      rounded corners,
      to path={
           ([xshift=15pt, yshift=-00pt]\tikztostart.south)  
        -- ([xshift=15pt,yshift=-08pt]\tikztostart.south)
        -- node[yshift=5pt]{
              \scalebox{.75}{
                $g$
              }
        }
           ([xshift=-15pt,yshift=-08pt]\tikztotarget.south)
        -- ([xshift=-15pt,yshift=-00pt]\tikztotarget.south)
      }
    ]
    &
    T\CoverOf{X}
    \!\times_{{}_X}\!
      \CoverOf{X}
    \ar[
      r,
      "{
        \mathrm{pr}_2^\ast
        t
      }"
    ]
    &
    \mathbb{R}^{1,d}
    \!\times\!
    \CoverOf{X}
    \!\times_{{}_X}\!
    \CoverOf{X}
  \end{tikzcd}
\end{equation}

\vspace{1mm} 
\item[\bf (c)] again subject to the above notion of equivalence \eqref{TransformationOfLocaltrivializations}.
\end{itemize}
\end{itemize}
\end{definition}
\newpage
\begin{remark}[{\bf {\v C}ech cocycle data from co-frames}]
Beware that in the above Def. \ref{CoFrameFields} there are no further conditions imposed on triple overlaps, because we are  {\it not constructing} a bundle from {\v C}ech cocycle data, but instead are {\it extracting} {\v C}ech data (in order to impose orthogonality conditions on it) from picking local trivializations (the co-frames) of an already given bundle (the tangent bundle).
Indeed, the transition function \eqref{TransitionFunction} induced by a choice of co-frames necessarily satisfies on triple overlaps 
$$
  \begin{tikzcd}[row sep=-2pt, 
    column sep=30pt
  ]
    &
    \CoverOf{X}
    \!\times_X\!
    \CoverOf{X}
    \!\times_X\!
    \CoverOf{X}
    \ar[
      dl,
      "{
        \mathrm{pr}_{12}
      }"{pos=0.7}
    ]
    \ar[
      dd,
      "{
        \mathrm{pr}_{13}
      }"
      %{description}
    ]
    \ar[
      dr,
      "{
        \mathrm{pr}_{23}
      }"
      {swap, pos=.7}
    ]
    \\
    \CoverOf{X}
    \!\times_X\!
    \CoverOf{X}
    &
    &
    \CoverOf{X}
    \!\times_X\!
    \CoverOf{X}
    \\
    &
    \CoverOf{X}
    \!\times_X\!
    \CoverOf{X}
  \end{tikzcd}
$$
its  {\v C}ech cocycle condition 
$$
  \mathrm{pr}_{23}^\ast \, g
  \cdot 
  \mathrm{pr}_{12}^\ast \, g
  \;=\;
  \mathrm{pr}_{13}^\ast \, g
  \hspace{.7cm}
  \mbox{hence}
  \hspace{.5cm}
  \underset{
    {i j k}
    \atop
    x \in U_{i j k}
  }{\forall}
  \;
  g_{j k}(x)
  \cdot
  g_{i j}(x)
  \;=\;
  g_{i k}(x)
$$
because the following diagram commutes by construction:
$$
  \begin{tikzcd}[
    row sep=30pt,
    column sep=35pt
  ]
    & 
    \mathbb{R}^{1+d}
    \!\times\!
    \widehat X
    \ar[
       ddr,
       shift left=2pt,
       bend left=10,
       "{
         \mathrm{pr}_{23}^\ast
         \,
         g
       }"{sloped, description}
     ]
     \\
    &
    p^\ast 
    T X
    \ar[
      u,
      "{
        \mathrm{pr}_{23}^\ast 
        \,
        t
      }"{description}
    ]
    \ar[
      dl,
      "{
        \mathrm{pr}_{2}^\ast 
        \, 
        t
      }"{description}
    ]
    \ar[
      dr,
      "{
        \mathrm{pr}_3^\ast 
        \,
        t
      }"{description}
    ]
    \\[-13pt]
    \mathbb{R}^{1+d}
    \!\times\!
    \widehat X
    \ar[
      uur,
      shift left=2pt,
      bend left=10,
      "{
        \mathrm{pr}_{12}^\ast
        \,
        g
      }"{sloped, description}
    ]
    \ar[
      rr,
      shift right=2pt,
      bend right=10,
      "{
        \mathrm{pr}_{13}^\ast
        \, 
        g
      }"{description}
    ]
    &&
    \mathbb{R}^{1+d}
    \!\times\!
    \widehat X
  \end{tikzcd}
  \hspace{.6cm}
  \mbox{hence}
  \hspace{.6cm}
  \underset{
    {i,j,k}
    \atop
    {x \in U_{i j k}}
  }{\forall}
  \begin{tikzcd}[
    row sep=35pt,
    column sep=45pt
  ]
    & 
    \mathbb{R}^{1+d}
    \ar[
       ddr,
       shift left=2pt,
       bend left=10,
       "{
         g_{j k}(x)
       }"{sloped, description}
     ]
     \\
    &
    T_x X
    \ar[
      u,
      "{
        E_j(x)
      }"{description}
    ]
    \ar[
      dl,
      "{
        E_i(x)
      }"{description}
    ]
    \ar[
      dr,
      "{
        E_k(x)
      }"{description}
    ]
    \\[-13pt]
    \mathbb{R}^{1+d}
    \ar[
      uur,
      shift left=2pt,
      bend left=10,
      "{
        g_{i j}(x)
      }"{sloped, description}
    ]
    \ar[
      rr,
      shift right=2pt,
      bend right=10,
      "{
        g_{i k}(x)
      }"{description}
    ]
    &&
    \mathbb{R}^{1+d}
    \,.
  \end{tikzcd}
$$
This is closely related to the equivalence of co-frame fields to metric tensors, which we come to in Lem. \ref{OrthonormalCoframesAreEquivalent} and Rem. \ref{GroupoidOfCoFrameFieldsEquivalentToSetOfMetrics} below.
\end{remark}
\begin{notation}[\bf Co-frame components]
With the canonical coordinate projection functions denoted
$$
  \begin{tikzcd}
    \mathbb{R}^{1,d} \simeq 
    \mathbb{R}^{1}
    \times
    \mathbb{R}^{1}
    \times
    \cdots
    \times
    \mathbb{R}^{1}
    \ar[
      rr,
      "{ 
        (-)^a 
      }"
    ]
    &&
    \mathbb{R}
    \,,
  \end{tikzcd}
  \;\;\;\;
  a \in \{0,1, \cdots, d\}
  \,,
$$
we have the usual component-expression of a co-frame field (Def. \ref{CoFrameFields}):
$$
  E^a
  \,\in\,
  \Omega^1_{\mathrm{dR}}\big(
    \CoverOf{X}
  \big)
  \;\;\;\;\;
  \mbox{for}
  \;\;\;\;\;
  \begin{tikzcd}
    T \CoverOf{X}
    \ar[
      r,
      "{ E }"
    ]
    &
    \mathbb{R}^{1,10}
    \ar[
      r,
      "{ (-)^a }"
    ]
    &
    \mathbb{R}
    \,.
  \end{tikzcd}
$$
For instance, in these components the transition function \eqref{TransitionFunction} 
and the transformation property \eqref{TransformationOfCoframe} reads:
$$
  \mathrm{pr}^\ast_2 
  (E^a)
  \;=\;
  g^{a}{}_b 
  (\mathrm{pr}^\ast_1 E^b)
  \,.
$$
\end{notation}
\begin{definition}[\bf Orthonormal local co-frames]
\label{OrthonormalCoframes}
A coframe field (Def. \ref{CoFrameFields}) induces a metric tensor $\mathrm{d}s^2$ on $X$, as the tensor square of the associated 1-form $E$ \eqref{CoFrame1Form} by the formula
\begin{equation}
  \label{MetricInducedByCoFrameField}
  \mathrm{d}s^2 
    \;:=\;
  E^a \otimes E_a
  \;:=\;
  \eta_{a b} E^a \otimes E^b 
  \,.
\end{equation}
By \eqref{TransformationOfCoframe} this tensor descends from the cover $\CoverOf{X}$ to $X$, making it a pseudo-Riemannian manifold $\big(X, \mathrm{d}s^2\big)$.

Conversely, given a (pseudo)-Riemannian metric on $X$, a local co-frame field $E$ is called {\it orthonormal} if it represents that metric, in that the above relation holds (cf. e.g. \cite[p. 14]{Lee18}).
\end{definition}

The following Lem. \ref{ExistenceOfOrthonormalCoframes} is a classical fact, but worth recording as preparation for the construction of Darboux co-frames in Lem. \ref{ExistenceOfDarbouxCoframes}, which is needed for ``super-embeddings'' in Def. \ref{BPSImmersion} to be well-defined.

\begin{lemma}[\bf Existence of orthonormal co-frames]
  \label{ExistenceOfOrthonormalCoframes}
  On a (pseudo-)Riemannian manifold $(X, \mathrm{d}s^2)$ there exists a (relativistic) orthonormal co-frame (Def. \ref{OrthonormalCoframes}).
\end{lemma}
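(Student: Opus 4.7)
The plan is to construct orthonormal coframes chart-by-chart and then package them into the global data of Def.~\ref{CoFrameFields}. I would first choose an atlas $\{U_j \hookrightarrow X\}_{j\in J}$ of coordinate charts small enough that the tangent bundle trivialises over each $U_j$, and work with the resulting smooth coordinate frame $\partial^{(j)}_0, \ldots, \partial^{(j)}_d$. The metric $\mathrm{d}s^2$ is then represented on $U_j$ by a smooth matrix-valued function $g^{(j)}_{ab} \in C^\infty(U_j;\mathrm{Sym}_{1+d}(\mathbb{R}))$ of constant signature $(1,d)$.

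Next I would apply a signature-adapted (pseudo-)Gram--Schmidt procedure to the coordinate frame. Concretely, after possibly shrinking $U_j$ and relabelling the coordinate indices so that all principal minors of $g^{(j)}$ remain non-vanishing, an $LDL^{T}$-type factorisation (with signs dictated by the signature) exists and depends smoothly on the matrix entries by Cramer's rule. Dualising, this yields smooth local 1-forms $E_j^a \in \Omega^1_{\mathrm{dR}}(U_j)$ satisfying $\eta_{ab}\, E_j^a \otimes E_j^b = \mathrm{d}s^2\vert_{U_j}$, i.e.\ an orthonormal coframe on each $U_j$ in the sense of Def.~\ref{OrthonormalCoframes}.

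Finally I would package these local pieces globally: set $\CoverOf{X} := \coprod_{j \in J} U_j$ with its tautological surjective submersion $p : \CoverOf{X} \twoheadrightarrow X$, and let $E \in \Omega^1_{\mathrm{dR}}(\CoverOf{X};\mathbb{R}^{1,d})$ be the 1-form whose restriction to the summand $U_j$ is $E_j$. It remains to verify that the induced transition function $g$ of \eqref{TransitionFunction} is valued in $\mathrm{O}(1,d)$: on $U_j \cap U_k$ both $E_j$ and $E_k$ represent the same metric $\mathrm{d}s^2$, so the pointwise linear automorphism of $\mathbb{R}^{1,d}$ sending the value of $E_j$ to that of $E_k$ preserves the Minkowski pairing and hence lies in $\mathrm{O}(1,d)$, while smoothness of $g$ is inherited from smoothness of the $E_j$.

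The main obstacle is the second step: unlike the Riemannian case, pseudo-Gram--Schmidt in signature $(1,d)$ can fail if some intermediate partial-sum metric becomes degenerate, forcing a reshuffling of the coordinate frame. On a sufficiently fine atlas one can always choose an ordering of coordinates making all principal minors of $g^{(j)}$ non-vanishing on the chart (using only non-degeneracy of the metric together with continuity of the determinants), after which smooth dependence of the orthonormalisation follows automatically. This is the standard (if often glossed over) content of the pseudo-Riemannian orthonormalisation argument, cf.\ \cite[p.\,14]{Lee18}.
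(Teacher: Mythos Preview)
Your proposal is correct and follows essentially the same approach as the paper: local orthonormalisation of coordinate frames via pseudo-Riemannian Gram--Schmidt, followed by verification that the resulting transition functions on overlaps are $\mathrm{O}(1,d)$-valued. The paper delegates the pseudo-Gram--Schmidt step to \cite[Lem.~2.24]{ONeil83} rather than spelling out the $LDL^T$ / principal-minor subtleties, and it performs the orthogonality check of the transition matrices by an explicit component computation \eqref{CheckingOrthognalityOfFrameTransitions}, but the overall strategy is the same as yours.
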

\begin{proof}
  Since $X$ is a smooth manifold,
  we may find an open cover $\widehat X$ of $X$ by coordinate charts, with associated coordinate function
  $
      x
        \,:\,
      \CoverOf{X}
      \xrightarrow{\;}
      \mathbb{R}^{1,d}
  $. This locally induces a canonical co-frame given by the tuple of coordinate differentials
  $\big(\mathrm{d}x^0,\, \mathrm{d}x^1, \cdots, \mathrm{d}x^d \big)$ and a 
  {\it frame} given by the tuple of coordinate vector fields $\big(\partial_0, \partial_1, \cdots, \partial_d\big)$. While these will in general not be orthonormal with respect to $\mathrm{d}s^2$, the pseudo-Riemannian version of the Gram-Schmidt algorithm (e.g. \cite[Lem. 2.24]{ONeil83}, here for matrices with coefficients in $C^\infty(\widehat X)$) produces a local frame that is orthonormal.
  \begin{equation}
    \label{AnOrthonormalFrame}
    \big(
      V_a
      \;:=\;
      E_a^\mu \partial_\mu
    \big)_{a = 0}^d
    ,\,
    \;\;\;\;\;
    \mathrm{d}s^2\big(
      V_a
      ,\,
      V_b
    \big)
    \;=\;
    \eta_{a b}
    \,.
  \end{equation}
  The Gram-Schmidt coefficient matrix is invertible, with inverse to be denoted by shifting its indices, as usual:
  \begin{equation}
    \label{InverseFrameCoefficientMatrix}
    \big(
      E^a_\mu
    \big)_{a,\mu= 0}^d
    \;\;
    :=
    \;\;
    \Big(
    \big(
      E_a^\mu
    \big)_{a,\mu= 0}^d
    \Big)^{-1}
    \,.
  \end{equation}
  These inverse coefficients define the desired orthonormal co-frame
  \begin{equation}
    \label{AnOrthonormalCoframe}
    \big(
    E^a \;:=\;
    E^a_\mu \mathrm{d}x^\mu
    \big)_{a=0}^d
    \,,
    \;\;\;
    \eta_{a b} \, E^a \otimes E^b 
    \;=\;
    \mathrm{d}s^2
  \end{equation}
  due to the orthonormality \eqref{AnOrthonormalFrame} and the invertibility \eqref{InverseFrameCoefficientMatrix}:
  \begin{equation}
    \label{OrthonormalityInComponents}
    \eta_{a b}
    \,
    E^a_\mu
    \,
    E^b_\nu
    \;=\;
    \mathrm{d}s^2_{\mu \nu}
    \;\;\;\;\;\;\;\;
    \Leftrightarrow
    \;\;\;\;\;\;\;\;
    \eta_{a b}
    \;=\;
    E_a^\mu
    \,
    E_b^\nu
    \,
    \mathrm{d}s^2_{\mu \nu}
    \,.
  \end{equation}

  It just remains to verify that \eqref{AnOrthonormalCoframe} satisfies the global conditions on a co-frame from Def. \ref{CoFrameFields}, which amounts to checking that on double overlaps $\CoverOf{X} \!\times_{{}_X}\! \CoverOf{X}$ the transition matrix is orthogonal
  \begin{equation}
    \label{TransitionMatrix}
    \Big(
    g^a{}_{a'}
    \;:=\;
    \big(
    \mathrm{pr}_1^\ast E^a_\mu
    \big)
    \big(
      \mathrm{pr}_2^\ast
      E^\mu_{a'}
    \big)
    \Big)_{a,a' = 0}^d
    \;\;\;
    \in
    \;\;
    C^\infty\big(
      \CoverOf{X}
      ;\,
      \mathrm{O}(1,d)
    \big)
    \,,
  \end{equation}
  which is indeed the case:
  \begin{equation}
    \label{CheckingOrthognalityOfFrameTransitions}
    \def\arraystretch{1.6}
    \begin{array}{lll}
    \eta_{a b}
    \,
    g^{a}{}_{a'}
    \,
    g^{b}{}_{b'}
    &
    \;=\;
    \eta_{a b}
    \Big(
    \big(
    \mathrm{pr}_1^\ast E^a_\mu
    \big)
    \big(
      \mathrm{pr}_2^\ast
      E^\mu_{a'}
    \big)
    \Big)
    \Big(
    \big(
    \mathrm{pr}_1^\ast E^b_{\nu}
    \big)
    \big(
      \mathrm{pr}_2^\ast
      E^\nu_{b'}
    \big)
    \Big)
    &
    \proofstep{
      by
      \eqref{TransitionMatrix}
    }
    \\
    &\;=\;
    \underbrace{
    \eta_{a b}
    \big(
    \mathrm{pr}_1^\ast E^a_\mu
    \big)
    \big(
    \mathrm{pr}_1^\ast E^b_{\nu}
    \big)
    }_{
      \mathrm{d}s^2_{\mu \nu}
    }
    \big(
      \mathrm{pr}_2^\ast
      E^\mu_{a'}
    \big)
    \big(
      \mathrm{pr}_2^\ast
      E^\nu_{b'}
    \big)
    &
    \proofstep{
      by
      \eqref{OrthonormalityInComponents}
    }
    \\[-11pt]
    &\;=\;
    \eta_{a' b'}
    &
    \proofstep{
      by
      \eqref{OrthonormalityInComponents}.
    }
    \end{array}
  \end{equation}

\vspace{-.5cm}
\end{proof}
\begin{lemma}[\bf Essential uniqueness of orthonormal co-frame fields]
\label{OrthonormalCoframesAreEquivalent}
  Any pair of orthonormal frames $E$, $\tilde E$ 
  for the same metric $\mathrm{d}s^2$ is equivalent by a unique transformation 
  \eqref{TransformationOfLocaltrivializations}.
\end{lemma}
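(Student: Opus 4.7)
Given two orthonormal co-frames $(\CoverOf{X}_1, t_1)$ and $(\CoverOf{X}_2, t_2)$ for the same metric $\mathrm{d}s^2$, the plan is to construct the equivalence $k$ as the unique fiberwise-linear automorphism of the trivial bundle $\mathbb{R}^{1,d} \!\times\! (\CoverOf{X}_1 \!\times_{{}_X}\! \CoverOf{X}_2)$ forced by commutativity of \eqref{TransformationOfLocaltrivializations}, and then to verify that this $k$ actually lands in $\mathrm{O}(1,d)$. First I would pull both trivializations back to the common fiber product $\CoverOf{X}_1 \!\times_{{}_X}\! \CoverOf{X}_2$, where $\mathrm{pr}_1^\ast t_1$ and $\mathrm{pr}_2^\ast t_2$ become two bundle isomorphisms
$$
  \mathrm{pr}_1^\ast t_1 ,\; \mathrm{pr}_2^\ast t_2 \;:\; (p_1 \circ \mathrm{pr}_1)^\ast T X \;\xrightarrow{\,\sim\,}\; \mathbb{R}^{1,d} \!\times\! (\CoverOf{X}_1 \!\times_{{}_X}\! \CoverOf{X}_2)
$$
from the same pullback tangent bundle. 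Setting $k \,:=\, (\mathrm{pr}_2^\ast t_2) \circ (\mathrm{pr}_1^\ast t_1)^{-1}$ then gives a smooth fiberwise-linear automorphism, hence a smooth $\mathrm{GL}(1+d)$-valued function. Uniqueness is automatic: any $k'$ fitting into \eqref{TransformationOfLocaltrivializations} must satisfy the same formula, because the $t_i$ are invertible.

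\textbf{Second step.} The substantive content is showing $k$ takes values in $\mathrm{O}(1,d)$. This is where orthonormality \eqref{MetricInducedByCoFrameField} enters: both $t_1$ and $t_2$ intertwine (the pullback of) $\mathrm{d}s^2$ with the standard Minkowski pairing $\eta$ on $\mathbb{R}^{1,d}$. Hence the composite $k$ preserves $\eta$ fiberwise, which is precisely the condition that $k \in C^\infty\big(\CoverOf{X}_1 \!\times_{{}_X}\! \CoverOf{X}_2 ;\, \mathrm{O}(1,d)\big)$. In components, writing $E^a = E^a_\mu \mathrm{d}x^\mu$ and $\tilde E^a = \tilde E^a_\mu \mathrm{d}x^\mu$ over a common coordinate patch, one has $k^a{}_{a'} \,=\, \big(\mathrm{pr}_1^\ast \tilde E^a_\mu\big)\big(\mathrm{pr}_2^\ast E_{a'}^\mu\big)$, and the verification that $\eta_{a b}\, k^a{}_{a'} k^b{}_{b'} = \eta_{a' b'}$ is the same computation as \eqref{CheckingOrthognalityOfFrameTransitions}, with $\mathrm{d}s^2_{\mu \nu}$ dropping out in the middle.

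\textbf{Main obstacle.} There is no deep obstacle — the result is essentially bookkeeping — but the one point that needs care is the global/covering aspect: the two co-frames live on \emph{different} covers of $X$, so one must not try to compare them on either $\CoverOf{X}_1$ or $\CoverOf{X}_2$ alone, but rather pass to the fiber product so that both pullback trivializations live over the same base. This is precisely the structure packaged into the equivalence relation of Def. \ref{CoFrameFields}(v)(d), so once the fiber product is in place the rest is a one-line fiberwise linear-algebra argument.
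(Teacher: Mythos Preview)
Your proposal is correct and takes essentially the same approach as the paper: define $k$ as the composite of one trivialization with the inverse of the other over a common base, observe uniqueness from invertibility, and verify $k \in \mathrm{O}(1,d)$ via the same computation as \eqref{CheckingOrthognalityOfFrameTransitions}. The only difference is cosmetic --- the paper first pulls both co-frames back to a common coordinate atlas and works in components throughout, whereas you phrase the construction more invariantly on the fiber product before descending to components; also note a small slip in your component formula, where the $\mathrm{pr}_1/\mathrm{pr}_2$ labels appear swapped relative to your identification of $t_1$ with $E$ and $t_2$ with $\tilde E$.
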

\begin{proof}
   We may assume without restriction that the two co-frames are given by differential forms $E, \tilde E :  T\CoverOf{X} \xrightarrow{\;} \mathbb{R}^{1+d}$ with respect to the same cover by coordinate charts 
   $x \,:\, \CoverOf{X} \xrightarrow{\;} \mathbb{R}^{1+d}$
   (otherwise pull them back to the common refinement cover $\CoverOf{X}_1 \!\times_{{}_X}\! \CoverOf{X}_2$ and then further to any coordinate atlas $\CoverOf{X}$ of that). Here, both are expanded in components of the corresponding coordinate frame
   $$
     E^a
     \;=\;
     E^a_\mu \, \mathrm{d}x^\mu
     \,,
     \;\;\;\;\;
     \tilde E^a
     \;=\;
     \tilde E^a_\mu \, \mathrm{d}x^\mu
     \,.
   $$
   By their co-frame property, the coefficient matrices are pointwise invertible with inverses to be denoted by shifting the indices, as usual:
   $$
     (E_a^\mu)_{a,\mu=0}^{d}
     \;:=\;
     \big(
       (E^a_\mu)_{a,\mu=0}^{d}
     \big)^{-1}
     \,.
   $$
   Therefore 
   $$
     \big(
       k^a{}_{a'}
       \,:=\,
       \tilde E^a_\mu
       \,
       E^\mu_{a'}
     \big)_{a,a' = 0}^{d}
      \;\;
      \in
      \;
      C^\infty
      \big(
        \CoverOf{X}
        ,\,
        \mathrm{GL}(1+d)
      \big)
   $$
   is the unique 
   transformation
   $$
     k^a{}_{a'} E^{a'}
     \;=\;
     \tilde E^a
     \hspace{1cm}
     \begin{tikzcd}[row sep=small, column sep=large]
       \mathbb{R}^{1+d}
       \ar[
         rr,
         "{ k }"
       ]
       &&
       \mathbb{R}^{1+d}
       \\
       &
       T_x X
       \ar[
         ul,
         "{
           E(x)
         }"
         %{description}
       ]
       \ar[
         ur,
         "{
           \tilde E(x)
         }"
         {swap}
       ]
     \end{tikzcd}
   $$
   and the fact that this is an orthogonal transformation
   $
     \begin{array}{l}
       \eta_{a b}
       \,
       k^a{}_{a'}
       \,
       k^{b}{}_{b'} = \eta_{a' b'} 
     \end{array}
   $, follows as in \eqref{CheckingOrthognalityOfFrameTransitions}. 
\end{proof}
\begin{remark}[\bf Groupoid of co-frame fields equivalent to set of metrics]
\label{GroupoidOfCoFrameFieldsEquivalentToSetOfMetrics}
The pair of Lemmas \ref{ExistenceOfOrthonormalCoframes} and \ref{OrthonormalCoframesAreEquivalent} may jointly be understood as saying that the functor from the groupoid of (relativistic) co-frame fields $E$ to the set of (pseudo-)Riemannian metric tensors $\mathrm{d}s^2$ (on a given smooth manifold $X$) is essentially surjective (in fact surjective) and fully faithful, hence is an equivalence. 
\end{remark}

\begin{remark}[\bf Lifting smooth maps to covers]
\label{LiftingSmoothMapsToCovers}
The notion of equivalence on the data in Def. \ref{CoFrameFields} ensures that the specific choice of open cover $\CoverOf{X}$ is irrelevant --- which justifies our notation suggestive of {\it any one} open cover of $X$. 
But to lift a smooth map
$
  \Sigma
  \xrightarrow{\;\;}
  X
$
to a map between chosen covers
\begin{equation}
  \label{MapLiftedToOpenCovers}
  \begin{tikzcd}[row sep=15pt, column sep=large]
    \CoverOf{\Sigma}
    \ar[
      rr,
      "{ \CoverOf{\phi} }"
    ]
    \ar[
      d,
      ->>
    ]
    &&
    \CoverOf{X}
    \ar[
      d,
      ->>
    ]
    \\
    \Sigma
    \ar[
      rr,
      "{ \phi }"
    ]
    &&
    X
  \end{tikzcd}
\end{equation}
the cover $\CoverOf{\Sigma}$ needs to be fine enough, relative to the given $\CoverOf{X}$. This can always be achieved. Given $\phi$, the canonical choice for the cover of $\Sigma$  is the pullback $\CoverOf{\Sigma} := \Sigma \times_{X} \CoverOf{X}\cong \textstyle{
    \coprod_{j \in J}
  }
  \,
  \phi^{-1}(U_j)$ of the cover on $X$, hence the case where \eqref{MapLiftedToOpenCovers} is a Cartesian square.
\footnote{The general way of dealing with these matters is to work with (model categories of) ``smooth $\infty$-stacks'', where the issue of passing to covers is reflected in the notion of cofibrant resolutions. The interested reader may find these  methods concisely reviewed in \cite[\S 1]{FSS23Char}, but for the present purpose the above considerations are sufficient.}
\end{remark}

\medskip

\noindent
{\bf Immersions.}
The notion of {\it isometric embeddings} of Riemannian manifolds into each other is of course a classical one, with the ``isometric embedding problem'' -- namely the task of finding isometric embeddings of given abstract Riemannian manifolds into large-dimensional but flat Euclidean spaces -- being a seminal problem in the field of Riemannian geometry (cf. \cite{HanLewicka23}).
The key observation of the ``super-embedding approach'' to super $p$-branes is that certain  immersions of supermanifolds are considerably richer than their bosonic counterpart might suggest, in that their super-odd component may encode extra data (Rem. \ref{ExistenceOfDarbouxCoframes}) of {\it differential forms} on the embedded submanifold, playing the role of flux densities of higher gauge fields appearing on these brane worldvolumes.

\begin{remark}[\bf Immersions vs. embeddings]
\label{ImmersionsAndEmbeddings}
In developing this here with mathematical precision, we start by noticing that it is not really embeddings but {\it immersions} 
\begin{equation}
  \label{ImmersionDiagram}
  \begin{tikzcd}[column sep=80]
    T \Sigma
    \ar[
      rr,
      "{
        \mathrm{d}
        \phi
      }",
      "{
        \scalebox{.6}{
          \color{gray}
          fiberwise injection
        }
      }"{swap, pos=.44}
    ]
    \ar[
      dd
    ]
    \ar[
      dr,
      hook
    ]
    &[-70pt]
    &
    T X
    \ar[
      dd
    ]
    \\[-10pt]
    &
    \Sigma
      \!\times_{{}_X}\!
    T X
    \ar[
      ur,
      end anchor={[yshift=-3pt]}
    ]
    \ar[dl]
    \ar[
      dr,
      phantom,
      "{
        \scalebox{.65}{\color{gray}(pb)}
      }"{pos=.2}
    ]
    \\[-3pt]
    \mathllap{
      \scalebox{.7}{
        \color{darkblue}
        \bf
        \def\arraystretch{.9}
        \begin{tabular}{c}
          Immersed manifold
          \\
          (brane worldvolume)
        \end{tabular}
      }
    }
    \Sigma
    \ar[
      rr, 
      "\phi",
      "{
        \scalebox{.7}{
          \color{darkgreen}
          \bf
          \def\arraystretch{.9}
          \begin{tabular}{l}
            immersion (colloq.:
            \\
            ``embedding field'')
          \end{tabular}
        }
      }"{swap}
    ]
    &&
    X
    \mathrlap{
      \scalebox{.7}{
        \color{darkblue}
        \bf
        \def\arraystretch{.9}
        \begin{tabular}{c}
          Ambient manifold
          \\
          (bulk/target space)
        \end{tabular}
      }
    }
  \end{tikzcd}
\end{equation}
that are relevant here --- recalling (e.g. \cite[\S III.4]{Boothby75}) that an embedding of smooth manifolds is 
\begin{itemize}[
  leftmargin=.8cm,
  topsep=1pt,
  itemsep=2pt
]
\item[\bf (i)] an immersion --- namely a smooth map $\phi$ whose differential $\mathrm{d}\phi$ is fiberwise an injection of tangent spaces \eqref{ImmersionDiagram},
\item[\bf (ii)] which in addition is a homeomorphism onto its topological image. 
\end{itemize}
These are non-degeneracy conditions on $\phi$, locally and globally:
The first condition is {\it local} and translates, 
as we will see, to differential equations on $\phi$,
but the second condition is to rule out {\it global} degeneracies of $\phi$, such as points in target space where two distinct points of the embedded manifold touch.

\noindent
Strikingly, it is the differential equations of (i) which, in the super-geometric situation below, translate to the equations of motion of super $p$-branes --- this is the phenomenon of interest here. 
But no global constraints (ii) on $p$-brane dynamics are meant to be imposed. Therefore ``super-embedding approach'' is a little bit of a misnomer -- what the literature is really concerned with is both weaker and stronger than super-embeddings: weaker because only {\it super-immersions} are required, and stronger because these immersions are required to ``preserve half of the local supersymmetry'', hence to be ``$\sfrac{1}{2}$BPS'' (e.g. \cite{DEGKS08}). 

Therefore we speak of ``$\sfrac{1}{2}$BPS super-immersions'' (Def. \ref{BPSImmersion} below, see Rem. \ref{SuperEmbeddingConditionInTheLiterature} for relating back to the ``super-embedding'' terminology). 
\end{remark}

\medskip

\noindent
{\bf Darboux co-frames for immersions.} We observe in \S\ref{SuperEmbeddingConstruction} that what in the super-$p$-brane literature came to be known as the ``super-embedding condition'' is what in classical differential geometry is known as the characterization of {\it Darboux coframes} adapted to immersions \eqref{ImmersionDiagram}. Therefore we here first dwell a little on Darboux coframes over ordinary manifolds.
The following Def. \ref{DarbouxCoFrame} of {\it Darboux co-frames} may be found in \cite[p. 246 (2.11)]{Sternberg64}\cite[(1.13)]{GriffithsHarris79}\cite[p. 426]{Zandi88}\cite[Def. 1.17]{MRS12}\cite[\S 3]{Giron20}; it is the evident higher-dimensional generalization of {\'E}. Cartan's characterization of embedded surfaces by adapted coframes
(cf. \cite[p. 211]{Cartan26}) and the evident dualization of the notion of  {\it Darboux frames} \cite[p. 244 Def. 2.1]{Sternberg64}\cite[(1.12)]{GriffithsHarris79}\cite[p. 818]{BergerBryantGriffiths83}, which in turn are the evident higher-dimensional generalization of the original Darboux frames used in the differential geometry of curves and surfaces embedded into Euclidean 3-space (e.g. \cite[p. 210]{Guggenheimer77}\cite[p. 107]{PetruninBarrera20}).

\begin{notation}[\bf Tangential and transversal components]
\label{TangentialAndTransversal}
Given an immersion $\phi : \Sigma  \xhookrightarrow{\;} X$ of smooth manifolds of dimensions $1+p \leq 1+d \,\in\,\mathbb{N}$, respectively, we write
\begin{equation}
  \label{ProjectorOntoFirstCoordinates}
  \begin{tikzcd}[
    row sep=0pt
  ]
    P
    :
    &[-30pt]
    \mathbb{R}^{1+d}
    \ar[
      r,
      ->>
    ]
    &
    \mathbb{R}^{1+p}
    \ar[
      r,
      hook
    ]
    &
    \mathbb{R}^{1,d}
    \hspace{1cm}
    \scalebox{.7}{
      \color{darkblue}
      \bf
      Tangential projector
    }
  \end{tikzcd}
\end{equation}
for the linear projector onto the first $1+p$ coordinate axes in the corresponding local model space, and 
\begin{equation}
\label{ProjectorOntoLastCoordinates}
  \begin{tikzcd}[
    row sep=0pt
  ]
    \overline{P}
    :
    &[-30pt]
    \mathbb{R}^{1+d}
    \ar[
      r,
      ->>
    ]
    &
    \mathbb{R}^{d-p}
    \ar[
      r,
      hook
    ]
    &
    \mathbb{R}^{1,d}
    \hspace{1cm}
    \scalebox{.7}{
      \color{darkblue}
      \bf
      Transversal projector
    }
  \end{tikzcd}
\end{equation}
for the complentary projector onto the last $d-p$ coordinate axes.

Given moreover a co-frame field   \begin{tikzcd}
    T \CoverOf{X}
    \ar[
      r,
      "{  E }"
    ]
    &
    \mathbb{R}^{1+d}
  \end{tikzcd}
(Def. \ref{CoFrameFields})
and a lift  $\CoverOf{\phi} : \CoverOf{\Sigma} \xrightarrow{\;} \CoverOf{X}$ (Rem. \ref{LiftingSmoothMapsToCovers})
we denote by $e := P \circ E \circ \mathrm{d}\CoverOf{\phi}$ the pullback of $E$ along $\CoverOf{\phi}$ to $\CoverOf{\Sigma}$, post-composed with the projection operator \eqref{ProjectorOntoFirstCoordinates}
  \begin{equation}
    \label{PullbackProjectedFrameField}
    e 
    \,:=\,
    P \circ E \circ \mathrm{d}\CoverOf{\phi}
    \,:\, 
    \begin{tikzcd}
      T \CoverOf{\Sigma}
      \ar[
        rr,
        "{ \mathrm{d}\CoverOf{\phi} }"
      ]
      &&
      T \CoverOf{X}
      \ar[
        rr,
        "{ E }"
      ]
      &&
      \mathbb{R}^{1+d}
      \ar[r,->>]
      \ar[
        rr,
        bend left=15pt,
        "{ P }"
      ]
      &
      \mathbb{R}^{1+p}
      \ar[r, hook]
      &
      \mathbb{R}^{1+d}
      \,.
    \end{tikzcd}
  \end{equation}

  \newpage 
  Notationally, this means that $e$ may still be regarded as carrying an index ranging through both tangential and transverse directions, while it just happens to vanish on all transverse indices:
  $$
    e^a \;=\;
    \left\{\!\!
    \def\arraystretch{1}
    \begin{array}{cl}
      \phi^\ast E^a 
      &
      \mbox{
        for tangential $a$,
      }
      \\
      0 
      &
      \mbox{
        for transversal $a$.
      }
    \end{array}
    \right.
  $$
\end{notation}

This terminology is adapted to the following situation:
\begin{definition}[\bf Darboux co-frame fields]
\label{DarbouxCoFrame}
For $(X,\mathrm{d}s^2)$ a smooth (pseudo-)Riemannian manifold and 
$\phi : \Sigma \xrightarrow{\;} X$ an immersion \eqref{ImmersionDiagram} of a smooth manifold $\Sigma$, then an orthonormal co-frame field $E$ (Def. \ref{OrthonormalCoframes}) is called {\it adapted} or {\it Darboux} for $\phi$ if, in the terminology of Ntn. \ref{TangentialAndTransversal}
\begin{equation}
  \label{PullbackOfTranverseDarbouxFrameComponentsVanishes}
  \phi^\ast E^a \;=\; 0
  \;\;\;
  \mbox{
    for transversal $a$
  }
  \hspace{1cm}
  \Leftrightarrow
  \hspace{1cm}
  \phi^\ast 
  \overline{P} E
  \;=\;
  0
  \,.
\end{equation}
\end{definition}
\begin{remark}[\bf Coframe field implied by Darboux condition]
\label{TheDarbouxCondition}
Due to the split short exact sequence of vector spaces
$$
  \begin{tikzcd}[column sep=large]
    0
    \ar[r]
    &[-10pt]
    \mathbb{R}^{1+p}
    \ar[r, hook]
    &
    \mathbb{R}^{1+d}
    \ar[
      r,
      ->>,
      "{
        \overline{P}
      }"
    ]
    \ar[
      l,
      ->>,
      shift right=2pt,
      bend right=15pt,
      "{
        P
      }"{swap}
    ]
    &
    \mathbb{R}^{d-p}
    \ar[r]
    &[-10pt]
    0
  \end{tikzcd}
$$
the Darboux-condition \eqref{PullbackOfTranverseDarbouxFrameComponentsVanishes} implies that the projected pullback $e$ of $E$ \eqref{PullbackProjectedFrameField}
is a co-frame field on $\Sigma$ (cf. \cite[(1.13)]{GriffithsHarris79}):
\begin{equation}
  \label{TheEmbeddingCondition}
  \mathllap{
  \underset{
    \sigma 
      \in 
    \CoverOf{\Sigma}
  }{\forall}
  \;\;\;\;\;
  }
  \begin{tikzcd}[
    row sep=15pt, column sep=large
  ]
    &&&
    \mathbb{R}^{1+p}
    \\[4pt]
    T_{\sigma} 
    \CoverOf{\Sigma}
    \ar[
      r,
      hook,
      "{
        \mathrm{d}\CoverOf{\phi}
      }"
    ]
    \ar[
      drrr,
      rounded corners,
      to path={
           ([yshift=+00pt]\tikztostart.south)  
       -- ([yshift=-26pt]\tikztostart.south)
       -- node[yshift=6pt]{
           \scalebox{.7}{
             $
               0 
               \;=\;
               \phi^\ast 
               \overline{P}
               E
             $
           }
       }
         ([xshift=+0pt]\tikztotarget.west)
      }
    ]
    \ar[
      urrr,
      rounded corners,
      to path={
           ([yshift=+00pt]\tikztostart.north)  
       -- ([yshift=+25pt]\tikztostart.north)
       -- node[yshift=+6pt]{
           \scalebox{.7}{
             $
               e 
               \;=\;
               \phi^\ast 
               P
               E
             $
           }
          }
          node[yshift=-4pt]{
            \scalebox{.7}{
             $\sim$
        }
         }
         ([xshift=+0pt]\tikztotarget.west)
      }
    ]
    &
    T_{\phi(\sigma)}   \CoverOf{X}
    \ar[
      r,
      "{ E }",
      "{ \sim }"{swap}
    ]
    &
    \mathbb{R}^{1+d}
    \ar[
      r,
      "{ \sim }"{swap}
    ]
    &
    \mathbb{R}^{1+p}
    \times
    \mathbb{R}^{d-p}
    \ar[
      u,
      "{ P }"{swap}
    ]
    \ar[
      d,
      "{ \overline{P} }"
    ]
    \\
    &&&
    \mathbb{R}^{d-p}
    \,.
  \end{tikzcd}
\end{equation}
Notice that this situation of Darboux co-frames:
\begin{itemize}[leftmargin=.8cm]
\item[\bf (i)]
is just what was eventually called the ``embedding condition'' in the ``super-embedding''-literature, cf. \cite[(2.6-9)]{Bandos11}\cite[(5.13-14)]{BandosSorokin23} -- noticing that this is crucially stronger than just the top part of \eqref{TheEmbeddingCondition} which was  the original ``geometrodynamical condition'' of \cite[(2.23)]{BPSTV95};
\item[\bf (ii)]
justifies the terminology  ``tangential'' and ``transversal'' in Ntn. \ref{TangentialAndTransversal}, because with a Darboux co-frame given, the co-frame fields $E^a$ at $\phi(\Sigma) \subset X$ carrying a tangential or transversal index according to 
\eqref{TangentialTransversalDecomposition}
are exactly those which are tangential or transversal to the immersed manifold $\Sigma$, respectively. 
\end{itemize}
\end{remark}

\begin{definition}[\bf Pseudo-Riemannian immersion]
  \label{NonDegenerateImmersion}
  We say that an immersion
  $\phi : \Sigma \xhookrightarrow{\;} X$ into a pseudo-Riemannian manifold $(X,\mathrm{d}s^2)$ is itself {\it pseudo-Riemannian}  if the pullback form $\phi^\ast \mathrm{d}s^2$ is still a pseudo-Riemannian metric.
\end{definition}

\begin{lemma}[\bf Existence of Darboux co-frames]
\label{ExistenceOfDarbouxCoframes}
Given a pseudo-Riemannian immersion $\phi : \Sigma \xhookrightarrow{\;} X$ (Def. \ref{NonDegenerateImmersion}) into a pseudo-Riemannian manifold $(X,\mathrm{d}s^2)$, 
then a Darboux co-frame field (Def. \ref{CoFrameFields}) exists.
\end{lemma}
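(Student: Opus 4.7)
The plan is to construct a Darboux co-frame locally via a tangential/transversal orthogonal splitting of $TX$ along $\phi(\Sigma)$, and then to assemble the local pieces into a global co-frame in the sense of Def.~\ref{CoFrameFields}, exploiting the freedom of choice granted by Lem.~\ref{OrthonormalCoframesAreEquivalent}.

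First I would isolate the key linear-algebraic input that the pseudo-Riemannian hypothesis on $\phi$ (Def.~\ref{NonDegenerateImmersion}) supplies: at each $\sigma \in \Sigma$, the subspace $\mathrm{d}\phi(T_\sigma \Sigma) \subset T_{\phi(\sigma)} X$ is non-degenerate with respect to $\mathrm{d}s^2$, since $\phi^\ast \mathrm{d}s^2$ is by assumption a non-degenerate metric. Standard linear algebra for non-degenerate symmetric bilinear forms (e.g.\ \cite[\S 2]{ONeil83}) then produces the $\mathrm{d}s^2$-orthogonal decomposition
$$
  T_{\phi(\sigma)} X
  \;=\;
  \mathrm{d}\phi(T_\sigma\Sigma)
  \;\oplus\;
  \mathrm{d}\phi(T_\sigma\Sigma)^{\perp}
  \,,
$$
with the transversal summand itself carrying a non-degenerate induced form. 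This is the one place where the pseudo-Riemannian hypothesis is used and is the main obstacle to dropping it; without it, Gram--Schmidt cannot be initiated on the tangential block.

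Next I would perform a local adapted Gram--Schmidt in the style of Lem.~\ref{ExistenceOfOrthonormalCoframes}. Around any $\sigma_0 \in \Sigma$, pick a coordinate chart $U \subset \Sigma$ and push forward its $1+p$ coordinate vector fields along $\phi$ to obtain a local frame for the tangential subbundle on an open neighborhood $V \subset X$ of $\phi(U)$ (shrinking $U$ to use a tubular neighborhood / local immersion form if necessary), then complete to a local frame of $TV$ by any $d-p$ pointwise-independent transversal vector fields coming from a coordinate chart on $X$. Apply the pseudo-Riemannian Gram--Schmidt procedure \emph{first} on the tangential block $(V_0,\dots,V_p)$, which is well-defined because the restriction of $\mathrm{d}s^2$ is non-degenerate there, and \emph{then} on the transversal block $(V_{p+1},\dots,V_d)$ orthogonally to what has already been produced, which is well-defined because $\mathrm{d}\phi(T\Sigma)^\perp$ is non-degenerate. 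The dual co-frame $E^a = E^a_\mu\,\mathrm{d}x^\mu$ of the resulting orthonormal frame is then orthonormal in the sense of \eqref{OrthonormalityInComponents}, and by construction its transversal components annihilate $\mathrm{d}\phi(T\Sigma)$ pointwise over $\phi(U)$, so $\phi^\ast E^a = 0$ for each transversal $a$: this is exactly the Darboux condition \eqref{PullbackOfTranverseDarbouxFrameComponentsVanishes}.

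Finally I would globalize: refine an open cover $\CoverOf{X} = \coprod_j V_j$ so that each $V_j$ meeting $\phi(\Sigma)$ is adapted in the above sense, and on each $V_j$ disjoint from $\phi(\Sigma)$ choose any orthonormal co-frame via Lem.~\ref{ExistenceOfOrthonormalCoframes}. Compatibility with $\CoverOf{\Sigma}$ is arranged as in Rem.~\ref{LiftingSmoothMapsToCovers} by pulling back the cover along $\phi$. The transition functions $g^{a}{}_{b}$ on double overlaps $\CoverOf{X} \times_{{}_X} \CoverOf{X}$ are Lorentz-valued by exactly the computation \eqref{CheckingOrthognalityOfFrameTransitions}, so the local data assemble to a global co-frame field per Def.~\ref{CoFrameFields}(v). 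The Darboux condition \eqref{PullbackOfTranverseDarbouxFrameComponentsVanishes} is a pointwise condition on $\phi(\Sigma)$ that is satisfied chart-by-chart, and is therefore inherited by the assembled co-frame.
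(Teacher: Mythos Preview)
Your proof is correct and follows essentially the same approach as the paper: construct an adapted local orthonormal frame near $\phi(\Sigma)$ via Gram--Schmidt and patch with arbitrary orthonormal frames away from $\phi(\Sigma)$. The only cosmetic difference is that the paper invokes a slice chart for $\phi$ and runs a single Gram--Schmidt on its coordinate frame (yielding a triangular, hence block-adapted, coefficient matrix), whereas you set up the tangential/transversal splitting by hand via a tubular neighborhood and run Gram--Schmidt in two stages; your version has the mild advantage of making explicit where the pseudo-Riemannian hypothesis on $\phi$ is actually consumed.
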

\begin{proof}
  On the complement of $\Sigma$ in $X$ the Darboux condition is trivial and we may use the construction of general orthonormal co-frame fields from Lem. \ref{ExistenceOfOrthonormalCoframes}. By the argument there, what remains is just to construct a Darboux co-frame locally on open neighborhoods around each point $\phi(\sigma)$ for $\sigma \in \Sigma$. 

  Now by classical facts: There exists an open neighborhood $U_\sigma$, $\phi(\sigma) \in U_\sigma \subset X$, where the immersion $\phi$ restricts to an embedding of the manifold $\phi(\Sigma) \cap U$ (e.g. \cite[Thm. 4.12]{Boothby75}), and there exists a further open neighborhood $U'_\sigma$, $\phi(\sigma) \in U'_\sigma \subset U_\sigma \subset X$ carrying a ``slice chart'' $x_\sigma \,:\,U'_\sigma \hookrightarrow \mathbb{R}^{1,d}$ for $\phi$ that identifies $\phi(\Sigma) \cap U'$ with a rectilinear hyperplane in an open subset of $\mathbb{R}^{1,d}$ (e.g. \cite[Thm. 5.8]{Lee12}).
  
  This implies that as we apply the Gram-Schmidt process \eqref{AnOrthonormalFrame} to this slice coordinate frame $x_\sigma$, the coefficient matrix $\big(E_a^\mu\big)_{a,\mu \in 0}^d$ is {\it block-diagonal}, and hence so is its inverse $\big(E^a_\mu\big)_{a,\mu \in 0}^d$ \eqref{InverseFrameCoefficientMatrix}. 
  But this means that the corresponding
  $
    E 
    \;:=\;
    E^\bullet_\mu \, \mathrm{d}x^\mu
  $ \eqref{AnOrthonormalCoframe}
  satisfies the Darboux property \eqref{PullbackOfTranverseDarbouxFrameComponentsVanishes}, since
  $
    \big(\phi^\ast E^a\big)(\partial_\mu)
    \;=\;
    E^a_\mu
  $
  vanishes when $a$ and $\mu$ are not in the same block, with one being transversal and the other tangential.
\end{proof}

\medskip

\noindent
{\bf Second fundamental form.}
Now given a pseudo-Riemannian immersion $\phi : \Sigma \xhookrightarrow{\;} X$ into a pseudo-Riemannian manifold and an adapted choice of Darboux co-frame field $E$ on $X$ (via Lem. \ref{ExistenceOfDarbouxCoframes}) with respect to some open cover $\CoverOf{X} \twoheadrightarrow X$, let 
$$
  \Omega
  \;\in\;
  \Omega^1_{\mathrm{dR}}
  \big(
    \CoverOf{X}
    ;\,
    \mathfrak{so}(1,d)
  \big)
  \;\;\;\;\;\;
  \mbox{with components}
  \;\;\;\;\;\;
  \big(
    \Omega^{a b}
    \defneq
    -
    \Omega^{b a} \big)_{a,b = 0}^{d}   
    \;\in\;
    \Omega^1_{\mathrm{dR}}(\CoverOf{X})
$$

\newpage 
\noindent be the unique torsion-free connection for $E$, in that
\begin{equation}
  \label{BosonicTorsionConstraint}
  \mathrm{d}
  E^a
  +
  \Omega^a{}_b\, E^b
  \;=\;
  0
  \,.
\end{equation}
Denote the pullback of the tangential and transversal components of this connection, respectively, by:
\begin{equation}
  \label{PullbackOfConnectionForm}
  \def\arraystretch{1.2}
  \def\arraycolsep{3pt}
  \begin{array}{cccl}
    \omega^a{}_b
    &:=&
    \phi^\ast
    \Omega^a{}_b
    &
    \scalebox{.85}{
      \def\arraystretch{.9}
      \begin{tabular}{l}
      for tangential $a$ 
      \\
      and tangential $b$,
      \end{tabular}
    }
    \\[+7pt]
    e^{b_1}
    \SecondFundamentalForm
      ^a
      _{b_1 b_2}
    &:=&
    \phi^\ast \Omega^a{}_{b_2}
    &
    \scalebox{.85}{
      \def\arraystretch{.9}
      \begin{tabular}{l}
      for transversal $a$ 
      \\
      and tangential $b_2$.
      \end{tabular}
    }
  \end{array}
\end{equation}
Then the Darboux-condition on $E$ \eqref{TheEmbeddingCondition} implies that the pullback of the torsion constraint \eqref{BosonicTorsionConstraint} to $\Sigma$ is equivalent to the following two equations:
\begin{equation}
  \label{PullbackOfBosonicTorsionConstraint}
  \phi^\ast
  \big(
    \mathrm{d}
    E^a
    +
    \Omega^a{}_b \, E^b
    \;=\;
    0
  \big)
  \quad 
    \Leftrightarrow
  \quad
  \left\{\!\!\!
  \def\arraystretch{1.4}
  \begin{array}{ll}
    \mathrm{d}
    e^a
    +
    \omega^a{}_b 
    \,
    e^b
    \;=\;
    0
    &
    \mbox{for tangential $a$}
    \\
    \mbox{I\hspace{-1.5pt}I}^a_{b_1 b_2}
    e^{b_1} e^{b_2}
    \;=\;
    0
    &
    \mbox{for transversal $a$.}
  \end{array}
  \right.
\end{equation}
Here the first line just says that $\omega$ is the torsion-free connection for $e$ on $\Sigma$, while the second line says that 
\begin{equation}
  \label{SecondFundamentalFormIsSymmetric}
  \SecondFundamentalForm^a_{b_1 b_2}
  \;=\;
  \SecondFundamentalForm^a_{b_2 b_1}
\end{equation}
is a symmetric tensor on $\Sigma$. As such this is historically known as the {\bf second fundamental form} of the immersion $\phi$ (e.g., \cite[p. 819]{BergerBryantGriffiths83}\cite[(II.2.12)]{Chavel93}).

\medskip

\noindent
{\bf Reformulating the Darboux condition.} 
We now re-formulate the Darboux coframe condition (Def. \ref{DarbouxCoFrame}) in a form that generalizes naturally to $\sfrac{1}{2}$BPS super-immersions.

\begin{proposition}[\bf The Darboux condition reformulated]
  \label{ReformulatingTheDarbouxCondition}
  Let $(X, \mathrm{d}s^2)$ be a smooth (pseudo-)Riemannian manifold of dimension $1 + p$ 
  and $\phi : \Sigma \xrightarrow{\;} X$ an immersion 
  \eqref{ImmersionDiagram}
  of a smooth manifold.
  Then a relativistic local coframe field $E : T \CoverOf{X} \xrightarrow{\;} \mathbb{R}^{1+d}$ on $X$ (Def. \ref{OrthonormalCoframes}) is Darboux for $\phi$ (Def. \ref{DarbouxCoFrame})
  if and only if there exists
  $$
    \mathrm{Sh}
    \,\in\,
    C^\infty\Big(
      \CoverOf{\Sigma}
      ;\;
      \mathrm{Hom}_{\mathrm{Vect}}
      \big(
        {P}(\mathbb{R}^{1+d})
        ,\,
        \overline{P}
        (\mathbb{R}^{1+d})
      \big)
    \Big)
    \,,
    \;\;\;
    \mbox{
      \rm
      $P,\overline{P}$
      as in 
      Ntn. \ref{TangentialAndTransversal}
    }
  $$
  such that
  \begin{itemize}[
    leftmargin=.9cm,
    topsep=1pt,
    itemsep=3pt
  ]
    \item[\bf (a)]
    $\phi^\ast( P E' )$
    is a relativistic local coframe field on $\Sigma$ {\rm (Def. \ref{CoFrameFields})},
    \item[\bf (b)]
    $\phi^\ast\big(\overline{P} E'\big)
    \,=\,
    \mathrm{Sh} \cdot \phi^\ast(P E')
    \,,
    $
  \end{itemize}
  for all local coframe fields $E'$ on $X$ that are in the same transversal gauge-orbit as $E$:
  $$
    E' 
    \;=\;
    U \cdot E
    \,,
    \;\;\;
    \mbox{\rm for}
    \;\;\;
    U
    \;\in\; 
    C^\infty\big(
      \CoverOf{X}
      ;\,
      \mathrm{O}
      \big(
        \overline{P}\mathbb{R}^{1+d}
      \big)
      \,
    \big)
    \,.
  $$  
\end{proposition}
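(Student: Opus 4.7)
The plan is to prove the two implications separately, exploiting that the proposition requires a \emph{single} $\mathrm{Sh}$ to realize condition (b) across the entire transversal gauge orbit of $E$. Throughout I will use that any $U \in \mathrm{O}\bigl(\overline{P}\mathbb{R}^{1+d}\bigr)$ acts as the identity on the tangential subspace $P\mathbb{R}^{1+d}$ and isometrically on the transversal subspace, so that $P(U\cdot E) = P\, E$ while $\overline{P}(U\cdot E) = U\cdot \overline{P} E$.

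For the forward direction, suppose $E$ is Darboux, i.e.\ $\phi^*\overline{P} E = 0$. I would simply set $\mathrm{Sh} := 0$. Then for any $E' = U\cdot E$ in the transversal gauge orbit, pulling back along $\phi$ yields $\phi^*\overline{P} E' = U|_{\phi(-)}\cdot \phi^*\overline{P} E = 0$, matching $\mathrm{Sh}\cdot \phi^*(P E') = 0$, so (b) holds trivially. For (a), note that $\phi^*(P E') = \phi^*(P E)$, and since the transversal component vanishes under pullback by the Darboux hypothesis, this coincides with $\phi^* E$; by the diagram \eqref{TheEmbeddingCondition} in Remark \ref{TheDarbouxCondition}, fiberwise injectivity of $\mathrm{d}\phi$ together with the Darboux condition turns this into a fiberwise linear isomorphism $T\CoverOf{\Sigma} \xrightarrow{\sim} \mathbb{R}^{1+p}$, hence a relativistic local coframe on $\Sigma$.

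For the converse, I would apply condition (b) simultaneously to $E$ (i.e.\ $U = \mathrm{id}$) and to the specific gauge transform $E' := U\cdot E$ for the constant choice $U \equiv -\mathrm{id}|_{\overline{P}\mathbb{R}^{1+d}} \in \mathrm{O}\bigl(\overline{P}\mathbb{R}^{1+d}\bigr)$. Since $\phi^*(P E') = \phi^*(P E)$ as observed, (b) applied to the two choices gives
\[
  \phi^*\overline{P} E \;=\; \mathrm{Sh}\cdot \phi^*(P E)
  \quad\text{and}\quad
  -\phi^*\overline{P} E \;=\; \phi^*\overline{P} E' \;=\; \mathrm{Sh}\cdot \phi^*(P E') \;=\; \mathrm{Sh}\cdot \phi^*(P E).
\]
Subtracting forces $2\,\phi^*\overline{P} E = 0$, hence $\phi^*\overline{P} E = 0$, which is the Darboux condition.

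The main conceptual obstacle is to notice that the content of the statement lies entirely in the universal quantification over the transversal gauge orbit: for a single coframe $E$, condition (b) in isolation would be essentially vacuous, since one could always define $\mathrm{Sh}$ by post-composing $\phi^*\overline{P} E$ with the inverse of the coframe $\phi^*(P E)$ supplied by (a). It is only the requirement that the \emph{same} $\mathrm{Sh}$ works for every representative in the $\mathrm{O}\bigl(\overline{P}\mathbb{R}^{1+d}\bigr)$-orbit that rigidifies this into the Darboux vanishing, and the $-\mathrm{id}$ trick reduces this rigidity to a one-line algebraic identity. This repackaging is precisely what will super-generalize in \S\ref{BPSSuperImmersions} to the $\sfrac{1}{2}$BPS condition, where a genuinely non-trivial shape tensor $\mathrm{Sh}$ is allowed on the odd-odd block while the Darboux-type vanishing is retained on the odd-even transversal block.
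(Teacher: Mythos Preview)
Your proof is correct. The forward direction matches the paper's (set $\mathrm{Sh}=0$ and invoke Remark~\ref{TheDarbouxCondition}), but for the converse you take a slightly different and more elementary route: you test condition~(b) against the single element $U=-\mathrm{id}\in\mathrm{O}\bigl(\overline{P}\mathbb{R}^{1+d}\bigr)$ to force $\phi^*\overline{P}E=0$ directly, whereas the paper instead rewrites~(b) for arbitrary $U$ as $\phi^*(\overline{P}E)=\phi^*(U)^{-1}\cdot\mathrm{Sh}\cdot\phi^*(PE)$ and observes that the right-hand side must therefore be an $\mathrm{O}\bigl(\overline{P}\mathbb{R}^{1+d}\bigr)$-fixed vector, hence zero, which then forces $\mathrm{Sh}=0$ (using that $\phi^*(PE)$ is a coframe by~(a)). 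Your $-\mathrm{id}$ trick is quicker here, but the paper's equivariance-and-fixed-point formulation is the one that carries over verbatim to the super case (Lemma~\ref{ReformulationOfSuperDarbouxCondition} and Proposition~\ref{ComponentsOfBPImmersion}), where the structure group becomes $\mathrm{Spin}(d-p)$ and one needs Schur's lemma rather than a single sign-flip to isolate the surviving $\mathrm{Sh}_{11}$ component; in particular $-1\in\mathrm{Spin}(d-p)$ acts trivially on the vector representation, so your specific test element would not by itself kill $\mathrm{Sh}_{00}$ there.
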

This is an elementary argument, and yet the implications are somewhat profound (cf. Rem. \ref{OutlookOnSuperDarboux} below):
\begin{proof}
  Noticing that by assumption that
  \begin{equation}
    \label{CommutatorPwithU}
    P \circ U
    \;=\;
    P
    \;\;\;
    \mbox{and}
    \;\;\;
    \overline{P} \circ U
    \;=\;
    U \circ \overline{P}
    \,,
  \end{equation}  
  the key point is that the second condition equivalently says that $\mathrm{Sh}$ takes values in $\mathrm{O}\big( \overline{P}(\mathbb{R}^{1+p}) \big)$-invariant maps:
  $$
    \def\arraystretch{1.5}
    \begin{array}{cll}
      &
      \phi^\ast\big(
        \overline{P}
        U E
      \big)
      \;=\;
      \mathrm{Sh}
      \cdot
      \phi^\ast\big(
        P U E
      \big)
      \\
      \Leftrightarrow
      &
      \phi^\ast\big(
        
        U \overline{P} E
      \big)
      \;=\;
      \mathrm{Sh}
      \cdot
      \phi^\ast\big(
        P E
      \big)      
      &
      \proofstep{
        by
        \eqref{CommutatorPwithU}
      }
      \\
      \Leftrightarrow
      &
      \phi^\ast\big(
        \overline{P}
        E
      \big)
      \;=\;
      \phi^\ast(U)^{-1}
      \cdot 
      \mathrm{Sh}
      \cdot
      \phi^\ast\big(
        P E
      \big)      
      \,.
    \end{array}
  $$
  But since the only fixed point of $\mathrm{O}\big(\overline{P}\mathbb{R}^{1+p}\big)$ is the origin $0 \in \overline{P} \mathbb{R}^{1+d}$ this implies that 
  \begin{equation}
    \label{BosonicShearingMapVanishes}
    \mathrm{Sh}
    \;=\;
    0
    \,,
  \end{equation}
  whereby the second condition above is equivalently the Darboux condition $\phi^\ast\big(\overline{P} E\big) = 0$ \eqref{PullbackOfTranverseDarbouxFrameComponentsVanishes}, whence the first condition follows by Rem. \ref{TheDarbouxCondition}.
\end{proof}
\begin{remark}[\bf Outlook on the supergeometric generalization]
  \label{OutlookOnSuperDarboux}
 $\,$

  \noindent {\bf (i)} The formulation of the Darboux condition in Prop. \ref{ReformulatingTheDarbouxCondition} makes immediate sense also for super-immersion (recalled as Def. \ref{ImmersionOfSuperManifolds} below),
  but in this generality the strong implication \eqref{BosonicShearingMapVanishes} turns out to be relaxed. 
  
  \vspace{1mm} 
  \noindent {\bf (ii)} It is this extra freedom in choosing a {\it shear map} $\mathrm{Sh}$ expressing the pullback of the transversal super-coframe in terms of the tangential super-coframe 
  which becomes the source of higher gauge fields on super $p$-branes (discussed in \S\ref{BPSSuperImmersions}), in particular of the B-field on M5-branes (discussed in \S\ref{SuperEmbeddingConstruction}).
\end{remark}

\newpage 
%%%%%%%%%%%%%%%%%%%%%%%%%%%%%%%%%%%%%%%%%%%%
\subsection{$\sfrac{1}{2}$BPS Super-immersions}
\label{BPSSuperImmersions}
%%%%%%%%%%%%%%%%%%%%%%%%%%%%%%%%%%%%%%%%%%%%

Now we pass to super-geometry and specifically to super-spacetimes and their (higher) super Cartan geometry. Our notation follows \cite[\S 2]{GSS24-SuGra} to which we refer the reader for review, references and further discussion.

\medskip

\noindent
{\bf Supergeometric generalization by internalization.}
Many differential-geometric concepts generalize straightfowardly to supergeometry, by just interpreting their algebraic formulation verbatim in superalgebra (a general process called ``internalization'' in category theory). This is the case for instance for the notion of immersions \eqref{ImmersionDiagram}:
\begin{definition}[{\bf Super-immersion}, {e.g. \cite[above Thm. 4.4.3]{Varadarajan04}}]
  \label{ImmersionOfSuperManifolds}
  A map of smooth super-manifolds $\phi : \Sigma \xrightarrow{\;} X$ is an {\it immersion} if its differential at each point $\sigma \in \bosonic\Sigma \xhookrightarrow{\;} \Sigma$ is an injective map of super-vector spaces
  $$
      T_{\sigma} \Sigma
        \xhookrightarrow{\; \mathrm{d}\phi \;}
      T_{\phi(\sigma)} X
      \,.
  $$
\end{definition}

\smallskip
\noindent
{\bf On the general nature of super-Darboux coframe fields.}
However, the case of Lorentzian (super-)spacetimes is a little different: The {\it verbatim} generalization of co-frame structure (Def. \ref{CoFrameFields})
to supergeometry modeled on $\mathbb{R}^{1,d\vert \mathbf{N}}$  
would ask for a reduction of the structure group to the ortho-symplectic supergroup 
$$
  \mathrm{OSp}\big(
    1,d \,\vert\, \mathbf{N} 
  \big)
  \;\defneq\;
  \mathrm{O}\big(
    \mathbb{R}^{1,d\vert \mathbf{N}}
  \big)
  \xhookrightarrow{\quad}
  \mathrm{GL}\big(
    \mathbb{R}^{1,d\vert \mathbf{N}}
  \big)
  \,.
$$
For this notion the above discussion of Darboux co-frames would generalize verbatim.
But instead, for Lorentzian super-spacetimes one asks, of course, for {\it further} reduction to just the Spin-group (the ``external automorphisms'' of $\mathbb{R}^{1,d\vert \mathbf{N}}$, cf. \cite[Prop. 6]{HS18}):

\vspace{-.5cm}
\begin{equation}
  \label{IteratedReductionOfStructureGroups}
  \begin{tikzcd}[
    row sep=-4pt
  ]
  \mathrm{Spin}(1,d)
  \ar[r, hook]
  &
  \mathrm{O}\big(\mathbb{R}^{1,d\vert \mathbf{N}}\big)
  \ar[r, hook]
  &
  \mathrm{GL}\big(
    \mathbb{R}^{1,d \vert \mathbf{N}}
  \big)
  \\
  \mathclap{
  \scalebox{.7}{
    \color{darkblue}
    \bf
    \def\arraystretch{.9}
    \begin{tabular}{c}
      Spin-group
    \end{tabular}
  }
  }
  &
  \mathclap{
  \scalebox{.7}{
    \color{darkblue}
    \bf
    \def\arraystretch{.9}
    \begin{tabular}{c}
      ortho(-symplectic)
      \\
      super-group
    \end{tabular}
  }
  }
  &
  \mathclap{
  \scalebox{.7}{
    \color{darkblue}
    \bf
    \def\arraystretch{.9}
    \begin{tabular}{c}
      general-linear
      \\
      super-group
    \end{tabular}
  }
  }
  \end{tikzcd}
\end{equation}
which means that one is now dealing with co-frames for stronger $G$-structures.
Therefore, the general existence proof 
(Prop. \ref{ExistenceOfDarbouxCoframes})
for Darboux coframe fields 
does not pass to super-immersions into super-spacetimes. 

\medskip

\noindent
{\bf BPS Super-immersions.}
Instead, the existence of Darboux coframes now becomes a {\it condition} on the super-immersion. This is essentially the condition known in the literature as ``super-embedding'' (cf. Rem. f\ref{SuperEmbeddingConditionInTheLiterature}). Since it is not really about embeddings but about immersions (by Rem. \ref{ImmersionsAndEmbeddings}), and here specifically those that preserve ``half of the local supersymmetry'', and since we will streamline the definition a little, we shall instead speak of {\it $\sfrac{1}{2}$BPS super-immersions}.

\smallskip

\noindent
To that end, consider
  \begin{itemize}[
    leftmargin=.8cm,
    topsep=1pt,
    itemsep=2pt
  ]
  \item[\bf (i)]
  $X$ a super-spacetime (cf. \cite[\S 2.2.2]{GSS24-SuGra}) locally modeled on a Minkowski super-space $\mathbb{R}^{1,d \vert \mathbf{N}}$ for a real $\mathrm{Pin}(1,d)$-representation $\mathbf{N}$ with canonical Clifford generators $\big(\Gamma_a : \mathbf{N} \to \mathbf{N}\big)_{a = 0}^d$, 
  \item[\bf (ii)]
  $p \leq d$ such that 
  \begin{equation}
    \label{TheProjectorInGeneral}
    P
    \;:=\;
    \tfrac{1}{2}
    \big(
      1
      \,+\,
      \Gamma_{p+1}
      \cdot
      \Gamma_{p+2}
      \cdots
      \Gamma_{d}
    \big)
    \;:\;
    \mathbb{R}^{1,d \,\vert\, \mathbf{N}}
    \xrightarrow{\quad}
    \mathbb{R}^{1,d\vert \mathbf{N}}
  \end{equation}
  is a projector ($P \circ P = P$) with complementary projector denoted $\overline{P} := 1 - P$,
  \item[\bf (iii)]
  $\Sigma$ a super-manifold locally modeled on $\mathbb{R}^{1,p \,\vert P(\mathbf{N})}$.
\end{itemize}
and notice that then the action of $\mathrm{Spin}(d-p) \xhookrightarrow{\;} \mathrm{Spin}(1,p) \times \mathrm{Spin}(d-p) \hookrightarrow \mathrm{Spin}(1+d)$ on $\mathbf{N}$ evidently commutes with $P$ and with $\overline{P}$, which allows to regard
\begin{equation}
  \label{ResidualTransverseSpinReps}
  P(\mathbf{N})
  ,\,
  \overline{P}(\mathbf{N})
  \;\;
  \in
  \;\;
  \mathrm{Rep}_{\mathbb{R}}\big(
    \mathrm{Spin}(d-p)
  \big)
  \,.
\end{equation}

\begin{definition}[\bf \sfrac{1}{2}BPS super-immersion]
  \label{BPSImmersion}

  \noindent
  In the above situation, we call a super immersion   
  $
    \phi : \Sigma \xhookrightarrow{\;} X
  $
  (Def. \ref{ImmersionOfSuperManifolds})
  a {\it \sfrac{1}{2}BPS immersion} if
  it admits the super-analog of a Darboux co-frame field in the form Prop. \ref{ReformulatingTheDarbouxCondition}, namely if there exists an orthonormal super co-frame field $(E,\Psi)$ on $X$
  which is\footnote{
    Beware that, by the discussion around \eqref{IteratedReductionOfStructureGroups}, a choice of such super Darboux co-frames is, even locally, more than just a choice of Lorentz- (i.e.: Spin-)gauge: In contrast to the bosonic situation, not every local super-coframe field need to be Spin-gauge equivalent to a super-Darboux coframe field on super-spacetime. 
  } {\it super-Darboux} for $\phi$
  in that there is a ``super-shear map''
  \begin{equation}
    \label{SuperShearMap}
    \mathrm{Sh}
    \,\in\,
    C^\infty\Big(
      \CoverOf{\Sigma}
      ;\,
      \mathrm{Hom}_{\mathbb{R}}
      \big(
        {P}
        (\mathbb{R}^{1,d\vert \mathbf{N}})
        ,\,
        \overline{P}
        (\mathbb{R}^{1,d\vert \mathbf{N}})
      \big)
    \Big)
  \end{equation}
  such that
  \begin{align}
    \label{PullbackOfTransversalCoFrameIsCoFrame}
    \mbox{\bf (a)}
    \hspace{.3cm}
    &
    (e',\psi')
    \;:=\;
    \phi^\ast \big(P (E',\Psi')\big)
    \;\;
    \mbox{
      is a local super co-frame field on $\Sigma$
    }
    \\
    \label{ConditionOnSuperShearMap}
    \mbox{\bf (b)}
    \hspace{.3cm}
    &
     \phi^\ast\big(\overline{P}(E',\Psi')  \big) 
       \,=\, 
     \mathrm{Sh}
     \cdot 
     \phi^\ast
     \big(P (E',\Psi')\big)
     \,\defneq\,
     \mathrm{Sh}
     \cdot (e',\psi')
     \,.
  \end{align}
  for all super-coframe fields $(E',\Psi')$ in the same transversal gauge orbit as $(E, \Psi)$:
  \begin{equation}
    \label{TransversalSpinTransformation}
    (E', \Psi')
    \;=\;
    U \cdot
    (E, \Psi)
    \,,
    \;\;\;\;
    \mbox{for}
    \;\;\;
    U 
    \,\in\,
    C^\infty\big(
      \CoverOf{X}
      ;\,
      \mathrm{Spin}(d-p)
      \,
    \big)
    \,.
  \end{equation}
\end{definition}
In supergeometric generalization of Prop. \ref{ReformulatingTheDarbouxCondition},
we may re-cast the above super-Darboux condition as follows:
\begin{lemma}[\bf Reformulation of super-Darboux condition]
  \label{ReformulationOfSuperDarbouxCondition}
  The condition 
  \eqref{ConditionOnSuperShearMap}
  on a super-shear map $\mathrm{Sh}$ \eqref{SuperShearMap} is equivalently its $\mathrm{Spin}(d-p)$-equivariance
  $$
     \phi^\ast\big(\overline{P}(E',\Psi')  \big) 
       \,=\, 
     \mathrm{Sh}
     \cdot 
     \phi^\ast
     \big(P (E',\Psi')\big)
     \hspace{.6cm}
     \Leftrightarrow
     \hspace{.6cm}
    \phi^\ast(U)
    \cdot
    \mathrm{Sh}
      \;=\;
    \mathrm{Sh}
    \cdot 
    \phi^\ast(U)
    \,.
  $$
\end{lemma}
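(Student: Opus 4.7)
The key preliminary observation will be that both projectors $P$ and $\overline{P}$ of \eqref{TheProjectorInGeneral} commute with every $U\in\mathrm{Spin}(d-p)$. Since $\mathrm{Spin}(d-p)$ is generated inside the Clifford algebra by the even products $\Gamma_i\Gamma_j$ with $i,j\in\{p+1,\ldots,d\}$, and since a single $\Gamma_i$ picks up the sign $(-1)^{d-p-1}$ when commuted past the transversal volume element $\Gamma_{p+1}\cdots\Gamma_d$, any even product commutes with it and therefore with $P=\tfrac12(1+\Gamma_{p+1}\cdots\Gamma_d)$ and with $\overline{P}=1-P$.

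With this in hand, the forward direction is direct. Substituting $(E',\Psi') = U\cdot(E,\Psi)$ into \eqref{ConditionOnSuperShearMap} and commuting $U$ through the projectors turns the identity into
\[
  \phi^\ast(U)\cdot\phi^\ast\big(\overline{P}(E,\Psi)\big)
  \;=\;
  \mathrm{Sh}\cdot\phi^\ast(U)\cdot\phi^\ast\big(P(E,\Psi)\big).
\]
Using the special case $U = \mathrm{id}$ of \eqref{ConditionOnSuperShearMap} to rewrite $\phi^\ast\big(\overline{P}(E,\Psi)\big)$ on the left gives
\[
  \phi^\ast(U)\cdot\mathrm{Sh}\cdot\phi^\ast\big(P(E,\Psi)\big)
  \;=\;
  \mathrm{Sh}\cdot\phi^\ast(U)\cdot\phi^\ast\big(P(E,\Psi)\big).
\]
By \eqref{PullbackOfTransversalCoFrameIsCoFrame}, $(e,\psi) = \phi^\ast\big(P(E,\Psi)\big)$ is pointwise a linear isomorphism onto $P(\mathbb{R}^{1,d\vert\mathbf N})$; right-cancelling it yields precisely the equivariance $\phi^\ast(U)\cdot\mathrm{Sh} = \mathrm{Sh}\cdot\phi^\ast(U)$. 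The converse is obtained by running the same chain of substitutions backwards: starting from the equivariance and the $U=\mathrm{id}$ case of \eqref{ConditionOnSuperShearMap}, one recovers the identity for all $(E',\Psi')$ in the $\mathrm{Spin}(d-p)$-orbit.

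The main obstacle --- really the only non-routine ingredient --- is the verification that the transversal-volume element is central for $\mathrm{Spin}(d-p)$, which is the Clifford-parity count sketched above. The remainder is formal manipulation followed by a right-cancellation via the fibrewise surjectivity of the pulled-back coframe. The structural upshot, in contrast to the bosonic Prop.~\ref{ReformulatingTheDarbouxCondition}, is that because $\mathrm{Spin}(d-p)$ acts non-trivially on $P(\mathbb{R}^{1,d\vert\mathbf N})$ (mixing spinor components through the transversal spin representation \eqref{ResidualTransverseSpinReps}), the equivariance constraint does not force $\mathrm{Sh}=0$, leaving room for the non-trivial shear data which will encode the worldvolume higher gauge fields anticipated in Rem.~\ref{OutlookOnSuperDarboux}.
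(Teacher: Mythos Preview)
Your proof is correct and follows essentially the same route as the paper's: both establish that $P,\overline{P}$ commute with the transversal $\mathrm{Spin}(d-p)$-action, then run the chain of substitutions $(E',\Psi')=U\cdot(E,\Psi)$, commute $U$ through the projectors and through $\phi^\ast$, invoke the $U=\mathrm{id}$ instance of \eqref{ConditionOnSuperShearMap}, and right-cancel the coframe via \eqref{PullbackOfTransversalCoFrameIsCoFrame}. Your version is slightly more explicit in justifying the commutation of $P$ with $\mathrm{Spin}(d-p)$ via the Clifford-parity count and in flagging the role of the $U=\mathrm{id}$ base case for the converse direction, but there is no substantive difference in strategy.
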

\begin{proof}
Noticing  that the transversal spin-action
\eqref{TransversalSpinTransformation} commutes with the projection
\eqref{TheProjectorInGeneral}
\begin{equation}
  \label{TransversalSpinActionCommutesWithProjector}
  U \circ P
  \,=\,
  P \circ U
  ,\,
  \;\;\;\;
  U \circ \overline{P}
  \,=\,
  \overline{P} \circ U
\end{equation}
as well as with pullback to the worldvolume, in that
\begin{equation}
  \label{TransversalSpinActionCommutesWithPullback}
  \phi^\ast \big(
    U \cdot (-)
  \big)
  \;=\;
  \phi^\ast(U)
  \cdot
  \phi^\ast(-)
  \hspace{.5cm}
  \mbox{i.e.:}
  \hspace{.4cm}
  \begin{tikzcd}[
    row sep=24pt, column sep=large
  ]
    T\CoverOf{\Sigma}
    \ar[
      d,
      "{
        \mathrm{d}\CoverOf{\phi}
      }"
    ]
    \ar[
      r,
      "{
        \scalebox{1.15}{$($}
          p_{{}_{T\CoverOf{\Sigma}}}
          ,\,
          \phi^\ast(E,\Psi)
        \scalebox{1.1}{$)$}
      }"
    ]
    &[45pt]
    \CoverOf{\Sigma}
    \times
    \mathbb{R}^{1,d\vert\mathbf{N}}
    \ar[
      d,
      "{
        \phi
         \times
        \mathrm{id}
      }"{description}
    ]
    \ar[
      dr,
      "{
        \phi^\ast(U)
      }"{sloped}
    ]
    \\
    T\CoverOf{X}
    \ar[
      r,
      "{
        \scalebox{1.15}{$($}
          p_{{}_{T \CoverOf{X}}},
          (E,\Psi)
        \scalebox{1.15}{$)$}
      }"
    ]
    &
    \CoverOf{X}
    \times
    \mathbb{R}^{1,d\vert\mathbf{N}}
    \ar[
      r,
      "{ U }"
    ]
    &
    \mathbb{R}^{1,d\vert\mathbf{N}}
    \mathrlap{\,,}
  \end{tikzcd}
\end{equation}
\vspace{1pt}
we have the following sequence of logical equivalences:
$$
  \def\arraystretch{1.7}
  \begin{array}{cll}
  &
  \phi^\ast
  \big(
    \overline{P}
    \,
    (E', \Psi')
  \big)
    \;=\;
  \mathrm{Sh}
  \cdot 
  \phi^\ast \big(
    P
    \,
    (E',\Psi')
  \big)
  \\
  \Leftrightarrow
  &
  \phi^\ast
  \big(
    \overline{P}
    \,
    U  (E,\Psi)
  \big)
    \;=\;
  \mathrm{Sh}
  \cdot 
  \phi^\ast \big(
    P
    \,
    U  (E,\Psi)
  \big)
  &
  \proofstep{
    by
    \eqref{TransversalSpinTransformation}
  }
  \\
  \Leftrightarrow
  &
  \phi^\ast
  \big(
    U
    \overline{P}
    \,
    (E,\Psi)
  \big)
    \;=\;
  \mathrm{Sh}
  \cdot 
  \phi^\ast \big(
    U
    P
    (E,\Psi)
  \big)  
  &
  \proofstep{
    by
    \eqref{TransversalSpinActionCommutesWithProjector}
  }
  \\
  \Leftrightarrow
  &
  \phi^\ast(U)
  \cdot
  \phi^\ast
  \big(
    \overline{P}
    (E,\Psi)
  \big)
    \;=\;
  \mathrm{Sh}
  \cdot 
  \phi^\ast(U)
  \cdot
  \phi^\ast \big(
    P
    (E,\Psi)
  \big)  
  &
  \proofstep{
    by
    \eqref{TransversalSpinActionCommutesWithPullback}
  }
  \\
  \Leftrightarrow
  &
  \phi^\ast(U)
  \cdot
  \mathrm{Sh}
    \cdot
    \phi^\ast\big(
      P(E,\Psi)
    \big)
    \;=\;
  \mathrm{Sh}
  \cdot 
  \phi^\ast(U)
  \cdot
  \phi^\ast \big(
    P
    (E,\Psi)
  \big)  
  &
  \proofstep{
    by
    \eqref{TransversalSpinTransformation}
  }
  \\
  \Leftrightarrow
  &
  \phi^\ast(U)
  \cdot
  \mathrm{Sh}
    \;=\;
  \mathrm{Sh}
  \cdot 
  \phi^\ast(U)
  &
  \proofstep{
    by
    \eqref{PullbackOfTransversalCoFrameIsCoFrame}.
  }
  \end{array}
$$

\vspace{-.4cm}
\end{proof}

It follows that most components of a super-shear map vanish identically -- as in the bosonic case \eqref{BosonicShearingMapVanishes} -- but, remarkably, the odd-odd component may be non-trivial:
\begin{proposition}[\bf Components of $\sfrac{1}{2}$BPS immersions]
  \label{ComponentsOfBPImmersion}
  A super-immersion $\phi : \Sigma^{1+p} \xrightarrow{\;} X^{1+d}$ 
  is a $\sfrac{1}{2}$BPS immersion
  (Def. \ref{BPSImmersion}) 
  if and only if the pullback of any super-Darboux coframe field $(E,\Psi)$ is of the form
  \begin{equation}
    \left.
    \def\arraystretch{1.6}
    \begin{array}{ll}
      \label{BPSImmersionInComponents}
      \phi^\ast (P E)
      \;=\;
      e
      &
      \phi^\ast (\overline{P}E)
      \;=\;
      0
      \\
      \phi^\ast (P \Psi)
      \;=\;
      \psi
      &
      \phi^\ast 
      (\overline{P} \Psi)
      \;=\;
      \mathrm{Sh}_{11} 
        \cdot 
      \psi
    \end{array}
   \!\! \right\}
    \hspace{.7cm}
    \Leftrightarrow
    \hspace{.7cm}
    \left\{\!\!
    \def\arraystretch{1.6}
    \begin{array}{l}
      \phi^\ast E
      \;=\;
      e
      \\
      \phi^\ast \Psi
      \;=\;
      \psi
      \,+\,
      \mathrm{Sh}_{11} 
        \cdot 
      \psi\,,
    \end{array}
    \right.
  \end{equation}
  for some
  \vspace{1mm} 
  \begin{equation}
    \label{TheNontrivialShearComponent}
    \mathrm{Sh}_{11} 
      \,\in\,
    C^\infty\big(
      \CoverOf{\Sigma}
      ;\,
      \mathrm{Hom}_{
        \mathrm{Rep}_{\mathbb{R}}
        \scalebox{1.1}{$($}
          \mathrm{Spin}(d-p)
        \scalebox{1.1}{$)$}
      }
      \big(
        P \mathbf{N}
        ,\,
        \overline{P}\mathbf{N}
      \big)
    \big)
    \,.
  \end{equation}
\end{proposition}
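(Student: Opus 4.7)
The strategy is to combine the reformulation of the super-Darboux condition via Spin-equivariance (Lem.\ \ref{ReformulationOfSuperDarbouxCondition}) with a parity-and-representation-theoretic decomposition of the shear map \eqref{SuperShearMap}. The statement \eqref{BPSImmersionInComponents} has two halves: the identification of the tangential pullbacks as $(e,\psi)$ (a local super-coframe on $\Sigma$) is condition \eqref{PullbackOfTransversalCoFrameIsCoFrame} verbatim, so it suffices to analyze the transversal pullback \eqref{ConditionOnSuperShearMap}, equivalently the shear map $\mathrm{Sh}$.

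First I would decompose $\mathrm{Sh}$ according to the $\mathbb{Z}_2$-grading of the super-vector space $\mathbb{R}^{1,d\vert \mathbf{N}}$. Since the projectors $P$ and $\overline{P}$ both preserve parity, we have
\[
  P(\mathbb{R}^{1,d\vert \mathbf{N}}) \,\simeq\, P\mathbb{R}^{1,d} \oplus P\mathbf{N},
  \qquad
  \overline{P}(\mathbb{R}^{1,d\vert \mathbf{N}}) \,\simeq\, \overline{P}\mathbb{R}^{1,d} \oplus \overline{P}\mathbf{N},
\]
and because the shear map is an even morphism of super-vector spaces (its action matches $\phi^\ast(E,\Psi)$, which is bosonic in its first and fermionic in its second slot), the only potentially non-vanishing components are the diagonal ones $\mathrm{Sh}_{00}\colon P\mathbb{R}^{1,d}\to\overline{P}\mathbb{R}^{1,d}$ and $\mathrm{Sh}_{11}\colon P\mathbf{N}\to\overline{P}\mathbf{N}$.

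Next I would apply Lem.\ \ref{ReformulationOfSuperDarbouxCondition} to conclude that $\mathrm{Sh}$ commutes with the action of $\mathrm{Spin}(d-p)$ (via $\phi^\ast U$). The crucial observation is that $\mathrm{Spin}(d-p)\hookrightarrow\mathrm{Spin}(1,d)$ acts trivially on the tangential bosonic factor $P\mathbb{R}^{1,d}$ (the directions along $\Sigma$) while acting through its defining vector representation on $\overline{P}\mathbb{R}^{1,d}$. Consequently, equivariance forces the image of $\mathrm{Sh}_{00}$ to consist of $\mathrm{Spin}(d-p)$-fixed points in $\overline{P}\mathbb{R}^{1,d}$, hence $\mathrm{Sh}_{00}=0$ (this is just the supergeometric reincarnation of the vanishing argument \eqref{BosonicShearingMapVanishes} used in Prop.\ \ref{ReformulatingTheDarbouxCondition}). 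On the fermionic diagonal, however, $\mathrm{Spin}(d-p)$ acts non-trivially on both $P\mathbf{N}$ and $\overline{P}\mathbf{N}$ (via the restricted spin representations \eqref{ResidualTransverseSpinReps}), so equivariance of $\mathrm{Sh}_{11}$ is exactly the condition that it be a morphism in $\mathrm{Rep}_{\mathbb{R}}(\mathrm{Spin}(d-p))$, yielding the parameter space \eqref{TheNontrivialShearComponent}.

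Assembling these, $\mathrm{Sh}$ reduces to the single datum $\mathrm{Sh}_{11}$ and condition \eqref{ConditionOnSuperShearMap} unfolds precisely to the four equations on the left of \eqref{BPSImmersionInComponents}, which repackage to the displayed form on the right. The converse direction is immediate: given any $\mathrm{Sh}_{11}$ as in \eqref{TheNontrivialShearComponent}, extending it by zero to $\mathrm{Sh}_{00}=0$ defines an even, $\mathrm{Spin}(d-p)$-equivariant map, and Lem.\ \ref{ReformulationOfSuperDarbouxCondition} then guarantees that it satisfies the gauge-invariant condition \eqref{ConditionOnSuperShearMap}, so $\phi$ is $\sfrac{1}{2}$BPS. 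The only point requiring care is the parity argument ruling out off-diagonal components; this is genuinely rigorous only once one is explicit that $\mathrm{Sh}$ is regarded as an even morphism in the category of super-vector spaces (matching the natural parity of the coframe-field components), which I would state explicitly when setting up the decomposition.
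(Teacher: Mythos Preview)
Your argument for the vanishing of the off-diagonal blocks $\mathrm{Sh}_{01}$ and $\mathrm{Sh}_{10}$ has a genuine gap. You claim these vanish because ``the shear map is an even morphism of super-vector spaces'', but this confuses the super-grading with a constraint that isn't actually there. On a supermanifold, any 1-form can be expanded in the coframe $(e,\psi)$: an even 1-form such as $\phi^\ast(\overline{P}E)$ expands as $\mathrm{Sh}_{00}\cdot e + \mathrm{Sh}_{01}\cdot\psi$ with $\mathrm{Sh}_{00}$ even and $\mathrm{Sh}_{01}$ \emph{odd}-function-valued; similarly $\mathrm{Sh}_{10}$ is odd-function-valued. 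These odd components are perfectly admissible and need not vanish a priori. Your caveat at the end (``rigorous only once one is explicit that $\mathrm{Sh}$ is regarded as an even morphism'') does not resolve this: evenness in the super sense is automatic and is exactly what forces $\mathrm{Sh}_{01},\mathrm{Sh}_{10}$ to be odd, not zero. That the paper later keeps the component $\tau\equiv\mathrm{Sh}_{10}$ around explicitly (Rem.~\ref{TransversalFermionicShear}) and traces where it would enter before the BPS condition kills it confirms that parity alone does not do the job.

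The paper's proof instead decomposes $\mathrm{Sh}$ into all four blocks \eqref{ComponentOfSupershearMap} and kills three of them uniformly by $\mathrm{Spin}(d-p)$-equivariance and Schur's lemma: $\mathrm{Sh}_{00}$ maps the trivial rep to the vector rep, while $\mathrm{Sh}_{01}$ and $\mathrm{Sh}_{10}$ map between spinorial and vectorial representations. The latter are distinguished by the central element $-1\in\mathrm{Spin}(d-p)$, which acts as $-1$ on spinors and as $+1$ on vectors (and on the trivial rep), so any intertwiner between them is zero---regardless of whether its function coefficients are even or odd. Your treatment of $\mathrm{Sh}_{00}$ and $\mathrm{Sh}_{11}$ is correct and matches the paper; to close the gap, replace the parity argument for the off-diagonal blocks with this Schur-type argument.
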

\begin{proof}
  Since $(e,\psi) := \phi^\ast(E,\Psi)$ is a super-coframe field by assumption \eqref{PullbackOfTransversalCoFrameIsCoFrame}, the pullback of $\overline{P}(E,\Psi)$ to $\CoverOf{\Sigma}$ may uniquely be expanded in $(e,\psi)$:
  \begin{equation}
    \label{ComponentOfSupershearMap}
    \def\arraystretch{1.5}
    \begin{array}{l}
      \phi^\ast (\overline{P}E)
      \;=\;
      \mathrm{Sh}_{00}
      \cdot e
      \,+\,
      \mathrm{Sh}_{01}
      \cdot \psi
      \\
      \phi^\ast (\overline{P}\Psi)
      \;=\;
      \mathrm{Sh}_{10}
      \cdot e
      \,+\,
      \mathrm{Sh}_{11}
      \cdot \psi
      \,,
    \end{array}
  \end{equation}
  and by \eqref{ConditionOnSuperShearMap} the coefficients are just the components of the super-shear map $\mathrm{Sh}$ \eqref{SuperShearMap}. 

  Now, by Lem. \ref{ReformulationOfSuperDarbouxCondition}, $\mathrm{Sh}$ and hence its components are $\mathrm{Spin}(d-p)$-equivariant, and as such they map between the following $\mathrm{Spin}(d-p)$-representations:
  $$
    \begin{tikzcd}[
      column sep=1pt,
      row sep=-2pt
    ]
      \mathrm{Sh}_{00}
      &:&
      (1+p)
      \ar[r]
      &[+15pt]
      \mathbf{(d-p)}
      \\
      \mathrm{Sh}_{01}
      &:&
      P(\mathbf{N}) 
      \ar[r]
      &[+15pt]
      \mathbf{(d-p)}
      \\
      \mathrm{Sh}_{10}
      &:&
      (1+p)
      \ar[r]
      &
      \overline{P}(\mathbf{N})
      \\
      \mathrm{Sh}_{11}
      &:&
      P(\mathbf{N})
      \ar[r]
      &
      \overline{P}(\mathbf{N})
      \,.
    \end{tikzcd}
  $$
  Here $(1+p) := P(\mathbb{R}^{1,d})$ denotes the trivial $\mathrm{Spin}(d-p)$-representation of dimension $1+p$, 
  while $\mathbf{(d-p)} := \overline{P}(\mathbb{R}^{1,d})$ denotes the vectorial irrep (via the defining irrep of $\mathrm{SO}(d-p)$), 
  and $P(\mathbf{N})$, $\overline{P}(\mathbf{N})$ are regarded as $\mathrm{Spin}(d-p)$-representations via \eqref{ResidualTransverseSpinReps}.

  But since these maps are $\mathrm{Spin}(d-p)$-equivariant, Schur's lemma  (e.g. \cite[Prop. 1.16]{EtingofEtAl11}) says that they are trivial between non-isomorphic irrep summands.
  This manifestly implies that $\mathrm{Sh}_{00} = 0$. Similarly, since the $\mathrm{Spin}(d-p)$-representations $P(\mathbf{N})$ and $\overline{P}(\mathbf{N})$
  are spinorial (in that their fixed subspace of $-1 \in \mathrm{Spin}(d-p)$ is zero)
  they do not contain a vectorial summand like $\mathbf{(d-p)}$ (which is fixed by the central element $-1$), also $\mathrm{Sh}_{01} = 0$ and $\mathrm{Sh}_{10} = 0$.
\end{proof}

\begin{remark}[\bf Existence of fermionic shear]
  The noteworthy point in Prop. \ref{ComponentsOfBPImmersion} is that, in general, the one component $\mathrm{Sh}_{11}$ in \eqref{ComponentOfSupershearMap} of the super-shear map {\it need not} vanish, since $P(\mathbf{N})$ and $\overline{P}(\mathbf{N})$ may contain the same $\mathrm{Spin}(d-p)$ irreps. Concretely, we see this below for the example of the M5-brane, where these two representations are in fact isomorphic, cf. \eqref{IrrepDecompositionOf6dSpinors} below.
  Remarkably, this freedom in BPS super-immersions is the origin of the worldvolume higher gauge fields, discussed for the M5 in \S\ref{TheM5EquationsOfMotion}.
\end{remark}

\begin{remark}[\bf The ``super-embedding''-condition]
\label{SuperEmbeddingConditionInTheLiterature}
In summary, Prop. \ref{ComponentsOfBPImmersion} says in particular that a $\sfrac{1}{2}$BPS immersion (Def. \ref{BPSImmersion}) comes with the following structure:
\begin{equation}
  \label{BPSImmersionDiagram}
  \begin{tikzcd}[row sep=20pt]
    \mathllap{
      \scalebox{.7}{
        \color{darkblue}
        \bf
        \def\arraystretch{.9}
        \begin{tabular}{c}
          Brane
          \\
          worldvolume
          \\
          supermanifold
        \end{tabular}
      }
    }
    \Sigma
    \ar[
      rr,
      "{ \phi }",
      "{
        \scalebox{.7}{
          \color{darkgreen}
          \bf
          super-immersion
        }
      }"{swap}
    ]
    &[+15pt]
    &[-25pt]
    X
    \mathrlap{
      \scalebox{.7}{
        \color{darkblue}
        \bf
        \def\arraystretch{.9}
        \begin{tabular}{c}
          Target
          \\
          spacetime
          \\
          supermanifold
        \end{tabular}
      }
    }
    &[-25pt]
    \\[-15pt]
    0
    \ar[
      r,
      <-|,
      "{ \phi^\ast }"
    ]    
    &
    \overline{P} E
    \mathrlap{
      \scalebox{.7}{
        \color{darkblue}
        \bf
        \def\arraystretch{.9}
        \begin{tabular}{c}
          Transversal part
          \\
          of bosonic co-frame
        \end{tabular}
      }
    }
    \\
    &
    \mathllap{
      \scalebox{.7}{
        \color{darkblue}
        \bf
        \def\arraystretch{.9}
        \begin{tabular}{c}
          Super-Darboux
          \\
          co-frame field
        \end{tabular}
      }
    }
    (E,\Psi)
    \ar[
      rr,
      <-|,
      "{ U }"
    ]
    \ar[
      d,
      |->,
      "{ (P,P) }"
    ]
    \ar[
      u,
      |->,
      "{  
         ( \overline{P}, 0)   
      }"{swap}
    ]
    &&
    \big(\tilde E,\tilde \Psi\big)    
    \mathrlap{
      \scalebox{.7}{
        \color{darkblue}
        \bf
        \def\arraystretch{.9}
        \begin{tabular}{c}
          Any spacetime
          \\
          super co-frame
        \end{tabular}
      }
    }
    \\
    \mathllap{
      \scalebox{.7}{
        \color{darkblue}
        \bf
        \def\arraystretch{.9}
        \begin{tabular}{c}
          Worldvolume
          \\
          super co-frame
        \end{tabular}
      }
    }
    (e,\psi)
    \ar[
      r,
      <-|,
      "{ \phi^\ast }"
    ]    
    &
    (P E, P\Psi)
    \mathrlap{
      \scalebox{.7}{
        \color{darkblue}
        \bf
        \def\arraystretch{.9}
        \begin{tabular}{c}
          Tangential part
          \\
          of super-coframe
        \end{tabular}
      }
    }
  \end{tikzcd}
\end{equation}
\noindent
This is what is broadly known as the ``super-embedding''-condition, in the literature. Specifically:
\begin{itemize}[
  topsep=1pt,
  itemsep=2pt,
  leftmargin=.8cm
]
\item[\bf (i)] 
The condition $\phi^\ast P E = e$ in 
\eqref{BPSImmersionDiagram}

is the {\bf basic embedding condition} of \cite[(6)]{HoweSezgin97a}\cite[(4)]{HSW97a}\cite[(2)]{HoweRaetzelSezgin98}, 
which earlier was known 

as the {\bf geometrodynamical condition} \cite[(2.23)]{BPSTV95};

\item[\bf (ii)] 
With the additional condition $\phi^\ast \overline{P} E = 0$ in 
\eqref{BPSImmersionDiagram}

this is the {\bf superembedding condition} of \cite[(4.36-37)]{Sorokin00}, see also \cite[(2.6-9)]{Bandos11}\cite[(5.13-14)]{BandosSorokin23}.

\item[\bf (iii)] 
The further condition $\phi^\ast P \Psi = \psi$ in 
\eqref{BPSImmersionDiagram}

is tacitly introduced in \cite[(4.46)]{Sorokin00}, reviewed in \cite[(5.26)]{BandosSorokin23}.
\end{itemize}
\noindent
In comparing to these references, notice that 
\begin{itemize}[
  topsep=1pt,
  itemsep=2pt,
  leftmargin=.8cm
]
\item[\bf (iv)]
Our expression
$
   \phi^\ast 
   \big(
     P U
     ,\,
     \overline{P}U
   \big)
   \,=:\,
   u
$

corresponds to the {\bf harmonics} in \cite[\S 2.1]{BPSTV95}\cite[(29)]{HSW98}\cite[(4.11)]{Sorokin00}, 
where $U \in C^\infty\big(X;\, \mathrm{Spin}(1,d)\big)$.
\end{itemize}
\end{remark}

\smallskip

In closing this section, we just notice that,
in addition to the super co-frame field, a super-immersion also pulls back the spin-connection $\Omega$ on target spacetime:

\begin{notation}[\bf Second fundamental super-form]
  Given a $\sfrac{1}{2}$BPS super-immersion $\phi$ \eqref{BPSImmersion} into a super-spacetime $X$ with spin connection $\Omega$, then in supergeometric generalization of \eqref{PullbackOfConnectionForm} the pullback of $\Omega$ is uniquely expanded as follows:
  \begin{equation}
  \label{PullbackOfSuperConnectionForm}
  \def\arraystretch{1.2}
  \def\arraycolsep{3pt}
  \begin{array}{cccl}
    \omega^a{}_b
    &:=&
    \phi^\ast
    \Omega^a{}_b
    &
    \scalebox{.85}{
      \def\arraystretch{.9}
      \begin{tabular}{l}
        for transversal $a$ 
        \\
        and tangential $b$
      \end{tabular}
    }
    \\[+5pt]
    e^{b_1}
    \SecondFundamentalForm
      ^a
      _{b_1 b_2}
    +
    \psi^\beta
    \SecondFundamentalForm
      ^a
      _{\beta \, b_2}
    &:=&
    \phi^\ast \Omega^a{}_{b_2}
    &
    \scalebox{.9}{
      \def\arraystretch{.85}
      \begin{tabular}{l}
        for transversal $a$ 
        \\
        and tangential $b_2$.
      \end{tabular}
    }
  \end{array}
\end{equation}
\end{notation}

\medskip

%%%%%%%%%%%%%%%%%%%%%%%%%%%%%%%%%%%%%%%%%
\section{M5-brane super-immersions}
\label{SuperEmbeddingConstruction}
%%%%%%%%%%%%%%%%%%%%%%%%%%%%%%%%%%%%%%%%%

Here we give a streamlined account of the specialization of the ``super-embedding''-construction (\S\ref{SuperEmbeddings})
to the case of the M5-brane (due to \cite{HoweSezgin97b}\cite[\S 5.2]{Sorokin00}, reviewed in \cite[\S 5]{BandosSorokin23}), focusing on the derivation of the $H_3$-flux density (cf. Rem. \ref{TransversalFermionicShear}) and its Bianchi identity and (non/self-duality) equations of motion (Prop. \ref{BianchiIdentityOnM5BraneInComponents}) which drive the discussion of flux quantization on the M5 (in \S\ref{FluxQuantizationOnM5Branes}, as introduced in \S\ref{IntroductionAndOverview}).

\medskip 
Our key move to make the notoriously intricate derivation more transparent is (not to use a matrix representation for the spinors but) to algebraically carve out 
(in \S\ref{SpinorsIn6dFrom11d})
the worldvolume spin representation $2 \cdot \mathbf{8}$ by the same tangential/transversal projection operators 
that enter the definition of BPS super-immersions (Def. \ref{BPSImmersion}) in the first place.

%%%%%%%%%%%%%%%%%%%%%%%%%%%%%%%%%%%%%%%%%%%%%
\subsection{Self-dual tensors in 6d}
%%%%%%%%%%%%%%%%%%%%%%%%%%%%%%%%%%%%%%%%%%%%%

For reference and completeness, we first briefly record some properties of self-dual tensors in 6d. In this section indices run through $0, 1, \cdots, 5$.
 
\begin{notation}[\bf Tensors in 6d]
$\,$
\begin{itemize}[
  leftmargin=.6cm,
  topsep=1pt,
  itemsep=4pt
]

\item 
$\epsilon_{0 1 \cdots 5} \,=\, + 1$
and
$\epsilon^{0 1 \cdots 5} \,=\, - 1$
as in \eqref{transversalizationOfLeviCivitaSymbol}.

\item $\tilde H_3  := \tfrac{1}{3!} (\tilde H_3)_{a_1 a_2 a_3} \, e^{a_1} e^{a_2} e^{a_3}$  denotes our generic (self-dual) rank-3 tensor

The tilde is in order to distinguish this from the flux density $H_3$ on the M5-brane which is {\it not} actually self-dual, but ``non-linearly self-dual'', see below.

\item 
$\star \tilde H_3 \,:=\, \tfrac{1}{3!} 
\big(\tfrac{1}{3!} \epsilon_{a_1 a_2 a_3 \, b_1 b_2 b_3} (\tilde H_3)^{b_1 b_2 b_3}\big) e^{a_1} e^{a_2} e^{a_3}$ is the Hodge dual tensor.

Notice that below we consider this on super-spacetimes, where the Hodge duality operation makes sense in this form (only) for such differential forms with vanishing fermionic frame component $(\psi^0)$.

Hence the self-duality condition is
\begin{equation}
  \label{SelfDualityCondition}
  \tilde H_3
  \;=\;
  \star \tilde H_3
  \;\;\;\;\;\;
  \Leftrightarrow
  \;\;\;\;\;\;
  (\tilde{H}_3)_{a_1 a_2 a_3}
  \;=\;
  \tfrac{1}{3!}
  \epsilon_{a_1 a_2 a_3 \, b_1 b_2 b_3}
  (\tilde H)_3^{b_1 b_2 b_3}.
\end{equation}

\item The ``square of $H_3$ on two indices'' 
\vspace{2mm} 
\begin{equation}
  \label{SquareOfH3OnTwoIndices}
   (\K)_a^{b} 
  \,:=\, 
  (\tilde H_3)_{a\, \color{darkblue} c_1 c_2} (\tilde H_3)^{ {\color{darkblue} c_1 c_2} \, b}
\end{equation}
plays a key role in relating the self-dual tensor $\tilde H_3$ to the actual flux density $H_3$.

In the following it is often suggestive to regard it as a matrix, such as to write:
\vspace{2mm} 
\begin{equation}
  \label{SquareOfSquareOfH3}
  (\K \cdot \K)_a^b
  \;
  :=
  \;
  (\K)_a^c (\K)_c^a
  \,,
  \;\;\;\;\;\;
  \mathrm{tr}\big(
    \K
  \big)
  \;:=\;
  (\K)_a^a
  \,,
  \;\;\;\;\;\;
  \mathrm{tr}\big(
    \K \cdot \K
  \big)
  \;:=\;
  (\K)_a^b (\K)_b^a
  \,.
\end{equation}

\end{itemize}
\end{notation}

\smallskip 
\begin{lemma}[\bf Trace of square of selfdual 3-form vanishes]
If $\tilde H_3 \,=\, \star \tilde H_3$ then the trace
\eqref{SquareOfSquareOfH3} vanishes:
\begin{equation}
  \label{SquareOfSelfDual3FormVanishes}
  \mathrm{tr}(\K)
  \,=\,
  0
  \;\;\;\;\;\;
  \mbox{\rm in that}
  \;\;\;\;\;\;
  (\tilde H_3)_{a_1 a_2 a_3}
  (\tilde H_3)^{a_1 a_2 a_3}
  \;=\;
  0
  \,.
\end{equation}
\end{lemma}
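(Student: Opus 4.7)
The plan is to rewrite $\mathrm{tr}(\K)$ in a manifestly ``squared'' form, then substitute the self-duality relation once and exploit a parity mismatch between the two factors of $\tilde H_3$ and the 6-index Levi-Civita symbol.

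First I would observe that the stated equivalence
$\mathrm{tr}(\K) \,=\, (\tilde H_3)_{a_1 a_2 a_3}(\tilde H_3)^{a_1 a_2 a_3}$
follows immediately from the total antisymmetry of $\tilde H_3$: the cyclic permutation $(c_1, c_2, a) \mapsto (a, c_1, c_2)$ is even, so $(\tilde H_3)^{c_1 c_2 a} = (\tilde H_3)^{a c_1 c_2}$, giving
$(\K)_a^{a} = (\tilde H_3)_{a c_1 c_2}(\tilde H_3)^{c_1 c_2 a} = (\tilde H_3)_{a c_1 c_2}(\tilde H_3)^{a c_1 c_2}$.

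Next, denoting $X := (\tilde H_3)_{a_1 a_2 a_3}(\tilde H_3)^{a_1 a_2 a_3}$, I would substitute the self-duality relation \eqref{SelfDualityCondition} into the lower-index factor, yielding
\[
X \;=\; \tfrac{1}{3!}\,\epsilon_{a_1 a_2 a_3\, b_1 b_2 b_3}\,(\tilde H_3)^{b_1 b_2 b_3}\,(\tilde H_3)^{a_1 a_2 a_3}.
\]
The product $(\tilde H_3)^{a_1 a_2 a_3}(\tilde H_3)^{b_1 b_2 b_3}$ is manifestly symmetric under exchange of the two blocks $(a_1 a_2 a_3)\leftrightarrow (b_1 b_2 b_3)$, since the two factors are identical 3-tensors. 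On the other hand, the Levi-Civita symbol $\epsilon_{a_1 a_2 a_3\, b_1 b_2 b_3}$ is antisymmetric under the same block exchange: moving each $a_i$ past each $b_j$ requires $3 \times 3 = 9$ transpositions, producing a sign $(-1)^9 = -1$.

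Renaming dummy indices $a \leftrightarrow b$ in the displayed expression and using this sign then gives $X = -X$, and hence $X = 0$, which by the opening paragraph is the claim $\mathrm{tr}(\K) = 0$. I do not expect any real obstacle; the only mild care needed is in the sign count for the block-swap of the 6-index $\epsilon$, which is the whole content of the argument.
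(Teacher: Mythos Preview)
Your proof is correct and follows essentially the same approach as the paper: substitute the self-duality relation once, then use that swapping the two blocks of three indices in the 6-index Levi-Civita symbol costs a sign $(-1)^9=-1$ while the product of the two $\tilde H_3$ factors is symmetric under that swap, forcing $X=-X$. The paper's version is terser (it leaves the identification $\mathrm{tr}(\K)=(\tilde H_3)_{a_1a_2a_3}(\tilde H_3)^{a_1a_2a_3}$ implicit and substitutes into the upper-index factor rather than the lower one), but the content is identical.
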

\begin{proof}
$$
  \def\arraystretch{1.6}
  \begin{array}{lll}
    (\tilde H_3)_{a_1 a_2 a_3}
    (\tilde H_3)^{a_1 a_2 a_3}
    &
    \;=\;
    \tfrac{1}{3!}
    (\tilde H_3)_{a_1 a_2 a_3}
    \epsilon^{
      a_1 a_2 a_3 
      \,
      b_1 b_2 b_3
    }
    (\tilde H_3)_{b_1 b_2 b_3}
    &
    \proofstep{
      by
      \eqref{SelfDualityCondition}
    }
    \\
   & \;=\;
    -
    \tfrac{1}{3!}
    (\tilde H_3)_{b_1 b_2 b_3}
    \epsilon^{
      b_1 b_2 b_3
      \,
      a_1 a_2 a_3 
    }
    (\tilde H_3)_{a_1 a_2 a_3}
    \\
  &  \;=\;
    -
    (\tilde H_3)_{a_1 a_2 a_3}
    (\tilde H_3)^{a_1 a_2 a_3}
    &
    \proofstep{
      by
      \eqref{SelfDualityCondition}
      \,.
    }
  \end{array}
$$

\vspace{-3mm} 
\end{proof}

\begin{lemma}[\bf Squaring over a single index]
If $\tilde H_3 \,=\, \star \, \tilde H_3$ then 
{\rm (cf. \cite[(9)]{HoweSezginWest97})}:
\begin{equation}
  \label{TildeThreeSquaredOverSingleIndex}
  (\tilde H_3)_{
    \color{darkgreen} a_1 a_2 
    \, 
    \color{darkblue}
    c
  }
  (\tilde H_3)^{
    { \color{darkblue} c }
    \, 
    \color{darkorange}
    b_1 b_2
  }
  \;=\;
  +
  \,
  \tensor*{\delta}{
    ^{[ \color{darkorange}b_1}
    _{[ \color{darkgreen} a_1}
  }
  (\K)^{
    { \color{darkorange} b_2 }
    ]
  }_{
    { \color{darkgreen} a_2 }
    ]
  }
  \,.
\end{equation}
\end{lemma}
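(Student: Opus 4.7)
The plan is to reduce the identity to an elementary $\epsilon\!\cdot\!\epsilon$ contraction, powered by self-duality \eqref{SelfDualityCondition}.

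First, I would apply self-duality to the second factor on the left-hand side, rewriting
$$
  (\tilde H_3)_{a_1 a_2 c}(\tilde H_3)^{c\, b_1 b_2}
  \;=\;
  \tfrac{1}{3!}\,\epsilon^{c\, b_1 b_2\, d_1 d_2 d_3}\,(\tilde H_3)_{a_1 a_2 c}(\tilde H_3)_{d_1 d_2 d_3}\,.
$$
Then I would apply self-duality once more to the first factor to replace $(\tilde H_3)_{a_1 a_2 c} = \tfrac{1}{3!}\epsilon_{a_1 a_2 c\, e_1 e_2 e_3}(\tilde H_3)^{e_1 e_2 e_3}$, converting the whole right-hand side into
$$
  \tfrac{1}{(3!)^2}\,
  \epsilon^{c\, b_1 b_2\, d_1 d_2 d_3}\,
  \epsilon_{a_1 a_2\, c\, e_1 e_2 e_3}\,
  (\tilde H_3)^{e_1 e_2 e_3}(\tilde H_3)_{d_1 d_2 d_3}\,.
$$

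Next, I would contract the two Levi-Civita symbols over the single common index $c$ using the standard identity \eqref{ContractingKroneckerWithSkewSymmetricTensor}, obtaining a generalised Kronecker delta $\delta^{b_1 b_2 d_1 d_2 d_3}_{a_1 a_2 e_1 e_2 e_3}$ with an explicit numerical prefactor and sign. Expanding this generalised delta as the antisymmetrised product of ordinary Kronecker deltas then produces a sum of ten or so terms, each of which, after the contraction with $(\tilde H_3)^{e_1 e_2 e_3}(\tilde H_3)_{d_1 d_2 d_3}$, is either (a) a product of a Kronecker delta $\delta^{b_i}_{a_j}$ with a single-index square $(\K)^{b}_{a}$ \eqref{SquareOfH3OnTwoIndices}, or (b) proportional to the total trace $(\tilde H_3)_{a b c}(\tilde H_3)^{a b c} = \mathrm{tr}(\K)$, which vanishes by Lemma \eqref{SquareOfSelfDual3FormVanishes}.

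The main obstacle is just bookkeeping: keeping track of the signs from the Minkowski convention \eqref{transversalizationOfLeviCivitaSymbol}, ensuring the correct combinatorial factors emerge from the generalised delta, and using the manifest antisymmetry in $(a_1, a_2)$ and $(b_1, b_2)$ of the left-hand side to pre-antisymmetrise and cut the number of independent terms. After these trace-vanishing terms are discarded, the surviving terms collect into precisely
$\delta^{[b_1}_{[a_1}(\K)^{b_2]}_{a_2]}$, matching the right-hand side of \eqref{TildeThreeSquaredOverSingleIndex}. As a sanity check one can verify the symmetry $(\K)^{a}_{b} = (\K)^{b}_{a}$ (inherited from the antisymmetry of $\tilde H_3$), which is implicit in the right-hand side, and observe that contracting both sides over $b_1 = a_1$ (or $b_2 = a_2$) recovers the definition \eqref{SquareOfH3OnTwoIndices} of $\K$ up to the expected factors, providing a partial verification of the coefficient.
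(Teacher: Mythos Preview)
Your approach is the same as the paper's, but your classification of the terms arising from the expansion of the generalised Kronecker delta is incomplete and this hides the crucial last step. When you expand $\delta^{b_1 b_2 d_1 d_2 d_3}_{a_1 a_2 e_1 e_2 e_3}$ and contract with $(\tilde H_3)^{e_1 e_2 e_3}(\tilde H_3)_{d_1 d_2 d_3}$, the terms where exactly \emph{one} of the $d$'s pairs with an $e$ (and the other two $d$'s pair with $a_1,a_2$) yield a contribution of the form $(\tilde H_3)^{b_1 b_2 e}(\tilde H_3)_{a_1 a_2 e}$ --- which is neither a $\delta\cdot\K$ term nor a trace term, but the original left-hand side itself. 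Carrying this through, one finds
\[
  (\tilde H_3)_{a_1 a_2 c}(\tilde H_3)^{c\, b_1 b_2}
  \;=\;
  2\,\delta^{[b_1}_{[a_1}(\K)^{b_2]}_{a_2]}
  \;-\;
  (\tilde H_3)^{b_1 b_2 e}(\tilde H_3)_{a_1 a_2 e}\,,
\]
and the identity follows by moving the last term to the left and dividing by $2$. This self-referential step is what actually closes the argument; without it you cannot conclude. The paper's proof makes precisely this move.
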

\begin{proof}
$$
  \def\arraystretch{1.8}
  \begin{array}{ll}
    (\tilde H_3)_{a_1 a_2 \, c}
    (\tilde H_3)^{c \, b_1 b_2}
    \\
    \;=\;
    \Big(
      \tfrac{1}{3!}
      \epsilon_{
        a_1 a_2 \, c
        \,
        e_1 e_2 e_3
      }
      (\tilde H_3)^{e_1 e_2 e_3}
    \Big)
    \Big(
      \tfrac{1}{3!}
      \epsilon^{
        c \, b_1 b_2
        \,
        d_1 d_2 d_3
      }
      (\tilde H_3)_{d_1 d_2 d_3}
    \Big)
    &
    \proofstep{
      by 
      \eqref{SelfDualityCondition}
    }
    \\
    \;=\;
    \tfrac{
      - 5!
    }{
      3! \cdot 3!
    }
    \delta_{
      a_1 a_2
      \,
      e_1 e_2 e_3
    }^{
      b_1 b_2
      \,
      d_1 d_2 d_3    
    }
    (\tilde H_3)^{e_1 e_2 e_3}
    (\tilde H_3)_{d_1 d_2 d_3}
    &
    \proofstep{
     by \eqref{ContractingKroneckerWithSkewSymmetricTensor}
    }
    \\
    \;=\;
    \tfrac{
      + 5!
    }{
      3! \cdot 3!
    }
    \tfrac{
      6 \cdot 2! \cdot 3!
    }{5!}
    \delta^{
      b_1 b_2
    }_{
      [a_1 |e_1|
    }
    \delta^{
      d_1 d_2 d_3
    }_{
      a_2] e_2  e_3
    }
    (\tilde H_3)^{e_1 e_2 e_3}
    (\tilde H_3)_{d_1 d_2 d_3}
    \;+\;
    \tfrac{
      - 5!
    }{
      3! \cdot 3!
    }
    \tfrac{
      3 \cdot 2! \cdot 3!
    }{5!}
    \delta^{
      b_1 b_2
    }_{
      e_1 e_2
    }
    \delta^{
      d_1 d_2 d_3
    }_{
      a_1 a_2 e_3
    }
    (\tilde H_3)^{e_1 e_2 e_3}
    (\tilde H_3)_{d_1 d_2 d_3}
    &
    \proofstep{
      by 
      \eqref{SquareOfSelfDual3FormVanishes}
    }
    \\
    \;=\;
    2
    \,
    \delta^{
      [b_1 
    }_{
      [a_1
    }
    (\tilde H_3)^{b_2] e_2 e_3}
    (\tilde H_3)_{a_2] e_2 e_3}
    \;-\;
    (\tilde H_3)^{b_1 b_2 e_3}
    (\tilde H_3)_{a_1 a_2 e_3}\,.
  \end{array}
$$

\vspace{-3mm} 
\end{proof}

\smallskip 
\begin{lemma}[\bf Square of square is proportional to identity]
  If $\tilde H_3 \,=\, \star \tilde H_3$ then it squared square \eqref{SquareOfSquareOfH3} is
  {\rm (cf. \cite[(8)]{HoweSezginWest97})}:
  \smallskip 
  \begin{equation}
    \label{KSquareIsProportionalToIdentity}
    \K \cdot \K
    \;\;
    =
    \;\;
    \tfrac{1}{6}
    \mathrm{tr}\big(
      \K \cdot \K
    \big)
    \,
    \mathrm{id}
    \;\;\;\;\;\;\;\;
    \mbox{\rm in that}
    \;\;\;\;\;\;\;\;
    (\K)_a^c \, (\K)_c^b
    \;\;
    =
    \;\;
    \tfrac{1}{6}
    \,\delta_a^b\,
    (\K)_{c_1}^{c_2}
    (\K)_{c_2}^{c_1}
    \,.
  \end{equation}
\end{lemma}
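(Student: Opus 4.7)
The plan is to compute $(\K \cdot \K)_a^b$ by expanding it fully as a product of four $\tilde H_3$-factors and applying \eqref{TildeThreeSquaredOverSingleIndex} twice. Concretely, I would start from
\[
  (\K \cdot \K)_a^b
  \;=\;
  (\tilde H_3)_{a e_1 e_2}\,
  (\tilde H_3)^{e_1 e_2 c}\,
  (\tilde H_3)_{c f_1 f_2}\,
  (\tilde H_3)^{f_1 f_2 b}
\]
and apply \eqref{TildeThreeSquaredOverSingleIndex} to the middle pair (after cyclic permutations of the indices on each $\tilde H_3$ to bring the contracted $c$ into the slots demanded by the lemma) to obtain $(\tilde H_3)^{e_1 e_2 c}(\tilde H_3)_{c f_1 f_2} = \delta^{[e_1}_{[f_1}(\K)^{e_2]}_{f_2]}$. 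Since the outer factors $(\tilde H_3)_{a e_1 e_2}$ and $(\tilde H_3)^{f_1 f_2 b}$ are already antisymmetric in $[e_1, e_2]$ and $[f_1, f_2]$ respectively, the four terms arising from the double antisymmetrization all contribute equally, and the $\tfrac{1}{4}$-normalization collapses them into a single representative:
\[
  (\K \cdot \K)_a^b
  \;=\;
  (\tilde H_3)_{a f_1 e_2}\,(\K)^{e_2}_{f_2}\,(\tilde H_3)^{f_1 f_2 b}.
\]

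\textbf{Closing the equation.} Applying \eqref{TildeThreeSquaredOverSingleIndex} a second time to the two remaining $\tilde H_3$-factors (after one antisymmetry sign to move $f_1$ into the position prescribed by the lemma) gives $(\tilde H_3)_{a f_1 e_2}(\tilde H_3)^{f_1 f_2 b} = -\,\delta^{[f_2}_{[a}(\K)^{b]}_{e_2]}$, whence
\[
  (\K \cdot \K)_a^b
  \;=\;
  -\,\delta^{[f_2}_{[a}\,(\K)^{b]}_{e_2]}\,(\K)^{e_2}_{f_2}.
\]
Expanding the four terms of the double antisymmetrization and contracting with $(\K)^{e_2}_{f_2}$: the term containing $\delta^{f_2}_{e_2}(\K)^{e_2}_{f_2} = \mathrm{tr}(\K)$ vanishes by \eqref{SquareOfSelfDual3FormVanishes}; two of the three surviving terms each produce a copy of $(\K \cdot \K)_a^b$ (after $\delta^{f_2}_a$ or $\delta^b_{e_2}$ collapses an index); and the last produces $\delta^b_a\,\mathrm{tr}(\K \cdot \K)$. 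Collecting everything with the overall factor $-\tfrac{1}{4}$ yields the self-referential linear identity
\[
  (\K \cdot \K)_a^b
  \;=\;
  -\tfrac{1}{2}\,(\K \cdot \K)_a^b
  \,+\,
  \tfrac{1}{4}\,\delta^b_a\,\mathrm{tr}(\K \cdot \K),
\]
which upon solving gives the claimed formula $(\K \cdot \K)_a^b = \tfrac{1}{6}\,\delta^b_a\,\mathrm{tr}(\K \cdot \K)$.

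\textbf{Anticipated difficulty.} The conceptual mechanism is transparent: two invocations of \eqref{TildeThreeSquaredOverSingleIndex}, combined with tracelessness of $\K$ from \eqref{SquareOfSelfDual3FormVanishes}, produce a closed linear equation in $\K \cdot \K$ that forces it to be proportional to the identity. The only real hazard is index bookkeeping -- tracking the cyclic and antisymmetry reshufflings needed to align the contracted dummy indices with the positions demanded by the lemma, and verifying that the four terms of each expanded double antisymmetrization simplify as claimed -- which is where I would expect the bulk of the verification effort to go in a careful write-up.
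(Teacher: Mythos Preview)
Your proposal is correct and follows essentially the same route as the paper's proof: two applications of \eqref{TildeThreeSquaredOverSingleIndex} (to the middle pair, then to the surviving pair of $\tilde H_3$-factors), collapse of the intermediate antisymmetrizations via the skew-symmetry of the outer factors, use of $\mathrm{tr}(\K)=0$ from \eqref{SquareOfSelfDual3FormVanishes} to kill one of the four expanded terms, and solution of the resulting self-referential equation $(\K\K)_a^b = -\tfrac{1}{2}(\K\K)_a^b + \tfrac{1}{4}\delta^b_a\,\mathrm{tr}(\K\K)$. The paper's argument is identical up to renaming of dummy indices.
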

\begin{proof}
$$
\hspace{-5mm}
  \def\arraystretch{1.7}
  \begin{array}{lll}
    (\K)_{a}^{c}
    \,
    (\K)_c^b
    &
    \;=\;
    (\tilde H_3)_{a \, d_1 d_2}
    (\tilde H_3)^{d_1 d_2 \, c}
    (\tilde H_3)_{c \, e_1 e_2}
    (\tilde H_3)^{e_1 e_2 \, b}
    &
    \proofstep{
      by def.
    }
    \\
   & \;=\;
    (\tilde H_3)_{a \, d_1 d_2}
    \delta^{
      d_1
    }_{
      e_1
    }
    (\K)^{
      d_2
    }_{
      e_2
    }
    (\tilde H_3)^{e_1 e_2 \, b}    
    &
    \proofstep{
      by
      \eqref{TildeThreeSquaredOverSingleIndex}
    }
    \\
   & \;=\;
    (\tilde H_3)_{a \, d_1 d_2}
    (\tilde H_3)^{d_1 e_2 \, b}    
    (\K)^{
      d_2
    }_{
      e_2
    }
    \\
  &  \;=\;
    -
    \delta^{
      [e_2
    }_{
      [a
    }
    (\K)^{
      b]
    }_{
      d_2]
    }
    (\K)^{
      d_2
    }_{
      e_2
    }
    &
    \proofstep{
      by
      \eqref{TildeThreeSquaredOverSingleIndex}
    }
    \\
 &   \;=\;
    -
    \tfrac{1}{4}
    \delta^{
      e_2
    }_{
      a
    }
    (\K)^{
      b
    }_{
      d_2
    }
    (\K)^{
      d_2
    }_{
      e_2
    }
    +
    \tfrac{1}{4}
    \delta^{
      b
    }_{
      a
    }
    (\K)^{
      e_2
    }_{
      d_2
    }
    (\K)^{
      d_2
    }_{
      e_2
    }
    +
    \tfrac{1}{4}
    \delta^{
      e_2
    }_{
      d_2
    }
    (\K)^{
      b
    }_{
      a
    }
    (\K)^{
      d_2
    }_{
      e_2
    }
    -
    \tfrac{1}{4}
    \delta^{
      b
    }_{
      d_2
    }
    (\K)^{
      e_2
    }_{
      a
    }
    (\K)^{
      d_2
    }_{
      e_2
    }
    \\
  &  \;=\;
    -
    \tfrac{1}{4}
    \,
    (\K)^{
      c
    }_{
      a
    }    
    \,
    (\K)^{
      b
    }_{
      c
    }
    \;+\;
    \tfrac{1}{4}
    \,
    \delta_a^b
    \,
    \mathrm{tr}(\K \K)
    \;+\;
    \tfrac{1}{4}
    (\K)^b_a
    \,
    \underbrace{
      (\K)^{e_2}_{e_2}
    }_{
      \underset{
        \scalebox{.6}{
         \eqref{SquareOfSelfDual3FormVanishes}
        }
      }{=} 
      0
    }
    \;-\;
    \tfrac{1}{4}
    \,
    (\K)_a^c
    (\K)_c^b
    \,.
  \end{array}
$$

\vspace{-6mm} 
\end{proof}

\begin{lemma}[\bf Inverse of identity minus square]
\label{IdMinus2KIInvertible}
If $\tilde H_3 \,=\, \star \, \tilde H_3$ and $\mathrm{tr}(\K \cdot \K)$ \eqref{KSquareIsProportionalToIdentity} satisfies the non-criticality condition
\begin{equation}
  \label{NonCriticalityCondition}
  \mathrm{tr}\big(
    \K \cdot \K
  \big)
  \;\neq\;
  \tfrac{3}{2}
\end{equation}
then {\rm (cf. \cite[(10)]{HoweSezginWest97})} the matrices $(\mathrm{id} \mp 2 \K)$ \eqref{SquareOfH3OnTwoIndices} are invertible \footnote{
  The expression
  $(\mathrm{id} + 2 \K)_{a b}$ is interpreted 
  \cite[(8)]{BBdSS00}
  as (proportional to) the effective metric on M5-branes that is seen by open M2-branes ending on them, the M-theoretic lift of the ``open string metric'' on D-branes.
}, with inverse:
\begin{equation}
  \label{InverseOfIdMinus2K}
  \begin{array}{l}
    \big(
      \mathrm{id}
      \mp
      2 \, \K
    \big)^{-1}
    \;\;
    =
    \;\;
    \frac{1}{
      \mathclap{\phantom{
        \vert^{\vert}
      }}
      1 
        - 
      \sfrac{2}{3}
      \,
      \mathrm{tr}(
        \K \cdot \K
      )
    }
    \big(
      \mathrm{id}
      \pm
      2\, \K
    \big).
  \end{array}
\end{equation}
\end{lemma}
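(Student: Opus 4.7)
The strategy is a direct computation leveraging the key identity \eqref{KSquareIsProportionalToIdentity} from the preceding lemma, namely $\K \cdot \K = \tfrac{1}{6}\mathrm{tr}(\K \cdot \K)\,\mathrm{id}$. Since $\K\cdot\K$ is a scalar multiple of the identity, the matrices $\mathrm{id} \mp 2\K$ and $\mathrm{id} \pm 2\K$ manifestly commute, so it suffices to verify the claimed inverse relation in one order.

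First I would compute the product $(\mathrm{id} \mp 2\K)(\mathrm{id} \pm 2\K)$ by expanding:
\begin{equation*}
  (\mathrm{id} \mp 2\K)(\mathrm{id} \pm 2\K)
  \;=\;
  \mathrm{id} \pm 2\K \mp 2\K - 4\,\K\cdot\K
  \;=\;
  \mathrm{id} - 4\,\K\cdot\K.
\end{equation*}
Substituting \eqref{KSquareIsProportionalToIdentity} gives
\begin{equation*}
  (\mathrm{id} \mp 2\K)(\mathrm{id} \pm 2\K)
  \;=\;
  \mathrm{id} - \tfrac{4}{6}\,\mathrm{tr}(\K\cdot\K)\,\mathrm{id}
  \;=\;
  \big(1 - \tfrac{2}{3}\,\mathrm{tr}(\K\cdot\K)\big)\,\mathrm{id}.
\end{equation*}

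The non-criticality hypothesis \eqref{NonCriticalityCondition} ensures the scalar factor on the right is non-zero, so dividing both sides by it yields the claimed formula \eqref{InverseOfIdMinus2K}. There is no real obstacle here --- all the work was done in establishing \eqref{KSquareIsProportionalToIdentity}, which reduces the algebra of polynomials in $\K$ to a one-dimensional problem and makes the inversion completely elementary.
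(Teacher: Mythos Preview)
Your proof is correct and essentially identical to the paper's own argument: both compute $(\mathrm{id} \mp 2\K)(\mathrm{id} \pm 2\K) = \mathrm{id} - 4\,\K\cdot\K$ and then invoke \eqref{KSquareIsProportionalToIdentity} to reduce this to a scalar multiple of the identity.
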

\begin{proof}
$$
  \def\arraystretch{1.3}
  \begin{array}{lll}
    \big(
      \mathrm{id}
      \mp
      2\,
      \K
    \big)
    \cdot
    \big(
      \mathrm{id}
      \pm
      2
      \,
      \K
    \big)
    &
    \;=\;
    \mathrm{id}
    -
    4 \, 
    \K \cdot \K
    \\
   & \;=\;
    \big(
      1 
        -
      \tfrac{2}{3}
      \mathrm{tr}(\K \cdot \K)
    \big)
    \,
    \mathrm{id}
    &
    \proofstep{
      by
      \eqref{KSquareIsProportionalToIdentity}
    }.
  \end{array}
$$

\vspace{-4mm} 
\end{proof}

\begin{lemma}[\bf Anti-Selfduality of cubic form]
If $\tilde H_3 \,=\, \star \tilde H_3$, then the skew-symmetric part of the cubic $(\K)_{a_1}{}^{c} (\tilde H_3)_{c \, a_2 a_3}$ \eqref{SquareOfH3OnTwoIndices} is anti-self-dual {\rm (cf. \cite[(5.82)]{Sorokin00})}:
\begin{equation}
  \label{AntiSelfDualityOfCubicTildeH}
  (\K)_{
    [d_1
  }{}^{
    c
  }
  (\tilde H_3)_{
    |c| 
    d_2 d_3]
  }
  \;\;
  =
  \;\;
  -\tfrac{1}{3!}
    \epsilon_{
      d_1 d_2 d_3
      \,
      a_1 a_2 a_3
    }
  (\K)^{a_1}{}_c
  (\tilde H_3)^{c\, a_2 a_3}.
\end{equation}
\end{lemma}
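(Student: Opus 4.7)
The plan is to reformulate the claimed identity as an anti-self-duality statement. Writing $A_{abc} := (\K)_{[a}{}^e(\tilde H_3)_{|e|bc]}$ for the left-hand side, one observes that contraction against $\epsilon_{d_1 d_2 d_3 a_1 a_2 a_3}$ automatically antisymmetrizes the $a_i$ indices of the tensor $(\K)^{a_1}{}_c(\tilde H_3)^{c\, a_2 a_3}$, producing exactly $A^{a_1 a_2 a_3}$; hence the right-hand side of the Lemma is $-(\star A)_{d_1 d_2 d_3} := -\tfrac{1}{3!}\epsilon_{d_1 d_2 d_3 a_1 a_2 a_3}\,A^{a_1 a_2 a_3}$, and the claim amounts to $A = -\star A$.

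The first step is to rewrite $A$ more symmetrically in terms of the auxiliary tensor $\omega_{\alpha\beta\gamma} := (\tilde H_3)_{\alpha\beta e}\,\K^e{}_\gamma$. Using that $\K^{ab}$ is symmetric (from its definition \eqref{SquareOfH3OnTwoIndices} together with the evenness of 3-cycles acting on the antisymmetric $\tilde H_3$) and the cyclic identity $(\tilde H_3)_{ebc}=(\tilde H_3)_{bce}$, direct manipulation of the cyclic sum defining $A$ yields
\begin{equation*}
A_{abc} \;=\; \tfrac{1}{3}\bigl[\omega_{abc} + \omega_{bca} - \omega_{acb}\bigr].
\end{equation*}

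Next, I would compute $(\star A)_{abc} = \tfrac{1}{3!}\epsilon_{abcdef}\,\K^d{}_g\,(\tilde H_3)^{gef}$, apply the self-duality of $\tilde H_3$ in the raised-index form $(\tilde H_3)^{gef} = \tfrac{1}{3!}\epsilon^{gef}{}_{hij}(\tilde H_3)^{hij}$, and reduce the resulting product of two $\epsilon$-symbols via \eqref{ContractingKroneckerWithSkewSymmetricTensor}. After rearranging so that the dummy indices $e,f$ occupy adjacent leading positions in both factors (each rearrangement is a product of 3-cycles, hence of sign $+1$), the contraction identity yields $\epsilon_{efabcd}\,\epsilon^{efghij} = -48\,\delta^{ghij}_{abcd}$. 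Expanding this 4-index generalized Kronecker along its first column and contracting against $(\tilde H_3)_{hij}$ using $\delta^{hij}_{bcd}(\tilde H_3)_{hij} = (\tilde H_3)_{bcd}$ produces
\begin{equation*}
\delta^{ghij}_{abcd}\,(\tilde H_3)_{hij} \;=\; \tfrac{1}{4}\bigl[\delta^g_a(\tilde H_3)_{bcd} - \delta^g_b(\tilde H_3)_{acd} + \delta^g_c(\tilde H_3)_{abd} - \delta^g_d(\tilde H_3)_{abc}\bigr].
\end{equation*}

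The crucial final step is then to contract with $\K^d{}_g$: the first three terms become $\omega_{bca}$, $-\omega_{acb}$, and $\omega_{abc}$ respectively (using again the symmetry of $\K$ and cyclic antisymmetry of $\tilde H_3$), while the fourth term becomes $\K^d{}_d\,(\tilde H_3)_{abc}$ and vanishes by the tracelessness $\mathrm{tr}(\K) = 0$ from \eqref{SquareOfSelfDual3FormVanishes}. Collecting the overall prefactors gives exactly $(\star A)_{abc} = -\tfrac{1}{3}[\omega_{bca} - \omega_{acb} + \omega_{abc}] = -A_{abc}$, as required. The main technical obstacle is bookkeeping the signs that arise from raising/lowering indices of $\epsilon$ in Lorentzian signature and from index rearrangements inside $\epsilon$-symbols, but these follow routinely from the conventions recorded at the start of the section; the essential mathematical input is the tracelessness of $\K$, which eliminates what would otherwise be a rogue $(\tilde H_3)_{abc}$ contribution spoiling the identification.
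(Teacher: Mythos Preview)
Your proof is correct and follows essentially the same route as the paper's: both apply self-duality of $\tilde H_3$ to introduce a second $\epsilon$-symbol, contract the two epsilons over their shared pair of indices via \eqref{ContractingKroneckerWithSkewSymmetricTensor} to produce a 4-index generalized Kronecker delta, expand that delta, and invoke the tracelessness $\mathrm{tr}(\K)=0$ from \eqref{SquareOfSelfDual3FormVanishes} to kill the one anomalous term. The only difference is organizational: the paper keeps $\K$ expanded as a product of two $\tilde H_3$'s throughout and expands the Kronecker delta along the lower index $a_1$ (so that the three surviving terms merge by dummy-index relabeling into a single expression carrying an explicit antisymmetrization over $d_1,d_2,d_3$), whereas you keep $\K$ packaged, introduce the auxiliary $\omega_{\alpha\beta\gamma}$, and expand along the upper index $g$ (so that the three surviving terms stay distinct and are then matched against your prior decomposition of $A$). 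These are equivalent ways of bookkeeping the same computation.
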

\begin{proof}
$$
  \def\arraystretch{1.7}
  \begin{array}{ll}
    \tfrac{1}{3!}
    \epsilon_{
      d_1 d_2 d_3
      \,
      a_1 a_2 a_3
    }
    (\tilde H_3)^{
      a_1
      \,
      b_1 b_2
    }
    (\tilde H_3)_{
      b_1 b_2 
      \,
      c
    }
    (\tilde H_3)^{
      c
      \,
      a_2 a_3
    }
    \\
    \;=\;
    \tfrac{
      1
    }{3!}
    \epsilon_{
      d_1 d_2 d_3
      \,
      a_1 a_2 a_3
    }
    (\tilde H_3)^{
      a_1
      \,
      b_1 b_2
    }
    (\tilde H_3)_{
      b_1 b_2 
      \,
      c
    }
    \Big(
    \tfrac{1}{3!}
    \epsilon^{
      c
      \,
      a_2 a_3
      \,
      e_1 e_2 e_3
    }
    (\tilde H_3)_{e_1 e_2 e_3}
    \Big)
    &
    \proofstep{
      by
      \eqref{SelfDualityCondition}
    }
    \\
    \;=\;
    \tfrac{
      -2 \cdot 4!
    }{3!\cdot 3!}
    (\tilde H_3)^{
      a_1
      \,
      b_1 b_2
    }
    (\tilde H_3)_{
      b_1 b_2 
      \,
      c
    }
    \Big(
    \delta_{
      d_1 d_2 d_3
      \,
      a_1
    }
    ^{
      c
      \,
      e_1 e_2 e_3
    }
    (\tilde H_3)_{e_1 e_2 e_3}
    \Big)
    &
    \proofstep{
      by
      \eqref{ContractingKroneckerWithSkewSymmetricTensor}
    }
    \\
    \;=\;
    \tfrac{
      -2 \cdot 4!
    }{3!\cdot 3!}
    \tfrac{
      3!
    }{4!}
    (\tilde H_3)^{
      a_1
      \,
      b_1 b_2
    }
    (\tilde H_3)_{
      b_1 b_2 
      \,
      c
    }
    \Big(
      \delta
        _{a_1}
        ^{e_1}
      \delta_{
        d_1 d_2 d_2
      }^{
        c \, e_2 e_3
      }
    (\tilde H_3)_{e_1 e_2 e_3}
    \Big)
    &
    \proofstep{
      by
      \eqref{SquareOfSelfDual3FormVanishes}
    }
    \\
    \;=\;
    \tfrac{
      -2 \cdot 4!
    }{3!\cdot 3!}
    \tfrac{
      3!
    }{4!}
    3
    (\tilde H_3)^{
      a_1
      \,
      b_1 b_2
    }
    (\tilde H_3)_{
      b_1 b_2 
      \,
      c
    }
    \Big(
      \delta
        _{a_1}
        ^{e_1}
      \delta_{[d_1}^{c}
      \delta_{
        d_2 d_3]
      }^{
        e_2 e_3
      }
    (\tilde H_3)_{e_1 e_2 e_3}
    \Big)
    \\
    \;=\;
    \tfrac{
      -2 \cdot 4!
    }{3!\cdot 3!}
    \tfrac{
      3!
    }{4!}
    3
    (\tilde H_3)^{
      e_1
      \,
      b_1 b_2
    }
    (\tilde H_3)_{
      b_1 b_2 
      \,
      [d_1
    }
    (\tilde H_3)_{
      |e_1| 
      d_2 d_3]
    }
    &
    \proofstep{
      by
      \eqref{ContractingKroneckerWithSkewSymmetricTensor}
    }
    \\
    \;=\;
    -
    (\tilde H_3)_{
      [d_1
      |
      b_1 b_2 
   }
    (\tilde H_3)^{
      b_1 b_2
      \,
      e_1
    }
    (\tilde H_3)_{
      e_1 
      |
      d_2 d_3]
      \,.
    }
  \end{array}
$$

\vspace{-3mm}
\end{proof}

%%%%%%%%%%%%%%%%%%%%%%%%%%%%%%%%%%%%%%%%
\subsection{Spinors in 6d from 11d}
\label{SpinorsIn6dFrom11d}
%%%%%%%%%%%%%%%%%%%%%%%%%%%%%%%%%%%%%%%%

Instead of using a matrix representation for the Clifford algebra on the 5-brane, for our proofs in \S\ref{TheM5EquationsOfMotion} it is key to algebraically characterize the worldvolume spin representation $2 \cdot \mathbf{8}_+ \;\underset{\mathclap{\scalebox{.5}{$\mathrm{Spin}(1,5)$}}}{\simeq}\; P(\mathbf{32}) \; \underset{\mathclap{\scalebox{.5}{$\mathrm{Spin}(5)$}}}{\simeq}\; 4 \cdot \mathbf{4}$ (see \eqref{IrrepDecompositionOf6dSpinors}) as the fixed locus $P(\mathbf{32})$ inside the target spin representation $\mathbf{32}$ in 11d. Here we spell out how this works.

\medskip

\noindent
{\bf Spinors in 11d.}
For reference, we begin with briefly recalling the following standard facts (proofs and references may be found in \cite[\S 2.2.1]{GSS24-SuGra}\cite[\S A]{HSS19}):
There exists an $\mathbb{R}$-linear representation $\mathbf{32}$ of $\mathrm{Pin}^+(1,10)$ with generators
\begin{equation}
  \label{The11dMajoranaRepresentation}
  \Gamma_a 
  \;:\;
  \mathbf{32}
  \xrightarrow{\;\;}
  \mathbf{32}
\end{equation}
and equipped with a skew-symmetric bilinear form
\begin{equation}
  \label{TheSpinorPairing}
  \big(\,
    \overline{(-)}
    (-)\,
  \big)
  \;:\;
  \mathbf{32}
  \otimes
  \mathbf{32}
  \xrightarrow{\quad}
  \mathbb{R}
\end{equation}
with the following properties, where as 
usual we denote skew-symmetrized product of $k$ Clifford generators by
  $$
    \Gamma_{a_1 \cdots a_k}
    \;:=\;
    \tfrac{1}{k!}
    \underset{
      \sigma \in
      \mathrm{Sym}(k)
    }{\sum}
    \mathrm{sgn}(\sigma)
    \,
    \Gamma_{a_{\sigma(1)}}
    \cdot
    \Gamma_{a_{\sigma(2)}}
    \cdots
    \Gamma_{a_{\sigma(n)}}
    \,:
  $$

\begin{itemize}[leftmargin=.4cm]
  \item
  the Clifford generators square to plus the Minkowski metric \eqref{MinkowskiMetric}
  \begin{equation}\label{CliffordDefiningRelation}
    \Gamma_a
    \Gamma_b
    +
    \Gamma_b
    \Gamma_a
    \;\;=\;\;
    +2 \, \eta_{a b}
    \,
    \mathrm{id}_{\mathbf{32}}
    \,,
  \end{equation}
  \item 
  the Clifford volume form equals the Levi-Civita symbol 
  \eqref{transversalizationOfLeviCivitaSymbol}:
  
  \begin{equation}
    \label{CliffordVolumeFormIn11d}
    \Gamma_{a_1 \cdots a_{11}}
    \;=\;
    \epsilon_{a_1 \cdots a_{11}}
    \mathrm{id}_{\mathbf{32}}
    \,,
  \end{equation}
  \item the Clifford generators are skew self-adjoint with respect to the pairing \eqref{TheSpinorPairing}
  \begin{equation}
    \label{SkewSelfAdjointnessOfCliffordGenerators}
    \overline{\Gamma_a}
    \;=\;
    - \Gamma_a
    \;\;\;\;\;\;
    \mbox{in that}
    \;\;\;\;\;\;
    \underset{
      \phi,\psi \in \mathbf{32}
    }{\forall}
    \;\;
    \big(
      \overline{(\Gamma_a \phi)}
      \,
      \psi
    \big)
    \;=\;
    -
    \big(
      \overline{\phi}
      \,
      (\Gamma_a \psi)
    \big)
    \,,
  \end{equation}
  so that generally
  \begin{equation}
    \label{AdjointnessOfCliffordBasisElements}
    \overline{\Gamma_{a_1 \cdots a_p}}
    \;=\;
    (-1)^{
      p + p(p-1)/2
    }
    \,
    \Gamma_{a_1 \cdots a_p}
    \,,
  \end{equation}
  \item
  the $\mathbb{R}$-vector space of $\mathbb{R}$-linear  endomorphisms of $\mathbf{32}$ has a linear basis given by the $\leq 5$-index Clifford elements 
  \begin{equation}
    \label{CliffordElementsSpanningLinearMaps}
    \mathrm{End}_{\mathbb{R}}\big(
      \mathbf{32}
    \big)
    \;\;
    =
    \;\;
    \Big\langle
      1
      ,\,
      \Gamma_{a_1}
      ,\,
      \Gamma_{a_1 a_2}
      ,\,
      \Gamma_{a_1, a_2, a_3}
      ,\,
      \Gamma_{a_1, \cdots a_4}
      ,\,
      \Gamma_{a_1, \cdots, a_5}
    \Big\rangle_{
      a_i = 0, 1, \cdots
    }
    \,,
  \end{equation}

  \item
  the $\mathbb{R}$-vector space of {\it symmetric} bilinear forms on $\mathbf{32}$
  has a linear basis given by the expectation values with respect to \eqref{TheSpinorPairing} of the 1-, 2-, and 5-index Clifford basis elements:
  \begin{equation}
    \label{SymmetricSpinorPairings}
    \mathrm{Hom}_{\mathbb{R}}
    \Big(
    (\mathbf{32}\otimes \mathbf{32})_{\mathrm{sym}}
    ,\,
    \mathbb{R}
    \Big)
    \;\;
    \simeq
    \;\;
    \Big\langle
    \big(
      (\overline{-})
      \Gamma_a
      (-)
    \big)
    \,,\;\;
    \big(
      (\overline{-})
      \Gamma_{a_1 a_2}
      (-)
    \big)
    \,,\;\;
    \big(
      (\overline{-})
      \Gamma_{a_1 \cdots a_5}
      (-)
    \big)
    \Big\rangle_{
      a_i = 0, 1, \cdots
    }
  \end{equation}
\end{itemize}

Generally we have the Clifford expansion formula:
\begin{equation}
  \label{GeneralCliffordProduct}
  \Gamma^{a_j \cdots a_1}
  \,
  \Gamma_{b_1 \cdots b_k}
  \;=\;
  \sum_{l = 0}^{
    \mathrm{min}(j,k)
  }
  \pm
  l!
\binom{j}{l}
 \binom{k}{l}
  \,
  \delta
   ^{[a_1 \cdots a_l}
   _{[b_1 \cdots b_l}
  \Gamma^{a_j \cdots a_{l+1}]}
  {}_{b_{l+1} \cdots b_k]}
\end{equation}

\medskip

\noindent
{\bf Spinors in 6d.}
In the literature (\cite[pp. 88]{Sorokin00}\cite[\S B]{BandosSorokin23}) the relevant spinor representations on the M5-brane are usually discussed by explicit matrix presentations.
Here we instead mean to give a transparent  algebraic account by projecting the relevant subrepresentations out of the 11d Majorana representation $\mathbf{32} \in \mathrm{Rep}_{\mathbb{R}}\big(\mathrm{Pin}^+(1,10)\big)$ \eqref{The11dMajoranaRepresentation}:

\medskip

\noindent
{\bf Algebraic reduction of Majorana $\mathbf{32}$ in 11d to Majorana-Weyl $2 \cdot \mathbf{8}$ in 6d.}
Consider an ortho-transversal linear basis of $\mathbb{R}^{1,10}$, decomposed as follows (where we declare the last line in a moment):
\begin{equation}
  \label{TangentialTransversalDecomposition}
  \def\arraystretch{1.3}
  \def\tabcolsep{4pt}
  \begin{array}{cccccccccccl}
    &&
    \mathclap{
      \;\;\;\;\;\;\;
      \overbrace{
        \phantom{------------}
      }^{
        \scalebox{.7}{
          ``tangential directions''
        }
      }
    }
    &&&&&&
    \mathclap{
      \!\!\!
      \overbrace{
        \phantom{----------}
      }^{
        \scalebox{.7}{
          ``transversal directions''
        }
      }
    }
    \\[-10pt]
    0 
    &
    1 
    &
    2 
    &
    3 
    &
    4 
    &
    5 
    &
    5' 
    &
    6 
    &
    7 
    &
    8 
    &
    9 
    &
    \\
    \Gamma_0
    &
    \Gamma_1
    &
    \Gamma_2
    &
    \Gamma_3
    &
    \Gamma_4
    &
    \Gamma_5
    &
    \Gamma_{5'}
    &
    \Gamma_{6}
    &
    \Gamma_{7}
    &
    \Gamma_{8}
    &
    \Gamma_{9}
    &
    \in 
    \;
    \mathrm{Pin}^+(1,10)
    \;\subset\; 
    \mathrm{End}_{\mathbb{R}}(\mathbf{32})
    \\
    \gamma_0
    &
    \gamma_1
    &
    \gamma_2
    &
    \gamma_3
    &
    \gamma_4
    &
    \gamma_5
    & &&&&
    &
    \in
    \;
    \mathrm{Pin}^+(1,5)
    \;\;\,\subset\;
    \mathrm{End}_{\mathbb{R}}
    \big(
      2 
        \cdot 
      \mathbf{8}_+
      \oplus
      2 \cdot
      \mathbf{8}_-
    \big).
  \end{array}
\end{equation}
Observing that
$  \big(
    \Gamma_{5' 6 7 8 9}
  \big)^2
  \;=\;
  (-1)^{5(5-1)/2}
  \;=\;
  +
  1
$
we obtain a projection operator
\begin{equation}
  \label{TheProjectionOperator}
  P
  \;:=\;
  \tfrac{1}{2}
  (1 + \Gamma_{5'6789})
  \;\;
  \in
  \;\;
  \mathrm{End}_{\mathbb{R}}(\mathbf{32})
  \,.
\end{equation}
Since the Dirac adjoint of $\Gamma_{012345}$ is
\eqref{AdjointnessOfCliffordBasisElements}
$$
  \overline{
    \Gamma_{5'6789}
  }
  \;=\;
  \underbrace{
  (-1)^{
    5 
    \,+\, 
    5(5-1)/2 
  }
  }_{-1}
  \Gamma_{5'6789}
  \,,
$$
the Dirac adjoint of the projector \eqref{TheProjectionOperator} is its complementary projector
\begin{equation}
  \label{transversalProjectionOperator}
  \overline{P}
  \;=\;
  \tfrac{1}{2}
  \big(
    1
    -
    \Gamma_{5'6789}
  \big)
  \;\;\;
  \in
  \;\;
  \mathrm{End}_{\mathbb{R}}(\mathbf{32})
  \,,
\end{equation}
which is related to $P$ by the following simple but crucial relations:
\begin{equation}
  \label{CommutativityOfGammaWithP}
  \def\arraystretch{1.4}
  \begin{array}{ll}
    P P \;=\; P\,,
    &
    \overline{P} P \;=\; 0\,,
    \\
    \overline{P} \overline{P}
    \;=\;   
    \overline{P}
    \,,
    &
    P \overline{P} \;=\; 0
  \end{array}
  \;\;\;\;\;\;
  \mbox{and}
  \;\;\;\;\;\;
  \left\{\!\!
  \def\arraystretch{1.4}
  \begin{array}{ll}
    \Gamma^a P 
    \;=\;
    \overline{P} \, \Gamma^a 
    &
    \mbox{
      for tangential $a$
    }
    \\
    \Gamma^a P 
    \;=\;
    P \, \Gamma^a 
    &
    \mbox{
      for transversal $a$.
    }    
  \end{array}
  \right.
\end{equation}

\medskip

Therefore these projection operators \eqref{TheProjectionOperator} \eqref{transversalProjectionOperator} carve out a pair of chiral
representations of $\mathrm{Spin}(1,5) \hookrightarrow \mathrm{Pin}^+(1,10)$ 
inside $\mathbf{32} \,\in\, \mathrm{Rep}_{\mathbb{R}}\big( \mathrm{Spin}(1,10) \big)$. 

But since also
$
  \big(
    \Gamma_{6789}
  \big)^2
  \;=\;
  (-1)^{4(4-1)/2}
  \,=\,
  +1
$
and
$
  \big(
    \Gamma_{5'}
  \big)^2
  \,=\,
  +1
$
there are two further pairs of projectors
$$
  \tfrac{1}{2}
  \big(
    1 
    \pm
    \Gamma_{6789}
  \big)
  \,,
  \;\;\;\;\;\;\;
  \mbox{and}
  \;\;\;\;\;\;\;
  \tfrac{1}{2}
  \big(
    1
    \pm 
    \Gamma_{5'}
  \big)
$$
which commute with these $\mathrm{Spin}(1,5)$-actions on $P(\mathbf{32})$ and on $\overline{P}(\mathbf{32})$, thus decomposing them according to
$$
 \def\arraystretch{1.7}
 \begin{array}{l}
  P
  \;\;=\;\;
  \tfrac{1}{2}\big(
    1 
    +
    \Gamma_{6789}
  \big)
  \,
  \tfrac{1}{2}\big(
    1 
    +
    \Gamma_{5'}
  \big)
  +
  \tfrac{1}{2}\big(
    1 
    -
    \Gamma_{6789}
  \big)
  \,
  \tfrac{1}{2}\big(
    1 
    -
    \Gamma_{5'}
  \big)
  \\
  \overline{P}
  \;\;=\;\;
  \tfrac{1}{2}\big(
    1 
    -
    \Gamma_{6789}
  \big)
  \,
  \tfrac{1}{2}\big(
    1 
    +
    \Gamma_{5'}
  \big)
  +
  \tfrac{1}{2}\big(
    1 
    +
    \Gamma_{6789}
  \big)
  \,
  \tfrac{1}{2}\big(
    1 
    -
    \Gamma_{5'}
  \big)
  \,,
  \end{array}
$$
each into a pair of isomorphic summands, whose isomorphism is given by acting for instance with $\Gamma_6$:
$$
  \def\arraystretch{1.6}
  \begin{array}{l}
  \Gamma_6
  \;
  \tfrac{1}{2}
  \big(
    1
    +
    \Gamma_{6789}
  \big)
  \,
  \tfrac{1}{2}
  \big(
    1
    +
    \Gamma_{5'}
  \big)
  \;=\;
  \tfrac{1}{2}
  \big(
    1
    -
    \Gamma_{6789}
  \big)
  \,
  \tfrac{1}{2}
  \big(
    1
    -
    \Gamma_{5'}
  \big)
  \;
  \Gamma_6
  \\
  \Gamma_6
  \;
  \tfrac{1}{2}
  \big(
    1
    -
    \Gamma_{6789}
  \big)
  \,
  \tfrac{1}{2}
  \big(
    1
    +
    \Gamma_{5'}
  \big)
  \;=\;
  \tfrac{1}{2}
  \big(
    1
    +
    \Gamma_{6789}
  \big)
  \,
  \tfrac{1}{2}
  \big(
    1
    -
    \Gamma_{5'}
  \big)
  \;
  \Gamma_6
  \,.
  \end{array}
$$
In conclusion this exhibits (cf. e.g. \cite[Lem. 4.12]{HSS19}) the irrep decomposition
\begin{equation}
  \label{TheChiralSpinRepresentations}
  P(\mathbf{32})
  \;\;\;
  \underset
    {
      \mathclap{
        \scalebox{.6}{
          $\mathrm{Spin}(1,5)$
        }
      }
    }
    {\simeq}
  \;\;\;
  2 \cdot \mathbf{8}_+
  \,,
  \;\;\;\;\;\;
  \overline{P}(\mathbf{32})
  \;\;\;
  \underset
    {
      \mathclap{
        \scalebox{.6}{
          $\mathrm{Spin}(1,5)$
        }
      }
    }
    {\simeq}
  \;\;\;
  2 \cdot \mathbf{8}_-
  \;\;\;\;\;\;\;
  \in
  \,
  \mathrm{Rep}_{\mathbb{R}}
  \big(
    \mathrm{Spin}(1,5)
  \big)
\end{equation}
with respect to the tangential Clifford generators $\Gamma_0,\, \Gamma_1, \cdots \Gamma_5 \,\in\, \mathrm{Pin}^+(1,10)$, which as such we denote $\gamma_0, \, \cdots, \gamma_5$:
\begin{equation}
  \label{TangentialCliffordGenerators}
  \gamma_a
  \;\;\;:\;\;
  \begin{tikzcd}[sep=2pt]
    2 \cdot \mathbf{8}_+
    &
    \underset{
      \mathclap{
        \scalebox{.55}{
          $\mathrm{Spin}(1,5)$
        }
      }
    }
    {
      \simeq
    }
    &
    P(\mathbf{32})
    \ar[
      ddrr
    ]
    &\phantom{-----}&
    P(\mathbf{32})
    &
    \underset{
      \mathclap{
        \scalebox{.55}{
          $\mathrm{Spin}(1,5)$
        }
      }
    }
    {
      \simeq
    }
    &
    2 \cdot \mathbf{8}_+
    \\[-7pt]
    \oplus
    &&
    \oplus
    &&
    \oplus
    &&
    \oplus
    \\
    2 \cdot \mathbf{8}_-
    &
    \underset{
      \mathclap{
        \scalebox{.55}{
          $\mathrm{Spin}(1,5)$
        }
      }
    }
    {
      \simeq
    }
    &
    \overline{P}(\mathbf{32})
    \ar[
      uurr,
      crossing over,
      "{
        \Gamma_a
      }"{description}
    ]
    &&
    \overline{P}(\mathbf{32})
    &
    \underset{
      \mathclap{
        \scalebox{.55}{
          $\mathrm{Spin}(1,5)$
        }
      }
    }
    {
      \simeq
    }
    &
    2 \cdot \mathbf{8}_-
  \end{tikzcd}
  \;\;\;\;\;
  \mbox{
    for tangential $a$.
  }
\end{equation}
With \eqref{CommutativityOfGammaWithP} this implies
that the projection operators 
\eqref{TheProjectionOperator} \eqref{transversalProjectionOperator} 
also serve to project out these tangential Clifford generators:
\begin{equation}
  \label{ProjectionOfCliffordElements}
  \overline{P} \, \Gamma^a \, P
  \;=\;
  \left\{\!\!
  \def\arraystretch{1.3}
  \begin{array}{ll}
    \gamma^a{}_{\vert_{2 \cdot \mathbf{8}_+}}
    &
    \mbox{for tangential $a$}
    \\
    0 & \mbox{for transversal $a$}
    \,,
  \end{array}
  \right.
  \;\;\;\;\;\;\;\;\;\;
  P \, \Gamma^a \, \overline{P}
  \;=\;
  \left\{\!\!
  \def\arraystretch{1.3}
  \begin{array}{ll}
    \gamma^a{}_{\vert_{2 \cdot \mathbf{8}_-}}
    &
    \mbox{for tangential $a$}
    \\
    0 & \mbox{for transversal $a$}.
  \end{array}
  \right.
\end{equation}
Therefore we may think of $P$ as also acting on $\mathbb{R}^{1,10}$ by projection to $\mathbb{R}^{1,5}$, hence as acting on all of the super-vector space $\mathbb{R}^{1,10\vert \mathbf{32}}$
\begin{equation}
  \label{SuperProjectionOperator}
  \begin{tikzcd}
    \mathbb{R}^{
      1,10\vert \mathbf{32}
    }
    \ar[
      r,
      ->>
    ]
    \ar[
      rr,
      rounded corners,
      to path={
           ([yshift=+00pt]\tikztostart.south)  
        -- ([yshift=-08pt]\tikztostart.south)
        -- node[]{
          \scalebox{.8}{
            \colorbox{white}{
              $P$
            }
          }
        }
           ([yshift=-08pt]\tikztotarget.south)
        -- ([yshift=-00pt]\tikztotarget.south)
      }
    ]
    &
    \mathbb{R}^{1,5\vert 2 \cdot \mathbf{8}}
    \ar[
      r,
      hook
    ]
    &
    \mathbb{R}^{
      1,10\vert \mathbf{32}
    }
    \mathrlap{\,.}
  \end{tikzcd}
\end{equation}
Below we make much use of this super-projector \eqref{SuperProjectionOperator} for streamlined definitions and computations.

A simple example that will be useful below:

\begin{example}
It follows immediately from \eqref{SymmetricSpinorPairings} that $P(\mathbf{32})$ inherits {\it symmetric} bilinear spinor pairings such as
$$
  \begin{tikzcd}
  \big(
  (2 \cdot \mathbf{8}_+)
  \otimes
  (2 \cdot \mathbf{8}_+)
  \big)_{\mathrm{sym}}
  \ar[
    r,
    hook
  ]
  &
  \big(
  \mathbf{32}
  \otimes
  \mathbf{32}
  \big)_{\mathrm{sym}}
  \ar[
    rrr,
    "{
      \big(\,
        \overline{(-)}
        \,\gamma^a \Gamma^b\,
        (-)
      \big)
    }"
  ]
  &&&
  \mathbb{R}
  \end{tikzcd}
  \;\;\;\;
  \mbox{for}
  \;
  \def\arraystretch{.9}
  \begin{array}{l}
    \mbox{tangential $a$}
    \\
    \mbox{transversal $b$,}
  \end{array}
$$
so that, notably, the polarization identity implies that
\begin{equation}
  \label{VanishingOfOneIndexCliffordElementInQuadraticForm}
  \underset{
    \psi \in P(\mathbf{32})
  }{
    \forall
  }
  \;\;
  \big(\,
    \overline{\psi}
    \, \gamma^a \, \Gamma^b\,
    \psi
  \big)
  (H_1)_a
  \,=\,
  0
  \hspace{.8cm}
  \Leftrightarrow
  \hspace{.8cm}
  H_1 \,=\, 0
  \,,
\end{equation}
which we need below in \eqref{ComputingTorsionConstraintAtPsi2}.

On the other hand, the corresponding symmetric pairing with two tangential indices vanishes identically
\begin{equation}  \label{TangentialExpectationValueOf2IndexCliffordElementVanishes}
  \psi_1,
  \psi_2
  \;\in\;
  P(\mathbf{32})
  \;\;\;\;\;\;
  \Rightarrow
  \;\;\;\;\;\;
  \big(\,
    \overline{\psi_1}
    \,
    \gamma_{a_1 a_2}
    \,
    \psi_2
  \big)
  \;=\;
  0
  \,,
\end{equation}
because
$$
    \big(\,
      \overline{\psi_1}
      \,\gamma_{a_1 a_2}\,
      \psi_2
    \big)
    \;=\;
    \big(\,
      \overline{P \psi_1}
      \,\gamma_{a_1 a_2}\,
      P \psi_2
    \big)
    \;=\;
    \big(\,
      \overline{\psi_1}
      \,
      \overline{P}
      \,\gamma_{a_1 a_2}\,
      P \psi_2
    \big)
    \;=\;
    \big(\,
      \overline{\psi_1}
      \,
      \,\gamma_{a_1 a_2}\,
      \overline{P}
      P
      \,
      \psi_2
    \big)
    \;= \; 0
    \,.
$$
\end{example}

\medskip

\noindent
{\bf Residual $\mathrm{Spin}(5)$-action.}
Moreover, the fact that the transverse Clifford elements commute with $P$ and $\overline{P}$ \eqref{CommutativityOfGammaWithP} immediately implies that $P(\mathbf{32})$ and $\overline{P}(\mathbf{32})$ also inherit an action of the transverse $\mathrm{Spin}(5) \xhookrightarrow{\;} \mathrm{Pin}^+(1,10)$ with respect to the direct product subgroup inclusion
$
  \mathrm{Spin}(1,5)
  \times
  \mathrm{Spin}(5)
  \xhookrightarrow{\;}
  \mathrm{Pin}^+(1,10)
  \,,
$
in fact as $\mathrm{Spin}(5)$-representations they are isomorphic, for instance via multiplication by $\Gamma_0$ (or any other of the tangential $\Gamma_{a \leq 5}$)
$$
  \begin{tikzcd}[
    row sep=0pt, column sep=large
  ]
    P(\mathbf{32})
    \ar[
      out=180-65, 
      in=65, 
      looseness=3, 
    "
      \scalebox{.7}{
      $\mathrm{Spin}(5)$
      }
    "
  ]    
  \ar[
      r,
      "{ \sim }"
    ]
    &
    \overline{P}(\mathbf{32})
    \ar[
      out=180-65, 
      in=65, 
      looseness=3, 
    "
      \scalebox{.7}{
      $\mathrm{Spin}(5)$
      }
    "
  ]    
    \\
    P\Psi
    \ar[
      r,
      |->,
      "{
        \Gamma_0 \cdot (-)
      }"
    ]
    &
    \overline{P}
    \gamma^0
    \Psi
    \\    
  \end{tikzcd}
$$
Moreover, as a $\mathrm{Spin}(5)$-representation, $P(\mathbf{32})$ now decomposes into representations in the images of the four mutually orthogonal projectors
$$
  P_{\sigma_1 \sigma_2}
  \;:=\;
  \tfrac{1}{2}
  \big(
    1
    +
    \sigma_1
    \Gamma_{0 1}
  \big)
  \tfrac{1}{2}
  \big(
    1
    +
    \sigma_2
    \Gamma_{2345}
  \big)
  \,,
  \;\;\;\;
  \sigma_i \in \{\pm 1\}\,.
$$
These are all isomorphic to each other, for instance via
$$
  \def\arraystretch{1.2}
  \begin{array}{l}
  \Gamma_1
  \,
  P_{\sigma_1,  \sigma_2}
  \;=\;
  P_{-\sigma_1, \sigma_2}
  \,
  \Gamma_1
  \,,
  \\
  \Gamma_2
  \,
  P_{\sigma_1,  \sigma_2}
  \;=\;
  P_{\sigma_1, -\sigma_2}
  \,
  \Gamma_2
  \,,
  \end{array}
$$
and thus to be denoted $\mathbf{4} \in \mathrm{Rep}_{\mathbb{R}}\big(\mathrm{Spin}(5)\big)$:
$$
  P(\mathbf{32})
  \;\;
  \simeq_{{}_{\mathbb{R}}}
  \;\;
  \big(
    P_{++} + P_{+-} + P_{-+} + P_{--}
  \big)
  P(\mathbf{32})
  \;\;
  \underset{
    \mathclap{
      \scalebox{.6}{
        $\mathrm{Spin}(5)$
      }
    }
  }{
    \simeq
  }
  \;\;
  4 \cdot \mathbf{4}
  \,.
$$
Consequently, we have   
\begin{equation}
  \label{TransverseCliffordAction}
  \Gamma_a
  \;\;\;:\;\;
  \begin{tikzcd}[sep=-2pt]
    4 \cdot \mathbf{4}
    &
      \underset{
        \mathclap{
        \scalebox{.6}{$
        \mathrm{Spin}(5)
        $}
        }
      }
      {\simeq}
    &
    P(\mathbf{32})
    \ar[
      rr,
      "{
        \Gamma_a
      }"
    %{description}
    ]
    &\phantom{-----}&
    P(\mathbf{32})
    &
      \underset{
        \mathclap{
        \scalebox{.6}{$
        \mathrm{Spin}(5)
        $}
        }
      }
      {\simeq}
    &
    4 \cdot \mathbf{4}
    \\[-5pt]
    \oplus
    &&
    \oplus
    &&
    \oplus
    &&
    \oplus
    \\
    4 \cdot 
    \mathbf{4}
    &
      \underset{
        \mathclap{
        \scalebox{.6}{$
        \mathrm{Spin}(5)
        $}
        }
      }
      {\simeq}
    &
    \overline{P}(\mathbf{32})
    \ar[
      rr,
      "{
        \Gamma_a
      }"
      %{description}
    ]
    &&
    \overline{P}(\mathbf{32})
    &
      \underset{
        \mathclap{
        \scalebox{.6}{$
        \mathrm{Spin}(5)
        $}
        }
      }
      {\simeq}
    &
    2 \cdot \mathbf{4}
  \end{tikzcd}
  \;\;\;\;\;
  \mbox{
    for transverse $a$.
  }
\end{equation}

In summary this identifies the $\mathrm{Spin}(1,5) \times \mathrm{Spin}(5)$-action on $P(\mathbf{32})$ and $\overline{P}(\mathbf{32})$ as:
\vspace{1mm} 
\begin{equation}
  \label{IrrepDecompositionOf6dSpinors}
  2 \cdot \mathbf{8}_+
  \;\;\;
  \underset
    {
      \mathclap{
        \scalebox{.6}{
          $\mathrm{Spin}(1,5)$
        }
      }
    }
    {\simeq}
  \;\;
  P(\mathbf{32})
  \;\;
  \underset
    {
      \mathclap{
        \scalebox{.6}{
          $\mathrm{Spin}(5)$
        }
      }
    }
    {\simeq}
  \;\;
  4 \cdot \mathbf{4}
  \;\;
  \underset
    {
      \mathclap{
        \scalebox{.6}{
          $\mathrm{Spin}(5)$
        }
      }
    }
    {\simeq}
  \;\;
  \overline{P}(\mathbf{32})
  \;\;\,
  \underset
    {
      \mathclap{
        \scalebox{.6}{
          $\mathrm{Spin}(1,5)$
        }
      }
    }
    {\simeq}
  \;\;
  2 \cdot \mathbf{8}_-
  \,.
\end{equation}

\medskip

\noindent {\bf Hodge duality in 6d.}
From the relation 
$
  \Gamma_{ 0 1 2 3 4 5 5' 6 7 8 9}
  \;\;
  =
  \;\;
  1
  \;\;
  \in
  \;
  \mathrm{End}_{\mathbb{R}}(\mathbf{32})
$
\eqref{CliffordVolumeFormIn11d}
it follows analogously
for the tangential Clifford algebra \eqref{TangentialCliffordGenerators} that
\begin{equation}
  \label{6dChiralityOperator}
  \gamma_{ 0 1 2 3 4 5}
  \;\;
  =
  \;\;
  1
  \;\;
  \in
  \;
  \mathrm{End}_{\mathbb{R}}(2 \cdot \mathbf{8})
  \,,
\end{equation}
because
$$
  \phi \;\in\;
  2\cdot \mathbf{8}
  \;\subset\;
  \mathbf{32}
  \;\;\;\;\;\;\;\;
  \Rightarrow
  \;\;\;\;\;\;\;\;
  \def\arraystretch{1.2}
  \begin{array}{lll}
    \Gamma_{012345} \, \phi
    &
    =\;
    \Gamma_{012345}
    \,
    \Gamma_{5' 6789}
    \,
    \phi
    &
    \proofstep{
      by
      \eqref{TheChiralSpinRepresentations}
    }
    \\
   & =\; \phi
    &
    \proofstep{
      by
      \eqref{CliffordVolumeFormIn11d}.
    }
  \end{array}
$$
in fact
$$
  \Gamma_{012345}
  \;=\;
  \Gamma_{012345}
  \Gamma_{0123455'6789}
  \;=\;
  \Gamma_{5'6789}
  \;\;
  \in
  \;\;
  \mathrm{End}_{\mathbb{R}}(\mathbf{32})
  \,.
$$
By \eqref{6dChiralityOperator}, the Hodge duality relations on the 6d Clifford basis elements are as follows:
\begin{equation}
  \label{HodgeDualityOf6dCliffordGenertors}
  \def\arraystretch{1.3}
  \def\arraycolsep{0pt}
  \begin{array}{rcll}
    \gamma^{a_1 a_2 a_3 a_4 a_5 a_6}
    &\;\;=\;\;&
    +
    \;\;\;
    \epsilon^{a_1 a_2 a_3 a_4 a_5 a_6}
    &
    1
    \\
    \gamma^{a_1 a_2 a_3 a_4 a_5}
    &\;\;=\;\;&
    +
    \;\;\;
    \epsilon^{a_1 a_2 a_3 a_4 a_5\, b}
    &
    \gamma_{b}
    \\
    \gamma^{a_1 a_2 a_3 a_4}
    &\;\;=\;\;&
    -
    \tfrac{1}{2}
    \epsilon^{a_1 a_2 a_3 a_4\, b_1 b_2}
    &
    \gamma_{b_1 b_2}
    \\
    \gamma^{a_1 a_2 a_3}
    &\;\;=\;\;&
    -
    \tfrac{1}{3!}
    \epsilon^{
      a_1 a_2 a_3 
      \,
      b_1 b_2 b_3
    }
    &
    \gamma_{b_1 b_2 b_3}
    \\
    \gamma^{a_1 a_2}
    &\;\;=\;\;&
    +
    \tfrac{1}{4!}
    \epsilon^{
      a_1 a_2 
      \,
      b_1 b_2 b_3 b_4
    }
    &
    \gamma_{b_1 b_2 b_3 b_4}
    \\
    \gamma_{a_1}
    &\;\;=\;\;&
    +
    \tfrac{1}{5!}
    \epsilon^{
      a_1 
      \, 
      b_1 b_2 b_3 b_4 b_5
    }
    &
    \gamma_{b_1 b_2 b_3 b_4 b_5}
    \\
    1
    &\;\;=\;\;&
    -
    \tfrac{1}{6!}
    \epsilon^{b_1 b_2 b_3 b_4 b_5 b_6}
    &
    \gamma_{b_1 b_2 b_3 b_4 b_5 b_6}\,.
  \end{array}
\end{equation}

\medskip

\medskip

\noindent
{\bf Special Clifford relations in 6d.}
From \eqref{HodgeDualityOf6dCliffordGenertors} the following Lem. \ref{SelfDualityOf3IndexCoefficientsIn6d} and Lem. \ref{ExpandingAntiChiralOperatorsIntoCliffordElements} are immediate but of key importance:

\begin{lemma}[\bf Self-duality of 3-index coefficients]
\label{SelfDualityOf3IndexCoefficientsIn6d}
The coefficients of $\gamma^{a_1 a_2 a_3}$ are being projected onto their self-dual part:
\begin{equation}
  \label{CoefficientsOf3Index6dGammaAreSelfDual}
  (\tilde H_3)_{a_1 a_2 a_3}
  \gamma^{a_1 a_2 a_3}
  \;=\;
  \tfrac{1}{2}
  \Big(
    (\tilde H_3)_{a_1 a_2 a_3}
    +
    (\star \tilde H_3)_{a_1 a_2 a_3}
  \Big)
  \gamma^{a_1 a_2 a_3}.
\end{equation}
\end{lemma}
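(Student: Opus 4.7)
The lemma is equivalent to the symmetric statement
$(\tilde H_3)_{a_1 a_2 a_3}\,\gamma^{a_1 a_2 a_3}
= (\star \tilde H_3)_{a_1 a_2 a_3}\,\gamma^{a_1 a_2 a_3}$,
since averaging the two sides immediately gives the claim. Thus the whole task is to verify this equality, which is a purely algebraic identity about how the 3-index Clifford element $\gamma^{a_1 a_2 a_3}$ is ``self-dual in its own indices'' and therefore only ever sees the self-dual part of the tensor it is contracted with.

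The main tool is the Hodge duality relation on 6d Clifford basis elements from \eqref{HodgeDualityOf6dCliffordGenertors}, namely
$\gamma^{a_1 a_2 a_3} = -\tfrac{1}{3!}\,\epsilon^{a_1 a_2 a_3 b_1 b_2 b_3}\,\gamma_{b_1 b_2 b_3}$,
which itself follows from the 6d chirality identity $\gamma_{012345}=1$ established in \eqref{6dChiralityOperator}. First I would invert this to obtain a ``dualized'' form: contracting both sides against $\epsilon_{a_1 a_2 a_3\, c_1 c_2 c_3}$ and applying the $\epsilon\epsilon$-contraction from \eqref{ContractingKroneckerWithSkewSymmetricTensor} with $p=q=3$, namely $\epsilon^{a_1 a_2 a_3 b_1 b_2 b_3}\epsilon_{a_1 a_2 a_3 c_1 c_2 c_3} = -3!\cdot 3!\,\delta^{b_1 b_2 b_3}_{c_1 c_2 c_3}$, yields
$\epsilon_{a_1 a_2 a_3\, c_1 c_2 c_3}\,\gamma^{a_1 a_2 a_3} = 3!\,\gamma_{c_1 c_2 c_3}$.

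Next I would substitute this directly into the definition
$(\star \tilde H_3)_{a_1 a_2 a_3} = \tfrac{1}{3!}\,\epsilon_{a_1 a_2 a_3\, b_1 b_2 b_3}\,(\tilde H_3)^{b_1 b_2 b_3}$
(cf.\ \eqref{SelfDualityCondition}): the prefactors $\tfrac{1}{3!}$ and $3!$ cancel, giving
$(\star \tilde H_3)_{a_1 a_2 a_3}\,\gamma^{a_1 a_2 a_3}
= (\tilde H_3)^{b_1 b_2 b_3}\,\gamma_{b_1 b_2 b_3}
= (\tilde H_3)_{a_1 a_2 a_3}\,\gamma^{a_1 a_2 a_3}$,
which is what was needed. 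Averaging both sides then produces the stated identity \eqref{CoefficientsOf3Index6dGammaAreSelfDual}.

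The only real place to be careful is the bookkeeping of the sign $-1$ in $\epsilon^{012345}=-1$ versus $\epsilon_{012345}=+1$ (fixed by \eqref{transversalizationOfLeviCivitaSymbol}), which is exactly what the factor $-3!\cdot 3!$ in the epsilon contraction absorbs; this is not a genuine obstacle but just a consistency check. No geometric, supergeometric, or spinorial input is required beyond the already-established chirality $\gamma_{012345}=1$ on $2\cdot\mathbf{8}$ — the lemma is a direct corollary of that and the combinatorics of $\epsilon$-contractions in six dimensions.
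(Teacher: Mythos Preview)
Your proof is correct and follows essentially the same approach as the paper: both establish the equivalent identity $(\tilde H_3)_{a_1 a_2 a_3}\gamma^{a_1 a_2 a_3} = (\star \tilde H_3)_{a_1 a_2 a_3}\gamma^{a_1 a_2 a_3}$ using the Hodge duality relation \eqref{HodgeDualityOf6dCliffordGenertors} for $\gamma^{a_1 a_2 a_3}$. The only minor difference is that the paper substitutes the relation directly into the left-hand side and regroups into $(\star \tilde H_3)^{b_1 b_2 b_3}\gamma_{b_1 b_2 b_3}$ in one step, whereas you first invert the relation via an $\epsilon\epsilon$-contraction and then apply it to the right-hand side; the paper's route avoids that intermediate inversion but the content is the same.
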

\begin{proof}
$\,$

\vspace{-4mm} 
$$
  \def\arraystretch{1.7}
  \begin{array}{lll}
    (\tilde H_3)_{a_1 a_2 a_3}
    \gamma^{a_1 a_2 a_3}
    &
    \;=\;
    (\tilde H_3)_{a_1 a_2 a_3}
    \Big(
    \tfrac{-1}{3!}
    \epsilon^{
      a_1 a_2 a_3
      \,
      b_1 b_2 b_3
    }
    \gamma_{b_1 b_2 b_3}
    \Big)
    &
    \proofstep{
      by
      \eqref{HodgeDualityOf6dCliffordGenertors}
    }
    \\
   & \;=\;
    \Big(
    \tfrac{+1}{3!}
    \epsilon^{
      b_1 b_2 b_3
      \,
      a_1 a_2 a_3
    }
    (\tilde H_3)_{a_1 a_2 a_3}
    \Big)
    \gamma_{b_1 b_2 b_3}
    \\
   & \;=\;
    (\star \tilde H_3)^{b_1 b_2 b_3}
    \gamma_{b_1 b_2 b_3}\;.
  \end{array}
$$

\vspace{-5mm} 
\end{proof}

\begin{lemma}[\bf Expanding anti-chiral operators into 6+5d Clifford elements]
  \label{ExpandingAntiChiralOperatorsIntoCliffordElements}
  The $\mathbb{R}$-vector space of linear maps $2 \cdot \mathbf{8}_{\pm} \xrightarrow{\;} 2 \cdot \mathbf{8}_{\mp}$ is spanned by products with any transverse Clifford elements in $\mathrm{Spin}(5)$
  of the tangential 1-index and the self-dual combination of tangential 3-index Clifford elements \eqref{TangentialCliffordGenerators}:
  \begin{equation}
    \label{ExpandingChiralLinearMapsInCliffordElements}
    \mathrm{Hom}_{\mathbb{R}}
    \big(
      2\cdot \mathbf{8}_\pm
      ,\,
      2\cdot \mathbf{8}_\mp
    \big)
    \;\;
    \simeq
    \;\;
    \Big\langle
      \gamma_{a_1}
      ,\,
      \tfrac{1}{2}
      \big(
      \gamma_{a_1 a_2 a_3}
      +
      \tfrac{1}{3!}
      \epsilon_{
        a_1 a_2 a_3
        \,
        b_1 b_2 b_3
      }
      \gamma^{b_1 b_2 b_3}
      \big)
    \Big\rangle_{a_i \in \{0,1, \cdots, 6\}}
    \cdot
    \mathrm{Spin}(5)
    \,.
  \end{equation}
\end{lemma}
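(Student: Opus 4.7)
The plan is to successively reduce the spanning set $\{\Gamma_{a_1\cdots a_p} \,:\, 0 \leq p\leq 5\}$ of $\mathrm{End}_\mathbb{R}(\mathbf{32})$ from \eqref{CliffordElementsSpanningLinearMaps} down to the claimed generators, and then confirm exhaustion by a dimension count. The reduction proceeds in three steps: restricting to the chirality-reversing subspace, collapsing 5-index tangential elements via Hodge duality, and reducing 3-index tangential elements to the stated self-dual combination.

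First, a linear operator $M$ restricts to a map $P(\mathbf{32}) \to \overline{P}(\mathbf{32})$ precisely when $\overline{P} M = M P$, which by \eqref{TheProjectionOperator} is equivalent to $M$ anti-commuting with $\Gamma_{5'6789}$. Using \eqref{CommutativityOfGammaWithP}, this selects the Clifford basis elements $\Gamma_{a_1\cdots a_p}$ whose number of tangential indices (among $\{0,\dots,5\}$) is odd; each such element factorises, up to sign and reordering, as a product $\gamma_{[\mathrm{tang}]}\cdot \Gamma_{[\mathrm{trans}]}$ of a purely tangential Clifford basis element of degree $k \in \{1,3,5\}$ with a purely transverse one. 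The latter generates the Clifford algebra of the transverse directions, whose action on $P(\mathbf{32})$ is precisely the $\mathrm{Spin}(5)$-action displayed in \eqref{IrrepDecompositionOf6dSpinors} and appearing on the right-hand side of the lemma.

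Next, I would reduce the tangential degree using Hodge duality on $P(\mathbf{32})$: the identity $\gamma^{012345} = 1$ from \eqref{6dChiralityOperator}, combined with \eqref{HodgeDualityOf6dCliffordGenertors}, yields $\gamma^{a_1\cdots a_5} = \epsilon^{a_1\cdots a_5\, b}\gamma_b$, absorbing the $k=5$ generators into the $\gamma_{a_1}$. For the $k=3$ case, the analogous identity $\gamma^{a_1 a_2 a_3} = -\tfrac{1}{3!}\epsilon^{a_1 a_2 a_3 b_1 b_2 b_3}\gamma_{b_1 b_2 b_3}$ cuts the naive $\binom{6}{3}=20$-dimensional span down to $10$ linearly independent combinations which, consistently with Lemma \ref{SelfDualityOf3IndexCoefficientsIn6d}, are labelled by self-dual 3-tensors and represented canonically by the stated combination $\tfrac{1}{2}\big(\gamma_{a_1 a_2 a_3} + \tfrac{1}{3!}\epsilon_{a_1 a_2 a_3 b_1 b_2 b_3}\gamma^{b_1 b_2 b_3}\big)$.

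A dimension count then confirms completeness: the tangential building blocks contribute $6 + 10 = 16$ independent elements, while the transverse Clifford algebra $\mathrm{Cl}(\mathbb{R}^5)$ of total real dimension $2^5 = 32$ acts on $P(\mathbf{32})$ through its quotient by the central relation $\Gamma_{5'6789} = \mathrm{id}$, giving a $16$-dimensional effective image. This yields $16\cdot 16 = 256 = \dim_\mathbb{R} \mathrm{Hom}_\mathbb{R}\big(2\cdot\mathbf{8}_+, 2\cdot \mathbf{8}_-\big)$, matching the target. I expect the only real difficulty to be careful bookkeeping of the chirality-dependent sign in \eqref{HodgeDualityOf6dCliffordGenertors} for the 3-index relation --- since $\gamma^{012345}$ has opposite eigenvalues on $P(\mathbf{32})$ and $\overline{P}(\mathbf{32})$, one must verify that the ``self-dual combination'' is the correct non-vanishing representative on the chirality under consideration.
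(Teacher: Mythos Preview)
Your proposal is correct and follows essentially the same route as the paper's proof: start from the spanning set \eqref{CliffordElementsSpanningLinearMaps}, select elements with an odd number of tangential indices via \eqref{CommutativityOfGammaWithP}, then use the 6d Hodge relations \eqref{HodgeDualityOf6dCliffordGenertors} to absorb the 5-index tangential generators into the 1-index ones and to project the 3-index ones onto their self-dual part. You add two things the paper omits: the explicit anti-commutation criterion $\overline{P}M = MP \Leftrightarrow \{M,\Gamma_{5'6789}\}=0$, and the dimension count $16\times 16 = 256$, which usefully upgrades the paper's spanning argument to a basis argument; your remark about the chirality-dependent sign in the 3-index relation is also well taken, since the relation $\gamma^{012345} = \pm 1$ on $P(\mathbf{32})$ versus $\overline{P}(\mathbf{32})$ determines which duality component survives.
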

\begin{proof}
 By \eqref{CliffordElementsSpanningLinearMaps} any such linear map is the linear combination of $\mathrm{Pin}^+(1,10)$-elements $\Gamma_{a_1 \cdots a_{\leq 5}}$, and among these appear precisely only those with an odd number of tangential indices, 
 by \eqref{TangentialCliffordGenerators} and \eqref{TransverseCliffordAction}, where by 6d Hodge duality \eqref{HodgeDualityOf6dCliffordGenertors}
 those with 5 tangential indices
 and those with anti self-dual combinations of 3-indices 
 may be omitted.
\end{proof}

\smallskip
The following claim \eqref{ConjugatingGammaWithH3} is implicit in \cite[p. 2]{HoweSezgin97b} and explicit in \cite[(5.67)]{Sorokin00}\cite[(5.77)]{BandosSorokin23} (stated there for a specific matrix representation); we spell out a proof. 

\begin{lemma}[\bf Conjugating $\gamma$ with $H_3$]
If $\tilde H_3 \;=\; \star \tilde H_3$ 
then 
\begin{equation}
  \label{ConjugatingGammaWithH3}
    \Big(
      \tfrac{1}{3!}
      (\tilde H_3)_{b_1 b_2 b_3}
      \gamma^{b_1 b_2 b_3}
    \Big)
    \,
    \gamma_a
    \,
    \Big(
      \tfrac{1}{3!}
      (\tilde H_3)_{c_1 c_2 c_3}
      \gamma^{c_1 c_2 c_3}
    \Big)
  \;\;
  =
  \;\;
  -2
  \,
  (\tilde H_3)_{
    a
    \,
    b_1 b_2
  }
  (\tilde H_3)^{
    a'
    \,
    b_1 b_2
  }
  \,
  \gamma_{a'}
  \;\;
  \defneq
  \;\;
  -2
  \,
  (\K)_a^{a'}
  \,
  \gamma_{a'}
  \,.
\end{equation}
\end{lemma}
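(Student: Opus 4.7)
Both sides of the claimed identity are linear maps $P(\mathbf{32}) \to \overline{P}(\mathbf{32})$, i.e., lie in $\mathrm{Hom}_{\mathbb{R}}(2\cdot\mathbf{8}_+,\, 2\cdot\mathbf{8}_-)$. By Lem.~\ref{ExpandingAntiChiralOperatorsIntoCliffordElements} this space is spanned (modulo the transverse $\mathrm{Spin}(5)$-action, which is trivially preserved by $\tilde H_3$ being purely tangential) by the $\gamma_{a}$ together with the self-dual combinations $\tfrac{1}{2}\bigl(\gamma_{a_1 a_2 a_3} + \tfrac{1}{3!}\epsilon_{a_1 a_2 a_3\, b_1 b_2 b_3}\gamma^{b_1 b_2 b_3}\bigr)$. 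Thus it suffices to expand the LHS in this basis and verify that (i) the self-dual 3-index coefficient vanishes, and (ii) the 1-index coefficient equals $-2\,(\K)_a^{a'}$.

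\textbf{Step 1: Reduction of $\slashed{H}\,\gamma_a$.} Write $\slashed{H} := \tfrac{1}{3!}(\tilde H_3)_{b_1 b_2 b_3}\gamma^{b_1 b_2 b_3}$. The standard Clifford commutation identity
$$
  \gamma^{b_1 b_2 b_3}\gamma_a
  \;=\;
  \gamma^{b_1 b_2 b_3}{}_a
  \,+\,
  \delta^{b_1}_a \gamma^{b_2 b_3}
  \,-\,
  \delta^{b_2}_a \gamma^{b_1 b_3}
  \,+\,
  \delta^{b_3}_a \gamma^{b_1 b_2}
$$
together with the total antisymmetry of $\tilde H_3$ collapses the three $\delta$-terms to a single expression, giving
$$
  \slashed{H}\,\gamma_a
  \;=\;
  \tfrac{1}{3!}(\tilde H_3)_{b_1 b_2 b_3}\gamma^{b_1 b_2 b_3}{}_a
  \,+\,
  \tfrac{1}{2}\,(\tilde H_3)_{a\, b_1 b_2}\gamma^{b_1 b_2}.
$$
Applying the 6d Hodge duality $\gamma^{b_1 b_2 b_3}{}_a = -\tfrac{1}{2}\eta_{ac}\epsilon^{b_1 b_2 b_3\, c\, d_1 d_2}\gamma_{d_1 d_2}$ from \eqref{HodgeDualityOf6dCliffordGenertors} and using self-duality \eqref{SelfDualityCondition} to identify the contracted $\epsilon\cdot(\tilde H_3)$ with $\tilde H_3$ itself, the first summand reproduces the second, producing
$$\slashed{H}\,\gamma_a \;=\; (\tilde H_3)_{a\, b_1 b_2}\gamma^{b_1 b_2}.$$

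\textbf{Step 2: Multiplying on the right by $\slashed{H}$.} Now expand $\gamma^{b_1 b_2}\gamma^{c_1 c_2 c_3}$ using the general Clifford product \eqref{GeneralCliffordProduct}: this yields one 5-index term, three 3-index terms (one for each level of single contraction), and two 1-index terms (from double contraction); the zero-contraction piece $\gamma^{b_1 b_2 c_1 c_2 c_3}$ arises from the 5-index summand only. Each piece, after being multiplied by $(\tilde H_3)_{a\,b_1 b_2}(\tilde H_3)_{c_1 c_2 c_3}$, must be sorted into the Clifford basis.

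\textbf{Step 3: Vanishing of the 3-index contribution.} The candidate 3-index piece has two origins: the contraction of one $b$ with one $c$ inside $\gamma^{b_1 b_2}\gamma^{c_1 c_2 c_3}$ produces a tensor of the schematic form $(\tilde H_3)_{a\, b\, c_1}(\tilde H_3)^{c_1}{}_{c_2 c_3}\gamma^{b\, c_2 c_3}$, which after relabeling is proportional to $(\K)_{[a}{}^{d}(\tilde H_3)_{|d| e_1 e_2]}\gamma^{a\, e_1 e_2}$ (on the skew-symmetric part picked out by $\gamma^3$). Contraction of this with $\gamma^{a_1 a_2 a_3}$ is, by Lem.~\ref{SelfDualityOf3IndexCoefficientsIn6d}, equivalent to contraction with the self-dual projection of $\gamma^{a_1 a_2 a_3}$; but by the anti-self-duality identity \eqref{AntiSelfDualityOfCubicTildeH} the cubic expression $(\K)_{[\,\cdot\,}{}^{d}(\tilde H_3)_{|d|\,\cdot\,\cdot]}$ is itself anti-self-dual, hence the pairing vanishes. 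The 5-index Clifford term from the no-contraction piece reduces by \eqref{HodgeDualityOf6dCliffordGenertors} to a 1-index term and contributes only to $\gamma_{a'}$.

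\textbf{Step 4: Matching the $\gamma_{a'}$-coefficient.} The surviving 1-index contributions come from (a) the double contraction inside $\gamma^{b_1 b_2}\gamma^{c_1 c_2 c_3}$, and (b) the Hodge-dualized 5-index piece. Applying \eqref{TildeThreeSquaredOverSingleIndex} to contract $(\tilde H_3)(\tilde H_3)$ over the two shared indices yields exactly $(\K)_a{}^{a'}\gamma_{a'}$ up to a numerical factor; the trace term $\mathrm{tr}(\K)\propto(\tilde H_3)^2$ that would otherwise appear vanishes by \eqref{SquareOfSelfDual3FormVanishes}. Collecting the combinatorial factors $1/(3!)^2$, the various $\binom{2}{l}\binom{3}{l}l!$, and the sign from \eqref{HodgeDualityOf6dCliffordGenertors}, both contributions add up to the asserted coefficient $-2$.

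\textbf{Main obstacle.} The conceptual steps are straightforward, but the bookkeeping of antisymmetrizations, the signs in converting $\gamma^{b_1 b_2 b_3}{}_a \leftrightarrow \gamma_a{}^{b_1 b_2 b_3}$, and the precise combinatorial coefficients from \eqref{GeneralCliffordProduct} are delicate; in particular, Step~3 requires verifying that the self-dual vs.\ anti-self-dual projection in \eqref{AntiSelfDualityOfCubicTildeH} and \eqref{CoefficientsOf3Index6dGammaAreSelfDual} indeed pair to zero rather than combining constructively, and Step~4 requires that all the numerical factors conspire to give exactly $-2$ rather than some other rational multiple.
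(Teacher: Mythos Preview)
Your Step~1 is correct and in fact yields a cleaner intermediate form than the paper uses: the paper expands $\slashed{\tilde H}_3\gamma_a\slashed{\tilde H}_3$ directly, obtaining 1-, 3- and two 5-index Clifford terms, then reduces the 5-index terms to 1-index ones via \eqref{ContractionOfTildeH3On2IndicesWith5IndexGamma}. Your route via $\slashed{\tilde H}_3\gamma_a = (\tilde H_3)_{a\,b_1b_2}\gamma^{b_1b_2}$ is a legitimate reorganization of the same computation.

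However, your Step~3 misidentifies the mechanism by which the 3-index term vanishes. The single-contraction piece of $(\tilde H_3)_{a\,b_1b_2}\gamma^{b_1b_2}\cdot \tfrac{1}{3!}(\tilde H_3)^{c_1c_2c_3}\gamma_{c_1c_2c_3}$ is \emph{quadratic} in $\tilde H_3$, not cubic, so the anti-self-dual cubic identity \eqref{AntiSelfDualityOfCubicTildeH} does not apply. Rather, using \eqref{TildeThreeSquaredOverSingleIndex} on the contracted pair $(\tilde H_3)_{a\,b_2\,c}(\tilde H_3)^{c\,c_2c_3} = \delta^{[c_2}_{[a}(\K)^{c_3]}_{b_2]}$ and plugging into $\gamma^{b_2}{}_{c_2c_3}$ yields (up to coefficient) $(\K)^{c}_{b}\,\gamma^{b}{}_{a c}$, which vanishes because $(\K)^{c}_{b}$ is symmetric in $b,c$ while $\gamma^{b}{}_{ac}$ is antisymmetric. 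This is exactly the ``symmetry reason'' the paper invokes for its own 3-index term $(\K)^{b_1b_2}\gamma_{a\,b_1b_2}$. Your Step~4 is only asserted, not demonstrated; the paper likewise does not display the final coefficient arithmetic in print but defers it to the ancillary computer-algebra file.
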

\begin{proof}
First, Gamma-expansion gives
$$
  \def\arraystretch{1.6}
  \begin{array}{ll}
    \Big(
      \tfrac{1}{3!}
      (\tilde H_3)_{b_1 b_2 b_3}
      \gamma^{b_1 b_2 b_3}
    \Big)
    \,
    \gamma_a
    \,
    \Big(
      \tfrac{1}{3!}
      (\tilde H_3)_{c_1 c_2 c_3}
      \gamma^{c_1 c_2 c_3}
    \Big)
    &
    =\;
    -
    (\tilde H_3)_{a \, c_1 c_2}
    (\tilde H_3)^{b \, c_1 c_2}
    \,
    \gamma_b
    \;+\;
    \tfrac{1}{6}
    \underbrace{
      (\tilde H_3)_{c_1 c_2 c_3}
      (\tilde H_3)^{c_1 c_2 c_3}
    }_{
     \underset{
       \scalebox{.65}{
         \eqref{SquareOfSelfDual3FormVanishes}
       }
     }{=} 
     0
    }
    \,
    \gamma_a
    \\
  &  \;\;\;\;
    \;+\;
    \tfrac{1}{6}
    (\tilde H_3)_{c_1 c_2 c_3}
    (\tilde H_3)_{a c_4 c_5}
    \,
    \gamma^{c_1 \cdots c_5}
    \;-\;
    \tfrac{1}{4}
    (\tilde H_3)^{b\, c_1 c_2 }
    (\tilde H_3)_{b}{}^{c_3 c_4 }
    \,
    \gamma_{a\, c_1 \cdots c_4}
    \\
  &  \;\;\;\;
    \underbrace{
    \;-\;
    \tfrac{1}{2}
    (\tilde H_3)^{b_1 c_1 c_2}
    (\tilde H_3)^{b_2}{}_{c_1 c_2}
    \,
    \gamma_{a\, b_1 b_2}
    \mathrlap{\,,}
    }_{= 0}
  \end{array}
$$
where over the braces we noticed terms that vanish for symmetry reasons.
Applying \eqref{ContractionOfTildeH3On2IndicesWith5IndexGamma} to the remaining 5-index terms and then simplifying the coefficients of the only remaining 1-index term yields the claimed result.
\\
(A computer algebra check is available in \cite{AncillaryFile}.)
\end{proof}

\begin{lemma}[\bf Contraction with $\gamma$]
If $\tilde H_3 \,=\, \star \tilde H_3$ then
\begin{equation}
\label{ContractionOfTildeH3On2IndicesWith5IndexGamma}
  \def\arraystretch{1.7}
  \begin{array}{l}
    (\tilde H_3)_{a \, c_4 c_5}
    \gamma^{c_1 \cdots c_5}
    \;=\;
    \tfrac{4! \cdot 2!}{3!}
    \,
    (\tilde H_3)^{b_1 b_2 b_3}
    \,
    \delta
      ^{ c_1 c_2 c_3 \, a' }
      _{ b_1 b_2 b_3 \, a  }
    \,
    \gamma_{a'}
  \end{array}
\end{equation}
\end{lemma}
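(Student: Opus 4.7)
The statement reduces to a finite combinatorial identity about contractions of the Levi-Civita symbol against a self-dual 3-form. The plan is to convert the 5-index Clifford element on the left-hand side into a 1-index one via 6d Hodge duality, and then dualize one occurrence of $\tilde H_3$ so that both $\tilde H_3$-factors and both $\epsilon$-factors can be collapsed into a single Kronecker symbol.

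Concretely, I will proceed in four short steps. First, apply the 6d Hodge duality relation for Clifford basis elements \eqref{HodgeDualityOf6dCliffordGenertors},
$$
  \gamma^{c_1 c_2 c_3 c_4 c_5}
  \;=\;
  \epsilon^{c_1 c_2 c_3 c_4 c_5 \, a'}\, \gamma_{a'}\,,
$$
so that the left-hand side becomes $(\tilde H_3)_{a c_4 c_5}\,\epsilon^{c_1 c_2 c_3 c_4 c_5 a'}\,\gamma_{a'}$. Second, use the self-duality assumption $\tilde H_3 = \star \tilde H_3$ \eqref{SelfDualityCondition} to write $(\tilde H_3)_{a c_4 c_5} = \tfrac{1}{3!}\epsilon_{a c_4 c_5 b_1 b_2 b_3}(\tilde H_3)^{b_1 b_2 b_3}$. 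Third, contract the two Levi-Civita symbols over the two indices $c_4, c_5$ via the identity in \eqref{ContractingKroneckerWithSkewSymmetricTensor},
$$
  \epsilon^{c_4 c_5 \, c_1 c_2 c_3 \, a'}\,\epsilon_{c_4 c_5 \, a \, b_1 b_2 b_3}
  \;=\;
  -\,2!\cdot 4!\,\;\delta^{c_1 c_2 c_3 \, a'}_{a \, b_1 b_2 b_3}\,,
$$
after cyclically permuting indices (each cyclic shift is an even permutation on both the upper and lower epsilons, so no sign is incurred at this stage). Fourth, reorder the lower Kronecker indices, moving $a$ from the first to the fourth slot: this is an odd permutation and yields a compensating sign $(-1)^3 = -1$, which cancels the minus sign from the $\epsilon\epsilon$-contraction and produces the claimed coefficient $+\tfrac{4!\cdot 2!}{3!}$.

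The only potentially subtle point is the sign bookkeeping in steps three and four, because our normalization \eqref{transversalizationOfLeviCivitaSymbol} means $\epsilon^{01\cdots}=-1$ (hence the $-p!\,q!$ in the contraction formula), and because 6d Hodge duality on Clifford basis elements \eqref{HodgeDualityOf6dCliffordGenertors} alternates in sign. I expect no conceptual obstacle beyond this clerical care; the identity is essentially a restatement of self-duality at the level of $\gamma$-matrix expansions and is the key lemma invoked in the preceding proof of \eqref{ConjugatingGammaWithH3} when eliminating the 5-index Clifford term.
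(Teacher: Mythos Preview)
Your proposal is correct and takes essentially the same approach as the paper: both apply the 6d Hodge duality \eqref{HodgeDualityOf6dCliffordGenertors} to convert $\gamma^{c_1\cdots c_5}$ into $\epsilon^{c_1\cdots c_5 a'}\gamma_{a'}$, use self-duality \eqref{SelfDualityCondition} to dualize $(\tilde H_3)_{ac_4c_5}$, and contract the two $\epsilon$-symbols via \eqref{ContractingKroneckerWithSkewSymmetricTensor}. Your version is slightly more explicit than the paper's in spelling out the final sign from reordering the lower Kronecker indices $a\,b_1b_2b_3 \to b_1b_2b_3\,a$, which the paper leaves implicit.
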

\begin{proof}
$$
  \def\arraystretch{1.7}
  \begin{array}{lll}
    (\tilde H_3)_{a \, c_4 c_5}
    \gamma^{c_1 \cdots c_5}
    &
    \;=\;
    \,
    \Big(
    \tfrac{1}{3!}
    (\tilde H_3)^{b_1 b_2 b_3}
    \epsilon_{
      a \, c_4 c_5
      \,
      b_1 b_2 b_3
    }  
    \Big)
    \Big(
      \epsilon^{c_1 c_2 c_3 c_4 c_5 a'}
      \gamma_{a'}
    \Big)
    &
    \proofstep{
      by
      \eqref{HodgeDualityOf6dCliffordGenertors}
    }
    \\
   & \;=\;
    -
    \tfrac{4! \cdot 2!}{3!}
    \,
    (\tilde H_3)^{b_1 b_2 b_3}
    \,
    \delta
      ^{ c_1 c_2 c_3 \, a' }
      _{ a \, b_1 b_2 b_3  }
    \,
    \gamma_{a'}
    &
    \proofstep{
      by
      \eqref{ContractingKroneckerWithSkewSymmetricTensor}
    }.
  \end{array}
$$

\vspace{-4mm} 
\end{proof}

%%%%%%%%%%%%%%%%%%%%%%%%%%%%%%%%%%%%%%%%
\subsection{Super-flux on M5 immersions}
\label{TheM5EquationsOfMotion}
%%%%%%%%%%%%%%%%%%%%%%%%%%%%%%%%%%%%%%%
With the above preliminaries in hand we are now ready to work out  the characterization of those $\sfrac{1}{2}$BPS super-immersions that correspond to M5-branes.

\begin{definition}[\bf M5-brane super-immersion]
  \label{M5SuperImmersion}
  Given a super-spacetime $\big(X, (E,\Psi,\Omega)\big)$ of super-dimension $(1,10)\,\vert\, \mathbf{32}$,
  we say that an {\it M5-brane super-immersion} into $X$ is a $\sfrac{1}{2}$BPS super-immersion (``super-embedding'', Def. \ref{BPSImmersion}) of a super-manifold $\Sigma$ of bosonic dimension $1+5$, hence \eqref{TangentialCliffordGenerators} of super-dimension 
  $1+5 \,\vert\, 2\cdot \mathbf{8}$.
  $$
    \begin{tikzcd}[column sep=large]
      \mathllap{
        \scalebox{.7}{
          \color{darkblue}
          \bf
          \def\arraystretch{.9}
          \begin{tabular}{c}
            M5 brane
            \\
            super-worldvolume
          \end{tabular}
        }
      }
      \Sigma^{1,5\vert 2\cdot \mathbf{8}}
      \ar[
        rr,
        "{ \phi }",
        "{
          \scalebox{.7}{
            \color{darkgreen}
            \bf
            \def\arraystretch{.85}
            \begin{tabular}{c}
              $\sfrac{1}{2}$BPS
              \\
              super-immersion
            \end{tabular}
          }
        }"{swap}
      ]
      &&
      X^{1,10\vert \mathbf{32}}
      \mathrlap{
        \scalebox{.7}{
          \color{darkblue}
          \bf
          \def\arraystretch{.}
          \begin{tabular}{c}
            11d supergravity
            \\
            super-spacetime
          \end{tabular}
        }
      }
    \end{tikzcd}
  $$
  
\end{definition}

\begin{remark}[\bf Transversal fermionic shear of M5 super-immersions]
\label{TransversalFermionicShear}
Given an M5 super-immersion $\phi$ (Def. \ref{M5SuperImmersion}) the assumption 
\eqref{BPSImmersionDiagram}
that 
 $(e,\psi)$ is a co-frame field on $\Sigma$ implies (cf. \cite[(5.69)]{BandosSorokin23}) that there exist unique components fields
 $\slashed{\tilde H}$ and $\tau_a$ on $\Sigma$
 which parametrize the transversal shear of $\phi$ in the fermionic directions:
\begin{equation}
  \label{transversalFermionicComponent}
  \phi^\ast \Psi
  -
  \psi
  \;=\;
  \phi^\ast(\overline{P}\Psi)
  \;=\;
  \slashed{\tilde H} \psi
  +
  \tau_a 
  \,
  e^a
  \,.
\end{equation}
These components pointwise map the fermionic tangent space $T^{\mathrm{odd}}_{\sigma} \Sigma \cong 2 \cdot \mathbf{8}_+$ of $\Sigma$ into the transverse target space copy of $2 \cdot \mathbf{8}_-$:
\begin{equation}
  \label{ShearingOfSuperEmbeddings}
  \begin{tikzcd}[
    column sep=68pt
  ]
    &&
    \overbrace{
    2\cdot \mathbf{8}_+
    \oplus
    2\cdot \mathbf{8}_-
    }^{
      \mathbf{32}
    }
    \ar[
      d,
      ->>,
      shift right=20pt
    ]
    \\
    T_{\sigma}\Sigma
    \ar[
      urr,
      bend left=5pt,
      "{
        \phi^\ast
        \Psi
        \;=\;
        (
          \psi
          ,\,
          \overbrace{
            \scalebox{1}{$
            \slashed{\tilde H}
            \psi
            \,+\,
            \tau_a e^a
            $}
          }^{
            \mathclap{
            \scalebox{.7}{
              \color{darkblue}
              \bf
              odd shear 
            }
            }
          }
        )      
      }"{sloped}
    ]
    \ar[
      rr,
      "{
        \psi
      }",
      "{
        \scalebox{.7}{
          \color{darkgreen}
          \bf
          odd co-frame
        }
      }"{swap}
    ]
    &&
    2\cdot \mathbf{8}_+
    \phantom{
      \oplus
      2\cdot \mathbf{8}_-
    }
  \end{tikzcd}
\end{equation}
Notice that as such we may act on $\psi$ also with transversal Clifford generators, and have with \eqref{TheChiralSpinRepresentations}:
\begin{equation}
  \label{ChiralityOfOddWorldvolumeFrame}
  \psi
  \;=\;
  P(\psi)
  \,,
  \;\;\;\;\;\;\;\;\;
  \tau_a
  \;=\;
  \overline{P}(\tau_a)
  \,.
\end{equation}

Of course, here $\slashed{\tilde H} \defneq \mathrm{Sh}_{11}$ and $\tau \defneq \mathrm{Sh}_{01}$ are components \eqref{ComponentOfSupershearMap} of the super-shear map \eqref{SuperShearMap}, and the BPS immersion condition 
of Def. \ref{BPSImmersion}
implies (Prop. \ref{ComponentsOfBPImmersion}) that 
\begin{equation}
  \label{TauVanishes}
  \tau = 0
  \,.
\end{equation}
However, below we still keep $\tau$ around, to show where it would appear, cf. Rem. \ref{SimplificationOfWorldvoluemTorsionConstraint} and  Rem. \ref{RecoveringTheOrdinaryWorldvolumeBianci} below.

\smallskip 
This remaining freedom in \eqref{ShearingOfSuperEmbeddings} of M5 super-immersions $\phi$ to ``shear'' along the odd directions by a component $\slashed{\tilde H}$ turns out to reflect the degrees of freedom of the flux density $H_3$ on the brane's worldvolume (cf. \eqref{H3AsFunctionOfTildeH3} in Prop. \ref{BianchiIdentityOnM5BraneInComponents} below).
\end{remark}

We offer the following two concrete examples of M5-brane super-immersions/embeddings (no such examples seem to exist in the literature). The first is the archetypical example which is simple but instructive, the second is arguably the next-archetypical example and already fairly laborious to construct:
\begin{example}[\bf Flat M5-brane immersion]
\label{FlatM5Branes}
Consider target space to be flat Minkowski super-spacetime $X \defneq \mathbb{R}^{1,10\vert \mathbf{32}}$ (or some toroidal compactification thereof)
with its canonical coordinate functions $(X^a)_{a = 0}^{10}$ and $(\theta^\alpha)_{\alpha=1}^{32}$ and co-frame field (e.g. \cite[Ex. 2.76]{GSS24-SuGra}):
\begin{equation}
  \label{CanonicalCoframeOn11dSuperMinkowski}
  \def\arraystretch{1.5}
  \begin{array}{l}
    E^a
    \;:=\;
    \mathrm{d}X^a
    \,+\,
    \big(
      \overline{\Theta}
      \,\Gamma^a\,
      \mathrm{d}\Theta
    \big)
    \\
    \Psi^\alpha
    \;:=\;
    \mathrm{d}\Theta^\alpha
    \,.
  \end{array}
\end{equation}
Then consider the brane worldvolume to be the sub-supermanifold \eqref{SuperProjectionOperator}
$$
  \Sigma 
  \,\defneq\, 
  \mathbb{R}^{1,6 \vert 2 \cdot \mathbf{8}}
  \,:=\,
  P\big(
    \mathbb{R}^{1,10 \vert \mathbf{32}}
  \big)
  \xhookrightarrow{\quad}
  \mathbb{R}^{1,10\vert \mathbf{32}}
$$
with canonical coordinate functions
$x := P \circ X$, $\theta := P \circ \Theta$, and  with the super-immersion given by
\begin{equation}
  \label{FlatSuperImmersion}
  \def\arraystretch{.9}
  \begin{tikzcd}[
    column sep=0pt,
    row sep=-3pt
  ]
    \mathbb{R}^{1,6\vert 2 \cdot \mathbf{8}}
    \ar[
      rr,
      "{ \phi }"
    ]
    &&
    \mathbb{R}^{1,10 \vert \mathbf{32}}
    \\
    x^a &\overset{\phi^\ast}{\longmapsfrom}& X^a
    \\
    \theta
    +
    \slashed{\tilde H}_3
    \cdot
    \theta
    &\overset{\phi^\ast}{\longmapsfrom}&
    \Theta
  \end{tikzcd}
\end{equation}
for {\it constant}
\vspace{2mm} 
\begin{equation}
  \label{AssumingConstantH3}
  (H_3)_{a_1 a_2 a_3} : \Sigma \xrightarrow{\;} \mathbb{R}
  \,,
  \;\;\;
  \mbox{with}
  \;\;\;
  \frac{\partial}{\partial x^a}
  (H_3)_{a_1 a_2 a_3} \;=\; 0
  \,,\;\;\;\;\;
  \frac{\partial}{\partial \theta^\alpha}
  (H_3)_{a_1 a_2 a_3} \;=\; 0
  \,.
\end{equation}
From this, we find the pullback of the given co-frame field to be
$$
  \def\arraystretch{1.6}
  \begin{array}{lll}
    \phi^\ast E^a
    &
    \;=\;
    \phi^\ast \big(
      \mathrm{d} X^a
      +
      (\overline{\Theta}
      \,\Gamma^a\,
      \mathrm{d}\Theta)
    \big)
    &
    \proofstep{
      by
      \eqref{CanonicalCoframeOn11dSuperMinkowski}
    }
    \\
    &\;=\;
    \mathrm{d}x^a
      +
      \big(
        \overline{
          (\theta + \slashed{\tilde H}_3 \theta)
        } 
      \,\Gamma^a\,
      \mathrm{d}
      (\theta + \slashed{\tilde H}_3 \theta)
    \big)
    &
    \proofstep{
      by
      \eqref{FlatSuperImmersion}
    }
    \\
    &\;=\;
    \left\{\!\!
    \def\arraystretch{1.5}
    \begin{array}{ll}
    \mathrm{d}x^a
    \,+\,
    \big(
      \delta^a_{a'}
      -
      2(\K)^a_{a'}
    \big)
    \big(
      \overline{\theta}
      \,\gamma^{a'}\,
      \mathrm{d}\theta
    \big)
    &
    \vert
    \;
    \mbox{\small  tangential $a$}
    \\
    \big(
      \overline{\theta}
      \,
      \{
        \slashed{\tilde H}_3,
        \Gamma^a
      \}
      \,
      \mathrm{d}
      \theta
    \big)
    \,=\,
    0
    &
    \vert
    \;
    \mbox{\small  transversal $a$}
    \end{array}
    \right.
    &
    \proofstep{
      \def\arraystretch{1.1}
      \def\tabcolsep{-5pt}
      \begin{tabular}{l}
      by 
      \eqref{AssumingConstantH3},
      \\\eqref{CommutativityOfGammaWithP}
      \&
      \eqref{ConjugatingGammaWithH3}
      \end{tabular}
    }
  \end{array}
  \;\;\;\;
$$
and
$$  
  \def\arraystretch{1.2}
  \begin{array}{lll}
    \phi^\ast
    \Psi
    &
    \;=\;
    \phi^\ast
    \mathrm{d}\Theta
    &
    \proofstep{
      by \eqref{CanonicalCoframeOn11dSuperMinkowski}
    }
    \\
    &\;=\;
    \mathrm{d}
    \big(
     \theta 
       + 
     \slashed{\tilde H}_3 
     \theta
    \big)
    &
    \proofstep{
      by \eqref{FlatSuperImmersion}    
    }
    \\
    &\;=\;
    \mathrm{d}\theta
    \,+\,
    \slashed{\tilde H}_3
    \,
    \mathrm{d}\theta
    &
    \proofstep{
      by 
      \eqref{AssumingConstantH3}.
    }
  \end{array}
$$
These show, by
\eqref{BPSImmersionInComponents},
that $(E,\Psi)$ 
\eqref{CanonicalCoframeOn11dSuperMinkowski}
is indeed a Darboux super-coframe (Def. \ref{BPSImmersion}) for this $\phi$ \eqref{FlatSuperImmersion}, which is hence indeed an M5-brane super immersion (Def. \ref{M5SuperImmersion}).

We also see in this example that the induced torsion on the worldvolume is the canonical super-torsion {\it plus} a correction by $\slashed{\tilde H}_3$:
$$
  \mathrm{d}e^a
  \;\defneq\;
  \mathrm{d}
  \Big(
    \mathrm{d}
    x^a
    +
    \big(
      \delta^a_{a'}
      -
      2(\K)^a_{a'}
    \big)
    \big(
      \overline{\theta}
      \,\gamma^{a'}\,
      \mathrm{d}\theta
    \big)
  \Big)
  \;=\;
    \big(\,
      \overline{
      \mathrm{d}\theta}
      \,\gamma^{a}\,
      \mathrm{d}\theta
    \big)
    -
   2(\K)^a_{a'}
    \big(\,
      \overline{
      \mathrm{d}\theta}
      \,\gamma^{a'}\,
      \mathrm{d}\theta
    \big)
  \,.
$$
This is a general phenomenon, as we see in \eqref{WorldvolumeTorsionForVanishingTau} below.
\end{example}
\begin{example}[\bf Holographic M5-Brane immersion]
  \label{HolographicM5BraneSuperImmersions}
  The super-immersion of the M5-brane superworldvolume in parallel to the horizon of its own ``black'' M5-brane super-$\mathrm{AdS}_7$-geometry turns out to be an M5 super-immersion (hence a super-embedding) according to Def. \ref{M5SuperImmersion}, iff its distance from the horizon is exactly the throat radius in Poincar{\'e} coordinates. This is already a non-trivial computation which we relegate to \cite{GSS24-AdS7}.
\end{example}

\newpage

\noindent
{\bf The torsion-constraint on M5 super-immersions.}
Given a super-immersion $\phi : \Sigma \xrightarrow{\;} X$, the pullback of the bulk torsion constraint to the worldvolume $\Sigma$
has equivalently the following tangential and transverse components:
\begin{equation}
  \label{PullbackOfTorsionConstraint}
  \def\arraystretch{2.4}
  \begin{array}{cl}
    &
    \phi^\ast
    \Big(
      \big(\,
      \overline{\Psi}
      \,\Gamma^a\,
      \Psi
      \big)
    \;=\;
    \mathrm{d}
    \,
    E^a
    +
    \Omega^a{}_b
    \, 
    E^b
    \Big)
    \\
    \underset{
      \scalebox{.7}{
        \eqref{BPSImmersionDiagram}
      }
    }{
      \Leftrightarrow
    }
    &
    \left\{\!\!
    \def\arraystretch{1.5}
    \begin{array}{ll}
      \big(\,
        \overline{
          (\phi^\ast \Psi)
        }
        \,
        \gamma^a 
        \,
        (\phi^\ast \Psi)
      \big)
      \;=\;
      \mathrm{d}
      \, e^a
      +
      \omega^a{}_b
      \,
      e^b
      &
      \mbox{ \small 
        for tangential $a$,
      }
      \\
      \big(\,
        \overline{
          (\phi^\ast \Psi)
        }
        \,
        \Gamma^a 
        \,
        (\phi^\ast \Psi)
      \big)
      \;=\;
      +
      \,
      \SecondFundamentalForm
        ^a
        _{
          b_1 b_2
        }
      \,
      e^{b_1} e^{b_2}
      +
      \SecondFundamentalForm
        ^a
        _{
          \beta 
          \,
          b
        }
        \,
        \psi^\beta 
        e^b
      &
      \mbox{\small 
        for transversal $a$,
      }
    \end{array}
    \right.
  \end{array}
\end{equation}
where in the second line we made explicit the second fundamental super-form $\SecondFundamentalForm$ from \eqref{PullbackOfSuperConnectionForm}.

This constraint \eqref{PullbackOfTorsionConstraint}
is hence a condition to be satisfied by any $\sfrac{1}{2}$BPS super-immersion, and as such we refer to it as the {\it worldvolume torsion constraint}.

\begin{lemma}[\bf The worldvolume torsion constraint in components]
\label{FormOfTranversalFermionicImmersion}
The transversal worldvolume torsion constraint \eqref{PullbackOfTorsionConstraint} is equivalent to the following set of conditions:
\begin{itemize}[
  leftmargin=.8cm,
  itemsep=2pt,
  topsep=1pt
]
\item[\bf (i)]
The bosonic component of the 2nd fundamental form 
$\SecondFundamentalForm$ \eqref{PullbackOfSuperConnectionForm}
of $\phi$
is symmetric in its tangential indices, as in the classical case \eqref{SecondFundamentalFormIsSymmetric}:
\begin{equation}
  \label{SecondFundamentalSuperFormIsSymmetricInBosonicIndices}
  \SecondFundamentalForm
    ^a
    _{b_1 b_2}
  \;=\;
  \SecondFundamentalForm
    ^a
    _{b_2 b_1}
  \;\;\;\;\;
  \begin{array}{l}
    \mbox{\rm \small 
for transversal $a$}
    \\
    \mbox{\rm \small  and tangential $b_i$};
  \end{array}
\end{equation}
\item[\bf (ii)]
the fermionic shear component $\tau_b$ 
\eqref{transversalFermionicComponent}
is (over-)determined by the equations:
\begin{equation}
  \label{EquationForTau}
  (\tau_b)_\alpha
  \;=\;
  \tfrac{1}{2}
  (
    \Gamma_{5'}
    \SecondFundamentalForm
  )
    ^{5'}
    _{ \alpha\, b }
  \;=\;
  \tfrac{1}{2}
  (
    \Gamma_{6}
    \SecondFundamentalForm
  )
    ^{6}
    _{ \alpha\, b }
  \;=\;
  \tfrac{1}{2}
  (
  \Gamma_{7}
  \SecondFundamentalForm
  )
    ^{7}
    _{ \alpha\, b }
  \;=\;
  \tfrac{1}{2}
  (
  \Gamma_{8}
  \SecondFundamentalForm
  )
    ^{8}
    _{ \alpha\, b }
  \;=\;
  \tfrac{1}{2}
  (
  \Gamma_{9}
  \SecondFundamentalForm
  )
    ^{9}
    _{ \alpha\, b }
  \,;
\end{equation}

\item[\bf (iii)]
the fermionic shear component $\slashed{\tilde H}$ \eqref{transversalFermionicComponent} takes the form 
\begin{equation}
  \label{SettingTheSlashHComponent}
  \slashed{\tilde H}
  \;=\;
  \slashed{\tilde H}_3
  \;:=\;
  \tfrac{1}{3!}
  (\tilde H_3)_{a_1 a_2 a_3}
  \gamma^{a_1 a_2 a_3}
  \;\;\;\;\;
  \mbox{\rm for any}
  \;\;\;\;\;
  \tilde H_3 
    \;=\; 
  \star \tilde H_3
  \,,
\end{equation}
\end{itemize}
with which the tangential part of \eqref{PullbackOfTorsionConstraint} equivalently says that
\begin{itemize}[
  leftmargin=.8cm,
  itemsep=2pt,
  topsep=1pt
]
\item[\bf (iv)]
the induced worldvolume torsion is:
\begin{equation}
  \label{WorldvolumeTorsion}
  \mathrm{d}
  \,
  e^a
  +
  \omega^a{}_b
  \, 
  e^b
  \;=\;
  \big(
   \overline{\tau_{b_1}}
   \gamma^a
   \tau_{b_2}
  \big)
  e^{b_1}
  e^{b_2}
  \;-\;
    2
    \big(
      \overline{\tau_b}
      \,
      \gamma^a
      \slashed{\tilde H}_3
      \,
      \psi
    \big)
    e^{b}
  \;+\;
  \big(
    \delta
      ^{ a  }
      _{ a' }
    -2
    (\K)
      ^{ a  }
      _{ a' }
  \big)
  \big(\,
    \overline{\psi}
    \,\gamma^{a'}
    \psi
  \big)
  \,,
\end{equation}
where the notation on the far right is from \eqref{SquareOfH3OnTwoIndices}.

\end{itemize}
\end{lemma}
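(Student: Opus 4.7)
The plan is to substitute the super-Darboux shear decomposition $\phi^\ast\Psi = \psi + \slashed{\tilde H}\psi + \tau_b e^b$ from Rem.~\ref{TransversalFermionicShear} into the transversal component of the torsion constraint \eqref{PullbackOfTorsionConstraint} and organize the resulting identity by form-degree in the bosonic vielbein $e^a$. The key reduction is that the projector identities \eqref{CommutativityOfGammaWithP} make transversal $\Gamma^a$ preserve the $P/\overline{P}$-splitting \eqref{TheChiralSpinRepresentations}, while the spinor pairing vanishes separately on $P(\mathbf{32})$ and on $\overline{P}(\mathbf{32})$ (since $\overline{P}P = 0$). Hence for transversal $a$ only the $P$--$\overline{P}$ cross-terms in $(\overline{\phi^\ast\Psi})\Gamma^a(\phi^\ast\Psi)$ survive, splitting by degree in $e^a$ as
$$
2(\overline{\psi})\Gamma^a\slashed{\tilde H}\psi \;+\; 2\bigl((\overline{\psi}) + (\overline{\slashed{\tilde H}\psi})\bigr)\Gamma^a\tau_b\,e^b \;+\; (\overline{\tau_{b_1}})\Gamma^a\tau_{b_2}\,e^{b_1}e^{b_2},
$$
which, matched against $\SecondFundamentalForm^a_{b_1 b_2}e^{b_1}e^{b_2} + \SecondFundamentalForm^a_{\beta b}\psi^\beta e^b$, will yield (iii), (ii), and (i) in turn.

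For (iii), I analyse the $e^0$-piece $(\overline{\psi})\Gamma^a\slashed{\tilde H}\psi = 0$ for all $\psi\in P(\mathbf{32})$ and transversal $a$. Since $\slashed{\tilde H}$ is $\mathrm{Spin}(5)$-equivariant by Prop.~\ref{ComponentsOfBPImmersion}, Lemma~\ref{ExpandingAntiChiralOperatorsIntoCliffordElements} decomposes it into a $1$-index piece $(H_1)_a\gamma^a$ plus a self-dual $3$-index piece $\tfrac{1}{3!}(\tilde H_3)_{abc}\gamma^{abc}$. For the $1$-index part, $\Gamma^a\gamma^b = \Gamma^{ab}$ (since $\eta^{ab}=0$ for transversal $a$ and tangential $b$), whose $2$-index symmetric Majorana pairing forces $H_1 = 0$ by \eqref{VanishingOfOneIndexCliffordElementInQuadraticForm}. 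For the $3$-index part, $\Gamma^a\gamma^{bcd} = \Gamma^{abcd}$, whose Majorana bilinear vanishes identically on Grassmann-odd spinors ($4$-index is not in the symmetric pairing list \eqref{SymmetricSpinorPairings}), so this piece is unconstrained; Lemma~\ref{SelfDualityOf3IndexCoefficientsIn6d} then ensures that only the self-dual part of $\tilde H_3$ survives, establishing \eqref{SettingTheSlashHComponent}. For (ii), reading off the $\psi^\beta e^b$-coefficient gives $\SecondFundamentalForm^a_{\alpha b} = 2(\overline{\xi_\alpha})\Gamma^a\tau_b$ once one notes that the $\slashed{\tilde H}_3$-contribution vanishes: using $\overline{\slashed{\tilde H}_3}=\slashed{\tilde H}_3$ (from \eqref{AdjointnessOfCliffordBasisElements}) together with anticommutation of transversal $\Gamma^a$ past the tangential $3$-index $\slashed{\tilde H}_3$, one finds $\slashed{\tilde H}_3\Gamma^a\tau_b \in P(\mathbf{32})$, which pairs trivially with $\xi_\alpha \in P(\mathbf{32})$ (as $P \times P \to 0$). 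Inverting $\Gamma^a$ via $\Gamma_a$ for each transversal $a \in \{5',6,7,8,9\}$ then produces the five equivalent expressions of \eqref{EquationForTau}. For (i), matching the $e^2$-piece gives $\SecondFundamentalForm^a_{[b_1b_2]} = (\overline{\tau_{[b_1}})\Gamma^a\tau_{b_2]}$; imposing the BPS condition $\tau=0$ of Prop.~\ref{ComponentsOfBPImmersion} (cf.~\eqref{TauVanishes}) makes the RHS vanish, yielding the symmetry \eqref{SecondFundamentalSuperFormIsSymmetricInBosonicIndices}.

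Finally, for (iv), I substitute the shear decomposition into the tangential part of \eqref{PullbackOfTorsionConstraint}. Tangential $\gamma^a$ flips $P \leftrightarrow \overline{P}$ by \eqref{TangentialCliffordGenerators}, so the surviving pairings are now the diagonal blocks $(\overline{\psi})\gamma^a\psi$ and $(\overline{\slashed{\tilde H}_3\psi + \tau_b e^b})\gamma^a(\slashed{\tilde H}_3\psi + \tau_c e^c)$. The combined $\psi^2$-contribution collapses via the conjugation identity \eqref{ConjugatingGammaWithH3} to $(\delta^a{}_{a'} - 2(\K)^a{}_{a'})(\overline{\psi})\gamma^{a'}\psi$, while the remaining $\psi\tau$- and $\tau^2$-pieces yield the other two terms of \eqref{WorldvolumeTorsion}. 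The main obstacle is the degree-$e^0$ step of (iii): the non-obvious point is that the $\gamma^{abc}$-piece of $\slashed{\tilde H}$ is automatically compatible with the constraint (its contribution sits in a $4$-index Majorana bilinear that annihilates Grassmann-odd spinors), while the $\gamma^a$-piece is forced to vanish --- this asymmetry is the algebraic mechanism through which the worldvolume $H_3$-flux emerges as the only non-trivial fermionic-shear component.
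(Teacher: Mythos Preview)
Your strategy is essentially the paper's: expand $(\overline{\phi^\ast\Psi}\,\Gamma^a\,\phi^\ast\Psi)$ in $\psi$-degree and use the $P/\overline{P}$ chirality to isolate the surviving bilinears. Your treatment of (iii) is correct and in fact slightly more direct than the paper's (you observe that $\Gamma^a\gamma^{bcd}=\Gamma^{abcd}$ is a $4$-index bilinear, hence antisymmetric by \eqref{SymmetricSpinorPairings} and vanishes on Grassmann-odd $\psi$, while $\Gamma^a\gamma^b=\Gamma^{ab}$ is $2$-index symmetric and forces $H_1=0$ via \eqref{VanishingOfOneIndexCliffordElementInQuadraticForm}); the paper instead expands $\overline{P}(1-\slashed{\tilde H}_1+\slashed{\tilde H}_3)\Gamma^a(1+\slashed{\tilde H}_1+\slashed{\tilde H}_3)P$ directly. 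Your (ii) and (iv) match the paper's arguments.

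However, there is a genuine inconsistency in your handling of (i). You first assert that for transversal $a$ ``only the $P$--$\overline{P}$ cross-terms survive'', yet your displayed expansion retains both $(\overline{\slashed{\tilde H}\psi})\Gamma^a\tau_b$ and $(\overline{\tau_{b_1}})\Gamma^a\tau_{b_2}$, which are $\overline{P}$--$\overline{P}$ pairings and therefore vanish by your own criterion (since $\tau_b,\slashed{\tilde H}\psi\in\overline{P}(\mathbf{32})$ and transversal $\Gamma^a$ preserves $\overline{P}$). You then compensate by invoking the BPS condition $\tau=0$ to kill the $e^2$-piece --- but this is both unnecessary and contradicts the lemma as stated: item (iv) explicitly keeps the $\tau$-terms in the worldvolume torsion (cf.\ Rem.~\ref{TransversalFermionicShear}, where the paper emphasizes that $\tau$ is deliberately retained throughout). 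The paper's argument for (i) is simply that $(\overline{\tau_{b_1}}\Gamma^a\tau_{b_2})=0$ for transversal $a$ by the $\overline{P}$--$\overline{P}$ vanishing, giving $\SecondFundamentalForm^a_{[b_1b_2]}=0$ directly without any assumption on $\tau$. Fix your displayed expansion to be consistent with your chirality claim, and (i) then follows immediately.
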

\begin{proof}
The fermionic co-frame components of the transversal torsion constraint are, at face value:
\begin{equation}
  \label{TransversalTorsionConstraintComponents}
  \def\arraystretch{1.6}
  \begin{array}{lll}
    &
    \big(\,
    \overline{
      (\phi^\ast \Psi)
    }
    \,\Gamma^a\,
    (\phi^\ast \Psi)
    \big)
    \;-\;
    \SecondFundamentalForm
      ^a
      _{ b_1 b_2 }
      e^{b_1}
      e^{b_2}
    \;-\;
    \SecondFundamentalForm
      ^a
      _{ \beta \, b }
      \psi^\beta
      e^{b}
    \;=\;
    0
    \\
    &
    \;\Leftrightarrow\;
    \left\{\!\!\!\!
    \def\arraystretch{1.8}
    \begin{array}{l}
      \scalebox{.7}{
        \color{gray}
        $(\psi^0)$
      }
      \;
      \Big(
      \big(
        \overline{\tau_{b_1}}
        \,\Gamma^a\,
        \tau_{b_2}
      \big)
      -
      \SecondFundamentalForm
        ^a
        _{b_1 b_2}
      \Big)
      e^{b_1}
      e^{b_2}
      \;=\;
      0
      \\
      \scalebox{.7}{
        \color{gray}
        $(\psi^1)$
      }
      \;
      \Big(
      2
      \big(
        \overline{
          (
            \psi + \slashed{\tilde H}\psi
          )
        }
        \,
        \Gamma^a
        \,
        \tau_b
      \big)
      \;-\;
      \psi^\alpha
      \,
      \SecondFundamentalForm
        ^{a}
        _{ \alpha\,  b }
      \Big)
      e^b
      \;=\;
      0
      \\
      \scalebox{.7}{
        \color{gray}
        $(\psi^2)$
      }
      \;
      \big(
        \overline{
          (\psi + \slashed{\tilde H}\psi)
        }
        \,\Gamma^a\,
        (\psi + \slashed{\tilde H}\psi)
      \big)
      \;=\;
      0
      \,.
    \end{array}
    \right.
  \end{array}
\end{equation}
We analyze these components in turn:

\medskip

\noindent
\fbox{
  \bf
  The transversal torsion constraint at $\psi^0$
}
Observing that for transversal $a$ we have
$$
  \def\arraystretch{1.8}
  \begin{array}{lll}
    \big(
      \overline{\tau_{b_1}}
      \,\Gamma^a\,
      \tau_{b_2}
    \big)
    &
    \;=\;
    \big(
      \overline{
        ( \overline{P} \tau_{b_1} )
      }
      \,\Gamma^a\,
      \overline{P}\tau_{b_2}
    \big)
    &
    \proofstep{
      by
      \eqref{ChiralityOfOddWorldvolumeFrame}
    }
    \\
    &\;=\;
    \big(
      \overline{
        \tau_{b_1}
      }
      P
      \,\Gamma^a\,
      \overline{P}\tau_{b_2}
    \big)
    &
    \proofstep{
      by
      \eqref{SkewSelfAdjointnessOfCliffordGenerators}
    }
    \\
    & \;=\;
    0
    &
    \proofstep{
      by 
      \eqref{CommutativityOfGammaWithP},
    }
  \end{array}
$$
the $(\psi^0)$-component 
in \eqref{TransversalTorsionConstraintComponents}
says 
equivalently 
that 
(cf. \cite[(4.59)]{Sorokin00})
the 2nd fundamental form is symmetric in its tangential indices,
$$
  \SecondFundamentalForm
    ^a
    _{ [b_1 b_2] }
  \;=\;
  0
  \,,
$$
as in the classical situation \eqref{SecondFundamentalFormIsSymmetric}.

\medskip
\noindent
\fbox{
  \bf
  The transversal torsion constraint at $\psi^1$
}
Observing that the first summand of the $(\psi^1)$-component in \eqref{TransversalTorsionConstraintComponents} 

\vspace{1mm} 
\noindent is
\vspace{2mm} 
$$
  \def\arraystretch{1.4}
  \begin{array}{lll}
    \big(\,
    \overline{
      (
        \psi 
        + 
        \slashed{\tilde H}_3 \psi
      )
    }
    \,
    \Gamma^a
    \,
    \tau_b
    \big)
        & 
    \;=\;
    \big(\,
    \overline{
      \psi
    }
    \,
    \overline{P}
    \,
    (1 + \slashed{\tilde H}_3)
    \,
    \Gamma^a
    \,
    \overline{P} \, \tau_b
    \big)
    &
    \proofstep{
      by 
      \eqref{ChiralityOfOddWorldvolumeFrame}
      \&
      \eqref{AdjointnessOfCliffordBasisElements}
    }
    \\
    & 
    \;=\;
    \big(\,
    \overline{
      \psi
    }
    \,
    \overline{P}
    \,
    \Gamma^a
    \,
    \overline{P} \, \tau_b
    \big)
    &
    \proofstep{
      by
      \eqref{CommutativityOfGammaWithP}
    }
    \\
    & 
    \;=\;
    \big(\,
    \overline{
      \psi
    }
    \,
    \Gamma^a
    \,
    \tau_b
    \big)
    &
    \proofstep{
      by
      \eqref{ChiralityOfOddWorldvolumeFrame},
    }
  \end{array}
$$
the condition says equivalently that
(cf. \cite[(5.63)]{Sorokin00})
$$
  2
    \big(\,
    \overline{
      \psi
    }
    \,
    \Gamma^a
    \,
    \tau_b
  \big)  
  \;=\; 
  \psi^\alpha
  \,
  \SecondFundamentalForm
    ^{\, a}
    _{\alpha \, b}
  \,,
  \;\;\;\;\;\;\;
  \mbox{hence equivalently that}
  \;\;\;\;\;\;\;
  \big(
    \Gamma^a
    \tau_b
  \big)_\alpha
  \;=\;
  \tfrac{1}{2}
  \SecondFundamentalForm
    ^a
    _{\alpha \, b}
  \,.
$$
Acting on this equation with either of the transverse $\Gamma_a$ shows its equivalence to the claimed equations \eqref{EquationForTau}.

\medskip
\noindent
\fbox{
  \bf
  The transversal torsion constraint at $\psi^2$
}
By \eqref{ExpandingChiralLinearMapsInCliffordElements} and the required $\mathrm{Spin}(5)$-equivariance \eqref{TheNontrivialShearComponent},
the most general form of $\slashed{\tilde H}$ is
$$
  \slashed{\tilde H}
  \;=\;
  \underbrace{
  \mathclap{
    \phantom{
      \tfrac{1}{1}
    }
  }
  (\tilde H_1)_{a} \gamma^a
  }_{
    \slashed{\tilde H}_1
  }
  \;
  +
  \;
  \underbrace{
  \tfrac{1}{3!}
  (\tilde H_3)_{a_1 a_2 a_3}
  \gamma^{a_1 a_2 a_3}
  }_{
    \slashed{\tilde H}_3
  }
  \,.
$$
With that, the $(\psi^2)$-component in \eqref{TransversalTorsionConstraintComponents} equals
$$
  \def\arraystretch{1.6}
  \begin{array}{ll}
    \Big(\,
      \overline{
      \big(
        \psi 
        +
        \slashed{\tilde H}_1 \psi 
        +
        \slashed{\tilde H}_3
        \psi
      \big)
      }
      \, \Gamma^a \,
      \big(
        \psi 
        +
        \slashed{\tilde H}_1 \psi
        +
        \slashed{\tilde H}_3
        \psi
      \big)      
    \Big)
    &
    \proofstep{
      by 
      \eqref{transversalFermionicComponent}
    }
    \\
    \;=\;
    \Big(
      \overline{\psi}
      \big(
        (1 - \slashed{H}_1 + \slashed{\tilde H}_3)
        \,\Gamma^a\,
        (1 + \slashed{H}_1 + \slashed{\tilde H}_3)
      \big)
      \psi
    \Big)
    &
   \proofstep{
     by 
     \eqref{AdjointnessOfCliffordBasisElements}
   }
   \\
   \;=\;
    \Big(
      \overline{\psi}
      \;
        \overline{P}
        (1 - \slashed{H}_1 + \slashed{\tilde H}_3)
        \,\Gamma^a\,
        (1 + \slashed{H}_1 + \slashed{\tilde H}_3)
        P
        \;
      \psi
    \Big)
    &
    \proofstep{
      by
      \eqref{ChiralityOfOddWorldvolumeFrame}.
    }
  \end{array}
$$
Now multiplying out, we obtain:
\begin{equation}
 \label{ComputingTorsionConstraintAtPsi2}
 \def\arraystretch{1.6}
 \begin{array}{ll}
   \overline{P}
    \big(
      1
      -
      \slashed{\tilde H}_1
      +
      \slashed{\tilde H}_3
    \big)
    \,\Gamma^a\,
    \big(
      1
      +
      \slashed{\tilde H}_1
      +
      \slashed{\tilde H}_3
    \big)
    P
    \\
  \;=\;
   \overline{P}
    \big(
      1
      +
      \slashed{\tilde H}_3
    \big)
    \,\Gamma^a\,
    \big(
      1
      +
      \slashed{\tilde H}_3
    \big)
    P
     \\
     \hspace{.5cm}
   \;-\;
   \overline{P}
    \slashed{\tilde H}_1
    \,\Gamma^a\,
    \big(
      1
      +
      \slashed{\tilde H}_1
      +
      \slashed{\tilde H}_3
    \big)
    P
   \;+\;
   \overline{P}
    \big(
      1
      +
      \slashed{\tilde H}_1
      +
      \slashed{\tilde H}_3
    \big)
    \,\Gamma^a\,
    \slashed{\tilde H}_1
    P
  \\
  \;=\;
  \overline{P}
  \Big(
    \Gamma^a 
    +
    \{
      \Gamma^a
      ,
      \slashed{\tilde H}_3
    \}
    +
    \slashed{\tilde H}_3
    \Gamma^a
    \slashed{\tilde H}_3
  \Big)
  P
  \\
  \hspace{.5cm}
  +
  \;
  \overline{P}
  \big(
    [\slashed{\tilde H}_1, \Gamma^a]
    \;-\;
    \slashed{\tilde H}_1
    \Gamma^a
    \slashed{\tilde H}_3
    +
    \slashed{\tilde H}_3
    \Gamma^a
    \slashed{\tilde H}_1
  \big)
  P    
  \\[+6pt]
  \;=\;
  \left\{\!\!
  \def\arraystretch{1.5}
  \begin{array}{ll}
    \overline{P}
    \Big(
    \gamma^a 
    +
    (\tilde H_3)_{a \, b_1 b_2}
    \gamma^{a_1 a_2}
    +
    \slashed{\tilde H}_3
    \gamma^a
    \slashed{\tilde H}_3
    \Big)
    P
    +
    \mathcal{O}(\slashed{H}_1)
    &
    \mbox{for tangential $a$}
    \\
    2
    \,
    \overline{P}
    \,
    \slashed{H}_1
    P
    \,
    \Gamma^a
    &
    \mbox{for transversal $a$}
  \end{array}
  \right.
  &
  \proofstep{
    by
    \eqref{CommutativityOfGammaWithP}
    .
  }

\end{array}
\end{equation}
The last line  makes it manifest,
cf. \eqref{VanishingOfOneIndexCliffordElementInQuadraticForm},
that the transversal component vanishes iff $\slashed{H}_1 = 0$  (cf. the argument via matrix representations in \cite[(15)]{HoweSezgin97b}\cite[(5.66)]{Sorokin00}) --- which proves \eqref{SettingTheSlashHComponent}, remembering \eqref{CoefficientsOf3Index6dGammaAreSelfDual}
--- in which case the terms denoted $\mathcal{O}(\slashed{H}_1)$ 
in \eqref{ComputingTorsionConstraintAtPsi2}
vanish.

\smallskip

\noindent
\fbox{\bf The worldvolume torsion.}
The remaining tangential component in \eqref{ComputingTorsionConstraintAtPsi2}
is the $(\psi^2)$-component of the worldvolume torsion  as claimed in \eqref{WorldvolumeTorsion} (cf. \cite[(5.71)]{Sorokin00}):
$$
\def\arraystretch{1.6}
\begin{array}{lll}
  \overline{P}
    \Big(
    \gamma^a 
    +
    (\tilde H_3)_{a \, b_1 b_2}
    \gamma^{a_1 a_2}
    +
    \slashed{\tilde H}_3
    \gamma^a
    \slashed{\tilde H}_3
    \Big)
    P
  &
  \;=\;
  \overline{P}
    \Big(
    \gamma^a 
    +
    \slashed{\tilde H}_3
    \gamma^a
    \slashed{\tilde H}_3
    \Big)
    P
  &
  \proofstep{
    by
    \eqref{CommutativityOfGammaWithP}
  }
  \\
  &\;=\;
  \overline{P}
  \Big(
  \big(
    \delta_{a'}^a
    -
    2
    (\K)_{a'}^a
  \big)
  \gamma^{a'} 
  \Big)
  P
  &
  \proofstep{
    by
    \eqref{ConjugatingGammaWithH3}.
  }
  \end{array}
$$
It just remains to check the other summands of the worldvolume torsion \eqref{WorldvolumeTorsion}: Recalling  $\phi^\ast \Psi \;=\; \psi + \slashed{\tilde H}_3 \psi + \tau_a \, e^a$, the $(e^2)$-term is immediate, and the 
$(e^1 \, \psi^1)$-component is obtained as follows:
$$
  \def\arraystretch{1.6}
  \begin{array}{lll}
    \big(
      \overline{
        (1 + \slashed{\tilde H}_3)
        \psi
      }
      \,\gamma^a\,
      \tau_b
    \big)
    e^b
    -
    \big(
      \overline{
        \tau_b
      }
      \,\gamma^a\,
      (1 + \slashed{\tilde H}_3)
      \psi
    \big)
    e^b
    &
    \;=\;
    -2
    \big(
      \overline{\tau_b}
      \,\gamma^a\,
      (1 + \slashed{\tilde H}_3)
      \psi
    \big)
    e^{b}
    &
    \proofstep{
      by 
      \eqref{SymmetricSpinorPairings}
    }
    \\
&    \;=\;
    -2
    \big(
      \overline{ 
        (\overline{P}\tau_b)
      }
      \,\gamma^a\,
      (1 + \slashed{\tilde H}_3)
      P\psi
    \big)
    e^{b}
    &
    \proofstep{
      by 
      \eqref{ChiralityOfOddWorldvolumeFrame}
    }
    \\
  &  \;=\;
    -2
    \big(
      \overline{\tau_b}
      \,
      P
      \,\gamma^a\,
      (1 + \slashed{\tilde H}_3)
      P\psi
    \big)
    e^{b}
    &
    \proofstep{
      by
     \eqref{transversalProjectionOperator}
    }
    \\
  &  \;=\;
    -2
    \,
    \big(
      \overline{\tau_b}
      \,\gamma^a
      \slashed{\tilde H}_3
      \psi
    \big)
    e^b
    &
    \proofstep{
      by
      \eqref{transversalProjectionOperator}
      .
    }
  \end{array}
$$
This completes the proof.
\end{proof}

\begin{remark}[\bf Simplification of the worldvolume torsion constraint]
  \label{SimplificationOfWorldvoluemTorsionConstraint}
  
  Under the assumption $\tau = 0$ \eqref{TauVanishes} the condition \eqref{EquationForTau} is equivalently just the statement that 
  \begin{equation}
    \label{FermionicComponentOfSecondFundamentalFormVanishes}
    \SecondFundamentalForm^a_{\alpha \, b} 
      \,=\, 
    0
    \,.
  \end{equation}
  Hence in this case the conditions \eqref{SecondFundamentalSuperFormIsSymmetricInBosonicIndices} and \eqref{EquationForTau} are jointly equivalent to 
  $
    \big(
    e^{b'}
    \SecondFundamentalForm^a_{b' \, b}
    +
    \psi^\alpha
    \SecondFundamentalForm^a_{\alpha\, b}
    \big)
    e^b
    \;=\;
    0
  $,
  which in turn may equivalently be re-phrased as
  $
    \overline{P}
    \big(
      \mathrm{d}^\Omega e
    \big)
    \;=\;
    0
    \,,
  $
  where $\mathrm{d}^\Omega$ denotes the exterior covariant derivative with respect to $\phi^\ast \Omega$.

  In summary, if $\tau = 0$ then the worldvolume torsion constraint 
  \eqref{PullbackOfTorsionConstraint}
  according to Lem. \ref{FormOfTranversalFermionicImmersion} simplifies to the following set of statements:
  \def\arraystretch{1.5}
  \begin{align}
      &
      \overline{P}
      \big(
        \mathrm{d}^\Omega e
      \big)
      \;=\;
      0
      \,,
      \\
      &
      \slashed{\tilde H}
      \;=\;
      \slashed{\tilde H}_3
      \;:=\;
      \tfrac{1}{3!}
      (\tilde H_3)_{a_1 a_2 a_3}
      \gamma^{a_1 a_2 a_3}
      \,,
      \\
      \label{WorldvolumeTorsionForVanishingTau}
      &
      \mathrm{d}
      \,
      e^a
      +
      \omega^a{}_b 
      \,
      e^b
      \;=\;
  \big(
    \delta
      ^{ a  }
      _{ a' }
    -2
    (\K)
      ^{ a  }
      _{ a' }
  \big)
  \big(
    \overline{\psi}
    \,\gamma^{a'}
    \psi
  \big)
  \,.
  \end{align}
\end{remark}

\medskip

\noindent
{\bf The super flux densities.}
Given an M5 super-immersion $\phi : \Sigma \xrightarrow{\;} X$ (Def. \ref{M5SuperImmersion})
with induced super co-frame field $(e,\psi)$ on $\Sigma$ \eqref{BPSImmersionDiagram}, consider a differential 3-form with only bosonic co-frame components on $\Sigma$:
\begin{equation}
  \label{Super3FluxDensity}
  H^s_3
  \;:=\;
  H_3 
  \;:=\;
  \tfrac{1}{3!}
  (H_3)_{a_1 a_2 a_3}
  \,
  e^{a_1} e^{a_2} e^{a_3}
  \;\;
  \in
  \;\;
  \Omega^1_{\mathrm{dR}}
  \big(
    \Sigma
    ;\,
    b^2\mathbb{R}
  \big)
  \,.
\end{equation}
Here the super-script $(-)^s$ is to indicate that this is the super 3-flux analog of the super 4-flux density $G_4^s$ 
\eqref{SuperFluxDensitiesOf11dSugra}. Hence the would-be contribution $H_3^0$ analogous to $G_4^0$ in \eqref{SuperFluxDensitiesOf11dSugra} vanishes (cf. \cite[p. 91]{Sorokin00}). (But a candidate for non-vanishing $H_3^0$ does appear on ``exceptional-geometric'' super-spacetime as observed in \cite{FSS20Exc}; we further discuss this in \cite{GSS-Exceptional}).

\medskip

\begin{proposition}[\bf Flux Bianchi identity on M5-brane in components]  \label{BianchiIdentityOnM5BraneInComponents}
  Given a $\sfrac{1}{2}BPS$-immersion of an M5-brane worldvolume {\rm (Def. \ref{M5SuperImmersion})}, the Bianchi identity
  $$
    \mathrm{d}
    H^s_3
    \;=\;
    \phi^\ast
    G_4^s
  $$
  for super flux-densities of the form \eqref{Super3FluxDensity} and \eqref{SuperFluxDensitiesOf11dSugra}
  is equivalent to the following set of conditions:
  \begin{itemize}[
    topsep=1pt,
    itemsep=2pt,
    leftmargin=.8cm  
  ]
  \item[\bf (i)]
  the ordinary but torsion-ful 
  Bianchi identity
  \begin{equation}
    \label{BosonicBianchiComponentOnSuperspace}
    \tfrac{1}{3!}
    \nabla_{[a_1}
    (H_3)_{a_2 a_3 a_4]}
    \,+\,
    \tfrac{1}{2!}
    (H_3)_{[a_1 a_2 |b|}
    \big(
      \overline{\tau_{a_3}}
      \,\gamma^b\,
      \tau_{a_4]}
    \big)
    \;=\;
    \tfrac{1}{4!}
    (\phi^\ast G_4)_{[a_1 \cdots a_4]}\;;
  \end{equation}
  \item[\bf (ii)]
  a rheonomy equation\footnote{
    Solving equation \eqref{RheonomyForH3} determines the values of the super-flux $H_3$ on the super-manifold $\Sigma$ from its value on the bosonic body $\bosonic{\Sigma}$, hence its ``flow'' across superspace (whence ``rheonomy'' as in \cite[\S III.3.3]{CDF91}).
  } for $H_3$: 
  \begin{equation}
    \label{RheonomyForH3}
    \tfrac{1}{3!}
    \psi^\alpha 
    \nabla_\alpha
    (H_3)_{a_1 a_2 a_3}
    \;=\;
    (H_3)_{[a_1 a_2 |b|}
    \big(
      \overline{\tau_{a_3]}}
      \,\gamma^b
      (1 + \slashed{\tilde H}_3)
      \psi
    \big)
    \;+\;
    \big(
      \overline{\tau_{[a_1}}
      \,
      \gamma_{a_2 a_3]}
      (\psi + \slashed{\tilde H}_3 \psi)
   \big)\,;
 \end{equation}
  
  \item[\bf (iii)]
  the 3-flux density $H_3$ is the following function of the transverse fermionic immersion component $\tilde H_3$ \eqref{SettingTheSlashHComponent}:
  \begin{equation}
  \label{H3AsFunctionOfTildeH3}
  (H_3)_{a b c}
  \;=\;
  \tfrac
    {
      -4
    }
    {
    \mathclap{\phantom{
      \vert^{\vert}
    }}
    1 
    -
    \sfrac{2}{3}
    \,
    \mathrm{tr}(\K \cdot \K)
  }
  \Big(
    \delta
      ^{ a  }
      _{ a' }
    +
    2
    \,
    (\K)
      ^{ a' }
      _{ a  }
  \Big)
  (\tilde H_3)_{a' b c}
  \,,
  \end{equation}
  where this is well-defined, hence wherever $\mathrm{tr}(\K \cdot \K)$ \eqref{KSquareIsProportionalToIdentity} satisfies the non-criticality condition \eqref{NonCriticalityCondition}.
  \end{itemize}
\end{proposition}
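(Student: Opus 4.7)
The strategy is to expand both sides of $\mathrm{d}H_3^s = \phi^\ast G_4^s$ in the induced worldvolume super-coframe $(e^a,\psi^\alpha)$ on $\Sigma$ and to match components of each bi-degree $(e^j,\psi^k)$ with $j+k=4$. The inputs are the super-immersion data $\phi^\ast\Psi = \psi + \slashed{\tilde H}_3\psi + \tau_a e^a$ from Rem.~\ref{TransversalFermionicShear}, the tangential pullback $\phi^\ast E^a = e^a$ with $\phi^\ast \overline{P}E = 0$ from \eqref{BPSImmersionInComponents}, and the worldvolume torsion formula \eqref{WorldvolumeTorsion}. Because $\phi^\ast\overline{P}E=0$ and $H_3^s$ carries only bosonic frame components, only the bi-degrees $(e^4)$, $(e^3\psi)$ and $(e^2\psi^2)$ contribute on either side.

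I would first treat the $(e^4)$-component. Writing $\mathrm{d}H_3^s = \tfrac{1}{3!}\mathrm{d}(H_3)_{abc}\, e^a e^b e^c + \tfrac{1}{2}(H_3)_{abc}\,\mathrm{d}e^a\, e^b e^c$, the covariant $e$-derivative $\nabla_d(H_3)_{abc}$ in the first summand combines with the $(e^2)$-piece $(\,\overline{\tau_{b_1}}\gamma^a\tau_{b_2})e^{b_1}e^{b_2}$ of $\mathrm{d}e^a$ from \eqref{WorldvolumeTorsion} in the second summand, and matching to the $(e^4)$-piece $\tfrac{1}{4!}(\phi^\ast G_4)_{[a_1\cdots a_4]}$ of the right-hand side yields \eqref{BosonicBianchiComponentOnSuperspace}. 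The $(e^3\psi)$-component picks up, on the left, the rheonomy term $\psi^\alpha\nabla_\alpha(H_3)_{abc}$ together with the $(e\psi)$-cross piece $-2(\,\overline{\tau_b}\gamma^a\slashed{\tilde H}_3\psi)e^b$ of $\mathrm{d}e^a$ and, on the right, the $(e\psi)$-cross term of $\tfrac{1}{2}(\,\overline{\phi^\ast\Psi}\Gamma_{ab}\phi^\ast\Psi)e^a e^b$, namely $(\,\overline{(\psi+\slashed{\tilde H}_3\psi)}\Gamma_{ab}\tau_c)e^c$. Antisymmetrization and matching produces the rheonomy equation \eqref{RheonomyForH3}.

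The substantive part is the $(e^2\psi^2)$-component, from which \eqref{H3AsFunctionOfTildeH3} follows. Here $\mathrm{d}H_3^s$ contributes only through the $(\psi^2)$-piece of $\mathrm{d}e^a$ in \eqref{WorldvolumeTorsion}, giving
\[
\tfrac{1}{2}(H_3)_{abc}\bigl(\delta^a{}_{a'}-2(\K)^a{}_{a'}\bigr)\bigl(\,\overline{\psi}\,\gamma^{a'}\psi\bigr)\,e^b e^c ,
\]
while $\phi^\ast G_4^0 = \tfrac{1}{2}(\,\overline{\phi^\ast\Psi}\Gamma_{ab}\phi^\ast\Psi)e^a e^b$ contributes
\[
\tfrac{1}{2}\bigl(\,\overline{\psi}\,(1-\slashed{\tilde H}_3)\,\gamma_{ab}\,(1+\slashed{\tilde H}_3)\,\psi\bigr)\,e^a e^b .
\]
Expanding $(1-\slashed{\tilde H}_3)\gamma_{ab}(1+\slashed{\tilde H}_3) = \gamma_{ab} + [\gamma_{ab},\slashed{\tilde H}_3] - \slashed{\tilde H}_3\gamma_{ab}\slashed{\tilde H}_3$, the bare $\gamma_{ab}$-term vanishes on $P(\mathbf{32})$ by \eqref{TangentialExpectationValueOf2IndexCliffordElementVanishes}; the commutator produces 2- and 4-index Clifford elements which, via 6d Hodge duality \eqref{HodgeDualityOf6dCliffordGenertors} and the contraction \eqref{ContractionOfTildeH3On2IndicesWith5IndexGamma}, dualize to 1-index terms with coefficients linear in $(\tilde H_3)_{a'ab}$; and the cubic $-\slashed{\tilde H}_3\gamma_{ab}\slashed{\tilde H}_3$ reduces by iterated use of \eqref{ConjugatingGammaWithH3} and the $\K$-squaring identities \eqref{TildeThreeSquaredOverSingleIndex}, \eqref{KSquareIsProportionalToIdentity} to a 1-index combination of the form $(\K)^{a'}{}_c(\tilde H_3)_{a'ab}\gamma^c$. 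Matching coefficients of the only non-vanishing bilinear $(\,\overline{\psi}\gamma^{a'}\psi)$ yields a linear equation of the schematic form
\[
(H_3)_{abc}\bigl(\delta^c{}_{c'}-2(\K)^c{}_{c'}\bigr) \;=\; -4\,(\tilde H_3)_{c'ab} ,
\]
and inverting $(\mathrm{id}-2\K)$ via Lem.~\ref{IdMinus2KIInvertible} on the non-critical locus \eqref{NonCriticalityCondition} produces \eqref{H3AsFunctionOfTildeH3}.

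The main obstacle is the Clifford-algebraic reduction at the $(e^2\psi^2)$-step just outlined: converting 2-, 4- and 5-index bilinears on $P(\mathbf{32})\simeq 2\cdot\mathbf{8}_+$ into the unique 1-index basis demands repeated application of 6d Hodge duality \eqref{HodgeDualityOf6dCliffordGenertors} together with the $\K$-squaring identities of \S\ref{SuperEmbeddingConstruction}; recognizing the resulting coefficient matrix as $-4\,(\mathrm{id}-2\K)^{-1}$ and verifying the precise $-4$ normalization is where a computer-algebra check (in the spirit of the verification accompanying \eqref{ConjugatingGammaWithH3}) is advisable.
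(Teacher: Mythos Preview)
Your overall decomposition into $(e^4)$, $(e^3\psi)$, $(e^2\psi^2)$ components matches the paper's approach, and your treatment of the first two components is essentially correct. However, at the crucial $(e^2\psi^2)$ step you have a sign error and a misconception that would derail the computation.

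First, the sign: by \eqref{AdjointnessOfCliffordBasisElements} one has $\overline{\Gamma_{a_1a_2a_3}} = (-1)^{3+3}\,\Gamma_{a_1a_2a_3} = +\Gamma_{a_1a_2a_3}$, so $\overline{\slashed{\tilde H}_3} = +\slashed{\tilde H}_3$, not $-\slashed{\tilde H}_3$. Hence the right-hand side bilinear is $\bigl(\,\overline{\psi}\,(1+\slashed{\tilde H}_3)\,\gamma_{ab}\,(1+\slashed{\tilde H}_3)\,\psi\bigr)$, producing the \emph{anticommutator} $\{\gamma_{ab},\slashed{\tilde H}_3\}$ rather than the commutator you wrote. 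This matters: the anticommutator yields only 1- and 5-index tangential Clifford elements, whereas your commutator would yield 3-index terms and would not collapse to the single 1-index bilinear you need.

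Second, and more importantly: the cubic term $\slashed{\tilde H}_3\gamma_{ab}\slashed{\tilde H}_3$ does not contribute at all. It carries an even number ($3{+}2{+}3=8$) of tangential Clifford generators and hence commutes with $P$ by \eqref{CommutativityOfGammaWithP}, so $\overline{P}\,\slashed{\tilde H}_3\gamma_{ab}\slashed{\tilde H}_3\, P = 0$ by exactly the same projector argument that kills the bare $\gamma_{ab}$ in \eqref{TangentialExpectationValueOf2IndexCliffordElementVanishes}. Your proposed reduction via \eqref{ConjugatingGammaWithH3} (which in any case concerns $\slashed{\tilde H}_3\gamma_a\slashed{\tilde H}_3$, not $\slashed{\tilde H}_3\gamma_{ab}\slashed{\tilde H}_3$) and the $\K$-squaring identities is not needed and would produce a spurious contribution.

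The paper's route is therefore shorter than you anticipate: after dropping both the bare and the cubic terms by the projector argument, only the cross terms $\bigl(\,\overline{\psi}\,\slashed{\tilde H}_3\gamma_{a_1a_2}\,\psi\bigr) + \bigl(\,\overline{\psi}\,\gamma_{a_1a_2}\slashed{\tilde H}_3\,\psi\bigr)$ remain. The Clifford expansion \eqref{GeneralCliffordProduct} gives $-2(\tilde H_3)_{a_1a_2 b}\gamma^b + \tfrac{1}{3}(\tilde H_3)^{b_1b_2b_3}\gamma_{a_1a_2 b_1b_2b_3}$; the 5-index term dualizes via \eqref{HodgeDualityOf6dCliffordGenertors} together with the self-duality of $\tilde H_3$ to another $-2(\tilde H_3)_{a_1a_2 c}\gamma^c$, yielding the clean $-4(\tilde H_3)_{a_1a_2 b}\bigl(\,\overline{\psi}\gamma^b\psi\bigr)$ directly. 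Matching against the left side and inverting $(\mathrm{id}-2\K)$ via Lem.~\ref{IdMinus2KIInvertible} then gives \eqref{H3AsFunctionOfTildeH3} with the correct coefficient, without recourse to computer algebra.
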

\begin{proof}
The fermionic co-frame components of the Bianchi identity are, at face value:
$$
  \def\arraystretch{1.8}
  \begin{array}{l}
    \mathrm{d}
    \big(
      \tfrac{1}{3!}
      (H_3)_{a_1 a_2 a_3}
      \,
      e^{a_1} \, e^{a_2} \, e^{a_3}
    \big)
    \,-\,
    \tfrac{1}{4!}
    (\phi^\ast G_4)_{
      a_1 \cdots a_4
    }
    e^{a_1} \cdots e^{a_4}
    \,-\,
    \tfrac{1}{2}
    \big(\,
      \overline{(\phi^\ast \Psi)}
      \gamma_{a_1 a_2}
      (\phi^\ast \Psi)
    \big)
    e^{a_1} \, e^{a_2}
    \;=\;
    0
    \\
    \;\Leftrightarrow\;
    \left\{\!\!
    \def\arraystretch{1.8}
    \begin{array}{l}
      \scalebox{.8}{
        \color{gray}
        $(\psi^0)$
      }
      \;
      \Big(
      \tfrac{1}{3!}
      \nabla_{a_1}
      (H_3)_{a_2 a_3 a_4}
      \,+\,
      \tfrac{1}{2}
      (H_3)_{a_1 a_2 b}
      \big(
        \overline{\tau_{a_3}}
        \,\gamma^b\,
        \tau_{a_4}
      \big)
      \,-\,
      \tfrac{1}{4!}
      (\phi^\ast G_4)_{a_1 \cdots a_4}
      \Big)
      e^{a_1} \cdots e^{a_4}
      \;=\;
      0
      \\
      \scalebox{.8}{
        \color{gray}
        $(\psi^1)$
      }
      \;
      \Big(
        \tfrac{1}{3!}
        \psi^\alpha 
        \nabla_\alpha
        (H_3)_{a_1 a_2 a_3}
        \,-\,
        (H_3)_{a_1 a_2 b}
        \big(
          \overline{\tau_{a_3}}
          \,\gamma^b
          (1 + \slashed{\tilde H}_3)
          \psi
        \big)
        \,-\,
        \big(
          \overline{\tau_{a_1}}
          \,
          \gamma_{a_2 a_3}
          (\psi + \slashed{\tilde H}_3 \psi)
       \big)
       \Big)
       e^{a_1}
       \,
       e^{a_2}
       \,
       e^{a_3}
       \;=\;
       0
      \\
      \scalebox{.8}{
        \color{gray}
        $(\psi^2)$
      }
      \;
    \Big(
    \tfrac{1}{2}
    (H_3)_{a_1 a_2 a_3}
    \big(
      \delta
        ^{a_3}
        _{a'_3}
      -2
      (\tilde H_3)_{a'_3 \, b_1 b_2}
      (\tilde H_3)^{a_3 \, b_1 b_2}
    \big)
    \big(\,
      \overline{\psi}
      \,\gamma^{a'_3}\,
      \psi
    \big)
    \,-\,
    \big(
      \overline{
      (
        \psi
        +
        \slashed{\tilde H}_3
        \psi
      )
      }
      \,
      \gamma_{a_1 a_2}
      \,
      (
        \psi
        +
        \slashed{\tilde H}_3
        \psi
      )
      \big)
    \Big)
    e^{a_1}
    e^{a_2}
    \;=\;
    0
    \end{array}
    \right.
  \end{array}
$$
where we used the expression \eqref{WorldvolumeTorsion} for the worldvolume torsion tensor,
\begin{itemize}[
  leftmargin=.6cm,
  itemsep=2pt
]
\item with which the $(\psi^0)$-component is obvious,
\item 
while in the $(\psi^1)$-component we also used
\eqref{transversalFermionicComponent},
\item and in the $(\psi^2)$-component 
we in addition used
\eqref{WorldvolumeTorsion}.
\end{itemize}

\vspace{1mm}
\noindent
Hence it just remains to further unwind:

\smallskip

\noindent
\fbox{
  \bf
  The 
  Flux Bianchi at 
  $(\psi^2)$
}
for which we crucially use that $\tilde H_3$ is self-dual \eqref{SettingTheSlashHComponent}:

\vspace{1mm} 
\noindent For the summand on the right notice that
\begin{equation}
  \label{RightSideOfPsi2ComponentOfFluxBianchi}
  \def\arraystretch{1.8}
  \begin{array}{lll}
    &
      \tfrac{1}{2}
      \big(
        \overline{
        (\psi + \slashed{\tilde H}_3 \psi)
        }
        \,
        \gamma_{a_1 a_2}
        \,
        (\psi + \slashed{\tilde H}_3 \psi)
      \big)
      e^{a_1} e^{a_2}
      &
    \\
    &
    =\;
    \tfrac{1}{2}
    \Big(
    \big(
      \overline{\psi}
      \,
      \slashed{\tilde H}_3
      \,
      \gamma_{a_1 a_2}
      \,
      \psi
    \big)
    +
    \big(
      \overline{
        \psi
      }
      \,
      \gamma_{a_1 a_2}
      \,
      \slashed{\tilde H}_3
      \,
      \psi
    \big)
    \Big)
    e^{a_1} e^{a_2}
    &
    \proofstep{
      by
      \eqref{TangentialExpectationValueOf2IndexCliffordElementVanishes}
    }
    \\
   & =\;
    \frac{1}{2}
    \Big(
    \overline{
      \psi
    }
    \big(
      -2
      (\tilde H_3)_{a_1 a_2 \,  b}
      \gamma^{b}
      +
      \tfrac{1}{3}
      (\tilde H_3)^{b_1 b_2 b_3}
      \gamma_{
        a_1 a_2 
        \,
        b_1 b_2 b_3
      }
    \big)
    \psi
    \Big)
    \,
    e^{a_1}
    e^{a_2}
    &
    \proofstep{
      by
      \eqref{GeneralCliffordProduct}
    }
    \\
    & 
    =\;
    -4
    (\tilde H_3)_{a_1 a_2 \, b}
    \,
    \tfrac{1}{2}
    \big(\,
      \overline{
        \psi
      }
      \,
      \gamma^b
      \,
      \psi
    \big)
    \,
    e^{a_1} e^{a_2}
    \,,
  \end{array}
\end{equation}
where in the last step we used
$$
  \def\arraystretch{1.4}
  \begin{array}{lll}
    (\tilde H_3)^{b_1 b_2 b_3}
    \Gamma_{a_1 a_2 \, b_1 b_2 b_3}
    &
    \;=\;
    (\tilde H_3)^{b_1 b_2 b_3}
    \epsilon_{a_1 a_2 \, b_1 b_2 b_3\, c}
    \gamma^c
    &
    \proofstep{
      by 
      \eqref{HodgeDualityOf6dCliffordGenertors}
    }
    \\
    & \;=\;
    -
    3!
    (\tilde H_3)_{a_1 a_2 \, c}
    \,
    \gamma^c
    &
    \proofstep{
      by 
      \eqref{CoefficientsOf3Index6dGammaAreSelfDual}.
    }
  \end{array}
$$
This way the Bianchi identity at $(\psi^2)$ is equivalent to \eqref{H3AsFunctionOfTildeH3}, as claimed (cf. \cite[(7)]{HoweSezginWest97}\cite[(5.80-81)]{Sorokin00}):
\begin{equation}
  \def\arraystretch{1.8}
  \begin{array}{lll}
  &
  \big(
    \delta^{a'}_a 
    -2
    \,
    (\K)
      ^{a'}
      _a
  \big)
  (H_3)_{a' b c}
  \;=\;
  -4
  \,
  (\tilde H_3)_{a b c}
  &
  \proofstep{
    by \eqref{RightSideOfPsi2ComponentOfFluxBianchi}
  }
  \\
  \Leftrightarrow
  &
  (H_3)_{a b c}
  \;=\;
  \frac
    {
      -4
    }
    {
    \mathclap{\phantom{
      \vert^{\vert}
    }}
    1 
    -
    \sfrac{2}{3}
    \,
    \mathrm{tr}(\K \cdot \K)
  }
  \big(
    \delta
      ^{ a  }
      _{ a' }
    +
    2
    \,
    (\K)
      ^{ a' }
      _{ a  }
  \big)
  (\tilde H_3)_{a' b c}
  &
  \proofstep{
    by
    \eqref{InverseOfIdMinus2K}.
  }
  \end{array}
\end{equation}

\vspace{-15pt}
\end{proof}

\begin{remark}[\bf Non/Self-duality of the 3-flux density]
  \label{NonSelfDualityOf3FluxDensity}
  Given an M5 super-immersion $\phi$ \eqref{M5SuperImmersion}, then the decomposition of its 3-flux density $H_3$ \eqref{H3AsFunctionOfTildeH3} 
  into a self dual and an anti self-dual summand is (cf. \cite[(18-19)]{HoweSezginWest97})
  $$
    (H_3)_{a b c}
    \;=\;
  \frac
    {
      -4
    }
    {
    \mathclap{\phantom{
      \vert^{\vert}
    }}
    1 
    -
    \sfrac{2}{3}
    \,
    \mathrm{tr}(\K \cdot \K)
  }
    \Big(
      \underbrace{
        \tilde H_{a b c}
      }_{
        \mathclap{
        \scalebox{.7}{
          \rm
          \color{darkblue}
          \bf
          self-dual
        }
        }
      }
      \;+\;
      2 
      \underbrace{
        (\K)_a^{a'}
        \tilde H_{a b c}
      }_{
        \scalebox{.7}{
          \rm
          \color{darkblue}
          \bf
          anti self-dual
        }
      }
    \Big)
    \,.
  $$
  Namely, we already know of course \eqref{SettingTheSlashHComponent} that the first summand is self-dual, but with \eqref{H3AsFunctionOfTildeH3} it follows that the second summand is skew-symmetric in its indices, whence its anti self-duality follows by \eqref{AntiSelfDualityOfCubicTildeH}. In summary this means that (cf. \cite[p. 6]{HoweSezgin97b}\cite{HoweSezginWest97}\cite[p. 92]{Sorokin00})
  the 3-flux density $H_3$ on the M5-brane:
  \begin{itemize}[
    leftmargin=.6cm,
    topsep=1pt,
    itemsep=2pt
  ]
  \item is in general not self-dual, 
  \item albeit determined by its self-dual part, 
  \item with its anti self-dual part a higher order ($\geq$ cubic) function of the self-dual part,
    away from criticality \eqref{NonCriticalityCondition},
  
  \item so that $H_3$ is (only) asymptotically self-dual as its absolute value goes to zero (i.e., in the small field limit).
  \end{itemize}

\smallskip

\noindent Note, therefore, that the extensive attention that self-duality constraints on flux densities in 6d have found in the literature (e.g., \cite{HenneauxTeitelboim88}\cite{Witten97b}\cite{BelovMoore06}\cite{Sen20}\cite{Lambert19b}\cite{AndrianopoliEtAl22}\cite{RSvdW21}) does not actually apply to the full M5-brane worldvolume theory (where the would-be duality condition is at face value more complicated and yet all implied by the $\sfrac{1}{2}$BPS super-immersion condition), but only asymptotically to its small field limit (which in the further limit of decoupled gravity is famously expected to be a $\mathcal{N} =(2,0)$ $D=6$ CFT, e.g. \cite{SaemannSchmidt18}).

\end{remark}

\begin{remark}[\bf Recovering the ordinary worldvolume Bianchi identity]
  \label{RecoveringTheOrdinaryWorldvolumeBianci}
  We see from \eqref{BosonicBianchiComponentOnSuperspace} that the expected Bianchi identity $\mathrm{d} \, H_3 \,=\,\phi^\ast G_4$ \eqref{TwistedBianchiIdentityInIntroduction} 
  on the ordinary bosonic worldvolume $\bosonic{\Sigma} \xhookrightarrow{\;} \Sigma$ is recovered (only) if $\tau = 0$, which is indeed implied by the $\sfrac{1}{2}$BPS immersion condition, cf. Rem. \ref{TransversalFermionicShear} above 
  (more implicitly assumed for instance in \cite[p. 4]{HoweLambertWest98}).
\end{remark}

With the previous two remarks we have arrived at the desired conclusion that the M5's worldvolume Bianchi identity {\it when promoted to super-space} subsumes both the ordinary Bianchi identity on $H_3$ as well as its (non/self-duality) equation of motion. As explained in \S\ref{IntroductionAndOverview}, this suggests that to obtain the completed field content on the M5-brane we are to flux-quantize this super-flux $H_3^s$ on the super-worldvolume.

\smallskip 
In the following \S\ref{FluxQuantizationOnM5Branes}, we discuss some consequences of such flux-quantization.

%%%%%%%%%%%%%%%%%%%%%%%%%%%%%%%%%%%%%
\section{Flux quantization on M5-branes}
\label{FluxQuantizationOnM5Branes}
%%%%%%%%%%%%%%%%%%%%%%%%%%%%%%%%%%%%%

Having established (with the result in \S\ref{SuperEmbeddingConstruction}) that the flux quantization \eqref{M5BraneFluxQuantizationInIntroduction} of the super-worldvolume 
super-flux on the M5-brane  provides a full global completion of the M5's on-shell field content, here we discuss some properties of the resulting flux quantized fields. 

\begin{itemize}[
  leftmargin=.4cm,
  topsep=1pt,
  itemsep=2pt
]
\item[--]
\S\ref{GaugePotentialsOnM5Branes} establishes the backwards-compatibility of the flux quantized fields, locally, to the traditional formulas for gauge potentials,
\item[--] \S\ref{ResultingSolitons} analyzes the resulting worldvolume charges of singular strings and solitonic membranes inside the M5, under the assumption that 
flux quantization is in co-Homotopy theory.
\end{itemize}

%%%%%%%%%%%%%%%%%%%%%%%%%%%%%%%%%%%%%
\subsection{Gauge potentials on the M5-brane}
\label{GaugePotentialsOnM5Branes}
%%%%%%%%%%%%%%%%%%%%%%%%%%%%%%%%%%%%%

The construction of flux-quantized fields in \eqref{GaugePotentialAsHomotopy} not only constrains the flux densities to reflect quantized charges, but does so by constructing 
the gauge potentials that exhibit the corresponding higher gauge field.
Globally the nature of the gauge potentials crucially depends on the chosen flux quantization law $\mathcal{A}$; but locally, on a good open cover $\CoverOf{X}$
by contractible charts $U_i \xhookrightarrow{\iota_i} X$,
the charges necessarily vanish (due to the assumption that the classifying space $\mathcal{A}$ is connected, hence with an essentially unique basepoint
$0_{\mathcal{A}} : \ast \xrightarrow{\;} \mathcal{A}$)
and the nature of the gauge potentials becomes independent of the choice of $\mathcal{A}$, as it should be:
\begin{equation}
  \label{OnChargGaugePotentialDiagram}
  \begin{tikzcd}[column sep=huge]
    &[+10pt]
    &[-10pt]
    \ast
    \ar[
      d,
      "{
        0_{\mathcal{A}}
      }"
    ]
    \\
    \mathllap{
      \scalebox{.7}{
        \color{darkblue}
        \bf
        \def\arraystretch{.9}
        \begin{tabular}{c}
          Contractible
          \\
          chart
        \end{tabular}
      }
    }
    U_i
    \ar[
      d,
      hook,
      "{ \iota_i }"
    ]
    \ar[
      urr,
      "{
        \scalebox{.7}{
          \color{darkgreen}
          \bf
          vanishing charge
        }
      }"{sloped, pos=.3},
      "{
        \sim
      }"{sloped, description},
      "{
        \scalebox{.6}{$
        \mathrm{hmtp}
        $}
      }"{swap, sloped},
      "{\ }"{swap, pos=.2, name=s}
    ]
    &&
    \mathcal{A}
    \ar[
      d,
      "{
        \mathbf{ch}_{\mathcal{A}}
      }"
    ]
    \\
    \mathllap{
      \scalebox{.7}{
        \color{darkblue}
        \bf
        \def\arraystretch{.9}
        \begin{tabular}{c}
          Domain super-space
          \\
          (worldvolume)
        \end{tabular}
      }
    }
    \Sigma
    \ar[
      r,
      "{
        \vec F
      }"{name=t},
      "{
        \scalebox{.7}{
          \color{darkgreen}
          \bf
          flux density
        }
      }"{swap}
    ]
    &
    \Omega^1_{\mathrm{dR}}
    \big(
      -
      ;\,
      \mathfrak{a}
    \big)_{\mathrm{clsd}}
    \ar[
      r,
      "{
        \eta^{\scalebox{.7}{$\shape$}}
      }"
    ]
    &
    \shape
    \,
    \Omega^1_{\mathrm{dR}}
    \big(
      -
      ;\,
      \mathfrak{a}
    \big)_{\mathrm{clsd}}    
    \ar[
      from=s,
      to=t,
      Rightarrow,
      "{
        \GaugePotential{A}
        \mathrlap{
          \scalebox{.7}{
            \color{orangeii}
            \bf
            Local gauge potentials
          }
        }
      }"{description}
    ]
  \end{tikzcd}
\end{equation}

\vspace{1mm} 
\noindent By definition of the moduli object on the bottom right (see \cite[p. 26]{SS24-Flux}\cite[Ex. 2.55]{GSS24-SuGra} following \cite[Def. 9.1]{FSS23Char}), 
the homotopy filling the diagram \eqref{OnChargGaugePotentialDiagram} is a concordance (deformation) of closed $\mathfrak{a}$-valued differential 
forms from zero to the given flux densities $\vec F$:
\vspace{-3mm} 
\begin{equation}
  \label{NullConcordance}
  \GaugePotential{A}
  \;\in\;
  \Omega^1_{\mathrm{dR}}
  \big(
    U_i
    \times [0,1]
    ;\,
    \mathfrak{a}
  \big)_{\mathrm{clsd}}
  \quad 
  \mbox{s.t.}
  \quad 
  \left\{\!\!
  \def\arraystretch{1.2}
  \begin{array}{l}
    \iota^\ast_0(\GaugePotential{A}\,)
    \;=\;
    0
    \\
    \iota^\ast_1(\GaugePotential{A}\,)
    \;=\;
    \vec F
  \end{array}
  \right.
  \in
  \Omega^1_{\mathrm{dR}}\big(
    U_i
    ;\,
    \mathfrak{a}
  \big)_{\mathrm{clsd}}
  \,.
\end{equation}
This is just the notion of coboundary in $\mathfrak{a}$-valued de Rham cohomology (\cite[Def. 6.3]{FSS23Char}), in fact it reduces to ordinary de Rham coboundaries in the abelian case where $\mathfrak{a}$ is the $L_\infty$-algebra which is $\mathbb{R}$ concentrated in some degree (\cite[Prop. 6.4]{FSS23Char}). When $\mathfrak{a}$ is not abelian, as in the case of interest where $\mathfrak{a} = \mathfrak{l}_{{}_{S^4}} S^7$ \eqref{WhiteheadOfHHopfFib}, then a little work is needed to extract more concise coboundary data underlying the concordances \eqref{NullConcordance}. This is what we do now for the worldvolume fields on the M5-brane (Prop. \ref{BFieldGaugePotentialFromBoncordance} below), showing how it reproduces the traditional local formulas and thereby providing these with a rule for their global completion.

\medskip

\noindent
{\bf Deriving the form of local gauge potentials on the M5-brane.}
For the case of 11d supergravity, we had shown in
\cite[Props. 1.1, 2.28]{GSS24-SuGra} how the traditional formulas for the local C-field gauge potentials are indeed reproduced by the homotopy theory in \eqref{OnChargGaugePotentialDiagram} and as such become amenable to global completion. Here we extend this analysis to include the B-field on M5-brane worldvolumes, showing how it reproduces the traditional formulas \eqref{TraditionalFormulaForBFieldGaugePotential}.
To this end:

\noindent
{\bf (i)}
Recall the fiberwise Stokes Theorem (e.g. \cite[Lem. 6.1]{FSS23Char}\cite[Lem. 6.1]{FSS23Char}) for differential forms $\widehat{F}$ on a cylinder manifold:
\vspace{-2mm} 
\begin{equation}
    \label{FiberwiseStokes}
    \mathrm{d}
    \int_{[0,1]}
    \widehat F
    \;\;=\;\;
    \iota_1^\ast \widehat F
    \,-\,
    \iota_0^\ast \widehat F
    \,-\,
    \int_{[0,1]}
    \mathrm{d}
    \widehat{F}
    \,,
    \hspace{.7cm}
    \mbox{on}
    \hspace{.5cm}
    \begin{tikzcd}
      X
      \ar[
        r, 
        hook,
        shift left=3pt,
        "{\iota_0}"
      ]
      \ar[
        r, 
        hook,
        shift right=3pt,
        "{\iota_1}"{swap}
      ]
      &
      X \times [0,1]
      \,.
    \end{tikzcd}
\end{equation}

\smallskip

\noindent
{\bf (ii)}
Recall from \eqref{DiagrammaticEOMs}
that closed $\mathfrak{l}_{S^4} S^7$-valued differential forms
are given by
\begin{equation}
  \label{ClosedQuatHopfFibValuedForms}
  \begin{tikzcd}[row sep=small]
    \Omega^1_{\mathrm{dR}}\big(
      -;
      \,
      \mathfrak{l}_{{}_{S^4}} S^7
    \big)_{\mathrm{clsd}}
    \ar[
      d,
      ->>
    ]
    \ar[
      r,
      equals
    ]
    &
    \scalebox{.9}{$
    \left\{
    \def\arraystretch{1.2}
    \begin{array}{l}
      \color{purple}
      H_3 
        \,\in\, 
      \Omega^1_{\mathrm{dR}}(-; b^2 \mathbb{R})
      \\
      G_7 
        \,\in\, 
      \Omega^1_{\mathrm{dR}}(-; b^6 \mathbb{R})
      \\
      G_4 
        \,\in\, 
      \Omega^1_{\mathrm{dR}}(-; b^3 \mathbb{R})
    \end{array}
    \middle\vert
    \def\arraystretch{1.2}
    \begin{array}{l}
      \color{purple}
      \mathrm{d}
      \,
      H_3 
        \;=\;
      G_4
      \\
      \mathrm{d}
      \, G_7 \;=\;
      \tfrac{1}{2} G_4 \, G_4
      \\
      \mathrm{d} \, G_4
      \;=\;
      0
    \end{array}
    \right\}
    $}
    \ar[
      d,
      shift right=42pt
    ]
    \\
    \Omega^1_{\mathrm{dR}}\big(
      -;
      \,
      \mathfrak{l}S^4
    \big)_{\mathrm{clsd}}
    \ar[
      r,
      equals
    ]
    &
    \scalebox{.9}{$
    \left\{
    \def\arraystretch{1.2}
    \begin{array}{l}
      G_7 
        \,\in\, 
      \Omega^1_{\mathrm{dR}}(-; b^6 \mathbb{R})
      \\
      G_4 
        \,\in\, 
      \Omega^1_{\mathrm{dR}}(-; b^3 \mathbb{R})
    \end{array}
    \middle\vert
    \def\arraystretch{1.2}
    \begin{array}{l}
      \mathrm{d}
      \, G_7 \;=\;
      \tfrac{1}{2} G_4 \, G_4
      \\
      \mathrm{d} \, G_4
      \;=\;
      0
    \end{array}
    \right\}
    $}
  \end{tikzcd}
\end{equation}

%\smallskip

\noindent
{\bf (iii)}
Observe that a null-concordance \eqref{NullConcordance}
of such data, hence a closed $\mathfrak{l}_{s^4} S^7$-valued differential form on the cylinder manifold $U_i \times [0,1]$ which on one boundary pulls back to the given form data and on the other boundary to zero, is:
\vspace{-2mm} 
\begin{equation}
  \label{NullConcordanceOfFluxDensities}
    \def\arraystretch{1.5}
    \def\arraycolsep{2pt}
    \begin{array}{ccl}
      \big(
        \widehat{G}_4
        ,\,
        \widehat{G}_7
        ,\,
        \widehat{H}_3
      \big)
      \;\in\;
      \Omega^1_{\mathrm{dR}}\big(
        U_i
        \times
        [0,1]
        ;\,
        \mathfrak{l}_{S^4}S^7
      \big)_{\mathrm{clsd}}
    \end{array}
    \hspace{.4cm}
    \mbox{s.t.}
    \hspace{.4cm}
    \left\{
    \def\arraystretch{1.5}
    \def\arraycolsep{2pt}
    \begin{array}{l}
      \iota_0^\ast 
      \big(
        \widehat{G}_4
        ,
        \widehat{G}_7
        ,
        \widehat{H_3}
      \big)
      \;=\;
      0
      \\
      \iota_1^\ast 
      \big(
        \widehat{G}_4
        ,
        \widehat{G}_7
        ,
        \widehat{H_3}
      \big)
      \;=\;
      (G_4, G_7, H_3)
      \,.
    \end{array}
    \right.
\end{equation}

Similarly, given a pair $\big(\widehat{G}_4, \widehat{G}_7, \widehat{H}_3\big)$, $\big(\widehat{G}'_4, \widehat{G}'_7, \widehat{H}'_3\big)$ of such null-concordances for the same $(G_4, G_7, H_3)$, a {\it concordance-of-concordances} between them is 
\vspace{-2mm} 
\begin{equation}
  \label{ConcordanceOfConcordances}
  \big(
    \doublehat{G}_4
    ,
    \doublehat{G}_7
    ,
    \doublehat{H}_3
  \big)
  \;\in\;
  \Omega^1_{\mathrm{dR}}\big(
    U_i 
      \times
    [0,1]_t
      \times
    [0,1]_s
    ;\,
    \mathfrak{l}_{S^4} S^7
  \big)_{\mathrm{clsd}}
  \hspace{.4cm}
  \mbox{s.t.}
  \hspace{.4cm}
  \left\{
  \def\arraystretch{1.5}
  \begin{array}{l}
    \iota_{s=0}^\ast
    \big(
      \doublehat G_4
      ,
      \doublehat G_7
      ,
      \doublehat H_3
    \big)
    \;=\;
    \big(
      \widehat G_4
      ,
      \widehat G_7
      ,
      \widehat H_3
    \big)
    \\
    \iota_{s=1}^\ast
    \big(
      \doublehat G_4
      ,
      \doublehat G_7
      ,
      \doublehat H_3
    \big)
    \;=\;
    \big(
      \widehat G'_4
      ,
      \widehat G'_7
      ,
      \widehat H'_3
    \big)
    \\
    \iota_{t=0}^\ast
    \big(
      \doublehat G_4
      ,
      \doublehat G_7
      ,
      \doublehat H_3
    \big)
    \;=\;
    0
    \\
    \iota_{t=1}^\ast
    \big(
      \doublehat G_4
      ,
      \doublehat G_7
      ,
      \doublehat H_3
    \big)
    \;=\;
    \mathrm{pr}^\ast_{U_i}
    \big(
      G_4
      ,
      G_7
      ,
      H_3
    \big)
    \,,
  \end{array}
  \right.
\end{equation}
where in the last line $\mathrm{pr}_{U_i} : U_i \times [0,1]_s \twoheadrightarrow U_i$ is the canonical  projection.

\medskip

\begin{proposition}[\bf B- \& C-field gauge potentials are local 
$\mathfrak{l}_{{}_{S^4}}S^7$-valued de Rham null coboundaries]
\label{BFieldGaugePotentialFromBoncordance}
Given flux densities $(G_4, G_7, H_3) \,\in\, \Omega^1_{\mathrm{dR}}\big(U_i;\, \mathfrak{l}_{{}_{S^4}}S^7\big)_{\mathrm{clsd}}$ \eqref{ClosedQuatHopfFibValuedForms},

\noindent {\bf (i)} there is a natural surjection from their null concordances
\eqref{NullConcordanceOfFluxDensities}
to triples of gauge potentials as shown here: 
\smallskip 
\begin{equation}
  \label{GaugePotentialsFromNullConcordances}
  \hspace{-2mm} 
 \left\{\!\!\!\!\!
 \adjustbox{raise=4pt}{
  \begin{tikzcd}[
    column sep=27pt
  ]
    &&
    \ast
    \ar[
      dr,
      bend left=15
    ]
    &[-15pt]
    \\
    U_i
    \ar[
      rr,
      "{
        (G_4,\, G_7,\, H_3)
      }"{swap},
      "{\ }"{name=t}
    ]
    \ar[
      urr,
      bend left=15,
      "{\ }"{swap,name=s}
    ]
    \ar[
      from=s,
      to=t,
      Rightarrow,
      dashed, 
      darkorange,
      "{
        (
          \widehat{G}_4,
          \widehat{G}_7,
          \widehat{H}_3
        )
      }"{pos=.0}
    ]
    &&
    \Omega^1_{\mathrm{dR}}\big(
      -;\,
      \mathfrak{l}_{{}_{S^4}}S^7
    \big)_{\!\mathrm{clsd}}
    \ar[
      r,
      shorten <= -5pt,
      "{
        \eta^{\,\scalebox{.55}{$\shape$}}
      }"{swap, pos=.3}
    ]
    &
    \shape
    \big(
    \Omega^1_{\mathrm{dR}}\big(
      -;\,
      \mathfrak{l}_{{}_{S^4}}S^7
    \big)_{\!\mathrm{clsd}}
    \big)
  \end{tikzcd}
  }
  \!\!\!\!\!\!\right\}
  \; 
  \twoheadrightarrow
  \;
  \left\{\!
  \def\arraystretch{1.3}
  \def\arraycolsep{1pt}
  \begin{array}{rcl}
    C_3 
      &\in& 
    \Omega^3_{\mathrm{dR}}(U_i)
    \\
    C_6 
      &\in&
    \Omega^3_{\mathrm{dR}}(U_i)
    \\
    B_2 
      &\in&
    \Omega^2_{\mathrm{dR}}(U_i)
  \end{array}
  \;
  \middle\vert
  \;
  \def\arraystretch{1.3}
  \def\arraycolsep{1pt}
  \begin{array}{rcl}
     \mathrm{d}
     \,
     C_3
     &=&
     G_4
     \\
     \mathrm{d}\, C_6
     &=&
     G_7 - 
     \tfrac{1}{2}
     C_3 \, G_4
     \\
     \mathrm{d}\, B_2
     &=&
     H_3 - C_3
  \end{array}
 \! \right\}
\end{equation}

\smallskip 
\noindent {\bf (ii)} This surjection takes concordances-of-concordances \eqref{ConcordanceOfConcordances} to gauge equivalences of gauge potentials of the following form
\begin{equation}
  \label{GaugeEquivalencesOfGaugePotentials}
  (C_3,\, C_6,\, B_2)
  \;\;
  \sim
  \;\;
  (C'_3,\, C'_6,\, B'_2)
  \hspace{.6cm}
  \Leftrightarrow
  \hspace{.6cm}
  \exists
  \left.
  \def\arraystretch{1.2}
  \begin{array}{l}
    C_2 \;\in\;
    \Omega^2_{\mathrm{dR}}(U_i)
    \\
    C_5
    \;\in\;
    \Omega^5_{\mathrm{dR}}(U_i)
    \\
    B_1
    \;\in\;
    \Omega^1_{\mathrm{dR}}(U_i)
  \end{array}
  \!\!\! \right\}
  \;\;
  \scalebox{.9}{\rm such that}
  \;\;
  \left\{\!\!\!
  \def\arraystretch{1.2}
  \begin{array}{l}
    \mathrm{d}
    \,
    C_2 \;=\;
    C'_3 - C_3
    \\
    \mathrm{d}
    \,
    C_5
    \;=\;
    C'_6 - C_6 - \tfrac{1}{2}
    C'_3 \, C_3
    \\
    \mathrm{d}
    \,
    B_1
    \;=\;
    B'_2 - B_2
    +
    C_2\,.
  \end{array}
  \right.
\end{equation}

\end{proposition}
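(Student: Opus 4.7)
The strategy is fiberwise integration along the concordance parameter, corrected by a quadratic Chern--Simons term to accommodate the non-abelian Bianchi identity for $\widehat{G}_7$. First decompose the null concordance via the product structure of $U_i \times [0,1]$,
\[
  \widehat{G}_4 = G_4^t + \mathrm{d}t \wedge \tilde{C}_3^t,
  \qquad
  \widehat{G}_7 = G_7^t + \mathrm{d}t \wedge \tilde{C}_6^t,
  \qquad
  \widehat{H}_3 = H_3^t + \mathrm{d}t \wedge \tilde{B}_2^t,
\]
the tilded forms being $\partial_t$-contractions. Closure of $(\widehat{G}_4, \widehat{G}_7, \widehat{H}_3)$ splits into parameter-independent identities $\mathrm{d}G_4^t = 0$, $\mathrm{d}G_7^t = \tfrac{1}{2}G_4^t\, G_4^t$, $\mathrm{d}H_3^t = G_4^t$, together with the transgression equations
\[
  \partial_t G_4^t = \mathrm{d}\tilde{C}_3^t,
  \qquad
  \partial_t G_7^t - \mathrm{d}\tilde{C}_6^t = \tilde{C}_3^t\, G_4^t,
  \qquad
  \partial_t H_3^t - \mathrm{d}\tilde{B}_2^t = \tilde{C}_3^t,
\]
with boundary conditions making all pieces vanish at $t=0$ and equal $(G_4, G_7, H_3)$ at $t=1$.

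Next, define the gauge potentials via
\[
  C_3 := \int_0^1 \tilde{C}_3^t\, \mathrm{d}t,
  \quad\;
  B_2 := \int_0^1 \tilde{B}_2^t\, \mathrm{d}t,
  \quad\;
  C_6 := \int_0^1 \tilde{C}_6^t\, \mathrm{d}t \,+\, \tfrac{1}{2}\int_0^1 C_3^t\, \tilde{C}_3^t\, \mathrm{d}t,
\]
where $C_3^t := \int_0^t \tilde{C}_3^s\, \mathrm{d}s$ satisfies $\mathrm{d}C_3^t = G_4^t$. The relations $\mathrm{d}C_3 = G_4$ and $\mathrm{d}B_2 = H_3 - C_3$ are immediate from fiberwise Stokes \eqref{FiberwiseStokes}, while the nontrivial identity $\mathrm{d}C_6 = G_7 - \tfrac{1}{2}C_3\, G_4$ reduces to the key algebraic lemma
\[
  \tilde{C}_3^t\, G_4^t
  \;=\;
  \tfrac{1}{2}\Big(\partial_t(C_3^t\, G_4^t) + \mathrm{d}(C_3^t\, \tilde{C}_3^t)\Big),
\]
which follows by Leibniz together with $\partial_t C_3^t = \tilde{C}_3^t$ and $\mathrm{d}\tilde{C}_3^t = \partial_t G_4^t$. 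Integrating over $t$ yields $\int_0^1 \tilde{C}_3^t\, G_4^t\, \mathrm{d}t = \tfrac{1}{2}C_3\, G_4 + \tfrac{1}{2}\mathrm{d}\Theta$ with $\Theta := \int_0^1 C_3^t\, \tilde{C}_3^t\, \mathrm{d}t$, so the $\tfrac{1}{2}\Theta$-correction in $C_6$ precisely cancels the spurious coboundary.

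For surjectivity of \eqref{GaugePotentialsFromNullConcordances}, any target triple $(C_3, C_6, B_2)$ satisfying the stated Bianchi identities is realized by the explicit straight-line null concordance
\[
  \widehat{G}_4 := tG_4 + \mathrm{d}t \wedge C_3,
  \quad
  \widehat{G}_7 := \tfrac{t^2}{2}C_3\, G_4 + t\,\mathrm{d}C_6 + \mathrm{d}t \wedge C_6,
  \quad
  \widehat{H}_3 := tH_3 + \mathrm{d}t \wedge B_2,
\]
whose closure and boundary conditions are routine; since $C_3$ has odd degree, $C_3^t\, \tilde{C}_3^t = t\, C_3\, C_3 = 0$, so $\Theta$ vanishes and the construction returns the given $(C_3, C_6, B_2)$. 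For statement {\bf (ii)}, the same prescription applied in the $s$-direction to a concordance-of-concordances $(\doublehat{G}_4, \doublehat{G}_7, \doublehat{H}_3)$ on $U_i \times [0,1]_t \times [0,1]_s$ yields a triple $(C_2, C_5, B_1)$ verifying \eqref{GaugeEquivalencesOfGaugePotentials}, the nonlinear correction $-\tfrac{1}{2}C_3'\, C_3$ in $\mathrm{d}C_5$ arising from the same Chern--Simons integration-by-parts.

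The main obstacle is the identification of the quadratic correction $\tfrac{1}{2}\Theta$ in $C_6$: without it, the naive fiber-integrated $C_6$ fails the desired equation by an exact but generally nonvanishing term. Its necessity reflects the nontrivial $L_\infty$-bracket $[v_3, v_3] = v_6$ in $\mathfrak{l}_{{}_{S^4}} S^7$ \eqref{WhiteheadOfHHopfFib}, and an analogous quadratic correction is what produces the $-\tfrac{1}{2}C_3'\, C_3$ term in the gauge-equivalence formula.
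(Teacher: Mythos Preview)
Your proof is correct and follows essentially the same approach as the paper: define the map by fiber integration along $[0,1]$ with a quadratic Chern--Simons correction for $C_6$, verify the relations via fiberwise Stokes, exhibit surjectivity by an explicit section, and iterate for part~(ii). Your explicit $t$-decomposition and ``key algebraic lemma'' unpack exactly what the paper writes compactly as $C_6 := \int_{[0,1]}\big(\widehat{G}_7 - \tfrac{1}{2}(\int_{[0,-]}\widehat{G}_4)\,\widehat{G}_4\big)$; once one tracks the sign from commuting $\mathrm{d}t$ past the odd-degree $C_3^t$, the two formulas for $C_6$ coincide. Minor differences: the paper's section for surjectivity takes $\widehat{G}_7 := t^2 G_7 + 2t\,\mathrm{d}t\, C_6$ rather than your $\tfrac{t^2}{2}C_3 G_4 + t\,\mathrm{d}C_6 + \mathrm{d}t\wedge C_6$ (both are valid closed lifts with the right boundary values), and the paper spells out the double fiber-integral formulas for $(C_2,C_5,B_1)$ and the verification of $\mathrm{d}B_1 = B_2' - B_2 + C_2$ explicitly, whereas you only indicate this step.
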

\begin{proof}
We take the map $\big(\widehat{G}_4, \widehat{G}_7, \widehat{H}_3\big) \,\mapsto\, \big(C_3, C_6, B_2  \big)$ to be given on the C-field flux densities as in \cite[(70)]{GSS24-SuGra} 
\begin{equation}
  \label{FromNullCoboundaryC3C6}
  C_3
  \;:=\;
  \int_{[0,1]}
  \widehat G_4
  ,\,
  \hspace{.8cm}
  C_6
  \;:=\;
  \int_{[0,1]}
  \Bigg(
    \widehat G_7
    -
    \tfrac{1}{2}
    \underbrace{
    \bigg(
      \int_{[0,-]}
      \widehat G_4
    \bigg)
    }_{
      \color{gray}
      \widehat{C}_3
    }
    \widehat G_4
  \Bigg)
\end{equation}
and analogously on the B-field flux density to be:
\begin{equation}
  \label{B2FromNullConcordance}
  B_2
  \;:=\;
  \int_{[0,1]} \widehat{H}_3
  \,.
\end{equation}
To see that this satisfies the relations on the right of \eqref{GaugePotentialsFromNullConcordances}: For the first two lines this is \cite[(70)]{GSS24-SuGra}, while for the last line we compute as follows:
$$
  \def\arraystretch{1.5}
  \begin{array}{lll}
  \mathrm{d}
  \,
  B_2
  &
  \;=\;
  \mathrm{d} \int_{[0,1]} \widehat{H}_3
  &
  \proofstep{
    by \eqref{B2FromNullConcordance}
  }
  \\
 & \;=\;
  \iota_1^\ast \widehat{H}_3
  - 
  \int_{[0,1]}
  \mathrm{d}
  \widehat{H}_3
  &
  \proofstep{
    by \eqref{FiberwiseStokes}
  }
  \\
 & \;=\;
  H_3
  -\int_{[0,1]}
  \widehat{G}_4
  &
  \proofstep{
    by
    \eqref{NullConcordanceOfFluxDensities}
  }
  \\
&  \;=\;
  H_3
  -
  C_3
  &
  \proofstep{
    by \eqref{FromNullCoboundaryC3C6}.
  }
  \end{array}
$$
To see that this map is indeed a surjection, we exhibit a section
$\big(C_3, C_6, B_2\big) \mapsto \big( \widehat{G}_4, \widehat{G}_7, \widehat{H}_3\big)$
extending the lifts of $(C_3, C_6)$ from \cite[(72)]{GSS24-SuGra} to $H_3$ as follows:
$$
  \left.
  \def\arraystretch{1.5}
  \begin{array}{rcl}
    \widehat{G}_4
    &:=&
    t\, 
    G_4
    +
    \mathrm{d}t
    \,
    C_3
    \\
    \widehat{G}_7
    &:=&
    t^2
    \,
    G_7
    \,+\,
    2 t\mathrm{d}t
    \,
    C_6
    \\
    \widehat H_3
    &:=&
    t H_3 
     + 
    \mathrm{d}t\, B_2
  \end{array}
 \!\!\! \right\}
  \scalebox{.9}{
    \def\arraystretch{1}
    \begin{tabular}{c}
      which indeed
      \\
      satisfies
    \end{tabular}
  }
  \left\{\!\!\!
  \def\arraystretch{1.3}
  \begin{array}{rcl}
    \mathrm{d}\big(
      t\, G_4
      \,+\,
      \mathrm{d}t
      \,
      C_3
    \big)
    &=&
    0
    \\
    \mathrm{d}
    \big(
      t^2
      \,
      G_7
      \,+\,
      2 t\mathrm{d}t
      \,
      C_6
    \big)
    &=&
    \tfrac{1}{2}
    \big(
    t\, 
    G_4
    +
    \mathrm{d}t
    \,
    C_3
    \big)
    \big(
    t\, 
    G_4
    +
    \mathrm{d}t
    \,
    C_3
    \big)
    \\
    \mathrm{d}
    \big(
      t H_3 + \mathrm{d}t\, B_2 
    \big)
    &=&
    \big(
     t \, G_4
     \,+\,
     \mathrm{d}t\, C_3
    \big)
    \,.
  \end{array}
  \right.
$$

Finally, to see that the map respects equivalences, 
consider a pair of null-concordances
$\big(\widehat G_4, \widehat{G}_7, \widehat{H}_3 \big)$, $\big(\widehat G'_4, \widehat G'_7, \widehat H'_3 \big)$ with a concordance-of-concordances $\big( \doublehat{G}_4 ,\, \doublehat{G}_7 ,\, \doublehat{H}_3 \big)$ between them,
and produce a gauge equivalence as in \eqref{GaugeEquivalencesOfGaugePotentials} from this by taking, as in \cite[(73)]{GSS24-SuGra},
\begin{equation}
  \label{2ndCoboundariesFrom2ndConcordances}
  C_2 
    \;:=\;
  {\displaystyle \int}_{\!\!\!\! s \in [0,1]} \;
  {\displaystyle \int}_{\!\!\!\! t \in [0,1]}
  \doublehat{G}_4
  \,,
  \hspace{.5cm}
  C_5 
    \;:=\;
 {\displaystyle \int}_{\!\!\!\! s \in [0,1]}\;
 {\displaystyle \int}_{\!\!\!\! t \in [0,1]}
  \Bigg(
    \doublehat{G}_7
    -
    \tfrac{1}{2}
    \bigg(
    {\displaystyle \int}_{\!\!\!\! t' \in [0,-]}
    \doublehat{G}_4
    \bigg)
    \doublehat{G}_4
  \Bigg)
  -
  \tfrac{1}{2}
  C_2 \, C_3
  \,,
\end{equation}
and now in addition 
\vspace{2mm} 
\begin{equation}
  \label{B1FromConcordanceOfConcordances}
  B_1
  \;:=\;
  \int_{s \in [0,1]}
  \int_{t \in [0,1]}
  \doublehat{H}_3
  \,.
\end{equation}

\smallskip 
\noindent That this indeed satisfies the required relations follows for $C_2$ and $C_5$
by \cite[(74)]{GSS24-SuGra} and for $B_1$ by the following computation:
\newpage 
$$
  \def\arraystretch{1.8}
  \begin{array}{lll}
  \mathrm{d}
  \,
  B_1
  &
  \;\defneq\;
  \mathrm{d}
  \,
  \int_{s \in [0,1]}
  \int_{t \in [0,1]}
  \doublehat{H}_3
  &
  \proofstep{
    by
    \eqref{B1FromConcordanceOfConcordances}
  }
  \\
  & \;=\;
  \iota^\ast_{s=1}
  \int_{t \in [0,1]}
  \doublehat{H}_3
  \,-\,  
  \iota^\ast_{s=0}
  \int_{t \in [0,1]}
  \doublehat{H}_3
  \,-\,
  \int_{s \in [0,1]}
  \mathrm{d}
  \int_{t \in [0,1]}
  \doublehat{H}_3
  &
  \proofstep{
    by 
    \eqref{FiberwiseStokes}
  }
  \\
  &\;=\;
  \int_{t \in [0,1]}
  \iota^\ast_{s=1}
  \doublehat{H}_3
  \,-\,  
  \int_{t \in [0,1]}
  \iota^\ast_{s=0}
  \doublehat{H}_3
  \;-\;
  \int_{s \in [0,1]}
  \widehat H_3
  \,+\,
  \int_{s \in [0,1]}
  \int_{t \in [0,1]}
  \mathrm{d}
  \doublehat{H}_3
  &
  \proofstep{
    by 
    \eqref{FiberwiseStokes}
  }
  \\
  &\;=\;
  \int_{t \in [0,1]}
  \widehat{H}'_3
  -
  \int_{t \in [0,1]}
  \widehat{H}_3
  + 
  \int_{s \in [0,1]}
  \int_{t \in [0,1]}
  \doublehat{G}_4
  &
  \proofstep{
    by \eqref{ConcordanceOfConcordances}
  }
  \\
  &\;=\;
  B'_2
  -
  B_2
  + 
  C_2
  &
  \proofstep{
    by
    \eqref{B2FromNullConcordance}
    \&
    \eqref{2ndCoboundariesFrom2ndConcordances}.
  }
  \end{array}
$$

\vspace{-5mm} 
\end{proof}

 %\smallskip 
\begin{remark}[\bf Reproducing the traditional local gauge potential]
  \label{ReproducingTraditionalLocalGaugePotentials}
  Equation \eqref{GaugePotentialsFromNullConcordances} in Prop. \ref{BFieldGaugePotentialFromBoncordance} says, in particular, that the traditional formula \eqref{TraditionalFormulaForBFieldGaugePotential} for the gauge potential on the M5-brane is reproduced locally.
\end{remark}

%%%%%%%%%%%%%%%%%%%%%%%%%%%%%%%%%%%
\subsection{Skyrmions and Anyons on M5}
\label{ResultingSolitons}
%%%%%%%%%%%%%%%%%%%%%%%%%%%%%%%%%%%

Finally, we spell out key consequences of quantizing the $H_3$-flux density on M5-branes in co-Homotopy cohomology theory, according to \eqref{TwistedCohomotopyInIntroduction}, highlighting how this leads to quantum observables of skyrmions and of anyonic quantum states \eqref{GelfandRaikovTheorem}, thus supporting the idea that the completed field content on the M5-brane generically reflects properties of strongly coupled/correlated quantum systems.

\medskip 
The method of non-perturbative quantization of the topological charge sector which we use here is that of \cite{SS23-Obs}, and the approach to anyonic quantum states is broadly that of \cite{SS22-Config}\cite{SS23-DefectBranes},
but where in these previous discussions we focused on intersecting brane configurations, here we give an alternative construction with single M5-branes that is supported by the above super-space analysis.

\medskip

\noindent
{\bf M5-branes near the horizon of their own black brane solution.}
In doing so, we focus on the special but important case where the pullback of the background M-brane charge to the M5-brane worldvolume trivializes, $\phi^\ast (c_3, c_6) \simeq 0$. This may happen if the background charge vanishes by itself in the first place, for instance if the target spacetime is flat Minkowski space $\mathbb{R}^{1,10}$ (as in Ex. \ref{FlatM5Branes}), or else if the M5-worldvolume does not wrap cycles on which the background charge is supported; for instance if it stretches along the asymptotic boundary of the near horizon geometry $\mathrm{AdS}_7 \times S^4$ of its own ``black'' brane background  (as discussed in \cite{CKvP98}\cite{PST99}\cite[\S 3.3]{CKKTvP98}, in microscopic resolution of AdS/CFT duality).

\smallskip

Since this pullback charge serves as the twist of the twisted 3-co-Homotopy on the worldvolume, under Hypothesis H \eqref{TwistedCohomotopyInIntroduction},
it follows that in this case the M5 worldvolume B-field is quantized in plain 3-co-Homotopy (recalling from ftn. \ref{OnTangentialTwists} that for the purpose of the present discussion we are disregarding further tangential twists of Cohomotopy):
\vspace{1mm} 
\begin{equation}
  \label{Twisted3CohomotopyReducingToPlain}
  \hspace{-.8cm}
  \def\arraycolsep{1pt}
  \def\arraycolsep{0pt}
  \begin{array}{ccccc}
    \scalebox{.7}{
      \def\arraystretch{.9}
      \color{darkblue}
      \bf
      \begin{tabular}{c}
        If background charge
        \\
        vanishes on worldvolume
      \end{tabular}
    }
    &
    \scalebox{.7}{
      \def\arraystretch{.9}
      \color{darkgreen}
      \bf
      \begin{tabular}{c}
        then
      \end{tabular}
    }
    &
    \scalebox{.7}{
      \def\arraystretch{.9}
      \color{darkblue}
      \bf
      \begin{tabular}{c}
        the charge-twisted 
        \\
        3-Cohomotopy
      \end{tabular}
    }
    &
    \scalebox{.7}{
      \def\arraystretch{.9}
      \color{darkgreen}
      \bf
      \begin{tabular}{c}
        reduces
        \\
        to
      \end{tabular}
    }
    &
    \scalebox{.7}{
      \def\arraystretch{.9}
      \color{darkblue}
      \bf
      \begin{tabular}{c}
        plain
        \\
        3-Cohomotopy
      \end{tabular}
    }
    \\[+5pt]
    \phi^\ast(c_3, c_6)
    \,\simeq\, 0
    &\Rightarrow&
    \pi^{3+\phi^\ast(c_3 + c_6)}(\Sigma)
    &=&
    \pi^{3}(\Sigma)
    \\[5pt]
    &&
  \left\{
  \hspace{-4pt}
  \adjustbox{raise=3pt}{
  \begin{tikzcd}
    &&
    S^7
    \ar[
      d,
      "{
        h_{\mathbb{H}}
      }"
    ]
    \\
    \Sigma
    \ar[
      r,
      "{ \phi }"{swap}
    ]
    \ar[
      urr,
      dashed,
      "{
        b_2
      }"
    ]
    &
    X
    \ar[
      r,
      "{
        (c_3, c_6)
      }"{swap}
    ]
    &
    S^4
  \end{tikzcd}
  }
  \hspace{-2pt}
  \right\}_{
    \mathrlap{
      \!\!\!\big/\mathrm{rel.hmtp.}
    }}
    &&
  \left\{
  \hspace{-4pt}
  \begin{tikzcd}
    &
    S^3
    \ar[
      r
    ]
    \ar[
      dr,
      phantom,
      "{
        \scalebox{.7}{
          \color{gray}
          (pb)
        }
      }"
    ]
    \ar[
      d
    ]
    &
    S^7
    \ar[
      d,
      "{
        h_{\mathbb{H}}
      }"
    ]
    \\
    \Sigma
    \ar[
      dr,
      "{
        \phi
      }"{swap}
    ]
    \ar[
      r
    ]
    \ar[
      ur,
      dashed,
      "{ b_2 }"
    ]
    &
    \ast
    \ar[
      r,
      "{
        0
      }"{description}
    ]
    \ar[
      d,
      Rightarrow,
      shorten=4pt,
      start anchor={[xshift=7pt]},
      end anchor={[xshift=-7pt]},
    ]
    &
    S^4
    \\[-10pt]
    &
    X
    \ar[
      ur,
      "{
        (c_3, c_6)
      }"{swap, sloped}
    ]
  \end{tikzcd}
  \hspace{-2pt}
  \right\}_{
    \mathrlap{\!\!\big/\mathrm{rel.hmtp.}}
    }
  \end{array}
\end{equation}

\smallskip

\noindent
{\bf Self-dual string charge quantization.}
With this and in  direct generalization of traditional Dirac monopole charge quantization 
\vspace{1mm} 
$$
  \scalebox{.7}{
    \color{darkblue}
    \bf
    \def\arraystretch{.9}
    \begin{tabular}{c}
      Charges of Dirac monopole
      \\
      by trad. flux quantization
    \end{tabular}
  }
  H^2\big(
    \underbrace{
    \mathbb{R}^{1,3}
    \setminus
    \mathbb{R}^{0,1}
    }_{
      \mathclap{
      \scalebox{.7}{
        \color{darkblue}
        \bf
        \def\arraystretch{.9}
        \begin{tabular}{c}
          Spacetime around 
          \\
          magnetic monopole
        \end{tabular}
      }
      }
    }
    ;\,
    \mathbb{Z}
  \big)
  \;\simeq\;
  H^2\big(
    \underbrace{
    \mathbb{R}^{0,1}
    \times
    \mathbb{R}_{> 0}
    }_{\color{gray} 
      \mathrm{contractible}
    }
    \times 
    S^2
    ;\,
    \mathbb{Z}
  \big)
  \;\simeq\;
  H^2\big(
    S^2
    ;\,
    \mathbb{Z}
  \big)
  \;\simeq\;
  \mathbb{Z}
  \scalebox{.7}{
    \color{darkblue}
    \bf
    \def\arraystretch{.9}
    \begin{tabular}{c}
      Integer number of
      \\
      magnetic monopoles
    \end{tabular}
  }
$$
the archetypical charge quantization situation on an M5-brane  is that for its monopole string solution (the {\it self-dual string soliton} \cite[\S 3]{HoweLambertWest98}), hence with M5-worldvolume topology of the form (using \cite[(3.16)]{HoweLambertWest98})
\begin{equation}
  \label{SelfDualStringInWorldvolujme}
  \Sigma
  \;\;\simeq\;\;
  \mathbb{R}^{1,5}
  \setminus
  \mathbb{R}^{1,1}
  \;\;\simeq\;\;
  \mathbb{R}^{1,1}
    \times
  \mathbb{R}_{> 0}
    \times 
  S^3
  \,.
\end{equation}
The corresponding charge quantization law implied by Hypothesis H, via  
\eqref{Twisted3CohomotopyReducingToPlain},
gives string charges in the 3-Cohomotopy of the 3-sphere. Via the {\it Hopf degree theorem} this is canonically identified with the ordinary integral cohomology of the 3-sphere (cf. \cite[(35)]{FSS20-H}) 
\begin{equation}
  \label{HopfDegreeTheorem}
  \mbox{
    $M^3$ 
    a closed orientable 3-fold
  }
  \;\;\;\;\;
  \Rightarrow
  \;\;\;\;\;
  \pi^3\big(M^3\big)
  \;\simeq\;
  H^3(M^3;\, \mathbb{Z})
\end{equation}
and thereby given by the group of integers: 
\smallskip 
$$
  {
    \scalebox{.7}{
      \color{darkblue}
      \bf
      \def\arraystretch{.9}
      \begin{tabular}{c}
        Charges of self-dual string
        \\
        according to 
        \\
        Hypothesis H
      \end{tabular}
    }
  }
  \def\arraystretch{1.6}
    \pi^3\big(
      \Sigma
    \big)
    \;\defneq\;
    \pi^3\big(
      \underbrace{
      \mathbb{R}^{1,1}
       \times
      \mathbb{R}_{> 0}
      }_{
        \mathrm{contractible}
      }
       \times 
      S^3
    \big)
    \\
    \quad 
    \underset{
      \mathclap{
      \raisebox{-10pt}{
        \scalebox{.7}{
          \def\arraystretch{.9}
          \begin{tabular}{c}
            \color{darkgreen}
            \bf
            Hopf degree
            \\
            \color{darkgreen}
            \bf theorem
            \eqref{HopfDegreeTheorem}
          \end{tabular}
        }
        }
      }
    }{
    \;\simeq\;
    }
    \quad 
    \pi^3\big(
      S^3
    \big)
    \\
    \;\simeq\;
    H^3(S^3) \;\simeq\; \mathbb{Z}
  {
    \scalebox{.7}{
      \color{darkblue}
      \bf
      \def\arraystretch{.9}
      \begin{tabular}{c}
        coincide with the 
        \\
        traditional
        charges 
        \\
        in ordinary cohomology.
      \end{tabular}
    }
  }
$$
Hence in this case the charges in co-Homotopy coincide with those seen in ordinary cohomology and the traditionally (and often tacitly) expected charge quantization law of the self-dual string is recovered.

\smallskip

However, even in this situation, the cohomotopical perspective provides further insight:

\smallskip

\noindent
{\bf Pixelated flux.}
The authors of \cite{HLLSZ19} suggest (cf. their Fig. 2) to think of the $N$ units of flux through the 3-sphere surrounding an integer-charged brane as being witnessed by a distribution of $N$ points (``pixels'') {\it on} the sphere. While this perspective is suggested by the physical picture, it is not really supported by the mathematics of charge quantization in ordinary cohomology.

\smallskip 
However, for charge quantization in co-Homotopy, the original {\bf Pontrjagin theorem} establishes a natural bijection between (cohomotopical) charges and actual branes in the guise of (cobordism classes of) submanifolds of the domain space, which in this situation reproduces much the picture of pixelation:

\smallskip 
Concretely (see \cite[Prop. 3.24 \& pp. 13]{SS23-Mf}\cite[\S 2.1]{SS20Tadpole}), for $S \xhookrightarrow{\;} M$ a smooth submanifold of co-dimension $3$ inside a closed smooth manifold $M$, then a {\it normal framing} of $S$ (namely a trivialization of its normal bundle) canonically induces a map $M \xrightarrow{\;} S^3$ (given by accordingly projecting a tubular neighborhood of $S$ onto $\mathbb{R}^3 \simeq S^3 \setminus \{s_0\}$ and mapping its complement to $s_0$), which under Hypothesis H we understand as assigning the {\it Cohomotopy charge} carried by the brane $S$.

\smallskip 
Now Pontrjagin's theorem means that two such branes $S$, $S'$ carry the same Cohomotopy charge iff they are (normally framed-)cobordant, namely that assigning Cohomotopy charge gives a bijection:
\begin{equation}
  \label{PontrjaginTheorem}
  \begin{tikzcd}[
    column sep=40pt
  ]
    \scalebox{.7}{
      \color{darkblue}
      \bf
      \def\arraystretch{.9}
      \begin{tabular}{c}
        Cobordism classes of
        \\
        normally framed submanifolds
        \\
        of co-dimension 3
      \end{tabular}
    }
    \mathrm{Cob}^3_{\mathrm{Fr}}\big(
      \Sigma
    \big)
    \ar[
      rr,
      "{
        \scalebox{.7}{
          \color{darkgreen}
          \bf
          \def\arraystretch{.9}
          \begin{tabular}{c}
            assign brane charge
          \end{tabular}
        }
      }",
      "{ \sim }"{swap}
    ]
    &&
    \pi^3(\Sigma)
    \scalebox{.7}{
      \color{darkblue}
      \bf
      \def\arraystretch{.9}
      \begin{tabular}{c}
        3-Cohomotopy
      \end{tabular}
    }
  \end{tikzcd}
\end{equation}
thus accurately identifying branes by their charges, and {\it vice versa}.

\hspace{-.8cm}
\begin{tabular}{p{10cm}}
But if we apply this correspondence to the above situation, then it identifies (a) integer flux $N \in \mathbb{Z}$ through the $S^3$ around the self-dual string soliton with (b) $\vert N \vert$ points on $S^3$ (the co-dimension 3 submanifold) each carrying charge $\mathrm{sgn}(N)$ (one of two distinct normal framings, reflecting branes and anti-branes, respectively, cf. \cite[p. 12]{SS23-Mf}\cite[p. 12]{SS20Tadpole}). This Cohomotopy theory substantiates the picture of \cite[Fig. 2]{HLLSZ19} (adapted on the right).
\end{tabular}
\hspace{-7pt}
\adjustbox{raise=2pt}{
\begin{minipage}{7.3cm}
  \begin{equation}
  \label{Pixelation}
  \begin{tikzcd}[
    row sep=2pt,
    column sep=5pt
  ]
    \pi^3\big(S^3\big)
    \ar[
      rr,
      "{ \sim }"
      %{description}
    ]
    &&
    \mathrm{Cob}^3_{\mathrm{Fr}}\big(
      S^3
    \big)
    \\
    N &\longmapsto&
    \left[
    \raisebox{-23pt}{
      \includegraphics[width=2cm]{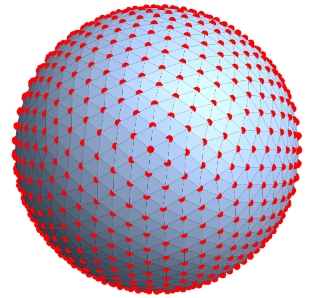}
    }
    \right]
  \end{tikzcd}
  \end{equation}
\end{minipage}
}

\medskip

\noindent
{\bf Skyrmions on M5.}
Alongside {\it singular branes} like the  self-dual string \eqref{SelfDualStringInWorldvolujme} (the flux field it sources would be singular if the actual locus of the singular string were not removed from the M5-worldvolume), flux quantization allows to discuss genuine {\it solitonic branes} which source a flux density that is non-singular everywhere and is instead topologically stabilized by the constraint that it vanishes at infinity (\cite[\S 2.2]{SS24-Flux}). For example, in ordinary electromagnetism there are (recalled in \cite[\S 2.1]{SS24-Flux}) besides the singular branes being the Dirac monopoles (which remain hypothetical) also solitonic branes being {\it Abrikosov vortices} (which are experimentally well observed).

\smallskip

Another example of solitonic objects in this sense are {\it Skyrmions} (cf. \cite{ANW83}\cite{RhoZahed16}\cite{Manton22}), which are solitons in the $\mathrm{SU}(2)$-valued pion field (possibly including higher vector meson field contributions), that at least approximate baryon bound states in confined quantum chromodynamics (and which in the Witten-Sakai-Sugimoto model are essentially identified with wrapped D4-branes, cf. \cite{Sugimoto16}).

\smallskip 
Concretely, the solitonic nature of Skyrmions requires  that the spatial pion field $f : \mathbb{R}^3 \xrightarrow{\;} \mathrm{SU}(2)$ takes the trivial value $1 \in \mathrm{SU}(2)$ ``at infinity''. This may neatly be formalized (cf. \cite[Rem. 2.3 \& p. 14]{SS23-Mf}) by ``adjoining the point at infinity'' to $\mathbb{R}^3$ via passage to its {\it Alexandroff one-point compactification} $\mathbb{R}^3_{\cup \{\infty\}} \,\simeq\, S^3$ and 
requiring $f$ to extend to a map on this compactification such that there it takes the literal point $\infty$ to $1$:
$$
  \begin{tikzcd}[
    row sep=-3pt
  ]
    \mathllap{
    \scalebox{.7}{
      \color{darkblue}
      \bf
      \def\arraystretch{.9}
      \begin{tabular}{c}
        Euclidean space with
        \\
        point at infinity adjoined
      \end{tabular}
    }
    }
    \mathbb{R}^3_{\cup \{\infty\}}
    \ar[
      rr,
      "{ f }",
      "{
        \scalebox{.7}{
          \color{darkgreen}
          \bf
          pion field
        }
      }"{swap}
    ]
    &&
    \mathrm{SU}(2)
    \\
    \infty 
      &\longmapsto&
    1 \,.
  \end{tikzcd}
$$
The {\it baryon number} of such as Skyrmion configuration (e.g., \cite[(4.26)]{BMS10}\cite[(4.26)]{Manton22}) is then the homotopy class of this map, under the Hopf degree theorem \eqref{HopfDegreeTheorem}, noticing that $\mathrm{SU}(2) \underset{\color{orangeii}\mathrm{homeo}}{\,\simeq\,} S^3$:
\vspace{0mm} 
\begin{equation}
  \label{BaryonNumberAsCohomotopyCharge}
  \scalebox{.7}{
    \color{darkblue}
    \bf
    \def\arraystretch{.9}
    \begin{tabular}{c}
      Baryon number
      \\
      of Skyrmion 
    \end{tabular}
  }
  [f]
  \;\in\;
  \pi_0
  \Big(
  \mathrm{Maps}^{\ast/}\big(
    \mathbb{R}^3_{\cup \{\infty\}}
    ,\,
    \mathrm{SU}(2)
  \big)
  \Big)
  \;
  \simeq
  \;
  \pi^3\big(
    \mathbb{R}^3_{\cup \{\infty\}}
  \big)
  \;
  \simeq
  \;
  \pi^3\big(
    S^3
  \big)
  \;\
  \simeq
  \;
  H^3(S^3;\, \mathbb{Z})
  \;
  \simeq
  \;
  \mathbb{Z}
  \,.
\end{equation}
This makes explicit that baryon number in Skyrmion theory is equivalently the Cohomotopy charge embodied by the pion field, of just the same form as the solitonic B-field charge on an M5-brane worldvolume domain of the form 

\vspace{-.6cm}
\begin{equation}
  \label{MinkowskianWorldvolume}
  \Sigma 
    \,\defneq\, 
  \mathbb{R}^{1,2} \times \mathbb{R}^3_{\cup \{\infty\}}
  \,,
\end{equation}
\vspace{-.5cm}

\noindent
in that:\footnote{
  A similar consideration appears in \cite[p. 7]{Intriligator00}, albeit not for the B-field flux in 3-Cohomotopy (maps to $S^3$) but for the worldvolume scalar fields regarded as maps to $S^4$ as in  \eqref{CFieldFluxPulledBackToNM5Breanes}.
}
$$
  \pi^3(\Sigma)
  \;\;
  \defneq
  \;\;
  \pi^3\big(
    \underbrace{
      \mathbb{R}^{1,2}
    }_{\color{gray} 
      \mathclap{
        \mathrm{contractible}
      }
    }
    \times
    \,
    \mathbb{R}^3_{\cup \{\infty\}}
  \big)
  \;\;
  \simeq
  \;\;
  \pi^3\big(
    \mathbb{R}^3_{\cup \{\infty\}}
  \big)
  \;\;
  \simeq
  \;\;
  \pi^3\big(
    S^3
  \big)
  \;\;
  \simeq
  \;\;
  H^3(S^3;\, \mathbb{Z})
  \;\;
  \simeq
  \;\;
  \mathbb{Z}
  \,.
$$

\smallskip

Beyond these sets of charges, flux quantization provides us with their {\it moduli spaces}:

\medskip

\noindent
{\bf Moduli spaces of Skyrmions.} 
Flux quantized fields in the form \eqref{GaugePotentialAsHomotopy} do not just form a set, but naturally a (supergeometric) higher groupoid (a ``space'', review includes \cite[\S 1]{FSS23Char}) whose (higher) morphisms are the (higher) gauge transformations (hence which is the finite version of the on-shell BRST complex of the higher gauge theory). Concretely, this is the homotopy fiber product of the smooth super-set of on-shell flux densities with the moduli space of charges, where the latter is, in our situation \eqref{Twisted3CohomotopyReducingToPlain}, the pointed mapping space
\begin{equation}
  \label{CohomotopyModuliSpace}
  \scalebox{.7}{
    \color{darkblue}
    \bf
    \def\arraystretch{.9}
    \begin{tabular}{c}
      Moduli space of
      \\
      co-Homotopy charges
    \end{tabular}
  }
  \begin{tikzcd}
    \widetilde\boldpi^3(\Sigma)
    \;:=\;
    \mathrm{Maps}^{\ast/}
    \big(
      \Sigma
      ,\,
      S^3
    \big)
    \ar[
      r,
      ->>,
      "{
        \pi_0
      }"
    ]
    &
    \widetilde{\pi}^3\big(
      \Sigma
    \big)
    \scalebox{.7}{
      \color{darkblue}
      \bf
      \def\arraystretch{.9}
      \begin{tabular}{c}
        Set of
        \\
        co-Homotopy charges
      \end{tabular}
    }
  \end{tikzcd}
\end{equation}
Therefore, from now on we understand $\Sigma$ as being pointed by a ``point at infinity'', and its co-Homotopy to be the corresponding ``reduced'' co-Homotopy $\widetilde \pi^3$ (classes of point-preserving maps) such as to implement any constraints of charges vanishing at infinity.\footnote{Notice that the point at infinity may be disjoint, denoted $\Sigma_{\sqcup \{\infty\}} := \Sigma \sqcup \{\infty\}$, whereby reduced co-Homotopy subsumes plain co-Homotopy: $\widetilde \pi^3(\Sigma_{\sqcup \{\infty\}}) \,=\, \pi^3(\Sigma)$ (similarly for any other generalized cohomology theory).}

\smallskip 
This entails that for $(X, \infty_{{}_X})$ and $(Y, \infty_{{}_Y})$ a pair of spaces with designated points at infinity, their appropriate product space is the ``smash product''
$$
  X \wedge Y
  \;\;
  :=
  \;\;
  \frac
    { X \times Y }
    {
      X 
      \!\times\! \{\infty_{{}_Y}\}
      \,\cup\,
      \{\infty_{{}_X}\}
      \!\times\!
      Y
    }
  \;\;\;\;
  \mbox{
    pointed by
    $(\infty_{{}_X}, \infty_{{}_Y})$
  }
$$
in terms of which our Minkowskian M5 worldvolume \eqref{MinkowskianWorldvolume} properly reads as follows:\footnote{Recall that this yoga of pointed spaces just serves to neatly encode fall-off conditions on the fields under consideration.}
\begin{equation}
  \label{MinkowskianWorldvolumeAsPointedSpace}
  \def\arraystretch{1.6}
  \begin{array}{cl}
  \Sigma
  \;\;
  \defneq
  \;\;
  \mathbb{R}^{1,2}_{\sqcup \{\infty\}}
  \,\wedge\,
  \mathbb{R}^{3}_{\cup \{\infty\}}
%  \\
\quad 
  \Rightarrow
  &
  \widetilde \pi^3(\Sigma)
  \;\simeq\;
  \widetilde {\pi}^3
  \big(
    \mathbb{R}^3_{\cup \{\infty\}}
  \big)
  \;\simeq\;
  \mathbb{Z}
 \end{array}
  \qquad \qquad 
\adjustbox{
  margin=-3pt,
  fbox,
  raise=-11pt
}{
\hspace{-8pt}
\begin{tikzpicture}
\node
 at (0,1) {
   \scalebox{1}{
     $\mathbb{R}^{1,1}_{\phantom{A}}$
   }
 };
\node
 at (.5,1) {
   \scalebox{1}{
     $\times$
   }
 };
\node
 at (1,1) {
   \scalebox{1}{
     $\mathbb{R}^3_{\phantom{A}}$
   }
 };
\node
 at (1.5,1) {
   \scalebox{1}{
     $\times$
   }
 };
\node
 at (2,1) {
   \scalebox{1}{
     $\mathbb{R}^{1}_{\phantom{A}}$
   }
 };
\draw[
  line width=10pt,
  fill=olive,
  draw=olive
]
  (-.35,.5) -- (0+.3,.5);  
\draw[
  line width=10pt,
  fill=olive,
  draw=olive
]
  (2-.25,.5) -- (2+.2,.5);  
\end{tikzpicture}
\hspace{-10pt}
}
\end{equation}
where the ``brane diagram'' on the right indicates the extension of the Skyrmion as seen via its Cohomotopy charges.

\smallskip
Remarkably, the charged points that, so far, appeared only through their {\it net number} in $\mathbb{Z}$ now ``come to life'' as we pass to their co-Homotopy moduli space \eqref{CohomotopyModuliSpace},
in that this happens to equivalently be (the group completion $\mathbb{G}$ of) the {\it configuration space of points} in space, $\mathrm{Conf}(\mathbb{R}^3)$, whose elements are finite subsets of $\mathbb{R}^3$ and whose paths are continuous {\it motions} of these (by \cite[Thm. 1]{Segal73}):
\vspace{2mm} 
\begin{equation}
  \label{SegalTheorem1}
  \scalebox{.7}{
    \color{darkblue}
    \bf
    \def\arraystretch{.9}
    \begin{tabular}{c}
      co-Homotopy moduli
      \\
      space of Skyrmions
    \end{tabular}
  }
  \widetilde\boldpi^3\big(
    \mathbb{R}^3_{\cup \{\infty\}}
  \big)
  \;\;
  \underset{\color{orangeii}
    \mathrm{hmtp}
  }{\simeq}
  \;\;
  \mathbb{G}
  \big(
    \mathrm{Conf}
    (
      \mathbb{R}^3
)
  \big)
  \scalebox{.7}{
    \color{darkblue}
    \bf
    \def\arraystretch{.9}
    \begin{tabular}{c}
      Configuration space of
      \\
      points and anti-points.
    \end{tabular}
  }
\end{equation}
Here the group completion
$$
  \mathbb{G}\big(
    \mathrm{Conf}(\mathbb{R}^3)
  \big)
  \;:=\;
  \Omega 
  \big(
    B_{\sqcup}
    \mathrm{Conf}(\mathbb{R}^3)
  \big)
$$

\vspace{1mm} 
\noindent is with respect to the topological monoid structure on $\mathrm{Conf}(\mathbb{R}^3)$ via (suitably adjusted) disjoint union of configurations, hence by adjoining ``anti-points'' (carrying negative unit charge) to the ordinary points (carrying positive unit charge) in the configurations.

\begin{remark}[\bf Torsion contribution in flux quantization]
\label{TorsionContribution}
While the classifying spaces 
$S^3$ (for co-Homotopy)
and $B^3 \mathbb{Z}$ (for ordinary cohomology)
have the same rational homotopy type
$$
  \mathfrak{l}
  S^3
  \;\simeq\;
  \mathfrak{l}
  B^3 \mathbb{Z}
  \,,
$$
so that both would qualify as valid charge quantization laws on the M5-brane if its coupling to the background C-field were ignored, they differ drastically in torsion contributions. It is the unbounded torsion in the homotopy groups of $S^3$, hence in the higher homotopy groups of spheres, which conspires to make 3-co-Homotopy reflect these configuration spaces 
\eqref{SegalTheorem1}
of solitonic branes. Cf. also Rem. \ref{NonAbelianEffects} below.
\end{remark}

\newpage 
\medskip

Of course, with the plain Minkowskian worldvolume topology used so far \eqref{MinkowskianWorldvolumeAsPointedSpace}, the worldvolume Skyrmions actually live in 1+5 dimensions and are themselves 2-branes instead of vortex-like: It remains to pass to a suitable double dimensional reduction in the remaining two spatial directions in order to model anyonic defects.

\medskip

\noindent
{\bf Topological quantum observables on Skyrmions.}
To that end, first consider KK-compactifying the background spacetime, and jointly the M5-brane worldvolume, on a circle $S^1_A$,
\vspace{1mm} 
\begin{equation}
  \label{IIACompactification}
  \Sigma
  \;\;\defneq\;\;
  \mathbb{R}^{1,1}_{\sqcup \{\infty\}}
    \,\wedge\,
  \mathbb{R}^3_{\cup \{\infty\}}
    \,\wedge\, 
  (S^1_{A})_{\sqcup \{\infty\}}
  \hspace{1cm}
\adjustbox{
  margin=-3pt,
  fbox,
  raise=-10pt
}{
\hspace{-8pt}
\begin{tikzpicture}
\node
 at (0,1) {
   \scalebox{1}{
     $\mathbb{R}^{1,1}_{\phantom{A}}$
   }
 };
\node
 at (.5,1) {
   \scalebox{1}{
     $\times$
   }
 };
\node
 at (1,1) {
   \scalebox{1}{
     $\mathbb{R}^3_{\phantom{A}}$
   }
 };
\node
 at (1.5,1) {
   \scalebox{1}{
     $\times$
   }
 };
\node
 at (2,1) {
   \scalebox{1}{
     $S^1_{A}$
   }
 };
\draw[
  line width=10pt,
  fill=olive,
  draw=olive
]
  (-.35,.5) -- (0+.3,.5);  
\draw[
  line width=10pt,
  fill=olive,
  draw=olive,
  draw opacity=.4
]
  (2-.25,.5) -- (2+.2,.5);  
\end{tikzpicture}
\hspace{-10pt}
}  
\end{equation}

\vspace{1mm} 
\noindent which we think of as the ``M-theory circle'' fiber in M/IIA duality.
Now $\widetilde \boldpi^3(\Sigma)$ being a {\it moduli space} (instead of just a set) of charges implies (by ``closed smash monoidal structure'', cf. \cite[\S A.2]{SS23-Obs}) that the moduli on this KK-compactified worldvolume are equivalently the free loop space $\mathcal{L}$ of the (group completed) configuration space:
$$
  \def\arraystretch{1.6}
  \begin{array}{ll}
  \widetilde{\boldpi}^3(\Sigma)
  &\;\simeq\;
  \mathrm{Maps}^{\ast/}\big(
    (S^1_A)_{
      \sqcup \{\infty\}
    }
    \wedge
    \mathbb{R}^3_{\cup \{\infty\}}
    ,\,
    S^3
  \big)
  \;
    \simeq
  \;
  \mathrm{Maps}
  \Big(
    S^1_A
    ,\,
    \mathrm{Maps}^{\ast/}
    \big(
      \mathbb{R}^3_{\cup \{\infty\}}
      ,\,
      S^3
    \big)
  \Big)
  \\
  &\;\defneq\;
  \mathcal{L}
  \Big(
  \widetilde{\boldpi}^3\big(\mathbb{R}^3_{\cup \{\infty\}}\big)
  \Big)
  \;\simeq\;
  \mathcal{L}
  \big(
  \mathbb{G}
  \mathrm{Conf}(\mathbb{R}^3)
  \big)
  \,.
  \end{array}
$$
However, in the strongly-coupled regime that we are after,
the circle $S^1_A \,\simeq\, (\mathbb{R}^1_{A})_{\cup \{\infty\}}$ is meant to ``decompactify'', which must mean that we are to fix the asymptotic charges $c_\infty$ at $\infty \in (\mathbb{R}^1_A)_{\cup \infty}$, hence to consider as the (topological) non-perturbative moduli space of Skyrmions on M5 
the space of (not free) loops of configurations (but) based at $c_\infty$:
$$
  \Omega_{c_\infty}
  \Big(
    \widetilde{\boldpi}^3\big(
      \mathbb{R}^3_{\cup \{\infty\}}
    \big)
  \Big)
  \;\simeq\;
  \Omega_{c_\infty}
  \mathbb{G}\big(
    \mathrm{Conf}(\mathbb{R})
  \big)
  \,.
$$
With this charge moduli space  understood as the topological sector of the integrated BRST-complex of the field theory, the topological quantum observables should correspond to compactly supported complex functions on this space, and hence {\it higher observables} should be given by the homology groups of this space \cite[\S 3]{SS23-Obs}:
\begin{equation}
  \label{QuantumObservables}
  \mathrm{Obs}_\bullet
  \;:=\;
  H_\bullet
  \Big(
    \Omega_{c_\infty}
    \mathbb{G}
    \mathrm{Conf}(\mathbb{R}^3)
    ;\,
    \mathbb{C}
  \Big)
  \;\;
  \in
  \;\;
  \mathrm{HopfAlg}^{\mathbb{Z}}_{\mathbb{C}}
  \,.
\end{equation}
In fact, under concatenation of loops, this graded group forms a non-commutative graded {\it Pontrjagin-Hopf algebra}, and by comparison with the case of Yang-Mills theory  \cite[Thm. 3.1]{SS23-Obs} we may regard this as the algebra of topological quantum observables on our higher gauge theory, regarded as the topological sector of discrete light cone quantization of the system \cite[\S 4]{SS23-Obs}.

\medskip

This turns out to be particularly interesting for ``open'' M5-branes:

\medskip

\noindent
{\bf Open M5-branes.}
To complete the desired dimensional reduction of the M5-worldvolume to 1+3 dimensions, consider now $\ZTwo$-orbifolding one of the dimensions transverse to the Skyrmions, via the reflection action (the sign representation of $\ZTwo$) as in heterotic M-theory (Ho{\v r}ava-Witten theory \cite{HoravaWitten96}), so that the M5 worldvolume \eqref{IIACompactification} now becomes the orbifold
\begin{equation}
  \label{OpenM5BraneWorldvolume}
  \Sigma
  \;\;
    \defneq
  \;\;
  \mathbb{R}^{1,1}_{\sqcup \{\infty\}}
    \,\wedge\,
  \big(
    \mathbb{R}^2
    \,\times\,
    \mathbb{R}^1_{H}
    \!\sslash\!
    \ZTwo
  \big)_{\cup \{\infty\}}
    \,\wedge\, 
  (S^1_{A})_{\sqcup \{\infty\}}
  \hspace{.6cm}
  \adjustbox{}{
\adjustbox{
  margin=-3pt,
  fbox,
  raise=-10pt
}{
\hspace{-8pt}
\begin{tikzpicture}
\node
 at (0,1) {
   \scalebox{1}{
     $\mathbb{R}^{1,1}_{\phantom{A}}$
   }
 };
\node
 at (.5,1) {
   \scalebox{1}{
     $\times$
   }
 };
\node
 at (1,1) {
   \scalebox{1}{
     $\mathbb{R}^2_{\phantom{A}}$
   }
 };
\node
 at (1.5,1) {
   \scalebox{1}{
     $\times$
   }
 };
\node
 at (2,1) {
   \scalebox{1}{
     $S^1_{H}$
   }
 };
\node
 at (2.5,1) {
   \scalebox{1}{
     $\times$
   }
 };
\node
 at (3,1) {
   \scalebox{1}{
     $S^1_{A}$
   }
 };
\draw[
  line width=10pt,
  fill=olive,
  draw=olive
]
  (-.35,.5) -- (0+.3,.5);  
\draw[
  line width=10pt,
  fill=olive,
  draw=olive,
  draw opacity=.4
]
  (3-.25,.5) -- (3+.2,.5);  
\end{tikzpicture}
\hspace{-10pt}
}  
  }
\end{equation}

\vspace{2mm}
\noindent 
where the orbifolded torus $S^1_A \times S^1_H \!\sslash\! \ZTwo$ is that from \cite[Fig. 2]{HoravaWitten96}. Essentially this compactification of the worldvolume of M5-branes  is also considered for discussion of Skyrmions in \cite{ILP18} (except that here we take $S^1_A$ to be tangential instead of transversal to the Skyrmions, in order to reduce them to anyons, below).

\smallskip 
Such M5-brane configurations wrapping $S^1_H \!\sslash\! \ZTwo$ are known as ``open M5-branes'' \cite[Fig. 3]{BGT06}\footnote{
  Strictly speaking, the discussion in \cite{BGT06} is for solitonic/singular M5-branes, while here we are concerned with the analogous situation for the {\it probe} incarnation of the open M5-brane (its ``non-linear sigma-model''), hence without backreaction. 
} (denoted ``M5 worldvolume'' in \cite[(52)]{FSS20Exc}, and by a gray bar in \cite[p. 2]{FSS21Emerge}), whose orbi-singularity may be identified with a {\it non-supersymmetric} 4-brane \cite{KOTY23}, hence here with a 3-brane as we think of the $S^1_A$-factor shrunk away -- cf. the figure below \eqref{GelfandRaikovTheorem}.  This process of obtaining a non-supersymmetric 3-brane by wrapping the M5 on a supersymmetry-breaking torus is similar to the construction in the WSS model for holographic QCD \cite[\S 4]{Witten98} which instead of the $\ZTwo$-orbifold of the torus $S^1_H \!\times\! S^1_A$
envisions compactification on a plain torus but equipped with supersymmetry-breaking spin structure.

\medskip

\noindent
{\bf Charge moduli on open M5-branes.}
The proper way to measure charges on such orbifolds is in proper orbifold cohomology \cite{SS20-Orb} (familiar for D-brane charges measured in orbifold/equivariant K-theory \cite{SzaboValentino10}\cite[Ex. 4.5.4]{SS21-EquBund}), which for the case of flux quantization in co-Homotopy means (\cite[\S 3]{SS20Tadpole}\cite[\S 5.2]{SS20-Orb}\cite{SS20EquChar}) to measure in orbifold/equivariant co-Homotopy, namely with the moduli space \eqref{CohomotopyModuliSpace} in the case \eqref{MinkowskianWorldvolumeAsPointedSpace} replaced by the pointed equivariant mapping space:
\vspace{1mm} 
$$
  \scalebox{.7}{
    \color{darkblue}
    \bf
    \def\arraystretch{.9}
    \begin{tabular}{c}
      Equivariant 
      \\
      co-Homotopy moduli
    \end{tabular}
  }
  \widetilde{\boldpi}^{2,1}_{\ZTwo}
  \big(
    (
    \mathbb{R}^2
    \times 
    \mathbb{R}^1_{H}
    )_{\cup \{\infty\}}
  \big)
  \;\;
  :=
  \;\;
  \mathrm{Maps}^{\ast/}_{\ZTwo}
  \Big(
    (
    \mathbb{R}^2
    \times 
    \mathbb{R}^1_{H}
    )_{\cup \{\infty\}}
    ,\,
    (
    \mathbb{R}^2
    \times 
    \mathbb{R}^1_{H}
    )_{\cup \{\infty\}}
  \Big)
  \scalebox{.7}{
    \color{darkblue}
    \bf
    \def\arraystretch{.9}
    \begin{tabular}{c}
      Equivariant pointed
      \\
      mapping space
    \end{tabular}
  }
$$

\smallskip 
\noindent Now the equivariant generalization of Segal's theorem \eqref{SegalTheorem1} 
says (\cite[Thm. 2]{RourkeSanderson00}) that these orbifold moduli are equivalent to (group completed) equivariant configurations of points, namely to $\ZTwo$-invariant finite subsets of $\mathbb{R}^2 \times \mathbb{R}^1_{H}$:
$$
  \widetilde
  \boldpi^{2,1}\Big(
    \big(
    \mathbb{R}^2
    \times
    \mathbb{R}^1_{H}
    \big)_{\cup \{\infty\}}
  \Big)
  \;\;
  \underset{\color{orangeii}
    \mathrm{hmtp}
  }{
    \simeq
  }
  \;\;
  \mathbb{G}
  \Big(
  \mathrm{Conf}\big(
    \mathbb{R}^2
    \times
    \mathbb{R}^1_{H}
  \big)^{\ZTwo}
  \Big)
  \,.
$$
But this means that any one of the  points (branes) in a configuration
\begin{itemize}[
  leftmargin=.6cm,
  topsep=1pt,
  itemsep=2pt
]
\item either moves freely in the Ho{\v r}ava-Witten bulk $\mathbb{R}^2 \times \mathbb{R}_{> 0} \,\simeq\, \mathbb{R}^3$ 

(exactly mirrored by a point in $\mathbb{R}^2 \times \mathbb{R}_{< 0}$)
\item or is stuck on the  heterotic plane $\mathbb{R}^2 \times \{0\} \,\simeq\, \mathbb{R}^2$,
\end{itemize}
hence it means that the moduli space is now the product of spaces of configurations in the Ho{\v v}ava-Witten bulk and on the heterotic plane (cf. also \cite[(1.2), Thm. 4.1]{Xico06}):
\vspace{-.2cm}
\begin{equation}
  \label{HeteroticConfiurationSpace}
  \scalebox{.7}{
    \color{darkblue}
    \bf
    \begin{tabular}{c}
      Heterotic co-Homotopy
      \\
      charge moduli space
    \end{tabular}
  }
  \widetilde
  \boldpi^{2,1}\Big(
    \big(
    \mathbb{R}^2
    \times
    \mathbb{R}^1_{H}
    \big)_{\cup \{\infty\}}
  \Big)
  \;\;
  \underset{\color{orangeii}
   \mathrm{hmtp} 
  }{\simeq}
  \quad \;\; 
  \underbrace{
  \mathbb{G}
  \mathrm{Conf}(\mathbb{R}^3)
  }_{
    \mathclap{
    \scalebox{.7}{
      \color{darkblue}
      \bf
      \def\arraystretch{.9}
      \begin{tabular}{c}
        Brane configurations
        \\
        in 
        HW-bulk
        $\simeq$
        \eqref{SegalTheorem1}
      \end{tabular}
    }
    }
  }
  \;\;\times\;\;
  \overbrace{
    \mathbb{G}
    \mathrm{Conf}(\mathbb{R}^2)
  }^{
    \mathclap{
    \scalebox{.7}{
      \color{darkblue}
      \bf
      \def\arraystretch{.9}
      \begin{tabular}{c}
        Brane configurations
        \\
        on heterotic plane
      \end{tabular}
    }  
    }
  }
  \,.
\end{equation}

This is most curious, because:

\medskip

\noindent
{\bf Vortex braiding on open M5 branes.}
The configuration space of points in $\mathbb{R}^2$ that has appeared on the right in \eqref{SegalTheorem1} is the classifying space of the Artin braid groups $\mathrm{Br}(n)$ (\cite[\S 7]{FoxNeuwirth62}, cf. \cite[pp. 9]{Williams20}), whose loop space at given $c_\infty = N$ is the braid group itself:
$$
  \mathrm{Conf}(\mathbb{R}^2)
  \;\;
  \simeq
  \;\;
  \textstyle{
    \underset{n \in \mathbb{N}}{\coprod}
  }
  B \mathrm{Br}(N)
  \,,
  \hspace{.5cm}
  \Omega_N 
  \mathrm{Conf}(\mathbb{R}^2)
  \;\simeq\;
  \mathrm{Br}(N)
  \,.
$$
Therefore the algebra of topological light cone quantum observables \eqref{QuantumObservables}
on our open M5-brane fields
now contains the group algebra of the braid group:
\begin{equation}
  \label{BraidGroupAlgebra}
  \begin{tikzcd}
    \mathbb{C}\big[
     \mathrm{Br}(N)
    \big]
    \;\;
    =
    \;\;
    H_0\big(
      \Omega_N
      \mathrm{Conf}(\mathbb{R}^2)
    \big)
    \ar[
      r,
      hook
    ]
    &
    H_\bullet\Big(
      \Omega_{c_\infty}
      \widetilde{\boldpi}^{2,1}
      \big(
        (\mathbb{R}^2
        \times
        \mathbb{R}^1_{H})_{\cup \{\infty\}}
      \big)
    \Big)
    \,.
  \end{tikzcd}
\end{equation}

Looking back through the construction, we see that the braiding happening here is that of the skyrmions in 1+5 dimensions (solitonic 2-branes) which have been dimensionally reduced to solitonic 1-branes in 1+3 dimensions (akin to Abrikosov vortices).

\smallskip

At this point one expects these objects to be anyonic. Indeed, we may now derive that this is the case:

\medskip

\noindent
{\bf Anyonic quantum states on open M5.}
Given a quantum (star-)algebra of observables as in \eqref{BraidGroupAlgebra} the corresponding {\it quantum states} are identified with the positive linear functions $\rho$ on the vector space of observables (these assigning the expectation values $\rho(A)$ of any observable $A$ in the given quantum state, cf. \cite[\S 2.2]{CSS23} in our context):
$$
  \mathrm{QStates}_{\mathrm{vrtx}}
  \;:=\;
  \left\{\!\!
  \begin{tikzcd}
    \mathrm{Obs}
    \ar[
      r,
      "{ \rho }",
      "{\color{darkgreen} 
        \mathrm{linear}
      }"{swap}
    ]
    &
    \mathbb{C}
  \end{tikzcd}
  \;\Big\vert\;
  \underset{A \in \mathrm{Obs}}{\forall} \,
  \rho\big(
    A^\ast A
  \big)
  \;\geq\;
  0
  \in \mathbb{R}
  \subset 
  \mathbb{C}
  \right\}
  \,.
$$
Hence, the quantum states of those vortex charges on the compactified M5-brane are the positive linear functionals on the braid group algebra $\mathbb{C}\big[\mathrm{Br}(N)\big]$ \eqref{BraidGroupAlgebra}.

\smallskip

Now the {\it Gelfand-Raikov theorem} \cite[(2)]{GelfandRaikov43}\cite[Thm. 13.4.5(ii)]{Dixmier77}
says that the positive linear functionals on a group algebra are exactly the expectation values with respect to a cyclic state $\vert \psi \rangle$ of unitary representations $U$ of that group on some Hilbert space:
\vspace{-2mm} 
\begin{equation}
  \label{GelfandRaikovTheorem}
  \hspace{-1cm}
  \begin{tikzcd}[row sep=-10pt, 
    column sep=0pt
  ]
    \scalebox{.7}{
      \color{darkblue}
      \bf
      Braid group representations
    }
    \ar[
      rr,
      phantom,
      "{
        \scalebox{.7}{
          \color{darkgreen}
          \bf
          are identified with
        }
      }"{pos=.4}
    ]
    &&
   \hspace{-1.2cm} 
   \scalebox{.7}{
      \color{darkblue}
      \bf
      \def\arraystretch{.9}
      \begin{tabular}{c}
        topological quantum states
        \\
        of vortex solitons 
        \\
        on open M5-branes
      \end{tabular}
    }
    \\
    \left\{\!\!
      \def\arraystretch{1.3}
      \begin{array}{r}
        \HilbertSpace{H}
        \\
        \mathrm{Br}(N)
        \xrightarrow{U}
        \mathrm{U}(\HilbertSpace{H})
        \\
        \vert\psi\rangle
        \in
        \HilbertSpace{H}
      \end{array}
      \def\arraystretch{1.3}
      \begin{array}{l}
        \mbox{Hilbert space}
        \\
        \mbox{unitary rep}
        \\
        \mbox{cyclic vector}
      \end{array}    
   \! \right\}_{
      \mathrlap{
        \!\!\big/\sim
      }
    }
    \qquad 
    \ar[
      rr,
      "{ \sim }"
    ]
    &&
    \quad 
      \bigg\{\!\!
    \def\arraystretch{1.5}
    \begin{array}{r}
      \mathbb{C}\big[
        \mathrm{Br}(N)
      \big]
      \xrightarrow{ \, \rho \, }
      \mathbb{C}
    \end{array}
    \mbox{
      pos. lin. func.
    }
\!\!   \bigg\}
    \;
    =
    \;
    \mathrm{States}_{\mathrm{vrtx}}
    \\
    \big(
      \HilbertSpace{H}
      ,\,
      U
      ,\,
      \vert \psi \rangle
    \big)
    &\longmapsto&
    \bigg(
    \rho 
      \;: 
    \underset{
      g \in \mathrm{Br}(N)
    }{\sum}
    c_g \cdot g
    \;\;\;\mapsto\;
    \underset{
      g \in \mathrm{Br}(N)
    }{\sum}
    c_g
    \,
    \langle 
      \psi \vert 
      \,
      U(g)
      \,
      \vert \psi
    \rangle
    \bigg).
  \end{tikzcd}
\end{equation}
But here $U$ is a braid group representation which hence exhibits $\HilbertSpace{H}$ as a space of anyon quantum states $\vert \psi \rangle$ (cf. e.g. \cite{NSSFDS08}\cite{Rowell22} and in our context \cite{SS23-DefectBranes}\cite{SS23-ToplOrder}\cite[\S 3]{MySS24}).

\medskip

\hspace{-.9cm}
\begin{tabular}{p{10cm}}
In summary, we have found that flux quantization on M5-branes (under Hypothesis H) makes the topological quantum states of the vortex solitons \eqref{HeteroticConfiurationSpace} on the 
boundary of the
open M5-brane 
\eqref{OpenM5BraneWorldvolume}
(wrapped over $S^1_A$) be anyonic, as expected in strongly-correlated (topologically ordered) quantum systems. We discuss this effect in more detail in the followup \cite{SS24-AbAnyons}.
\end{tabular}
\hspace{25pt}
\adjustbox{
  raise=-2cm
}
{
\begin{tikzpicture}

\draw
  (.8,1.3) -- (3.8,1.3);
\draw
  (.8,-.7) -- (2.9,-.7);

\draw[
  fill=lightgray
]
  (0,0) -- (0,-2) -- (.8,-.7) -- (.8,1.3) -- cycle; 

\draw[
  fill=lightgray
]
  (3,0) -- (3,-2) -- (3+.8,-.7) -- (3+.8,1.3) -- cycle; 

\draw[
  draw=white,
  line width=3
]
  (.05,0) -- (2.95,0);
\draw[
  line width=.8
]
  (0,0) -- (3,0);
\draw[
  line width=.8
]
  (0,-2) -- (3,-2);
\draw[
  line width=.8
]
  (0,0) -- (0,-2);
\draw[
  line width=.8
]
  (3,0) -- (3,-2);

\draw[
  fill=olive,
  draw=olive
]
  (3.3,-.4)
  ellipse (.03 and .05);
\draw[
  fill=olive,
  draw=olive
]
  (3.5,+.1)
  ellipse (.03 and .05);
\draw[
  fill=olive,
  draw=olive
]
  (3.6,-.5)
  ellipse (.03 and .05);

\node
  at (1.4,-1.2)
  {
    \scalebox{.7}{
      \colorbox{white}{
      \hspace{-15pt}
      \color{darkblue}
      \bf
      \def\arraystretch{.7}
      \begin{tabular}{c}
        open M5-brane
        \\
        wrapped on 
        $S^1_A$
      \end{tabular}
      \hspace{-14pt}
      }
    }
  };

\node[
  rotate=57
]
  at (3.6,-1.3)
  {
    \scalebox{.65}{
      \colorbox{white}{
      \hspace{-17.3pt}
      \color{gray}
      \bf
      \def\arraystretch{.7}
      \begin{tabular}{c}
        non-susy
        \\
        3-brane
      \end{tabular}
      \hspace{-14.5pt}
    }
    }
  };

\node
  at
  (4.3, -.2)
  {
    \scalebox{.65}{
      \color{olive}
      \bf
      \def\arraystretch{.8}
      \begin{tabular}{c}
        anyonic
        \\
        vortices
      \end{tabular}
    }
  };

\node
  at (1.4,-2.2)
  {
    \scalebox{.8}{
        \hspace{-7pt}
        $S^1_H
        \!\sslash\!\ZTwo$
        \hspace{-7pt}
    }
  };

\end{tikzpicture}
}

\begin{remark}[\bf Non-abelian solitonic effects through flux quantization in non-abelian cohomology]
  \label{NonAbelianEffects}
$\,$
  
\noindent {\bf (i)}  For appreciating these results, notice that even though we do not consider ``coincident'' brane worldvolumes here --- on which only common lore \cite[p. 8]{Witten96} expects non-abelian gauge groups, such as $\mathrm{SU}(2)$ --- a degree of non-abelianness is introduced even on the single M5-brane by quantizing its flux in a {\it non-abelian cohomology theory} \cite[\S II]{FSS23Char} like 3-Cohomotopy. 

 \vspace{1mm}  
\noindent {\bf (ii)}  Concretely (cf. \cite[Rem. 2.1]{FSS23Char}), where ordinary 3-cohomology $H^3(-;\mathbb{Z}) \simeq H^1\big(-; B^2 \mathbb{Z}\big) = \pi_0 \mathrm{Maps}\big(-; B B^2 \mathbb{Z}\big)$ 
  is an abelian cohomology theory in that the Eilenberg-MacLane space $B^2 \mathbb{Z} = K(\mathbb{Z},2)$ is an abelian $\infty$-group (an $\infty$-loop space), this is not the case for 3-Cohomotopy $\pi^3(-) = H^1(-;\Omega S^3) = \pi_0 \mathrm{Map}\big(-;B \Omega S^3\big)$, since the $\infty$-group $\Omega S^3$ is just a braided $\infty$-group but not abelian (it admits only two deloopings), cf. Rem. \ref{TorsionContribution} above.

\vspace{1mm} 
\noindent {\bf (iii)}  Specifically, it is through the equivalence of underlying homotopy types $\mathrm{SU}(2) \,\simeq\, S^3$ that the 3-cohomotopically flux-quantized B-field on the single M5-brane has Skyrmion-like field configurations even in the absence of an ordinary $\mathrm{SU}(2)$-gauge field. More discussion of this phenomenon on M5-branes is in \cite{FSS20TwistedString}.
  
\end{remark}

\medskip

\section{Conclusion}

\noindent
We have established that and how the on-shell field content on M5-branes is to be completed by a choice of flux quantization law (p. \pageref{GeneralRuleOfFluxQuantization}) for the higher gauge field on super-space (where the flux Bianchi identity already implies the duality equation of motion, Prop. \ref{BianchiIdentityOnM5BraneInComponents}).

\smallskip 
In doing so, we relied on a rigorous and streamlined re-derivation (in \S\ref{SuperEmbeddingConstruction}) of the flux sector of the ``super-embedding''-construction of the M5-brane sigma-model. We suggest that our concise natural geometric re-formulation of the ``super-embedding''-conditions (Def. \ref{BPSImmersion} in \S\ref{SuperEmbeddings}, cf. Rem. \ref{SuperEmbeddingConditionInTheLiterature}) helps with understanding the phase spaces of super $p$-brane sigma-models --- in the present case but also in its generalizations such as to ``super-exceptional'' geometry (to which we turn in \cite{GSS-Exceptional}).

\smallskip 
With this in hand, we used previous results to find that among the admissible flux quantization laws on the M5-brane is a form of twisted co-Homotopy theory \eqref{M5BraneFluxQuantizationInIntroduction}, thereby showing that this yields exact global field content on the M5 without the need of further constraints (which had previously remained open).

\smallskip 
Assuming this choice (``Hypothesis H''), we have discussed, in \S\ref{FluxQuantizationOnM5Branes}, some key examples of the resulting moduli spaces 
and quantum states
of topological charges, highlighting that Skyrmion-like solitonic charges appear quite generically on M5-branes, and that on ``open'' M5-branes the quantum states of the resulting vortex-like solitons in 1+3 dimensions are anyonic \eqref{GelfandRaikovTheorem}. (This complements previous results \cite{SS23-DefectBranes}\cite{SS23-ToplOrder} where anyonic quantum statistics was argued for {\it intersecting} M5-brane configurations, which however are not as readily connected to a full on-shell superspace model as considered here. Notice that previously the realization of anyonic brane states had remained conjectural, cf. \cite[p. 65]{deBoerShigemori13}.)  

\smallskip

This result supports the general idea that the M5-brane may serve as a
much needed general model for otherwise elusive quantum phenomena in the strongly coupled/correlated non-perturbative regime and may point the way to a more microscopically detailed form of holography in high-energy and solid-state physics.

\smallskip 
We discuss 
applications of the present result to holography in \cite{GSS24-AdS7}, to quantum materials in \cite{SS24-AbAnyons}  
and plan to discuss an application
to  exceptional supergravity in
\cite{GSS-Exceptional}.

\newpage

\end{document}